\documentclass[12pt]{amsart}

\usepackage[T1]{fontenc}
\usepackage{vmargin}
\usepackage{amssymb}
\usepackage{amscd}
\usepackage{mathtools}
\usepackage{mathabx}
\usepackage{bbm}
\usepackage{a4wide}
\usepackage{supertabular}
\usepackage{array}
\usepackage{multicol}
\usepackage{enumitem}
\usepackage{hyperref}

\vfuzz2pt 
\hfuzz12pt

\vbadness10000

\setmarginsrb{3cm}{2cm}{3cm}{3cm}{75pt}{20pt}{20pt}{30mm}
\setcounter{tocdepth}{1}

\newtheorem{Thm}{Theorem}
\newtheorem{Lem}[Thm]{Lemma}

\newtheorem{Cor}[Thm]{Corollary}
\newtheorem{Prop}[Thm]{Proposition}

\numberwithin{equation}{section}
\numberwithin{Thm}{section}

\theoremstyle{definition}
\newtheorem{Def}[Thm]{Definition}
\newtheorem{Rk}[Thm]{Remark}

\theoremstyle{remark}

\newtheorem*{Thm*}{Theorem}
\newtheorem*{Lem*}{Lemma}
\newtheorem*{Conj*}{Conjecture}
\newtheorem*{Cor*}{Corollary}
\newtheorem*{Def*}{Definition}
\newtheorem*{Prop*}{Proposition}
\newtheorem*{Exo*}{Exercise}
\newtheorem*{Exs*}{Examples}
\newtheorem*{Ex*}{Example}
\newtheorem*{Rk*}{Remark}
\newtheorem*{Rks*}{Remarks}

\newcommand{\N}{{\mathbb{N}}}

\newcommand{\R}{{\mathbb{R}}}
\newcommand{\Rp}{{\mathbb{R}^+}}
\newcommand{\Rop}{{\mathbb{R}_0^+}}

\newcommand{\banach}{{\mathcal{X}}}
\newcommand{\Fspace}{{\mathcal{F}}}
\newcommand{\bFspace}{{\mathcal{B}(\mathcal{F})}}
\newcommand{\bbanach}{{\mathcal{B}(\mathcal{X})}}

\newcommand{\hilbert}{{\mathcal{H}}}
\newcommand{\bhilbert}{{\mathcal{B}(\mathcal{H})}}

\newcommand{\interval}{{-\infty<t_1<t_2<\infty}}

\newcommand{\Ker}{{\mathrm{Ker}}}
\newcommand{\Dom}{{\mathrm{Dom}}}

\def\signnb{\bigskip \begin{flushleft} {\sc Norbert Barankai \par\vspace{3mm}
MTA-ELTE Theoretical Physics Research Group \par
P\'azm\'any P\'eter s\'et\'any 1/A\par
H-1117 Budapest\par
HUNGARY\par\vspace{3mm}
e-mail:} \tt{barankai@caesar.elte.hu} \end{flushleft}}


\def\signzz{\bigskip \begin{flushleft} {\sc Zolt\'an Zimbor\'as \par\vspace{3mm}
Department of Theoretical Physics\par
Wigner Research Centre for Physics \par
 Hungarian Academy of Sciences \par
P.O. Box 49, H-1525 Budapest\par
HUNGARY\par \vspace{1.5mm}
and \par \vspace{1.5mm}
Mathematical Institute\\ 
Budapest University of Technology and Economics\\
Egry J\'ozsef u 1. \par
H-1111 Budapest \par
HUNGARY \par \vspace{3mm}
e-mail:} \tt{zimboras.zoltan@wigner.mta.hu} \end{flushleft}}

\begin{document}

\title[]{Generalized quantum Zeno dynamics and ergodic means \\[1mm]
}

\vspace*{-8mm}

\author{N. Barankai}
\author{Z. Zimbor\'as}

\thanks{The authors are thankful to Francesco Giacosa whose talk on non-exponential decay phenomena and its interrelation with the Zeno effect initiated this research.  This research was supported by the National Research, Development and Innovation Office - NKFIH within the Quantum Technology National Excellence Program (Project No. 2017-1.2.1-NKP-2017-00001), and by Grants No. K108676, K124351, K124152.}

\begin{abstract}
We prove the existence of uniform limits for certain sequences of products of contractions and elements of a family of uniformly continuous  propagators acting on a Hilbert or a Banach space. From the point of view of Quantum Physics, the considered sequences can represent the evolution of a system whose dynamics, described by a continuous propagator, is disturbed by a sequence of generic quantum operations (e.g., projective measurements or unitary pulses). This includes and also generalizes  the so-called quantum Zeno dynamics. The time-evolution obtained from the limits of the considered sequences are described by propagators generated by ergodic means. The notion of such ergodic means is generalized in this paper to also include  propagators of time-dependent generators, and thus our results can be used to develop new forms of active decoherence suppression. In a similar way, we also consider the effective time-evolution obtained from adding to the possibly time-dependent generator of the original propagator a generator of a uniformly continuous contraction semi-group with asymptotically increasing weight in the sum. By proving a generalized adiabatic theorem, it is shown that also for this set-up the resulting time-evolution is governed by a suitable ergodic mean.
\end{abstract}



\maketitle

\vspace*{-5mm}

\tableofcontents

\vspace*{-5mm}

\bigskip

\section{Introduction}
\label{SEC1}

Performing frequent measurements on a quantum system, in order to determine whether it is in a particular subspace, forces the system to remain in that subspace. This phenomenon is  called the quantum Zeno effect \cite{misra1977zeno,peres1980zeno} and the limiting time evolution within the projected subspace is referred to as the quantum Zeno dynamics \cite{facchi2000quantum}.   The quantum Zeno effect has been demonstrated in several experiments \cite{kwiat1999high, nagels1997quantum,schafer2014experimental,
signoles2014confined}, and there has been a considerable effort to analyze it from a theoretical and mathematical point of view, see \cite{facchi2008quantum} and references therein. In recent years, its possible application to quantum information processing tasks, e.g.,  active decoherence avoidance and error correction, has also been studied. Motivated by some current works along this research line \cite{bernad2017dynamical,burgarth2018generalized,burgarth2018quantum,burgarth2014exponential,dominy2013analysis,paz2012zeno}, we give a generalization of these results concerning propagators with possibly time-dependent generators acting on Banach spaces.

\subsection{Statement of the main results}
We study two central questions concerning the quantum Zeno dynamics and the adiabatic theorem.\par

First, let $\hilbert$ be a Hilbert space and $\bhilbert$ be the space of bounded linear transformations of $\hilbert$ endowed with the usual sup-norm $\|\cdot\|_\bhilbert$. Let $T\in\bhilbert$ be a contraction on $\hilbert$ and $K\in\bhilbert$ be the generator of $(F(t))_{t\geq 0}$, a uniformly continuous semi-group on $\hilbert$. We ask the question, what is the uniform limit and speed of convergence of the sequence
\begin{equation}
C_n\cdot \bigl(TF(t/n)\bigr)^n\qquad \qquad n\in\mathbb{N} \,,  \; t\in\mathbb{R}_0^+,
\label{Q1a}
\end{equation}
as $n$ tends to infinity, where the sequence $(C_n)_{n\in\mathbb{N}}\subset\bhilbert$ is a `natural' choice which guarantees an interesting, non-trivial limit in the uniform topology of $\bhilbert$. 

More generally, let $\banach$ be a Banach space, let $K:[t_{1},t_{2}]\rightarrow \bbanach$ be continuous on the finite, closed interval $[t_1,t_2]$. Define the propagator $F(\cdot,\cdot)$ as the solution of the initial value problem
\begin{eqnarray}
\partial_{1}F(u,v)&=&K(u)F(u,v)\qquad\qquad t_{1}\leq v\leq u\leq t_{2}, \nonumber\\
F(v,v)&=&\mathbbm{1}_{\banach}.
\label{EQ1}
\end{eqnarray}
Assume that for each $s\in [t_1,t_2]$ and for each $n\in\N$, a partition $\mathfrak{T}_n(s)$ of $[t_{1},s]$ defined through the sequences $t_{1}=s_{n,0}<s_{n,1}<\dots <s_{n,n}=s$ is given, such that the norms of these partitions converges to zero as $n$ tends to infinity. Then, the same question can be raised regarding the uniform limit of the products
\begin{equation}
C_n\cdot \prod_{l=0}^{n-1}TF(s_{n,l+1},s_{n,l})\qquad \qquad n\in\mathbb{N}
\label{Q1b}
\end{equation}
in $\bbanach$. 

Our first main result connects the existence of the limit in (\ref{Q1b}) to the existence of an ergodic mean. For every infinite sequence $\alpha=(\alpha_1,\alpha_2,\cdots)$ of sequences $(\alpha_{n,l})_{0\leq l<n}$ of positive numbers satisfying an admissibility criterion, and for every isometric isomorphism $T\in\bbanach$ we associate an ergodic mean $\mathfrak{P}_{\alpha,T}$ which operates on the Banach space $C^0([t_1,t_2],\bbanach)$ of the continuous, $\bbanach$ valued functions (see section \ref{SEC2} for details). Then, the generalized quantum Zeno theorem is stated roughly as\par
\vspace{2mm}
\noindent\textbf{Theorem} (Generalized quantum Zeno dynamics of contractions). \textit{Let $\banach$ be a Banach space with a decomposition $\banach=\banach_{\mathrm{I}}\oplus\banach_{\mathrm{C}}$ such that both $\banach_{\mathrm{I}}$ and $\banach_{\mathrm{C}}$ are closed subspaces of $\banach$ and the continuous projection $P_{\mathrm{I}}$ corresponding to $\banach_{\mathrm{I}}$ and its complementary projection $P_{\mathrm{C}}$ are norm one. Assume that $T\in\bbanach$ is of the form $T=T_{\mathrm{I}}\oplus T_{\mathrm{C}}$, where $T_\mathrm{I}\in\mathcal{B}(\banach_{\mathrm{I}})$ is an isometric isomorphism, while $T_\mathrm{C}\in\mathcal{B}(\banach_{\mathrm{C}})$ is a strict contraction.  Let $K\in C^0([t_1,t_2],\bbanach)$ and define the propagator $F(\cdot,\cdot)$ as the solution of the initial value problem (\ref{EQ1}). Assume that  $\mathfrak{P}_{\alpha,T_\mathrm{I}}[P_\mathrm{I}KP_\mathrm{I}]$ exists and let $G(\cdot,t_1)$ be the solution of the initial value problem 
\begin{eqnarray}
\partial_1G(u,t_1)&=&\mathfrak{P}_{\alpha,T_\mathrm{I}}[P_\mathrm{I}KP_\mathrm{I}](u)G(u,t_1)\qquad\qquad t_{1}\leq u\leq t_{2},\nonumber\\
G(t_1,t_1)&=&P_{\mathrm{I}},
\end{eqnarray}
If $\alpha$ satisfies a certain condition on the variance of its members and $[t_1,t_2]\ni t\mapsto P_\mathrm{I}K(t)P_\mathrm{I}$ admits a certain approximability condition, then
\begin{equation}
\lim_{n\rightarrow\infty}T^{-n}_\mathrm{I} \prod_{l=0}^{n-1}TF\circ(\rho^{\alpha}_{n,l+1}\times\rho^{\alpha}_{n,l})=G(\cdot,t_1)
\end{equation}
in the $C^0$ norm.
}\par
\vspace{2mm}
\noindent The functions $\rho^{\alpha}_{n,l}:[t_1,t_2]\mapsto [t_1,t_2]$, $n\in\mathbb{N}$, $0\leq l\leq n$ are defined through $\rho^{\alpha}_{n,l}(s)=t_1+(s-t_1)\sum_{k=0}^{l-1}\alpha_{n,k}$. Therefore, for every $s\in [t_1,t_2]$ and for every $n\in\mathbb{N}$ the sequence $t_1=\rho^{\alpha}_{n,0}(s)<\rho^{\alpha}_{n,1}(s)<\cdots<\rho^{\alpha}_{n,n}(s)=s$ is a partition of the interval $[t_1,s]$. \par
\vspace{3mm}
Second, assume that $A\in\bhilbert$ generates a uniformly continuous contraction semi-group and let $K\in\bhilbert$. Then, if $\gamma>0$, we are interested in the large $\gamma$ behavior of the semi-group
\begin{equation}
F_\gamma(t)\vcentcolon=\exp\bigl((\gamma A+K)t\bigr)\qquad \qquad t\in\mathbb{R}_0^+.
\label{Q2}
\end{equation}

More generally, let $\banach$ be a Banach space, $A\in\bbanach$ be a generator of a uniformly continuous contraction semi-group and let $K\in C^0([t_1,t_2],\bbanach)$. Define the propagator $F_\gamma(\cdot,\cdot)$ as the solution of the initial value problem
\begin{eqnarray}
\partial_{1}F_\gamma(u,v)&=&\bigl(\gamma A+K(u)\bigr)F_\gamma(u,v)\qquad\qquad t_{1}\leq v\leq u\leq t_{2},\nonumber\\
F_\gamma(v,v)&=&\mathbbm{1}_{\banach}.
\label{Q2b}
\end{eqnarray}
Then, the same question can be raised regarding the large $\gamma$ behavior of (\ref{Q2b}). 

For every uniformly continuous group $(T(t))_{t\in\mathbb{R}}\subset\bbanach$ we can associate an ergodic mean
\begin{equation}
\mathfrak{P}_{T}[K]\vcentcolon=\lim_{0<S\rightarrow\infty}\frac{1}{S}\int_{0}^{S}T^{-1}(s)KT(s)\,\mathrm{d}s\qquad\qquad K\in\bbanach,
\end{equation}
whenever this limit exists in the uniform topology of $\bbanach$. The generalized adiabatic theorem is the following statement.\par
\vspace{2mm}
\noindent\textbf{Theorem} (Generalized adiabatic theorem of contractions). \textit{Let $\banach$ be a Banach space with a decomposition $\banach=\banach_{\mathrm{I}}\oplus\banach_{\mathrm{C}}$ such that both $\banach_{\mathrm{I}}$ and $\banach_{\mathrm{C}}$ are closed subspaces of $\banach$ and the continuous projection $P_{\mathrm{I}}$ corresponding to $\banach_{\mathrm{I}}$ and its complementary projection $P_{\mathrm{C}}$ are norm one. Assume that the semi-group $(T(t))_{t\geq 0}\subset\bbanach$, generated by $A\in\bbanach$ is of the form $T=T_{\mathrm{I}}\oplus T_{\mathrm{C}}$, where $(T_\mathrm{I}(t))_{t\geq 0}\subset\mathcal{B}(\banach_{\mathrm{I}})$ is a restriction of a group of isometric isomorphisms of $\banach_\mathrm{I}$ to $\mathbb{R}_0^+$,  while $(T_\mathrm{C}(t))_{t\geq 0}\in\mathcal{B}(\banach_{\mathrm{C}})$ is a semi-group of strict contractions acting on $\mathcal{X}_\mathrm{C}$.  Let $K\in C^0([t_1,t_2],\bbanach)$, $0\leq t_1<t_2<\infty$ and let  $F_\gamma(\cdot,\cdot)$ be the solution of the initial value problem (\ref{Q2b}). Assume that for all $t\in[t_1,t_2]$, $\mathfrak{P}_{P_\mathrm{I}T_\mathrm{I}P_\mathrm{I}}[K(t)]$ exists and let $G(\cdot,t_1)$ be the solution of the initial value problem 
\begin{eqnarray}
\partial_1G(u,t_1)&=&\mathfrak{P}_{T_\mathrm{I}}[P_\mathrm{I}K(u)P_\mathrm{I}]G(u,t_1)\qquad\qquad t_{1}\leq u\leq t_{2},\nonumber\\
G(t_1,t_1)&=&P_{\mathrm{I}}.
\end{eqnarray}
Then, if $[t_1,t_2]\ni t\mapsto P_\mathrm{I}K(t)P_\mathrm{I}$ admits a certain approximability condition, then for every $t\in (t_1,t_2]$
\begin{equation}
\lim_{0<\gamma\rightarrow\infty}\left\|F_\gamma(t,t_1)-T_{\mathrm{I},\gamma}(t)G(t,t_1)\right\|_{\bbanach}=0.
\end{equation}
and the convergence is uniform on every compact subset of $(t_1,t_2]$, whenever $K$ is constant valued.
}\par
\vspace{2mm}
\noindent Here, $T_{\mathrm{I},\gamma}(t)=T_{\mathrm{I}}(\gamma t)$ for all $t\in\mathbb{R}_0^+$.

\subsection{Previous results} 
Concerning $(\ref{Q1a})$, if $T$ is a projection $P$ onto a closed subspace of $\hilbert$, and $(F(t))_{t\in\R}$ is a strongly continuous unitary group, then the limit of (\ref{Q1a}) with the choice $C_n=P$ has been studied many times in the past fifty years and there are numerous results concerning it, see for instance \cite{Exner2005,exner2007zeno,friedman1972semigroup,misra1977zeno} and references therein. Limiting our interest to a uniformly continuous unitary group with the skew-adjoint generator $-\mathrm{i}H$, then it is long known that  
\begin{equation}
\lim_{n\rightarrow\infty}\bigl(PF(t/n)\bigr)^n=P\exp\bigl(-\mathrm{i}H_{\mathrm{Z}}t\bigr)\qquad\qquad t\in\R,
\label{EQ34}
\end{equation}
where the Hamiltonian $H_\mathrm{Z}=PHP$ is called the Zeno Hamiltonian and the time evolution generated by $H_\mathrm{Z}$ is usually referred to as the Zeno dynamics. For experimental investigations of the Zeno dynamics, see \cite{balzer2002relaxationless, balzer2000quantum,bernu2008freezing,fischer2001observation,gleyzes2016quantum,
itano1990quantum,koshino2005quantum,kwiat1999high,nagels1997quantum,raimond2012quantum,raimond2010phase,schafer2014experimental,
signoles2014confined} and \cite{facchi2008quantum,home1997conceptual} for reviews of the subject from the physical point of view.

The physical interpretation of the time evolution given by (\ref{EQ34}) represents the evolution of a system whose unitary dynamics is disturbed by a sequence of projective measurements imitating a type of `continuous observation' in the large $n$ limit \cite{misra1977zeno}. Recently, generalization of this scenario to weak measurement protocols in the limited case of systems undergoing  computational time evolution in a quantum error correction framework has been studied \cite{dominy2013analysis,paz2012zeno}. Without dwelling on the fine details, \cite{dominy2013analysis} and \cite{paz2012zeno}  studied a family of contractions of the form $P\oplus T_\varepsilon\in\mathcal{B}(\mathcal{S}_1)$, $\varepsilon\in\R$ representing weak measurements on the Banach space of the first Schatten class of $\bhilbert$. Here $P$ is a bounded projection and $T_\varepsilon$ is strictly contractive for all $\varepsilon\in\R$. The unitary time evolution has been given by (\ref{EQ1}) with $K(\cdot)=-\mathrm{i}[L(\cdot),\bullet]$, where $L(\cdot)$ is a continuous family of  bounded self-adjoint operators in $\bhilbert$ and $[\bullet,\bullet]$ is the commutator of the arguments. The form of $L(\cdot)$ has been chosen to be relevant for quantum computing. It has been shown, that the limit 
\begin{equation}
\rho_\infty(t)\equiv\lim_{n\rightarrow\infty}\rho_{n}(t)=\lim_{n\rightarrow\infty}\prod_{l=0}^{n-1}T_{\varepsilon}F(t_{n,l+1},t_{n,l})\ \rho_0\qquad t_{n,l}\vcentcolon=l\frac{t}{n} \, , \;  t\in\Rop
\end{equation}
exists for all initial density matrices $\rho_0$ with respect to the $\|\cdot\|_1$ norm and the speed of convergence is bounded by 
\begin{equation}
\|\rho_n(t)-\rho_\infty(t)\|_1\leq \frac{p_1(t)\mathrm{e}^{h_1t}}{n}+\frac{p_2(t)\mathrm{e}^{h_2t}}{n}f(\varepsilon)+\mathcal{O}(n^{-2}),
\label{Lidar}
\end{equation}
where $h_1$, $h_2$ are constants depending on the dimension of the computational space and on the norms of various parts of $L\in C^{0}([0,t],\bhilbert)$. The real valued functions $p_1(\cdot)$ and $p_2(\cdot)$ are second order, $n$-independent polynomials of $t$, and $f(\varepsilon)$ is a positive function. This suggests that the convergence with respect to the topology induced by $\|\cdot\|_1$ should be uniform in $\mathcal{B}(\mathcal{S}_1)$. However, as (\ref{Lidar}) suggests, this convergence can be exponentially slow in the time variable, which could raise doubts concerning the applicability of (\ref{Lidar}) for large times in concrete practical situations. \par

If the contraction $T$ is equal to a unitary $U\in\bhilbert$ and the uniformly continuous semi-group $(F(t))_{t\geq 0}$ is  unitary, the latter being generated by $-\mathrm{i}H$, then the resulting time evolution given by (\ref{Q1a}) is called `single unitary dynamical control' in quantum computing theory. In \cite{bernad2017dynamical}, the limit of (\ref{Q1a}) was studied rigorously and the connection to ergodic means has been pointed out. In particular, it was shown that if the ergodic mean 
\begin{equation}
\mathfrak{P}_U[H]\vcentcolon=\lim_{n\rightarrow\infty}\frac{1}{n}\sum_{l=0}^{n-1}U^{*l}HU^l
\label{EQ35}
\end{equation}
exists in the uniform topology, then 
\begin{equation}
\lim_{n\rightarrow\infty}U^{*n}(UF(t/n))^n=\exp(-\mathrm{i}\mathfrak{P}_U[H]t)\qquad \qquad t\in\R
\label{EQ51}
\end{equation}
uniformly. Furthermore, for any Hamiltonian $H$ of the form $H=H_0+U^*H_1U-H_1$, where $H_0,H_1\in\bhilbert$ and $\mathfrak{P}_U[H_0] $ exists and non-vanishing, the author found
\begin{eqnarray}
&&\left\|U^{*n}(UF(t/n))^n-\exp\bigl(-\mathrm{i}\mathfrak{P}_U[H]t\bigr)\right\|_\bhilbert\nonumber\\
&&\qquad\qquad\leq \bigl(p_1(|t|)+p_2(|t|)\mathrm{e}^{2\|H_1\|_\bhilbert|t|}\bigr)\frac{e^{\|H\|_\bhilbert|t|}}{n}+\mathcal{O}(n^{-2}),
\end{eqnarray}
where $p_1(\cdot)$ and $p_2(\cdot)$ are $n$-independent third and second order polynomials of $|t|$, respectively. For a general $H\in\bhilbert$, for which $\mathfrak{P}_U[H]$ exists, the convergence of (\ref{EQ51}) has been also proved. Similar statements have been proved concerning the in-time non-equidistant control protocols.  \par

If the spectrum $\sigma_U$ of $U$ contains only isolated eigenvalues, then the ergodic mean (\ref{EQ35}) exists for all $H\in\bhilbert$ and it is equal to
\begin{equation}
\mathfrak{P}_U[H]=\sum_{\lambda\in\sigma_U}P_\lambda HP_{\lambda},
\label{EQ53} 
\end{equation}
where $P_\lambda$ is the spectral projection of $\lambda\in\sigma_U$. Surprisingly, this ergodic mean shows up if one considers the large $\gamma>0$ behavior of the semi-group $(F_\gamma(t))_{t\geq 0}$ generated by the sum $\gamma A+B$, where $A,B\in\bhilbert$, $\hilbert$ being a finite dimensional Hilbert-space and the spectrum $\sigma_A$ of $A$ is contained in $\mathbb{C}^{-}$, such that its purely imaginary eigenvalues are diagonalizable: In a recent study \cite{burgarth2018generalized}, the authors found that if these conditions on $A$ are fulfilled, then
\begin{eqnarray}
\qquad\lim_{0<\gamma\rightarrow\infty}\left\|\exp\bigl((\gamma A+B)t\bigr)-P_\mathrm{I}\exp(\gamma A t)\exp\left(\sum_{\lambda\in \sigma_{A}\cap\,\mathrm{i}\R}P_{\lambda}BP_\lambda t\right)\right\|_\bhilbert=0
\label{Facchi1}
\end{eqnarray}
holds for all $t\in \Rop$, where 
\begin{equation}
P_\mathrm{I}=\sum_{\lambda\in \sigma_{A}\cap\,\mathrm{i}\R}P_{\lambda}.
\end{equation}
Now, assume that $A$ shares the assumptions above, let us denote its generated semi-group by $(T(t))_{t\geq 0}$ and define $P_\mathrm{C}\vcentcolon=\mathbbm{1}_\hilbert-P_\mathrm{I}$. Then, $\hilbert$ decomposes as $\hilbert=P_\mathrm{I}\hilbert\oplus P_\mathrm{C}\hilbert$ with a corresponding block-diagonal decomposition $T(\cdot)=T(\cdot)_{\mathrm{I}}\oplus T(\cdot)_{\mathrm{C}}$, where the semi-groups $(T_\mathrm{I}(t))_{t\geq 0}$ and $(T_{\mathrm{C}}(t))_{t\geq 0}$ are acting on $P_\mathrm{I}\hilbert$ and $P_\mathrm{C}\hilbert$, respectively. Furthermore, they satisfy
\begin{eqnarray}
T^{\ }_\mathrm{I}(t)T^*_\mathrm{I}(t)&=&T^*_\mathrm{I}(t)T^{\ }_\mathrm{I}(t)=\mathbbm{1}_{P_\mathrm{I}\hilbert}\qquad t\in\R,\nonumber\\
\|T_\mathrm{C}(t)\|_\bhilbert&\leq& M\exp(-\omega t) \ \,\qquad\qquad t\in\Rop,
\end{eqnarray}
with some $M,\omega\in\Rp$ and $M\geq 1$. Then, we observe 
\begin{eqnarray}
&&\left\|\frac{1}{S}\int_{0}^{S}T^*(s)BT(s)\,\mathrm{d}s-\frac{1}{S}\int_{0}^{S}T_{\mathrm{I}}^*(s)BT_{\mathrm{I}}(s)\,\mathrm{d}s\right\|_\bhilbert\nonumber\\
&&\qquad\qquad\qquad\qquad\qquad\qquad\leq \frac{2\|B\|_\bhilbert M}{S}\int_{0}^S\mathrm{e}^{-\omega s}\,\mathrm{d}s+\frac{\|B\|_\bhilbert M^2}{S}\int_{0}^S\mathrm{e}^{-2\omega s}\,\mathrm{d}s,
\end{eqnarray}
thus 
\begin{equation}
\mathfrak{P}_{T_\mathrm{I}}[B]\vcentcolon=\lim_{S\rightarrow\infty}\frac{1}{S}\int_{0}^{S}T_\mathrm{I}^*(s)BT_\mathrm{I}(s)\,\mathrm{d}s=\lim_{S\rightarrow\infty}\frac{1}{S}\int_{0}^{S}T^*(s)BT(s)\,\mathrm{d}s,
\end{equation}
where $\mathfrak{P}_{T_\mathrm{I}}[B]$ is the ergodic mean of $B$ with respect to the unitary group $(F_\mathrm{I}(t))_{t\in\R}$. This can be calculated easily using (\ref{EQ53}):
\begin{equation}
\mathfrak{P}_{T_\mathrm{I}}[B]=\sum_{\lambda\in \sigma_{A}\cap\,\mathrm{i}\R}P_{\lambda}BP_\lambda.
\label{EQ36}
\end{equation}
With this observation in hand, the speed of convergence of (\ref{Facchi1}) found by \cite{burgarth2018generalized} can be written as
\begin{eqnarray}
&&\Biggl\|\exp\bigl((\gamma A+B)t\bigr)-P_\mathrm{I}\exp(\gamma A t)\exp\bigl(\mathfrak{P}_{T_\mathrm{I}}[B]t\bigr)\Biggr\|_\bhilbert\nonumber\\[10pt]
&&\qquad\qquad\qquad\leq \frac{c_{A,B}(M_A+1)}{\gamma} \frac {M_A\|B\|_\bhilbert\mathrm{e}^{tM_{A}\|B\|_\bhilbert}-\|\mathfrak{P}_{T_\mathrm{I}}[B]\|_\bhilbert\mathrm{e}^{t\|\mathfrak{P}_{T_\mathrm{I}}[B]\|_\bhilbert}}{M_A\|B\|_\bhilbert-\|\mathfrak{P}_{T_\mathrm{I}}[B]\|_\bhilbert}\nonumber\\[5pt]
&&\qquad\qquad\qquad\qquad+\frac{M_A\mathrm{e}^{M_A\|B\|_\bhilbert}}{\gamma}\int_{0}^{\infty}\mathrm{e}^{-\eta s}p(s)\,\mathrm{d}s
+\,\mathrm{e}^{-\gamma t}p(\gamma t),
\label{EQ52}
\end{eqnarray}
where $c_{A,B}$ is a positive constant, $\eta=\min\{|\Re(\lambda)|^2:\,\lambda\in\sigma_{A}\setminus \mathrm{i}\R\}$, $M_A\geq 1$ and $p: \Rop\rightarrow \Rp$ satisfy the inequalities
\begin{equation}
\|F(t)\|_\bhilbert\leq M_A,\qquad\qquad \|F(t)P_\mathrm{C}\|_\bhilbert\leq \mathrm{e}^{-\eta t}p(t),\qquad\qquad t\in \Rop.
\end{equation} 
If $A$ is self-adjoint, then, provided that $\sigma_{A}\subset \mathbb{C}^-$, we can set $M_A=1$, $p\equiv 1$ and take the limit $\eta\rightarrow 0$ in (\ref{EQ52}), so 
\begin{eqnarray}
&&\Biggl\|\exp\bigl((\gamma A+B)t\bigr)-P_\mathrm{I}\exp(\gamma A t)\exp\left(\mathfrak{P}_{T_\mathrm{I}}[B]t\right)\Biggr\|_\bhilbert\qquad\qquad\qquad\qquad\qquad\nonumber\\[10pt]
&&\qquad\qquad\qquad\qquad\leq \frac{2c_{A,B}}{\gamma} \frac {\|B\|_\bhilbert\mathrm{e}^{t\|B\|_\bhilbert}-\|\mathfrak{P}_{T_\mathrm{I}}[B]\|_\bhilbert\mathrm{e}^{t\|\mathfrak{P}_{T_\mathrm{I}}[B]\|_\bhilbert}}{\|B\|_\bhilbert-\|\mathfrak{P}_{T_\mathrm{I}}[B]\|_\bhilbert}+\mathrm{e}^{-\gamma t}.
\end{eqnarray}

The same conclusions arises if one studies - instead of the uniformly continuous semi-group $\exp((\gamma A+B)t)$, but the products 
\begin{equation}
(\mathcal{E}\exp(\mathcal{L}t/n))^n,
\label{EQ68}
\end{equation}
where $\mathcal{E}$ is a completely positive operator acting on a Hilbert-Schmidt space $\mathcal{H}_{\mathrm{HS}}$ of finite dimensional square matrices and $\mathcal{L}$ is a generator of a completely positive trace preserving semi-group on the same space. The spectrum of $\mathcal{E}$ is contained within the closed unit disk $\mathbb{D}$ and all the eigenvalues located on $\partial\mathbb{D}$ are diagonalizable within $\mathcal{B}(\mathcal{H}_{\mathrm{HS}})$. Again, we have the decomposition $\mathcal{E}=\mathcal{E}_{I}\oplus\mathcal{E}_C$ given by the complementary projections 
\begin{equation}
P_\mathrm{I}=\sum_{\lambda\in \sigma_{\mathcal{E}}\cap \partial\mathbb{D}}P_{\lambda} \, , \quad P_{\mathrm{C}}=\mathbbm{1}_{\mathcal{H}_{\mathrm{HS}}}-P_{\mathrm{I}}.
\end{equation}
It has been shown \cite{burgarth2018quantum}, that the limit of the product (\ref{EQ68}) is given by 
\begin{equation}
\lim_{n\rightarrow\infty}\mathcal{E}_{\mathrm{I}}^{-n}(\mathcal{E}\exp(\mathcal{L}t/n))^n=\exp\left(\mathcal{L}_{\mathrm{Z}}t\right)\, , \quad \mathcal{L}_{\mathrm{Z}}=\sum_{\lambda\in \sigma_{\mathcal{E}}\cap \partial\mathbb{D}}P_\lambda \mathcal{L}P_\lambda .
\label{EQ69}
\end{equation}
A short calculation gives 
\begin{equation}
\mathfrak{P}_{\mathcal{E}_\mathrm{I}}[\mathcal{L}]\vcentcolon=\lim_{n\rightarrow\infty}\frac{1}{n}\sum_{l=0}^{n-1}{\mathcal{E}_\mathrm{I}}^{*l}\mathcal{L}{\mathcal{E}_\mathrm{I}}^{l}=\mathcal{L}_{\mathrm{Z}},
\label{EQ70}
\end{equation}
and the convergence is uniform in $\mathcal{B}(\mathcal{H}_{\mathrm{HS}})$. Therefore, an ergodic mean $\mathfrak{P}_{\mathcal{E}_\mathrm{I}}[\mathcal{L}]$ pops up again in the limit (\ref{EQ69}).
 
In this study, we give further insights into the appearance of the ergodic means such like (\ref{EQ35}), (\ref{EQ36}) and (\ref{EQ70}). The observation that the Zeno Dynamics, single unitary dynamical control and adiabatic theorems are interrelated is not new and dates back to at least \cite{facchi2004unification} and \cite{facchi2002quantum}. It is surprising that an enlightening, rigorous mathematical explanation of this connection in arbitrary Banach spaces has been missing, nonetheless the correspondence has been verified even experimentally \cite{schafer2014experimental}. Fortunately, the recent results of \cite{burgarth2018generalized} and \cite{burgarth2018quantum} give a well developed foundation of the theory in finite dimension.

\subsection{Content and structure of the paper} 
In this paper, we take further steps into the direction of a complete mathematical description of the phenomenon and widen its range of applicability to systems whose time evolution propagator is generated by continuously time dependent, bounded operators. To avoid technical difficulties, we concentrate here on uniform convergence and assume contractivity of the control operator $T$ throughout the paper. Uniform convergence appears to be a rather restrictive condition, but from an experimental point of view, it is well confirmed: In most cases, the initial datum of systems coupled to their environments are not known exactly. 

The paper is organized as follows. In section \ref{SEC2}, a generalization of ergodic means to time dependent operator families is introduced and some of its elementary properties are discussed. In section \ref{SEC3}, we introduce a weakening of the definition of growth bound of a semi-group \cite{engel1999one} which allows us to prove Lyapunov-type (exponential) upper bounds on products of the form (\ref{EQ1}) by using the Trotter product formula. This definition agrees with the usual one if the semi-group in question does not have a `hump' \cite{moler2003nineteen} and clearly weaker if it does. In many physical situations, the Banach space $\banach$ on which (\ref{EQ1}) acts as well as the contraction $T$ admit a decomposition $\banach=\banach_{\mathrm{I}}\oplus \banach_{\mathrm{C}}$ and $T=T_{\mathrm{I}}\oplus T_{\mathrm{C}}$, where $T_{\mathrm{I}}:\banach_{\mathrm{I}}\rightarrow \banach_{\mathrm{I}}$ is an isometric isomorphism and $T_{\mathrm{C}}:\banach_{\mathrm{C}}\rightarrow\banach_{\mathrm{C}}$ is a strict contraction. In section \ref{SEC4}, we present technical results which enables us to show that the existence of the limit of (\ref{EQ1}) depends only on the isometric part of $T$. We prove the main theorems in section \ref{SEC5} and section \ref{SEC6}. Some frequently used facts are listed in the appendix. 

\subsection{Notation}
Throughout the paper, indices of products of operators descend from left to the right, i.e.~for every integral pair $n<m$ and any sequence $K_n,\dots, K_m$ of operators, the product $\prod_{l=n}^{m}K_l$ is meant to be equal to $K_mK_{m-1}\cdots K_n$. The empty product always takes the unit value and the empty sum is equal to zero. The number sets $\mathbb{N}_0$, $\mathbb{R}^+$ and $\mathbb{R}^+_0$ denote the non-negative integers, positive reals and non-negative reals, respectively. If $\banach$ is a Banach space, then the index of the operator norm $\|\cdot\|_{\bbanach}$ is dropped in the proofs, but not in the statements. Therefore, if $\|\cdot\|$ appears in a proof, it always means $\|\cdot\|_{\bbanach}$.

\section{Ergodic mean in \texorpdfstring{$C([t_1,t_2],\bbanach)$}{C}}
\label{SEC2}
Application of operator theoretic methods in ergodic theory originated in the work of J.~von Neumann \cite{neumann1932proof}. After that, the subject has been grown enormously, see \cite{eisner2015operator} for an encyclopedic survey of the field. Here, we introduce an ergodic mean in $C^0([t_1,t_2],\bbanach)$ that is suitable for our purposes and consider some of its elementary properties involving an adaptation of L.~W.~Cohen's ergodic theorem \cite{cohen1940mean}. We have gained inspiration from the classic texts containing the elementary proof of the mean ergodic theorem of contractions on Hilbert spaces of F.~Riesz \cite{riesz1938some} and from the Banach space generalization of the theorem due to K.~Yosida \cite{yosida1938mean}, see  E.~Hopf's short book \cite{hopf1937ergodentheorie} for an early review. In the time independent case, the ergodic mean introduced in (\ref{EQ35}) resembles the entangled ergodic means, see \cite{accardi1998notions,eisner2010entangled,eisner2017pointwise,fidaleo2007entangled,fidaleo2009entangled,liebscher1999note} for recent results in this direction.

Let $[t_1,t_2]$, $t_1<t_2$ be a closed, finite interval and $\banach$ be a Banach space. The notation $C^{r}([t_1,t_2],\bbanach)$ stands for the Banach space of $r$ times continuously differentiable, $\bbanach$ valued functions with the usual $C^r$ norm:
\begin{equation}
\|K\|_{C^r}=\sum_{p=0}^{r}\|\partial_t^pK\|_{C^{0}},
\end{equation}
where $\partial_t^p$ stands for the $p$th ordinary derivative of $K(\cdot)$. The notation $C^{0,1}([t_1,t_2],$ $\bbanach)$ stands for the Banach space of Lipschitz continuous, $\bbanach$ valued functions with the usual norm:
\begin{equation}
\|K\|_{C^{0,1}}=\|K\|_{C^{0}}+[K]_1,\qquad [K]_1=\sup_{\substack{s,s'\in [t_1,t_2]\\ s\neq s'}}\frac{\|K(s)-K(s')\|_{\bbanach}}{|s-s'|},
\end{equation}
whenever $[K]_1$ is finite. We again warn the reader that in the proofs, for the sake of brevity, $\|\cdot\|$ stands for the uniform norm of $\bbanach$.

\subsection{Definition and elementary properties}
\begin{Def}
Let $\alpha=(\alpha_1,\alpha_2,\dots)$ be a sequence of sequences of positive numbers $(\alpha_{n,l})_{0\leq l<n}$. We say that $\alpha$ is admissible, if
\begin{equation}
\sum_{l=0}^{n-1}\alpha_{n,l}=1,\qquad\lim_{n\rightarrow\infty}\max_{0\leq l<n}\alpha_{n,l}=0,\qquad\lim_{n\rightarrow\infty}\sum_{l=0}^{n-2}|\alpha_{n,l+1}-\alpha_{n,l}|=0
\label{EQ47}
\end{equation}
hold for all $n\in\mathbb{N}$.
\end{Def}

For a given admissible $\alpha$, we define $\gamma^{\alpha}_{n,l}\vcentcolon=\sum_{p=0}^{l-1}\alpha_{n,p}$ and the maps $\rho^{\alpha}_{n,l}:[t_1,t_2]\mapsto[t_1,t_2]$, $\rho^{\alpha}_{n,l}(s)\vcentcolon =t_1+(s-t_1)\gamma^{\alpha}_{n,l}$ for each $n\in\mathbb{N}$ and $0\leq l<n$. 

Given an admissible $\alpha$ and an isometric isomorphism $T:\banach\rightarrow \banach$, we define the functions $\mathfrak{P}^{(n,k)}_{\alpha,T}[K]\in C([t_1,t_2],\bbanach)$, $0<k< n$, $n\in\N$ as
\begin{equation}
\mathfrak{P}^{(n,k)}_{\alpha,T}[K](s)\vcentcolon=\sum_{l=0}^{k}T^{-l}\bigl(\gamma^{\alpha}_{n,l+1}K\circ\rho^{\alpha}_{n,l+1}-\gamma^{\alpha}_{n,l}K\circ \rho^{\alpha}_{n,l}\bigr)(s)T^{l}\qquad s\in[t_1,t_2],
\end{equation}
for every $K\in C([t_1,t_2],\bbanach)$. Particularly, we use the notation $\mathfrak{P}^{(n)}_{\alpha,T}$ for $\mathfrak{P}^{(n,n-1)}_{\alpha,T}$.\par

We need some computational properties of $\mathfrak{P}^{(n,k)}_{\alpha,T}$. 

\begin{Lem} \label{LEM2.1} Let $n\in\mathbb{N}$, $0<k<n$ an integer. If $K\in C^{0,1}([t_1,t_2],\bbanach)$, then
\begin{equation}
\bigl\|\mathfrak{P}^{(n,k)}_{\alpha,T}[K](s)\bigr\|_{\bbanach}\leq (s-t_1)[K]_1+\|K\|_{C^{0}}.
\end{equation}
Furthermore, if $K\in C^{r+1}([t_1,t_2],\bbanach)$ and $0\leq p\leq r$ is an integer, then 
\begin{equation}
\bigl\|\bigl(\partial_t^p\mathfrak{P}^{(n,k)}_{\alpha,T}[K]\bigr)(s)\bigr\|_{\bbanach}\leq (s-t_1)\|\partial_t^{p+1}K\|_{C^0}+(p+1)\|\partial_t^pK\|_{C^0}.
\end{equation}
\end{Lem}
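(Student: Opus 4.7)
The strategy is to reduce both estimates to a term-wise application of the triangle inequality exploiting that conjugation by the isometric isomorphism $T$ does not enlarge operator norms, and then control each summand by an Abel-type splitting of the differences appearing in the definition of $\mathfrak{P}^{(n,k)}_{\alpha,T}[K]$.

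\textbf{First bound.} Because $T$ is an isometric isomorphism, $T^{\pm l}$ has operator norm at most one, so
\begin{equation*}
\bigl\|\mathfrak{P}^{(n,k)}_{\alpha,T}[K](s)\bigr\|_{\bbanach}\leq \sum_{l=0}^{k}\bigl\|\gamma^{\alpha}_{n,l+1}K(\rho^{\alpha}_{n,l+1}(s))-\gamma^{\alpha}_{n,l}K(\rho^{\alpha}_{n,l}(s))\bigr\|_{\bbanach}.
\end{equation*}
I would then split each summand in the Abel fashion
\begin{equation*}
\gamma^{\alpha}_{n,l+1}K(\rho^{\alpha}_{n,l+1}(s))-\gamma^{\alpha}_{n,l}K(\rho^{\alpha}_{n,l}(s))=\gamma^{\alpha}_{n,l+1}\bigl[K(\rho^{\alpha}_{n,l+1}(s))-K(\rho^{\alpha}_{n,l}(s))\bigr]+\alpha_{n,l}K(\rho^{\alpha}_{n,l}(s)),
\end{equation*}
using $\gamma^{\alpha}_{n,l+1}-\gamma^{\alpha}_{n,l}=\alpha_{n,l}$. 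The Lipschitz bound gives $\|K(\rho^{\alpha}_{n,l+1}(s))-K(\rho^{\alpha}_{n,l}(s))\|_{\bbanach}\leq [K]_1(s-t_1)\alpha_{n,l}$, and the obvious bounds $\gamma^{\alpha}_{n,l+1}\leq 1$ and $\sum_{l=0}^{k}\alpha_{n,l}\leq 1$ then collapse the sum to $(s-t_1)[K]_1+\|K\|_{C^0}$.

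\textbf{Derivative bound.} Since $\rho^{\alpha}_{n,l}(t)=t_1+(t-t_1)\gamma^{\alpha}_{n,l}$ depends linearly on $t$ with slope $\gamma^{\alpha}_{n,l}$, the chain rule yields
\begin{equation*}
\partial_t^{p}\bigl[\gamma^{\alpha}_{n,l}K(\rho^{\alpha}_{n,l}(t))\bigr]=(\gamma^{\alpha}_{n,l})^{p+1}(\partial_t^{p}K)(\rho^{\alpha}_{n,l}(t)),
\end{equation*}
so the same $T^{-l}(\,\cdot\,)T^{l}$ sum appears but with $\gamma^{\alpha}_{n,l}$ replaced by $(\gamma^{\alpha}_{n,l})^{p+1}$ and $K$ by $\partial_t^{p}K$. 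After applying the triangle inequality as before, I would repeat the Abel splitting, whose only genuinely new ingredient is the factorization
\begin{equation*}
(\gamma^{\alpha}_{n,l+1})^{p+1}-(\gamma^{\alpha}_{n,l})^{p+1}=\alpha_{n,l}\sum_{j=0}^{p}(\gamma^{\alpha}_{n,l+1})^{p-j}(\gamma^{\alpha}_{n,l})^{j},
\end{equation*}
whose right-hand side is bounded by $(p+1)\alpha_{n,l}$ because every $\gamma^{\alpha}_{n,l}\in[0,1]$. Combining this with $|(\partial_t^{p}K)(\rho^{\alpha}_{n,l+1}(t))-(\partial_t^{p}K)(\rho^{\alpha}_{n,l}(t))|\leq \|\partial_t^{p+1}K\|_{C^0}(t-t_1)\alpha_{n,l}$, available since $K\in C^{r+1}$ and $p\leq r$, and again summing against $\sum_{l=0}^{k}\alpha_{n,l}\leq 1$, I obtain $(t-t_1)\|\partial_t^{p+1}K\|_{C^0}+(p+1)\|\partial_t^{p}K\|_{C^0}$.

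\textbf{Main obstacle.} Neither step is deep; both reduce to standard summation-by-parts manipulations. The only place where care is genuinely required is the derivative bound, where one must recognise that differentiating produces a $(\gamma^{\alpha}_{n,l})^{p+1}$ rather than $\gamma^{\alpha}_{n,l}$, so the telescoping estimate does not immediately reduce to $\mathfrak{P}^{(n,k)}_{\alpha,T}[\partial_t^{p}K]$, and the combinatorial factor $p+1$ enters precisely through the above finite geometric sum. Once the factorization is in hand the estimate is mechanical.
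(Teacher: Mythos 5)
Your argument reproduces the paper's proof almost exactly: both the Abel-type split into a difference term and a telescoping $\alpha_{n,l}$-weighted term, the use of isometry to drop the $T^{\pm l}$ conjugations, and the collapse of the sum via $\sum_l\alpha_{n,l}\leq 1$ with $\gamma^{\alpha}_{n,l}\in[0,1]$. The only (immaterial) divergence is in bounding $(\gamma^{\alpha}_{n,l+1})^{p+1}-(\gamma^{\alpha}_{n,l})^{p+1}$: the paper appeals to Taylor's/mean value theorem to get a factor $(p+1)\xi_l^p\alpha_{n,l}$ with $\xi_l\in[\gamma^{\alpha}_{n,l},\gamma^{\alpha}_{n,l+1}]$, whereas you use the purely algebraic factorization by a geometric sum; both yield the same $(p+1)\alpha_{n,l}$ and the rest is identical.
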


\begin{proof} Let $s\in [t_1,t_2]$, then
\begin{align}
\bigl\|\mathfrak{P}^{(n,k)}_{\alpha,T}[K](s)\bigr\|&=\Biggl\|\sum_{l=0}^{k}\gamma^{\alpha}_{n,l+1}T^{-l}\bigl(K(\rho^{\alpha}_{n,l+1}(s))-K(\rho^{\alpha}_{n,l}(s))\bigr)T^{l}\nonumber\\
&\qquad\qquad+\sum_{l=0}^{k}\left(\gamma^{\alpha}_{n,l+1}-\gamma^{\alpha}_{n,l}\right)T^{-l}K(\rho^{\alpha}_{n,l}(s))T^{l}\Biggr\|\nonumber\\
&\leq\sum_{l=0}^{k}\gamma^{\alpha}_{n,l+1}(\rho^{\alpha}_{n,l+1}(s)-\rho^{\alpha}_{n,l}(s))[K]_1+\sum_{l=0}^{k}\alpha_{n,l}\|K\|_{C^{0}}\nonumber,\\
&\leq(s-t_1)\sum_{l=0}^{k}\gamma^{\alpha}_{n,l+1}\alpha_{n,l}[K]_1+\sum_{l=0}^{n-1}\alpha_{n,l}\|K\|_{C^{0}}\nonumber,\\
&\leq(s-t_1)\sum_{l=0}^{n-1}\alpha_{n,l}[K]_1+\sum_{l=0}^{n-1}\alpha_{n,l}\|K\|_{C^{0}}\nonumber,\\
&\leq(s-t_1)[K]_1+\|K\|_{C^{0}},
\end{align}
where $0\leq \gamma^{\alpha}_{n,l}\leq 1$, $0\leq l\leq n$ has been used. The second part of the statement is proved in a similar way:
\begin{eqnarray}
\bigl\|\bigl(\partial_t^p\mathfrak{P}^{(n,k)}_{\alpha,T}[K]\bigr)(s)\bigr\|&=&\Biggl\|\sum_{l=0}^{k}\left(\gamma^{\alpha}_{n,l+1}\right)^{p+1}T^{-l}\bigl(\partial_t^pK(\rho^{\alpha}_{n,l+1}(s))-\partial_t^pK(\rho^{\alpha}_{n,l}(s))\bigr)T^{l}\nonumber\\
&&\qquad\qquad+\sum_{l=0}^{k}\left(\left(\gamma^{\alpha}_{n,l+1}\right)^{p+1}-\left(\gamma^{\alpha}_{n,l}\right)^p\right)T^{-l}\partial_t^pK(\rho^{\alpha}_{n,l}(s))T^{l}\Biggr\|\nonumber\\
&\leq&\sum_{l=0}^{k}\left(\gamma^{\alpha}_{n,l+1}\right)^{p+1}\Biggl\|T^{-l}\int_{\rho^{\alpha}_{n,l}(s)}^{\rho^{\alpha}_{n,l+1}(s)}\partial_t^{p+1}K(v)\, \mathrm{d}v\,T^{l}\Biggr\|\nonumber\\
&&\qquad\qquad+\sum_{l=0}^{k}\left(\left(\gamma^{\alpha}_{n,l+1}\right)^{p+1}-\left(\gamma^{\alpha}_{n,l}\right)^{p+1}\right)\Biggl\|T^{-l}\partial_t^pK(\rho^{\alpha}_{n,l}(s))T^{l}\Biggr\|\nonumber\\
&=&\sum_{l=0}^{k}\left(\gamma^{\alpha}_{n,l+1}\right)^{p+1}\Biggl\|\int_{\rho^{\alpha}_{n,l}(s)}^{\rho^{\alpha}_{n,l+1}(s)}\partial_t^{p+1}K(v)\, \mathrm{d}v\Biggr\|\nonumber\\
&&\qquad\qquad+\sum_{l=0}^{k}(p+1)\xi^{p}_l \alpha_{n,l}\|\partial_t^pK(\rho^{\alpha}_{n,l}(s))\|\nonumber\\
&\leq&(s-t_1)\sum_{l=0}^{k}\gamma^{\alpha}_{n,l+1}\alpha_{n,l}\|\partial_t^{p+1}K\|_{C^0}+(p+1)\sum_{l=0}^{n-1}\alpha_{n,l}\|\partial_t^pK\|_{C^0}\nonumber\\
\qquad\qquad\qquad&=&(s-t_1)\sum_{l=0}^{k}\sum_{m=0}^{l}\alpha_{n,l}\alpha_{n,m}\|\partial_t^{p+1}K\|_{C^0}+(p+1)\|\partial_t^pK\|_{C^0}\nonumber\\
&\leq&(s-t_1)\left(\sum_{l=0}^{n-1}\alpha_{n,l}\right)^2\|\partial_t^{p+1}K\|_{C^0}+(p+1)\|\partial_t^pK\|_{C^0}\nonumber\\[6pt]
&=&(s-t_1)\|\partial_t^{p+1}K\|_{C^0}+(p+1)\|\partial_t^pK\|_{C^0},
\end{eqnarray}
where $0\leq \gamma^{\alpha}_{n,l}\leq \xi_l\leq \gamma^{\alpha}_{n,{l+1}}\leq 1$ for all $0\leq l<n$ and Taylor's theorem has been used.
\end{proof}

\begin{Lem} \label{LEM2.2} Let $K\in C^{0}([t_1,t_2],\bbanach)$. Then, for any $s\in[t_1,t_2]$, $0< k<n$,
\begin{equation}
\left\|\int_{t_1}^{s}\mathfrak{P}^{(n,k)}_{\alpha,T}[K](v)\, \mathrm{d}v\right\|_{\bbanach}\leq (s-t_1)\|K\|_{C^0}.
\end{equation}
\end{Lem}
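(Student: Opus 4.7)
The plan is to reduce the integral of $\mathfrak{P}^{(n,k)}_{\alpha,T}[K]$ to a sum of integrals of $K$ over adjacent subintervals of $[t_1,s]$, and then to invoke the isometric property of $T$ together with an elementary $L^\infty$-bound. First I would interchange the $v$-integration with the finite sum and pull the $v$-independent operators $T^{\pm l}$ outside the integral. For each $j\in\{1,\dots,n\}$ the linear substitution $u=\rho^{\alpha}_{n,j}(v)=t_1+(v-t_1)\gamma^{\alpha}_{n,j}$ gives $du=\gamma^{\alpha}_{n,j}\,dv$, so that
$$\int_{t_1}^{s}\gamma^{\alpha}_{n,j}\,K(\rho^{\alpha}_{n,j}(v))\,dv=\int_{t_1}^{\rho^{\alpha}_{n,j}(s)}K(u)\,du,$$
the case $j=0$ being trivial since $\gamma^{\alpha}_{n,0}=0$ makes both sides vanish. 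Applying this identity to each of the two halves of the summand indexed by $l$ and recombining, I expect to arrive at
$$\int_{t_1}^{s}\mathfrak{P}^{(n,k)}_{\alpha,T}[K](v)\,dv=\sum_{l=0}^{k}T^{-l}\left(\int_{\rho^{\alpha}_{n,l}(s)}^{\rho^{\alpha}_{n,l+1}(s)}K(u)\,du\right)T^{l}.$$

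From this formula the bound is immediate. Since $T$ is an isometric isomorphism, $\|T^{l}\|_{\bbanach}=\|T^{-l}\|_{\bbanach}=1$, so the triangle inequality together with the crude estimate $\|K(u)\|_{\bbanach}\leq\|K\|_{C^0}$ yields
$$\left\|\int_{t_1}^{s}\mathfrak{P}^{(n,k)}_{\alpha,T}[K](v)\,dv\right\|_{\bbanach}\leq\|K\|_{C^0}\sum_{l=0}^{k}\bigl(\rho^{\alpha}_{n,l+1}(s)-\rho^{\alpha}_{n,l}(s)\bigr)=(s-t_1)\|K\|_{C^0}\sum_{l=0}^{k}\alpha_{n,l},$$
and the admissibility condition $\sum_{l=0}^{n-1}\alpha_{n,l}=1$ finishes the argument.

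I do not expect a serious obstacle: the entire argument hinges on the single observation that the weighting $\gamma^{\alpha}_{n,l}$ built into the definition of $\mathfrak{P}^{(n,k)}_{\alpha,T}$ is precisely what is needed for the change of variables $u=\rho^{\alpha}_{n,j}(v)$ to eliminate the Jacobian and produce integrals of $K$ over the adjacent intervals $[\rho^{\alpha}_{n,l}(s),\rho^{\alpha}_{n,l+1}(s)]$. One could in fact tighten the conclusion to $(s-t_1)\gamma^{\alpha}_{n,k+1}\|K\|_{C^0}$ by exploiting the fact that these intervals telescope to $[t_1,\rho^{\alpha}_{n,k+1}(s)]\subseteq[t_1,s]$, but the weaker stated bound is what will be used later.
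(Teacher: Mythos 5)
Your proof is correct and follows essentially the same route as the paper's: the change of variables $u=\rho^{\alpha}_{n,j}(v)$ that collapses the weighted summand into $\int_{\rho^{\alpha}_{n,l}(s)}^{\rho^{\alpha}_{n,l+1}(s)}T^{-l}K(u)T^{l}\,du$, followed by the isometry of $T$ and the triangle inequality, is exactly the paper's argument. Your concluding observation that the intermediate bound $(s-t_1)\gamma^{\alpha}_{n,k+1}\|K\|_{C^0}$ is sharper also matches the paper's penultimate line $(s-t_1)\sum_{l=0}^{k}\alpha_{n,l}\|K\|_{C^0}$.
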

\begin{proof} Consider $K\in C^{0}([t_1,t_2],\bbanach)$ and $s\in[t_1,t_2]$. Then,  
\begin{eqnarray}
&&\left\|\int_{t_1}^{s}\mathfrak{P}^{(n,k)}_{\alpha,T}[K](v)\, \mathrm{d}v\right\|\nonumber\\
&&\qquad\qquad=\Biggl\|\sum_{l=0}^{k}T^{-l}\Biggl(\int_{t_1}^{s}\gamma^{\alpha}_{n,l+1} K(\rho^{\alpha}_{n,l+1}(v))-\gamma^{\alpha}_{n,l}K(\rho^{\alpha}_{n,l}(v))\,\mathrm{d}v\Biggr) T^{l}\Biggr\|\nonumber\\
&&\qquad\qquad=\Biggl\|\sum_{l=0}^{k}\int_{\rho^{\alpha}_{n,l}(s)}^{\rho^\alpha_{n,l+1}(s)}T^{-l}K(v)T^{l}\,\mathrm{d}v\Biggr\|\nonumber\\
&&\qquad\qquad\leq(s-t_1)\sum_{l=0}^{k}\alpha_{n,l}\|K\|_{C^{0}}\nonumber\\
&&\qquad\qquad\leq(s-t_1)\|K\|_{C^{0}}.
\end{eqnarray}
\end{proof}

\begin{Lem} \label{LEM2.3} Let $K\in C^{0}([t_1,t_2],\bbanach)$. Then, for all $0<k<n$,
\begin{eqnarray}
&&\left\|\int_{t_1}^{s}\mathfrak{P}^{(n,k)}_{\alpha,T}[K-T^{-1}KT](v)\,\mathrm{d}v\right\|\nonumber\\
&&\qquad \leq 2(s-t_1)\max_{0\leq l\leq k}{\alpha_{n,l}}\,\|K\|_{C^0}+(s-t_1)\sum_{l=0}^{k-1}|\alpha_{n,l+1}-\alpha_{n,l}|\nonumber\\
&&\qquad\qquad\times\Bigl(\|K\|_{C^0}+\sup_{v\in [\rho^{\alpha}_{n,l}(s),\rho^{\alpha}_{n,l+1}(s)]}\bigl\|K(v)-K(\rho^{\alpha}_{n,l}(s))\bigr\|\Bigr).
\end{eqnarray} 
\end{Lem}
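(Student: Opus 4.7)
The plan is to collapse the ergodic mean to a sum of integrals over the intervals $I_l \vcentcolon= [\rho^\alpha_{n,l}(s), \rho^\alpha_{n,l+1}(s)]$ and then apply Abel summation in the weights $\alpha_{n,l}$, so that the partition-variation quantity $|\alpha_{n,l+1}-\alpha_{n,l}|$ emerges naturally. The algebraic engine is that $T$ is an isometric isomorphism, so the auxiliary family $g_l(u) \vcentcolon= T^{-l}K(u)T^l$ satisfies $\|g_l(u)\|_\bbanach = \|K(u)\|_\bbanach$ and
\[
T^{-l}(K - T^{-1}KT)T^l = g_l - g_{l+1}.
\]

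First I would substitute this identity into the definition of $\mathfrak{P}^{(n,k)}_{\alpha,T}[K - T^{-1}KT]$, integrate in $v$ over $[t_1,s]$, and apply the change of variable $u = \rho^\alpha_{n,l}(v)$ (which absorbs $\gamma^\alpha_{n,l}$ into $du$) as already exploited in Lemma \ref{LEM2.2}. After the resulting pairwise cancellation, the integral collapses to
\[
\int_{t_1}^{s}\mathfrak{P}^{(n,k)}_{\alpha,T}[K-T^{-1}KT](v)\,\mathrm{d}v = \sum_{l=0}^{k}\bigl(T^{-l}N_l T^l - T^{-l-1}N_l T^{l+1}\bigr),
\]
where $N_l \vcentcolon= \int_{I_l}K(u)\,\mathrm{d}u$.

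Second, I would split $K(u) = K(\rho^\alpha_{n,l}(s)) + [K(u) - K(\rho^\alpha_{n,l}(s))]$ on each $I_l$. The constant piece contributes $(s-t_1)\sum_l \alpha_{n,l}(R_l - T^{-1}R_l T)$ with $R_l \vcentcolon= T^{-l}K(\rho^\alpha_{n,l}(s))T^l$. Using the identity $R_l - T^{-1}R_l T = (R_l - R_{l+1}) + T^{-(l+1)}[K(\rho^\alpha_{n,l+1}(s)) - K(\rho^\alpha_{n,l}(s))]T^{l+1}$ and then Abel-summing $\sum_l \alpha_{n,l}(R_l - R_{l+1})$ produces the two principal pieces of the desired bound: a boundary contribution estimated by $2(s-t_1)\max_l \alpha_{n,l}\|K\|_{C^0}$, and an Abel-difference sum bounded by $(s-t_1)\sum_{l=0}^{k-1}|\alpha_{n,l+1}-\alpha_{n,l}|\|K\|_{C^0}$. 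The residual $T^{-(l+1)}[K(\rho^\alpha_{n,l+1}(s)) - K(\rho^\alpha_{n,l}(s))]T^{l+1}$ is bounded in operator norm by $\sup_{v\in I_l}\|K(v)-K(\rho^\alpha_{n,l}(s))\|$, precisely the continuity factor in the statement. Isometric invariance of $T^{\pm l}$ keeps every conjugation norm-neutral throughout.

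Finally, the deviation $\int_{I_l}[K(u)-K(\rho^\alpha_{n,l}(s))]\,du$ remaining inside each $N_l$ is also controlled by the same oscillation $\sup_{v\in I_l}\|K(v)-K(\rho^\alpha_{n,l}(s))\|$. The main obstacle I anticipate is bookkeeping: several of these continuity terms first appear with weights $\alpha_{n,l}$ rather than $|\alpha_{n,l+1}-\alpha_{n,l}|$, and an additional Abel-type rearrangement — combining the oscillation pieces with the Abel residual from the constant-piece analysis — is needed to align every oscillation factor with the weight $|\alpha_{n,l+1}-\alpha_{n,l}|$, so that the combined factor $\|K\|_{C^0} + \sup_{v\in I_l}\|K(v)-K(\rho^\alpha_{n,l}(s))\|$ in the statement is reproduced. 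This index juggling is the delicate part; once organized, the remaining estimates follow from the triangle inequality and from $\|T^{\pm l}\|_\bbanach = 1$.
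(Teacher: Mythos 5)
Your overall plan tracks the paper's: collapse the integral to $\sum_{l=0}^{k}\bigl(T^{-l}N_lT^l-T^{-l-1}N_lT^{l+1}\bigr)$ with $N_l=\int_{I_l}K$, then extract a reference value of $K$. But the ``additional Abel-type rearrangement'' you invoke at the end to align all oscillation factors with $|\alpha_{n,l+1}-\alpha_{n,l}|$ does not exist --- this is a genuine gap, not bookkeeping. In your splitting, both the residual $(s-t_1)\sum_l\alpha_{n,l}\,T^{-(l+1)}\bigl[K(\rho^\alpha_{n,l+1}(s))-K(\rho^\alpha_{n,l}(s))\bigr]T^{l+1}$ and the contribution from $\int_{I_l}\bigl[K(u)-K(\rho^\alpha_{n,l}(s))\bigr]\,\mathrm{d}u$ carry the weight $\alpha_{n,l}$. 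No summation by parts can bound $\alpha_{n,l}$ by $|\alpha_{n,l+1}-\alpha_{n,l}|$: partial summation only relocates which sequence is differenced, and for the admissible equidistant choice $\alpha_{n,l}=1/n$ the quantity $|\alpha_{n,l+1}-\alpha_{n,l}|$ vanishes identically while $\alpha_{n,l}$ does not. Carried to completion, your route produces a correct inequality whose third block is of the form $3(s-t_1)\sum_l\alpha_{n,l}\sup_{v\in I_l}\|K(v)-K(\rho^\alpha_{n,l}(s))\|$ --- still tending to zero by uniform continuity of $K$ --- but not the bound asserted in the Lemma.

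The idea you are missing is the \emph{order} of the two operations. The paper reindexes the collapsed sum first, obtaining $N_0+\sum_{l=1}^{k}T^{-l}(N_l-N_{l-1})T^l-T^{-k-1}N_kT^{k+1}$, and only then subtracts a reference value inside $N_l-N_{l-1}$, using the \emph{same} value $K(\rho^\alpha_{n,l}(s))$ on both $\int_{I_l}$ and $\int_{I_{l-1}}$ --- the point $\rho^\alpha_{n,l}(s)$ being the common endpoint of the two adjacent intervals. Since $|I_l|=(s-t_1)\alpha_{n,l}$ and $|I_{l-1}|=(s-t_1)\alpha_{n,l-1}$, the constant-reference piece of $N_l-N_{l-1}$ is exactly $(s-t_1)(\alpha_{n,l}-\alpha_{n,l-1})K(\rho^\alpha_{n,l}(s))$; this interval-length \emph{difference} is where the weight $|\alpha_{n,l+1}-\alpha_{n,l}|$ originates. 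Your approach anchors each $N_l$ at its own left endpoint \emph{before} any reindexing, which severs the pairing of adjacent intervals and with it the source of the $|\Delta\alpha|$ weights. Reindex first, and then use a shared anchor for each adjacent pair $(N_l,N_{l-1})$: that is how the stated structure emerges.
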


\begin{proof}
Let $K\in C^{0}([t_1,t_2],\bbanach)$. For any $s\in [t_1,t_2]$ we have 
\begin{eqnarray}
&&\left\|\int_{t_1}^{s}\mathfrak{P}^{(n,k)}_{\alpha,T}[K-T^{-1}KT](v)\, \mathrm{d}v\right\|\nonumber\\
&&\qquad\qquad=\left\|\sum_{l=0}^{k}\int_{\rho^{\alpha}_{n,l}(s)}^{\rho^{\alpha}_{n,l+1}(s)}T^{-l}\bigl(K(v)-T^{-1}K(v)T^{\ }\bigr)T^{l}\,\mathrm{d}v\right\|\nonumber\\
&&\qquad\qquad=\Biggl\|\int_{\rho^\alpha_{n,0}(s)}^{\rho^{\alpha}_{n,1}(s)}K(v)\,\mathrm{d}v+\sum_{l=1}^{k}T^{-l}\left(\int_{\rho^{\alpha}_{n,l}(s)}^{\rho^{\alpha}_{n,l+1}(s)}K(v)\,\mathrm{d}v-\int_{\rho^{\alpha}_{n,l-1}(s)}^{\rho^{\alpha}_{n,l}(s)}K(v)\,\mathrm{d}v\right)T^{l}\nonumber\\
&&\qquad\qquad\qquad\qquad-\int_{\rho^{\alpha}_{n,k}(s)}^{\rho^{\alpha}_{n,k+1}(s)}T^{-k-1}K(v)T^{k+1}\,\mathrm{d}v\Biggr\|.
\label{EQ43}
\end{eqnarray}
Taking a closer look on the middle term, we can write 
\begin{eqnarray}
&&\Biggl\|T^{-l}\left(\int_{\rho^{\alpha}_{n,l}(s)}^{\rho^{\alpha}_{n,l+1}(s)}K(v)\,\mathrm{d}v-\int_{\rho^{\alpha}_{n,l-1}(s)}^{\rho^{\alpha}_{n,l}(s)}K(v)\,\mathrm{d}v\right)T^{l}\Biggr\|\nonumber\\
&&\qquad\qquad=\Biggl\|\int_{\rho^{\alpha}_{n,l}(s)}^{\rho^{\alpha}_{n,l+1}(s)}K(v)\,\mathrm{d}v-\int_{\rho^{\alpha}_{n,l-1}(s)}^{\rho^{\alpha}_{n,l}(s)}K(v)\,\mathrm{d}v\Biggr\|\nonumber\\
&&\qquad\qquad=\Biggl\|\bigl((\rho^{\alpha}_{n,l+1}(s)-\rho^{\alpha}_{n,l}(s))-(\rho^{\alpha}_{n,l}(s)-\rho^{\alpha}_{n,l-1}(s))\bigr)K(\rho^{\alpha}_{n,l}(s))\nonumber\\
&&\qquad\qquad\qquad\qquad+\int_{\rho^{\alpha}_{n,l}(s)}^{\rho^{\alpha}_{n,l+1}(s)}\bigl(K(v)-K(\rho^{\alpha}_{n,l}(s))\bigr)\,\mathrm{d}v\nonumber\\
&&\qquad\qquad\qquad\qquad-\int_{\rho^{\alpha}_{n,l-1}(s)}^{\rho^{\alpha}_{n,l}(s)}\bigl(K(v)-K(\rho^{\alpha}_{n,l}(s))\bigr)\,\mathrm{d}v\Biggr\|\nonumber\\
&&\qquad\qquad\leq \bigl|(\rho^{\alpha}_{n,l+1}(s)-\rho^{\alpha}_{n,l}(s))-(\rho^{\alpha}_{n,l}(s)-\rho^{\alpha}_{n,l-1}(s))\bigr|\,\bigl\|K(\rho^{\alpha}_{n,l}(s))\bigr\|\nonumber\\
&&\qquad\qquad\qquad\qquad+\int_{\rho^{\alpha}_{n,l-1}(s)}^{\rho^{\alpha}_{n,l+1}(s)}\bigl\|K(v)-K(\rho^{\alpha}_{n,l}(s))\bigr\|\,\mathrm{d}v\nonumber\\
&&\qquad\qquad\leq (s-t_1)|\alpha_{n,l}-\alpha_{n,l-1}|\,\Bigl(\bigl\|K(\rho^{\alpha}_{n,l}(s))\bigr\|\nonumber\\
&&\qquad\qquad\qquad\qquad+\sup_{v\in [\rho^{\alpha}_{n,l}(s),\rho^{\alpha}_{n,l+1}(s)]}\bigl\|K(v)-K(\rho^{\alpha}_{n,l}(s))\bigr\|\Bigr)\nonumber\\
&&\qquad\qquad\leq (s-t_1)|\alpha_{n,l}-\alpha_{n,l-1}|\,\Bigl(\bigl\|K\bigr\|_{C^0}\nonumber\\
&&\qquad\qquad\qquad\qquad+\sup_{v\in [\rho^{\alpha}_{n,l}(s),\rho^{\alpha}_{n,l+1}(s)]}\bigl\|K(v)-K(\rho^{\alpha}_{n,l}(s))\bigr\|\Bigr)
\end{eqnarray}
On the other hand,
\begin{eqnarray}
&&\left\|\int_{s_{n,0}}^{s_{n,1}} K(v)\,\mathrm{d}v-\int_{s_{n,k}}^{s_{n,k+1}} T^{-k-1}K(v)T^{k+1}\,\mathrm{d}v\right\|\nonumber\\
&&\qquad\qquad\qquad\qquad\leq 2(s-t_1)\max_{0\leq l\leq k}{\alpha_{n,l}}\,\|K\|_{C^0}.
\end{eqnarray} 
Therefore,
\begin{eqnarray*}
\left\|\int_{t_1}^{s}\mathfrak{P}^{(n,k)}_{\alpha,T}[K-T^{-1}KT]\,\mathrm{d}v\right\|\qquad\qquad\qquad\qquad\qquad\qquad\qquad
\end{eqnarray*}
\begin{eqnarray}
&&\qquad \leq 2(s-t_1)\max_{0\leq l\leq k}{\alpha_{n,l}}\,\|K\|_{C^0}+(s-t_1)\sum_{l=0}^{k-1}\Bigl|\alpha_{n,l+1}-\alpha_{n,l}\Bigr|\nonumber\\
&&\qquad\qquad\times\Bigl(\|K\|_{C^0}+\sup_{v\in [\rho^{\alpha}_{n,l}(s),\rho^{\alpha}_{n,l+1}(s)]}\bigl\|K(v)-K(\rho^{\alpha}_{n,l}(s))\bigr\|\Bigr).
\end{eqnarray} 
\end{proof}

\begin{Cor} \label{COR2.5} Let $K\in C^{0}([t_1,t_2],\bbanach)$. For any $s\in [t_1,t_2]$ and for all $0< k<n$,
\begin{equation}
\left\|\int_{t_1}^{s}\mathfrak{P}^{(n,k)}_{\alpha,T}[K-T^{-1}KT]\,\mathrm{d}v\right\|\leq D_{\alpha}(n)(s-t_1)\|K\|_{C_0}, 
\end{equation}
where 
\begin{equation}
D_\alpha(n)=2\max_{0\leq l<n}{\alpha_{n,l}}+3\sum_{l=0}^{n-2}|\alpha_{n,l+1}-\alpha_{n,l}|.
\end{equation}
\end{Cor}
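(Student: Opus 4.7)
The statement is a direct consequence of Lemma \ref{LEM2.3}, obtained by replacing the two $K$-dependent quantities on the right-hand side with cruder, $\|K\|_{C^0}$-only bounds. The plan is therefore first to invoke Lemma \ref{LEM2.3} verbatim, and then to slacken the bounds so that the dependence on the local oscillation of $K$ disappears and the truncated partial sums/maxima are extended to the full ranges corresponding to the indices used in $D_\alpha(n)$.

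Concretely, I would argue as follows. Starting from
\begin{equation*}
\left\|\int_{t_1}^{s}\mathfrak{P}^{(n,k)}_{\alpha,T}[K-T^{-1}KT](v)\,\mathrm{d}v\right\|_{\bbanach}
\end{equation*}
I apply Lemma \ref{LEM2.3}, obtaining two groups of terms. For the first group I use the monotonicity
\begin{equation*}
\max_{0\leq l\leq k}\alpha_{n,l}\leq \max_{0\leq l<n}\alpha_{n,l},
\end{equation*}
which replaces the partial maximum by the full one over $0\leq l<n$. For the second group, I first estimate the oscillation supremum by the triangle inequality,
\begin{equation*}
\sup_{v\in[\rho^\alpha_{n,l}(s),\rho^\alpha_{n,l+1}(s)]}\bigl\|K(v)-K(\rho^\alpha_{n,l}(s))\bigr\|_{\bbanach}\leq 2\|K\|_{C^0},
\end{equation*}
so that the bracketed factor is bounded by $3\|K\|_{C^0}$, and then I extend the range of the remaining finite-difference sum by the trivial inequality
\begin{equation*}
\sum_{l=0}^{k-1}|\alpha_{n,l+1}-\alpha_{n,l}|\leq \sum_{l=0}^{n-2}|\alpha_{n,l+1}-\alpha_{n,l}|.
\end{equation*}
Combining these two bounds yields precisely the coefficient $D_\alpha(n)$ in the statement.

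There is no real obstacle here: the corollary is exactly Lemma \ref{LEM2.3} with the $K$-dependence on the right-hand side uniformized by the two very coarse inequalities above. The only small conceptual point worth noting is why one would prefer this weaker form, namely that $D_\alpha(n)\to 0$ as $n\to\infty$ follows immediately from admissibility of $\alpha$ (cf.\ the second and third conditions in \eqref{EQ47}), so the corollary packages the Lemma \ref{LEM2.3} estimate into a form that will be directly applicable in the ergodic convergence arguments of the later sections.
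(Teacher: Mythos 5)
Your proof is correct and is exactly the (unwritten) argument the paper intends: the corollary is stated without proof immediately after Lemma \ref{LEM2.3}, and your three elementary bounds — extending the partial maximum and the finite-difference sum to their full ranges and replacing the oscillation supremum by $2\|K\|_{C^0}$ — are precisely how the coefficient $D_\alpha(n)$ is assembled.
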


\begin{Def}[] \label{DEF2.1} Let $\Fspace$ be a linear subspace of $C([t_1,t_2],\bbanach)$ which is complete in the norm topology induced by some $\|\cdot\|_{\Fspace}$. For a given $K\in\Fspace$, we say that its ergodic mean (with respect to the admissible $\alpha$ and the isometric isomorphism $T$) exists, if the sequence $\mathfrak{P}^{(n)}_{\alpha,T}[K]$ converges uniformly in the  $\|\cdot\|_\mathcal{F}$ norm. In this case, the limit  $\lim_{n\rightarrow\infty}\mathfrak{P}^{(n)}_{\alpha,T}[K]$ is denoted by $\mathfrak{P}_{\alpha,T}[K]$ and called the ergodic mean of $K$.
\end{Def}

\begin{Rk}
To motivate our definition, consider the isometric isomorphism $T\in\bbanach$, the continuous $K:[t_1,t_2]\rightarrow \bbanach$, the propagator $F(\cdot,\cdot)$ as the solution of the initial value problem (\ref{EQ1}), and an admissible $\alpha$. Then, for any $s\in[t_1,t_2]$, and the partition of $[t_1,s]$ defined by $t_1=\rho^\alpha_{n,0}(s)<\cdots < \rho^\alpha_{n,n}(s)=s$, $n\in\mathbb{N}$, we have 
\begin{eqnarray}
\prod_{l=0}^{n-1}TF(\rho^{\alpha}_{n,l+1}(s),\rho^{\alpha}_{n,l}(s))&=&T^{n}\prod_{l=0}^{n-1}T^{-l}F(\rho^{\alpha}_{n,l+1}(s),\rho^{\alpha}_{n,l}(s))T^{l}\nonumber\\
&=&T^{n}\left(\mathbbm{1}_{\banach}+\sum_{l=0}^{n-1}\int_{\rho^{\alpha}_{n,l}(s)}^{\rho^{\alpha}_{n,l+1}(s)}T^{-l}K(v)T^{l}\,\mathrm{d}v+\cdots\right)\nonumber\\
&=&T^{n}\left(\mathbbm{1}_{\banach}+\int_{t_1}^{s}\mathfrak{P}^{(n)}_{\alpha,T}[K](v)\,\mathrm{d}v+\cdots\right),
\end{eqnarray}
from which we anticipate that if the limit $\lim_{n\rightarrow\infty}\mathfrak{P}^{(n)}_{\alpha,T}[K]$ exists in the uniform topology of $\Fspace$, than the limit 
\begin{equation}
G(s,t_1)\vcentcolon =\lim_{n\rightarrow\infty}T^{-n}\prod_{l=0}^{n-1}T F(\rho^{\alpha}_{n,l+1}(s),\rho^{\alpha}_{n,l}(s))
\end{equation}
also exists, its convergence is uniform in the topology of $\Fspace$ and it should be equal to the solution of the initial value problem 
\begin{eqnarray}
\partial_1G(s,t_1)&=&\mathfrak{P}_{\alpha,T}[K](s)G(s,t_1)\qquad\qquad t_1\leq s\leq t_2,\nonumber\\
G(t_1,t_1)&=&\mathbbm{1}_{\banach}.
\end{eqnarray}
\par
\vspace{3mm}
Let $\mathrm{Dom}^r(\mathfrak{P}_{\alpha,T})\subseteq C^{r}([t_1,t_2],\bbanach)$, $r\in\mathbb{N}_0$ be the set which contains all $K\in  C^{r}([t_1,t_2],\bbanach)$ for which $\lim_{n\rightarrow\infty}\mathfrak{P}^{(n)}_{\alpha,T}[K]$ exists in the $C^{r}$ norm. Similarly, let $\mathrm{Dom}^{0,1}(\mathfrak{P}_{\alpha,T})\subseteq C^{0,1}([t_1,t_2]\bbanach)$ be the set which contains all $K\in C^{0,1}([t_1,t_2],$ $\bbanach)$ for which $\lim_{n\rightarrow\infty}\mathfrak{P}^{(n)}_{\alpha,T}[K]$ exists in the $C^{0,1}$ norm. Then, the notations for the kernels $\mathrm{Ker}^{r}(\mathfrak{P}_{\alpha,T})$ and $\mathrm{Ker}^{0,1}(\mathfrak{P}_{\alpha,T})$ are obvious. At last, let us denote with $P^{r}([t_1,t_2],\bbanach)$ the $\bbanach$ valued polynomials of order at most $r<\infty$, restriced to the domain $[t_1,t_2]$.
\end{Rk}

We list some elementary properties of $\mathfrak{P}_{\alpha,T}$.

\begin{Thm} \label{THM2.1} Let $\banach$ be a Banach space and $T\in\mathcal{B}(\banach)$ be an isometric isomorphism. Let $\alpha$ be an admissible sequence of sequences of positive numbers. Then,
\vspace{2mm}
\begin{enumerate}[label=\textit{\arabic*)}]
\item $\mathfrak{P}_{\alpha,T}$ maps $\mathrm{Dom}^r(\mathfrak{P}_{\alpha,T})$ into $C^{r}([t_1,t_2],\bbanach)$ and $\mathrm{Dom}^{0,1}(\mathfrak{P}_{\alpha,T})$ into\\ $C^{0,1}([t_1,t_2],\bbanach)$. 
\vspace{2mm}
\item Let $K_0\in\bbanach$ and $K\in C^0([t_1,t_2],\bbanach)$ be the constant function of value $K_0$. Then, $\mathfrak{P}_{\alpha,T}[K]$ exists if and only if $\lim_{n\rightarrow\infty}\sum_{l=0}^{n-1}\alpha_{n,l}T^{-l}KT^{l}$ exists in the uniform topology of $\bbanach$.
\vspace{2mm}
\item If $K\in\mathrm{Dom}^{0,1}(\mathfrak{P}_{\alpha,T})$, then $\|\mathfrak{P}_{\alpha,T}[K]\|_{C^0}\leq (1+t_2-t_1)\|K\|_{C^{0,1}}$.
\vspace{2mm}
\item If $K\in\mathrm{Dom}^{r+1}(\mathfrak{P}_{\alpha,T})$, then $\|\mathfrak{P}_{\alpha,T}[K]\|_{C^r}\leq (r+2+t_2-t_1)\|K\|_{C^{r+1}}$. Therefore, $\mathrm{Dom}^{r+1}(\mathfrak{P}_{\alpha,T})\cap P^{r}([t_1,t_2],\bbanach)$ is closed. 
\vspace{2mm}
\item For every $K\in C^{0}([t_1,t_2],\bbanach)$, if $K-T^{-1}KT\in \Dom^0(\mathfrak{P}_{\alpha,T})$, then\\
$\lim_{n\rightarrow \infty}\mathfrak{P}^{(n)}_{\alpha,T}[K-T^{-1}KT]=0$.
\vspace{2mm}
\item $\mathrm{Ker}^{0}(\mathfrak{P}_{\alpha,T})$ is a linear subspace of the closure of functions in $C^{0}([t_1,t_2],\mathcal{B}(\banach))$ of the form $K=L-T^{-1}LT^{\ }$, $K\in \Dom^{0}(\mathfrak{P}_{\alpha,T})$. More precisely, $\mathrm{Ker}^{0}(\mathfrak{P}_{\alpha,T})$ is equal to the linear set
\begin{equation}
\overline{\{L-T^{-1}LT:L\in C^{0}([t_1,t_2],\mathcal{B}(\banach))\}\cap \Dom^0(\mathfrak{P}_{\alpha,T})}^{C^0}\cap \Dom^0(\mathfrak{P}_{\alpha,T}).
\end{equation}
\vspace{2mm}
\item For every $K\in\mathrm{Dom}^0(\mathfrak{P}_{\alpha,T})$,  $\mathfrak{P}_{\alpha,T}[K]$ commutes with $T$.
\vspace{2mm}
\item $\mathfrak{P}_{\alpha,T}$ is an involution on $\mathrm{Dom}^{0}(\mathfrak{P}_{\alpha,T})$.
\end{enumerate}
\end{Thm}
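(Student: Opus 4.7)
Parts (1)--(4) are routine consequences of Lemma \ref{LEM2.1} together with the completeness of $C^{r}([t_1,t_2],\bbanach)$ and $C^{0,1}([t_1,t_2],\bbanach)$: uniform limits of continuous/differentiable/Lipschitz maps inherit regularity, and the pointwise bounds of Lemma \ref{LEM2.1} survive the limit to yield the norm inequalities; the closedness of $\mathrm{Dom}^{r+1}(\mathfrak{P}_{\alpha,T})\cap P^r$ follows because all norms are equivalent on the finite-dimensional space of polynomials of degree $\le r$. For (2) a constant $K\equiv K_0$ collapses the definition to $\mathfrak{P}^{(n)}_{\alpha,T}[K](s)=\sum_{l=0}^{n-1}\alpha_{n,l}T^{-l}K_0 T^{l}$, which is independent of $s$, so the $C^0$ limit coincides with the norm limit in $\bbanach$.

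Part (5) is the first substantive statement. For $H\vcentcolon= K-T^{-1}KT\in\mathrm{Dom}^0(\mathfrak{P}_{\alpha,T})$, Corollary \ref{COR2.5} gives
\[
\left\|\int_{t_1}^{s}\mathfrak{P}^{(n)}_{\alpha,T}[H](v)\,\mathrm{d}v\right\|_{\bbanach}\le D_\alpha(n)(s-t_1)\|K\|_{C^0},
\]
with $D_\alpha(n)\to 0$ by admissibility, while $\mathfrak{P}^{(n)}_{\alpha,T}[H]\to\mathfrak{P}_{\alpha,T}[H]$ uniformly. Combining these forces $\int_{t_1}^{s}\mathfrak{P}_{\alpha,T}[H](v)\,\mathrm{d}v=0$ for every $s$, and continuity of the integrand then yields $\mathfrak{P}_{\alpha,T}[H]\equiv 0$.

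The central technical step is (7). Set $M_n=\mathfrak{P}^{(n)}_{\alpha,T}[K]$ and $D(v)=K(v)-T^{-1}K(v)T$. Writing $TM_n(s)T^{-1}=\sum_{l=0}^{n-1}T^{-(l-1)}\bigl(\gamma^{\alpha}_{n,l+1}K(\rho^{\alpha}_{n,l+1}(s))-\gamma^{\alpha}_{n,l}K(\rho^{\alpha}_{n,l}(s))\bigr)T^{l-1}$, shifting $l\mapsto l-1$ and pairing term by term with $M_n(s)$ gives the key identity
\[
TM_n(s)T^{-1}-M_n(s)=b_n(s)+T\,\mathfrak{P}^{(n)}_{\alpha,T}[D](s)\,T^{-1},
\]
where $b_n$ collects only the $l=0$ and $l=n-1$ boundary contributions and has norm $\mathcal{O}(\alpha_{n,0})\|K\|_{C^0}\to 0$. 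Integrating from $t_1$ to $s$ and invoking Corollary \ref{COR2.5} together with $\|T\|_{\bbanach}=\|T^{-1}\|_{\bbanach}=1$ yields $\|\int_{t_1}^{s}\bigl(TM_n(v)T^{-1}-M_n(v)\bigr)\,\mathrm{d}v\|_{\bbanach}\to 0$ uniformly in $s$. The limit of the left-hand side equals $\int_{t_1}^{s}\bigl(T\mathfrak{P}_{\alpha,T}[K](v)T^{-1}-\mathfrak{P}_{\alpha,T}[K](v)\bigr)\,\mathrm{d}v$, which must therefore vanish identically; differentiating in $s$ produces the commutation. The main obstacle here is executing the reindexing precisely enough to isolate boundary terms of order $\alpha_{n,0}$ rather than interior residues of order unity; it is exactly the admissibility hypothesis $\sum_l|\alpha_{n,l+1}-\alpha_{n,l}|\to 0$, entering through Corollary \ref{COR2.5}, that kills the $\mathfrak{P}^{(n)}_{\alpha,T}[D]$ contribution after integration.

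Part (8) then falls out: since $T$ commutes with $\mathfrak{P}_{\alpha,T}[K](u)$ for every $u$, the conjugations $T^{-l}(\cdot)T^{l}$ are trivial on $\mathfrak{P}_{\alpha,T}[K]$, so
\[
\mathfrak{P}^{(n)}_{\alpha,T}\bigl[\mathfrak{P}_{\alpha,T}[K]\bigr](s)=\sum_{l=0}^{n-1}\bigl(\gamma^{\alpha}_{n,l+1}\mathfrak{P}_{\alpha,T}[K](\rho^{\alpha}_{n,l+1}(s))-\gamma^{\alpha}_{n,l}\mathfrak{P}_{\alpha,T}[K](\rho^{\alpha}_{n,l}(s))\bigr)
\]
telescopes to $\mathfrak{P}_{\alpha,T}[K](s)$ for every $n$, whence $\mathfrak{P}_{\alpha,T}^{2}=\mathfrak{P}_{\alpha,T}$ on $\mathrm{Dom}^0(\mathfrak{P}_{\alpha,T})$. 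Finally, for (6) the inclusion $\supseteq$ is (5) plus linearity; for $\subseteq$ one uses the Abel identity $\mathfrak{P}^{(n)}_{\alpha,T}[K](s)=T^{-(n-1)}K(s)T^{n-1}+\bigl(L_n(s)-T^{-1}L_n(s)T\bigr)$ with $L_n(s)=\sum_{l=1}^{n-1}\gamma^{\alpha}_{n,l}T^{-(l-1)}K(\rho^{\alpha}_{n,l}(s))T^{l-1}$. Conjugating by $T^{n-1}$ rearranges this to $K=T^{n-1}\mathfrak{P}^{(n)}_{\alpha,T}[K]T^{-(n-1)}-(\tilde{L}_n-T^{-1}\tilde{L}_nT)$, and for $K\in\mathrm{Ker}^0(\mathfrak{P}_{\alpha,T})$ the first summand vanishes in $C^0$, exhibiting $K$ as the required uniform limit of elements of the form $L-T^{-1}LT$; $\mathrm{Dom}^0$-membership of these representatives is extracted from (8) applied to $K-\mathfrak{P}^{(n)}_{\alpha,T}[K]$.
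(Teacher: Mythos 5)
Your treatment of parts (1), (2), (3), (5), (7), and (8) is sound and largely parallels the paper, although (7) is done the hard way: the paper simply observes the exact operator identity $T^{-1}\mathfrak{P}^{(n)}_{\alpha,T}[K]T=\mathfrak{P}^{(n)}_{\alpha,T}[T^{-1}KT]$, from which $\mathfrak{P}^{(n)}_{\alpha,T}[K]-T^{-1}\mathfrak{P}^{(n)}_{\alpha,T}[K]T=\mathfrak{P}^{(n)}_{\alpha,T}[K-T^{-1}KT]$ holds without any boundary term, so the claim drops out of (5) and a two-line calculation; the boundary terms $b_n$ you introduce are in fact identically zero, which is a sign the reindexing was not carried out cleanly, though since $b_n\to 0$ anyway your argument still closes. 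In (4) the stated justification is wrong: $P^{r}([t_1,t_2],\bbanach)$ has coefficients in $\bbanach$ and is not finite dimensional in general, so ``all norms are equivalent on a finite-dimensional space'' does not apply; the correct observation (which the paper uses) is simply that $\partial_t^{r+1}K=0$ forces $\|K\|_{C^{r+1}}=\|K\|_{C^{r}}$ on $P^{r}$, making $\mathfrak{P}_{\alpha,T}$ bounded there.

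The substantive gap is in (6). Your Abel identity $\mathfrak{P}^{(n)}_{\alpha,T}[K]=T^{-(n-1)}K\,T^{n-1}+\bigl(L_n-T^{-1}L_nT\bigr)$ and the resulting rearrangement $K=T^{n-1}\mathfrak{P}^{(n)}_{\alpha,T}[K]T^{-(n-1)}-(\tilde L_n-T^{-1}\tilde L_nT)$ are correct and give a cleaner constructive route than the paper's Hahn--Banach separation argument (which itself uses essentially this same rearrangement, cf.\ (\ref{EQ49})). But the theorem requires $K$ to lie in the $C^0$-closure of $\{L-T^{-1}LT\}\cap\Dom^0(\mathfrak{P}_{\alpha,T})$, so you must show that each representative $\tilde L_n-T^{-1}\tilde L_nT$ belongs to $\Dom^0(\mathfrak{P}_{\alpha,T})$. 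This is not ``extracted from (8) applied to $K-\mathfrak{P}^{(n)}_{\alpha,T}[K]$'': part (8) concerns the full ergodic mean $\mathfrak{P}_{\alpha,T}[K]$, not the finite truncation $\mathfrak{P}^{(n)}_{\alpha,T}[K]$, and invoking (8) on $K-\mathfrak{P}^{(n)}_{\alpha,T}[K]$ presupposes exactly the $\Dom^0$-membership you are trying to establish --- it is circular. What is genuinely needed, and what the paper proves in the middle of its proof of (6), is the compatibility fact that for $M\in\Dom^0(\mathfrak{P}_{\alpha,T})$ one has $\mathfrak{P}_{\alpha,T}[M\circ\rho^{\alpha}_{n,l}]=\mathfrak{P}_{\alpha,T}[M]\circ\rho^{\alpha}_{n,l}$, which in turn rests on the algebraic observation $\rho^{\alpha}_{m,k}\circ\rho^{\alpha}_{n,l}=\rho^{\alpha}_{n,l}\circ\rho^{\alpha}_{m,k}$ and hence $\mathfrak{P}^{(m)}_{\alpha,T}[M]\circ\rho^{\alpha}_{n,l}=\mathfrak{P}^{(m)}_{\alpha,T}[M\circ\rho^{\alpha}_{n,l}]$. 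Once this is in hand, each $K\circ\rho^{\alpha}_{n,l}$ lies in $\Dom^0$, hence so does every $T^{-l}(K\circ\rho^{\alpha}_{n,l})T^{l}$ by the isometry of conjugation, and by linearity so does $\tilde L_n-T^{-1}\tilde L_nT=K-T^{n-1}\mathfrak{P}^{(n)}_{\alpha,T}[K]T^{-(n-1)}$. Without this lemma your argument does not establish the stated characterization of $\Ker^{0}(\mathfrak{P}_{\alpha,T})$.
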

\begin{proof}
$\ $\par
\noindent\textit{1)} Fix $n\in\mathbb{N}$ and $0\leq l<n$. Let $K\in\mathrm{Dom}^r(\mathfrak{P}_{\alpha,T})$. The functions $K\circ \rho^{\alpha}_{n,l}$ are members of $C^{r}([t_1,t_2],\bbanach)$, therefore $\mathfrak{P}^{(n)}_{\alpha,T}[K]\in C^{r}([t_1,t_2],\bbanach)$. Since  $\mathfrak{P}^{(n)}_{\alpha,T}[K]$ is convergent in the $C^r$ norm, the statement follows by uniform convergence. If $K\in\mathrm{Dom}^{0,1}(\mathfrak{P}_{\alpha,T})$, then $K\circ \rho^{\alpha}_{n,l}$ is a composition of Lipschitz continuous functions, therefore it is also Lipschitz continuous. Again, uniform convergence in the $C^{0,1}$ norm implies the statement.\par
\vspace{2mm}

\noindent\textit{2)}  Let $K_0\in\bbanach$, $s\in[t_1,t_2]$, then
\begin{eqnarray}
\mathfrak{P}^{(n)}_{\alpha,T}[K](s)=\sum_{l=0}^{n-1}\bigl(\gamma^{\alpha}_{n,l+1}-\gamma^{\alpha}_{n,l}\bigr)T^{-l}K_0T^{l}=\sum_{l=0}^{n-1}\alpha_{n,l}T^{-l}K_0T^{l}.
\end{eqnarray}
Hence, the limit exists in the $C^0$ norm, if and only if the ergodic mean 
\begin{equation}
\mathfrak{P}_{\alpha,T}[K_0]=\lim_{n\rightarrow\infty}\sum_{l=0}^{n-1}\alpha_{n,l}T^{-l}K_0T^{l}
\end{equation}
exists.
\par
\vspace{2mm}

\noindent\textit{3)} Let $K\in\mathrm{Dom}^{0,1}(\mathfrak{P}_{\alpha,T})$. Using Lemma \ref{LEM2.1}, we obtain
\begin{equation}
\|\mathfrak{P}_{\alpha,T}[K]\|_{C^0}=\lim_{n\rightarrow \infty}\|\mathfrak{P}^{(n)}_{\alpha,T}[K]\|_{C^0}\leq (1+t_2-t_1)\|K\|_{0,1}.
\end{equation}
\par
\vspace{2mm}

\noindent\textit{4)} Let $K\in\mathrm{Dom}^{r+1}(\mathfrak{P}_{\alpha,T})$. Using Lemma \ref{LEM2.1}, we obtain
\begin{eqnarray}
&&\|\mathfrak{P}_{\alpha,T}[K]\|_{C^r}=\lim_{n\rightarrow \infty}\|\mathfrak{P}^{(n)}_{\alpha,T}[K]\|_{C^r}=\lim_{n\rightarrow\infty}\sum_{p=0}^{r}\|\partial_t^p\mathfrak{P}^{(n)}_{\alpha,T}[K]\|_{C^0}\nonumber\\
&&\qquad\qquad\quad\qquad\qquad\qquad\qquad\qquad\leq (r+2+t_2-t_1)\|K\|_{C^{r+1}},
\end{eqnarray}
from which the first part of the statement follows. 
For any $K\in P^{r}([t_1,t_2],\bbanach)$, we have $\partial_t^{r+1}K=0$, therefore $\|K\|_{C^r}=\|K\|_{C^{r+1}}$ and so $\|\mathfrak{P}_{\alpha,T}[K]\|_{C^r}\leq (r+2+t_2-t_1)\|K\|_{C^{r}}$ whenever $\mathfrak{P}_{\alpha,T}[K]$ exists. From this estimate, and the fact that $P^{r}([t_1,t_2],\bbanach)$ is closed, the second part of the statement follows.
\par
\vspace{2mm}

\noindent\textit{5)} Since $K\in C^{0}([t_1,t_2],\bbanach)$, for any $\varepsilon>0$, there exists $\delta_{\varepsilon}>0$, such that for any two $x,x'\in [t_1,t_2]$ for which $|x-x'|\leq \delta_{\varepsilon}$ holds, $\|K(x)-K(x')\|\leq \varepsilon$ also holds. Since $\alpha$ is admissible, there exists $N_{\varepsilon,s}\in\mathbb{N}$ such that 
\begin{equation}
2(s-t_1)\max_{0\leq l<n}\alpha_{n,l}<\delta_{\varepsilon}\quad \text{and}\quad (s-t_1)\sum_{l=0}^{n-2}|\alpha_{n,l+1}-\alpha_{n,l}|<\delta_{\varepsilon}
\label{EQ42}
\end{equation}
hold whenever $n>N_{\varepsilon,s}$. Thus, if $n>N_{\varepsilon,s}$ holds, then for any $0< l <n-1$ and $x,x'\in [\rho^{\alpha}_{n,l-1}(s),\rho^{\alpha}_{n,l+1}(s)]$,
\begin{equation}
|x-x'|\leq \rho^{\alpha}_{n,l+1}(s)-\rho^{\alpha}_{n,l-1}(s)=(s-t_1)(\alpha_{n,l}+\alpha_{n,l-1})\leq 2(s-t_1)\max_{0\leq l<n}\alpha_{n,l}<\delta_{\varepsilon}
\end{equation}
is satisfied, therefore $\|K(x)-K(x')\|\leq \varepsilon$. Using Lemma \ref{LEM2.3},  this enables us to write 
\begin{equation}
\left\|\int_{t_1}^{s}\mathfrak{P}^{(n)}_{\alpha,T}[K-T^{-1}KT](v)\, \mathrm{d}v\right\|\leq\varepsilon\bigl\|K\bigr\|_{C^{0}} +\varepsilon\Bigl(\bigl\|K\bigr\|_{C^{0}}+\varepsilon\Bigr)\quad n>N_{\varepsilon,s}.
\label{EQ44}
\end{equation}
That is 
\begin{equation}
\lim_{n\rightarrow\infty}\left\|\int_{t_1}^{s}\mathfrak{P}^{(n)}_{\alpha,T}[K-T^{-1}KT](v)\, \mathrm{d}v\right\|=0,
\end{equation}
but $\mathfrak{P}^{(n)}_{\alpha,T}[K]$ converges in the $C^0$ norm, so taking the limit and integration are exchangeable, thereso
\begin{equation}
\left\|\int_{t_1}^{s}\mathfrak{P}_{\alpha,T}[K-T^{-1}KT](v)\, \mathrm{d}v\right\|=0.
\end{equation}
Since $\mathfrak{P}_{\alpha,T}[K-T^{-1}KT]\in C^0([t_1,t_2],\bbanach)$ and the equality holds for all $s\in[t_1,t_2]$, we have to have $\mathfrak{P}_{\alpha,T}[K-T^{-1}KT]=0$.
\par
\vspace{2mm}

\noindent\textit{6)} Define the spaces
\begin{eqnarray}
\mathcal{M}&\vcentcolon=&\{L-T^{-1}LT^{\ }:L\in C^{0}([t_1,t_2],\bbanach)\}\cap \Dom^0(\mathfrak{P}_{\alpha,T}),\nonumber\\
\overline{\mathcal{M}}&\vcentcolon=&\overline{\{L-T^{-1}LT^{\ }:L\in C^{0}([t_1,t_2],\bbanach)\}\cap \Dom^0(\mathfrak{P}_{\alpha,T})}^{C^0}.
\end{eqnarray}
Note that \textit{5)} implies $\mathfrak{P}_{\alpha,T}[\mathcal{M}]=0$. Let $K$ be a member of $\overline{\mathcal{M}}\cap \Dom^0(\mathfrak{P}_{\alpha,T})$, let $K_\varepsilon\in\mathcal{M}$ such that $\|K-K_\varepsilon\|_{C^0}\leq \varepsilon$. Then, the application of Lemma \ref{LEM2.2} gives
\begin{eqnarray}
\left\|\int_{t_1}^{s}\bigl(\mathfrak{P}^{(n)}_{\alpha,T}[K]-\mathfrak{P}^{(n)}_{\alpha,T}[K_\varepsilon]\bigr)(v)\, \mathrm{d}v\right\|&\leq& (t_2-t_1)\|K-K_{\varepsilon}\|_{C^0}\nonumber\\
&\leq& (t_2-t_1)\varepsilon.
\end{eqnarray}
Uniform convergence and \textit{5)} ensures that
\begin{eqnarray}
\lim_{n\rightarrow\infty}\left\|\int_{t_1}^{s}\bigl(\mathfrak{P}^{(n)}_{\alpha,T}[K]-\mathfrak{P}^{(n)}_{\alpha,T}[K_\varepsilon]\bigr)(v)\, \mathrm{d}v\right\|&=&\left\|\int_{t_1}^{s}\bigl(\mathfrak{P}_{\alpha,T}[K]-\mathfrak{P}_{\alpha,T}[K_\varepsilon]\bigr)(v)\, \mathrm{d}v\right\|\nonumber\\
&=&\left\|\int_{t_1}^{s}\mathfrak{P}_{\alpha,T}[K](v)\, \mathrm{d}v\right\|,
\end{eqnarray}
thus 
\begin{equation}
\left\|\int_{t_1}^{s}\mathfrak{P}_{\alpha,T}[K](v)\, \mathrm{d}v\right\|\leq (t_2-t_1)\varepsilon,
\label{EQ54}
\end{equation}
so $\lim_{n\rightarrow\infty}\mathfrak{P}^{(n)}_{\alpha,T}[K]=0$ in the $C^{0}$ norm. This calculation shows that any member $K$ of $\overline{\mathcal{M}}$ has either vanishing ergodic mean or $\mathfrak{P}_{\alpha,T}[K]$ does not exists. Let 
\begin{equation}
\mathcal{M}'\vcentcolon= \Ker^{0}(\mathfrak{P}_{\alpha,T})\setminus \bigl(\overline{\mathcal{M}}\cap \Dom^0(\mathfrak{P}_{\alpha,T})\bigr).
\end{equation}
We prove that $\Ker^{0}(\mathfrak{P}_{\alpha,T})\subseteq \overline{\mathcal{M}}$, that is $\mathcal{M}'$ is empty. Let $K\in \mathcal{M'}$. By definition of $\mathcal{M}'$, $K\neq 0$ must hold. Provided by the Hahn-Banach separation theorem, there exists $\varphi\in C^0([t_1,t_2],\bbanach)^*$, for which $\varphi(K)=1$  and $\varphi$ vanishes on all members of $\overline{\mathcal{M}}$. Then, 
\begin{eqnarray}
\varphi(K)&=&\varphi(K)+\varphi(T^{-l}KT^{l})-\varphi(T^{-l}KT^{l})\nonumber\\
&=&\varphi(K-T^{-l}KT^{l})+\varphi(T^{-l}KT^{l})\nonumber\\
&=&\sum_{k=0}^{l-1}\varphi\bigl(T^{-k}KT^{k}-T^{-1}(T^{-k}KT^{k})T\bigr)+\varphi(T^{-l}KT^{l})\nonumber\\
&=&\varphi(T^{-l}KT^{l})
\label{EQ48}
\end{eqnarray}
for all $l\in\mathbb{N}_0$, provided that whenever $\mathfrak{P}_{\alpha,T}[K]$ exists, then $\mathfrak{P}_{\alpha,T}[T^{-k}KT^k]$ also exists, therefore $T^{-k}KT^{k}-T^{-1}(T^{-k}KT^{k})T\in \mathcal{M}$ for all $k\in\mathbb{N}_0$. Note that
\begin{eqnarray}
\mathfrak{P}^{(n)}_{\alpha,T}[K]&=&-\gamma^{\alpha}_{n,0}K\circ\rho^{\alpha}_{n,0}\nonumber\\
&&\qquad\qquad+\sum_{l=1}^{n-1}\gamma_{n,l}^{\alpha}\Bigl(T^{-(l-1)}\bigl(K\circ\rho^{\alpha}_{n,l}\bigr)T^{l-1}-T^{-l}\bigl(K\circ\rho^{\alpha}_{n,l}\bigr)T^{l}\Bigr)\nonumber\\
&&\qquad\qquad+\gamma^{\alpha}_{n,n}T^{-n}\bigl(K\circ\rho^{\alpha}_{n,n}\bigl)T^{n}.
\label{EQ49}
\end{eqnarray}
The middle term in the sum of the right hand side of (\ref{EQ49}) belongs to $\{L-T^{-1}LT^{\ }:L\in C^{0}([t_1,t_2],\bbanach)\}$. To show that it belongs to $\mathcal{M}$, fix $n\in\mathbb{N}$ and $0<l<n$. The crucial observation is that for any $m\in\mathbb{N}$ and $0\leq k<m$, $\rho^{\alpha}_{m,k}\circ\rho^{\alpha}_{n,l}=\rho^{\alpha}_{n,l}\circ\rho^{\alpha}_{m,k}$ holds: For any $s\in [t_1,t_2]$,
\begin{eqnarray}
\bigl(\rho^{\alpha}_{m,k}\circ\rho^{\alpha}_{n,l}\bigr)(s)&=&t_1+(\rho^{\alpha}_{n,l}(s)-t_1)\gamma^{\alpha}_{m,k}\nonumber\\
&=&t_1+\bigl(t_1+(s-t_1)\gamma^{\alpha}_{n,l}-t_1\bigr)\gamma^{\alpha}_{m,k}\nonumber\\
&=&t_1+(s-t_1)\gamma^{\alpha}_{n,l}\gamma^{\alpha}_{m,k}\nonumber\\
&=&t_1+\bigl(t_1+(s-t_1)\gamma^{\alpha}_{m,k}-t_1)\gamma^{\alpha}_{n,l}\nonumber\\
&=&t_1+(\rho^{\alpha}_{m,k}(s)-t_1)\gamma^{\alpha}_{n,l}\nonumber\\
&=&\bigr(\rho^{\alpha}_{n,l}\circ\rho^{\alpha}_{m,k}\bigr)(s).
\end{eqnarray}
Therefore, for any $M\in C([t_1,t_2],\bbanach)$,
\begin{eqnarray}
\mathfrak{P}^{(m)}_{\alpha,T}[M]\circ \rho^{\alpha}_{n,l}&=&\sum_{k=0}^{m-1}T^{-k}\Bigl(\gamma^{\alpha}_{m,k+1}\,K\circ \rho^{\alpha}_{m,k+1}\circ \rho^{\alpha}_{n,l}-\gamma^{\alpha}_{m,k}\,K\circ \rho^{\alpha}_{m,k}\circ \rho^{\alpha}_{n,l}\Bigr)T^k\nonumber\\
&=&\sum_{k=0}^{m-1}T^{-k}\Bigl(\gamma^{\alpha}_{m,k+1}\,K\circ \rho^{\alpha}_{n,l}\circ \rho^{\alpha}_{m,k+1}-\gamma^{\alpha}_{m,k}\,K\circ \rho^{\alpha}_{n,l}\circ \rho^{\alpha}_{m,k}\Bigr)T^k\nonumber\\
&=&\mathfrak{P}^{(m)}_{\alpha,T}[M\circ\rho^{\alpha}_{n,l}] .
\end{eqnarray}
Assume that $M\in \Dom^0(\mathfrak{P}_{\alpha,T})$. Let $\varepsilon>0$ be arbitrary. Then, there exists $N_\varepsilon$ such that for all $v\in [t_1,t_2]$, and $n>N_{\varepsilon}$,
\begin{equation}
\|\mathfrak{P}_{\alpha,T}[M](v)-\mathfrak{P}^{(n)}_{\alpha,T}[M](v)\|\leq \varepsilon.
\end{equation}
Particularly, if $v=\rho^{\alpha}_{n,l}(s)$ for some $s\in [t_1,t_2]$, 
\begin{equation}
\|\bigl(\mathfrak{P}_{\alpha,T}[M]\bigr)(\rho^{\alpha}_{n,l}(s))-\bigl(\mathfrak{P}^{(n)}_{\alpha,T}[M]\bigr)(\rho^{\alpha}_{n,l}(s))\|\leq \varepsilon.
\end{equation}
But $\mathfrak{P}^{(n)}_{\alpha,T}[M]\circ\rho^{\alpha}_{n,l}=\mathfrak{P}^{(n)}_{\alpha,T}[M\circ\rho^{\alpha}_{n,l}]$, so
\begin{equation}
\|\mathfrak{P}_{\alpha,T}[M]\circ\rho^{\alpha}_{n,l}-\mathfrak{P}^{(n)}_{\alpha,T}[M\circ\rho^{\alpha}_{n,l}]\|_{C^0}\leq \varepsilon.
\end{equation}
That is, $\mathfrak{P}_{\alpha,T}[M\circ\rho^{\alpha}_{n,l}]$ exists and is equal to $\mathfrak{P}_{\alpha,T}[M]\circ\rho^{\alpha}_{n,l}$. Substituting $T^{-(l-1)}KT^{l-1}-T^{-l}KT^{l}K$ in the place of $M$, we see that $T^{-l}(K\circ\rho^{\alpha}_{n,l})T^{l}-T^{-(l+1)}(K\circ\rho^{\alpha}_{n,l})T^{l+1}$ is a member of $\mathcal{M}$, so
\begin{equation}
\varphi\bigl(T^{-l}(K\circ\rho^{\alpha}_{n,l})T^{l}-T^{-l-1}(K\circ\rho^{\alpha}_{n,l})T^{l+1}\bigr)=0.
\end{equation}
Provided that $\gamma_{n,0}^{\alpha}=0$, $\gamma_{n,n}^{\alpha}=1$ and $\rho^{\alpha}_{n,n}=\mathrm{id}_{[t_1,t_2]}$,
\begin{equation}
\varphi\left(\mathfrak{P}^{(n)}_{\alpha,T}[K]\right)=\varphi(T^{-n}KT^n)=\varphi(K),
\end{equation}
where (\ref{EQ48}) has been used. Therefore,
\begin{equation}
0=\varphi(0)=\lim_{n\rightarrow\infty}\varphi(\mathfrak{P}^{(n)}_{\alpha,T}[K])=\varphi(K)=1,
\end{equation}
which is impossible. \par
\vspace{2mm}
\noindent\textit{7)} Let $K\in\mathrm{Dom}^{0}(\mathfrak{P}_{\alpha,T})$. Then, $T^{-1}KT\in\mathrm{Dom}^{0}(\mathfrak{P}_{\alpha,T})$ also holds, so $K-T^{-1}KT\in \mathrm{Dom}^{0}(\mathfrak{P}_{\alpha,T})$. Using $T^{-1}\mathfrak{P}^{(n)}_{\alpha,T}[K]T=\mathfrak{P}^{(n)}_{\alpha,T}[T^{-1}KT]$, we can write\\
$T^{-1}\mathfrak{P}_{\alpha,T}[K]T=\mathfrak{P}_{\alpha,T}[T^{-1}KT]$ to obtain
\begin{eqnarray}
\bigl[T,\mathfrak{P}_{\alpha,T}[K]\bigr]&=&T\bigl(\mathfrak{P}_{\alpha,T}[K]-T^{-1}\mathfrak{P}_{\alpha,T}[K]T\bigr)\nonumber\\
&=&T\bigl(\mathfrak{P}_{\alpha,T}[K]-\mathfrak{P}_{\alpha,T}[T^{-1}KT]\bigr)\nonumber\\
&=&T\mathfrak{P}_{\alpha,T}[K-T^{-1}KT]\nonumber\\
&=&0,
\end{eqnarray}
due to \textit{5)}.
\par
\vspace{2mm}
\noindent\textit{8)} Consequence of \textit{7)}: Assume that $K\in\mathrm{Dom}^0(\mathfrak{P}_{\alpha,T})$, then
\begin{eqnarray}
\mathfrak{P}_{\alpha,T}\bigl[\mathfrak{P}_{\alpha,T}[K]\bigr]&=&\lim_{n\rightarrow\infty}\Biggl(\sum_{l=0}^{n-1}\gamma^{\alpha}_{n,l+1}T^{-l}\bigl(\mathfrak{P}_{\alpha,T}[K]\circ \rho^{\alpha}_{n,l+1}\bigr)T^{l}\nonumber\\
&&\qquad\qquad\qquad-\sum_{l=0}^{n-1}T^{-l}\gamma^{\alpha}_{n,l}\bigl(\mathfrak{P}_{\alpha,T}[K]\circ \rho^{\alpha}_{n,l}\bigr)T^{l}\Biggr)
\end{eqnarray}
\begin{eqnarray}
&&\qquad\qquad\qquad\quad=\lim_{n\rightarrow\infty}\sum_{l=0}^{n-1}\gamma^{\alpha}_{n,l+1}\mathfrak{P}_{\alpha,T}[K]\circ \rho^{\alpha}_{n,l+1}-\gamma^{\alpha}_{n,l}\mathfrak{P}_{\alpha,T}[K]\circ \rho^{\alpha}_{n,l}\nonumber\\
&&\qquad\qquad\qquad\quad=\lim_{n\rightarrow\infty}\left(\gamma^{\alpha}_{n,n}\mathfrak{P}_{\alpha,T}[K]\circ\rho^{\alpha}_{n,n}-\gamma^{\alpha}_{n,0}\mathfrak{P}_{\alpha,T}[K]\circ\rho^{\alpha}_{n,0}\right)\nonumber\\
&&\qquad\qquad\qquad\quad=\mathfrak{P}_{\alpha,T}[K],
\end{eqnarray}
due to $\gamma^{\alpha}_{n,0}=0$, $\gamma^{\alpha}_{n,n}=1$ and $\rho^{\alpha}_{n,n}=\mathrm{id}_{[t_1,t_2]}$.
\end{proof}

\subsection{Two existence results}
Theorem \ref{THM2.1} shows that there are significant properties of the usual ergodic mean which are lost when the more general framework of Definition \ref{DEF2.1} is introduced. The two main difficulties are: The domain of $\mathfrak{P}_{\alpha,T}$ is not in general closed and the kernel of $\mathfrak{P}_{\alpha,T}$ is not in general the closure of functions of the form $K-T^{-1}KT$, $K\in C^0([t_1,t_2],\bbanach)$, nor even necessarily closed. One can raise the question whether the domain of $\mathfrak{P}_{\alpha,T}$ is non-trivial, that is whether it contains functions beyond those members of $C^0([t_1,t_2],\bbanach)$ which commute with $T$ at every point of their argument. It is hard to answer such a question in the greatest generality, but we can give explicit examples which demonstrate the non-triviality of the domain of $\mathfrak{P}_{\alpha,T}$. These examples already demonstrate that the bounds of \textit{3)} and \textit{4)} of Theorem \ref{THM2.1} cannot be in general improved. But before moving to the concrete examples, we need two technical statements.
 
\begin{Lem}
Let $\omega\in\mathbb{C}$ of modulus one and $k\in\mathbb{N}$, $k>1$. Then,  
\begin{equation}
\lim_{n\rightarrow \infty}\sum_{l=0}^{n-1}\omega^l\frac{(l+1)^k-l^k}{n^{k}}
\end{equation}
exists and it is equal to one if $\omega=1$ and vanishes in any other case.
\end{Lem}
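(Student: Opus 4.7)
The plan is to split according to whether $\omega=1$ or not, since the two subcases require fundamentally different treatments: the first reduces to a telescoping identity, while the second must exploit the oscillation of $(\omega^l)_{l\geq 0}$ via summation by parts.

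When $\omega=1$, the sum telescopes immediately:
\begin{equation*}
\sum_{l=0}^{n-1}\frac{(l+1)^k-l^k}{n^k}=\frac{n^k-0^k}{n^k}=1
\end{equation*}
for every $n\in\mathbb{N}$, so the limit is $1$.

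When $|\omega|=1$ and $\omega\neq 1$, I would use Abel summation (summation by parts). Setting $A_l\vcentcolon=\sum_{m=0}^{l}\omega^m=(1-\omega^{l+1})/(1-\omega)$ and $b_l\vcentcolon=((l+1)^k-l^k)/n^k$, the crucial observation is that $|A_l|\leq 2/|1-\omega|$ uniformly in $l$. Abel's identity rewrites the sum as
\begin{equation*}
\sum_{l=0}^{n-1}\omega^l b_l = A_{n-1}b_{n-1}-\sum_{l=0}^{n-2}A_l(b_{l+1}-b_l).
\end{equation*}
The boundary term is of order $O(1/n)$ because $b_{n-1}=(n^k-(n-1)^k)/n^k\sim k/n$. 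For the remaining sum, one needs the second differences $\Delta^2(l^k)\vcentcolon=(l+2)^k-2(l+1)^k+l^k=k(k-1)l^{k-2}+O(l^{k-3})$, by the binomial theorem. Summing their absolute values yields $\sum_{l=0}^{n-2}|\Delta^2(l^k)|=O(n^{k-1})$, so after dividing by $n^k$ and using the uniform bound on $|A_l|$, this piece also contributes $O(1/n)$. Both pieces vanish as $n\to\infty$, giving the limit $0$.

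The only genuine (and rather minor) technical obstacle is extracting the sharp order of the second differences $\Delta^2(l^k)$; this is a pure binomial expansion, and the hypothesis $k>1$ enters at exactly this point, ensuring that $\Delta^2(l^k)$ genuinely has order $l^{k-2}$ and hence that the resulting bound is $o(n^k)$ rather than merely $O(n^k)$. Every other step is elementary.
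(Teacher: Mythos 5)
Your proof is correct, and it takes a genuinely different and noticeably more elementary route than the paper's. The paper rewrites $\omega^l l^m$ via a moment generating function $f(\xi)=\omega\exp(\xi)$, sums the geometric series $\sum_l f^l$, differentiates $m$ times at $\xi=0$ via two changes of variables, and then estimates the resulting expression using Bernoulli-number identities and an auxiliary quantity $\Delta(\omega)$ — a calculation running well over a page. Your Abel-summation argument, with the uniform bound $|A_l|\leq 2/|1-\omega|$, reduces everything to controlling the second differences $\Delta^2(l^k)$, and each piece is visibly $O(1/n)$. One small improvement on your write-up: since $x\mapsto x^k$ is convex for $k>1$, the second differences are already non-negative, so
\begin{equation*}
\sum_{l=0}^{n-2}\bigl|\Delta^2(l^k)\bigr|=\sum_{l=0}^{n-2}\Delta^2(l^k)=\bigl(n^k-(n-1)^k\bigr)-\bigl(1-0\bigr)=O(n^{k-1})
\end{equation*}
by telescoping, so the binomial expansion you flag as the "only genuine obstacle" can be skipped altogether. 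Both approaches deliver the $O(1/n)$ rate with an $\omega$-dependence localized in a factor of order $1/|1-\omega|$, which is what Proposition \ref{PROP2.9} needs; the paper's bookkeeping is heavier but the constants it produces are not sharper in any way relevant to the downstream application.
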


\begin{proof}
Let $\omega\in\mathbb{C}$ and $k\in\mathbb{N}$ as the Lemma stated. If $\omega=1$, we have a telescopic sequence after the sum, thus 
\begin{equation}
\sum_{l=0}^{n-1}\frac{(l+1)^k-l^k}{n^{k}}=\frac{n^{k}}{n^k}=1.
\end{equation}
Assume that $\omega\neq 1$. Then, 
\begin{eqnarray}
\sum_{l=0}^{n-1}\omega^l\frac{(l+1)^k-l^k}{n^{k}}=\frac{1}{n^k}\sum_{l=0}^{n-1}\sum_{m=0}^{k-1}\binom{k}{m}\omega^ll^m
\end{eqnarray}
Let us introduce the moment generating function $f(\xi)=\omega\exp(\xi)$, then
\begin{eqnarray}
\sum_{l=0}^{n-1}\omega^l\frac{(l+1)^k-l^k}{n^{k}}&=&\frac{1}{n^k}\sum_{l=0}^{n-1}\sum_{m=0}^{k-1}\binom{k}{m}\bigl[\partial^{m}_{\xi}f^l\bigr](0)\nonumber\\
&=&\frac{1}{n^k}\sum_{m=0}^{k-1}\binom{k}{m}\left[\partial^{m}_{\xi}\left(\sum_{l=0}^{n-1}f^l\right)\right](0)\nonumber\\
&=&\frac{1}{n^k}\sum_{m=0}^{k-1}\binom{k}{m}\left[\partial^{m}_{\xi}\left(\frac{1-f^n}{1-f}\right)\right](0)\nonumber\\
&=&\frac{1}{n^k}\sum_{m=0}^{k-1}\binom{k}{m}\left[\sum_{p=0}^{m}\binom{m}{p}\partial^{m-p}_{\xi}(1-f^n)\cdot \partial_\xi^{p}(1-f)^{-1}\right](0).
\label{EQ57}
\end{eqnarray}
Changing the variable of the first derivative to $\zeta=\omega\exp(\xi)$, we can rewrite it as
\begin{eqnarray}
\bigl[\partial^{m-p}_{\xi}(1-f^n)\bigr](\xi)&=&\bigl[(\zeta\partial_{\zeta})^{m-p}(1-\zeta^n)\bigr]\bigl(\zeta(\xi)\bigr)\nonumber\\
&=&-n^{m-p}\zeta^n(\xi)\nonumber\\
&=&-n^{m-p}\omega^n\exp(n\xi),
\end{eqnarray} 
for $p<m$ and, since $|(1-f^n)|(0)\leq 2$,
\begin{equation}
|\partial^{m-p}_{\xi}(1-f^n)|(0)=2n^{m-p}.
\end{equation}
for all $0\leq p\leq m$. Changing the variable of the second  derivative in the last line of (\ref{EQ57}) to $\rho=1-\omega\exp(\xi)$ we can rewrite it as
\begin{eqnarray}
\bigl[\partial^{p}_{\xi}(1-f)\bigr](\xi)&=&\left[\bigl((\rho-1)\partial_{\rho}\bigr)^p\frac{1}{\rho}\right](\rho(\xi))\nonumber\\
&=&\left[(-1)^p\frac{1}{\rho}+\sum_{q=2}^{p}\frac{q-1}{q+1}\frac{1-(-q)^{p-q+2}}{\rho^q}\right](\rho(\xi)),
\label{EQ56}
\end{eqnarray}
which can be verified using induction and the formula
\begin{equation}
\bigl((\rho-1)\partial_{\rho}\bigr)^p\frac{1}{\rho^s}=\frac{s}{\rho^{s+1}}-\frac{s}{\rho^{s}}.
\end{equation}
We have the following estimates of the parts of the sum in (\ref{EQ56}):
\begin{eqnarray}
\left|(-1)^p\frac{1}{\rho}\right|&\leq&\frac{1}{|\rho|}\nonumber\\
\left|\sum_{q=2}^{p}\frac{q-1}{q+1}\frac{1}{\rho^q}\right|&\leq&\frac{|\rho|^p-1}{|\rho|-1}-\frac{1}{|\rho|}-1<\frac{|\rho|^p-1}{|\rho|-1}-\frac{1}{|\rho|}\nonumber\\
\left|\sum_{q=2}^{p}\frac{q-1}{q+1}\frac{-(-q)^{p-q+2}}{\rho^q}\right|&\leq&\sum_{q=2}^{p}\frac{q^p}{q^q|\rho|^q}<\sum_{q=2}^{p}\frac{q^p}{|\rho|^q}<\frac{\sum_{q=0}^{p}q^p}{\min\{1,|\rho|^p\}}.
\end{eqnarray}
Setting $\xi=0$ and therefore $\rho=1-\omega$, we obtain the following upper bound:
\begin{equation}
\bigl|\partial^{p}_{\xi}(1-f)^{-1}\bigr|(0)|<\frac{|1-\omega|^p-1}{|1-\omega|-1}+\frac{1}{\Delta(\omega,p)}\sum_{q=0}^pq^p,
\end{equation}
where $\Delta(\omega,p)=\min\{1,|1-\omega|^p\}$. The sum $\sum_{q=0}^pq^p$ can be estimated using Bernoulli's formula:
\begin{eqnarray}
\sum_{q=0}^pq^p&=&\frac{1}{p+1}\sum_{j=0}^{p}\binom{p+1}{j}B^+_jp^{p+1-j}\nonumber\\
&=&\frac{p^{p+1}}{p+1}\sum_{j=0}^{p}\binom{p+1}{j}B^+_j\frac{1}{p^j}\nonumber\\
&<&\frac{p^{p+1}}{p+1}\sum_{j=0}^{p}\binom{p+1}{j}B^+_j\nonumber\\
&=&p^{p+1},
\end{eqnarray}
where the $B^+_j$'s are Bernoulli numbers of the second kind satisfying
\begin{equation}
\sum_{j=0}^{p}\binom{p+1}{j}B^+_j=p+1.
\end{equation}
These results can be used to estimate $|n^{-k}\sum_{l=0}^{n-1}\omega^l\bigl((l+1)^k-l^k\bigr)|$. Namely, we can write
\begin{eqnarray*}
\sum_{m=0}^{k-1}\sum_{p=0}^{m}\binom{k}{m}\binom{m}{p}2n^{m-p}\frac{|1-\omega|^p}{|1-\omega|-1}=\frac{2\sum_{s=0}^1(-1)^{s+1}(|1-\omega|+n+s)^k}{|1-\omega|-1},
\end{eqnarray*}
\begin{eqnarray}
\sum_{m=0}^{k-1}\sum_{p=0}^{m}\binom{k}{m}\binom{m}{p}2n^{m-p}\frac{1}{|1-\omega|-1}=\frac{2\sum_{s=0}^{1}(-1)^{s+1}(n+1+s)^k}{|1-\omega|-1}
\end{eqnarray}
and, if $n>k\geq p$ holds, 
\begin{eqnarray}
\sum_{m=0}^{k-1}\sum_{p=0}^{m}\binom{k}{m}\binom{m}{p}2n^{m-p}\frac{p^{p+1}}{\Delta(\omega,p)}&<&2\sum_{m=0}^{k-1}\sum_{p=0}^{m}\binom{k}{m}\binom{m}{p}n^{m}\frac{p}{\Delta(\omega,p)}.
\label{EQ58}
\end{eqnarray}
Note that for all $p\in\mathbb{N}$, either $\Delta(\omega,p)=1$ or $\Delta(\omega,p)=|1-\omega|^{p}$ holds, therefore $\Delta(\omega,p)=\Delta^p(\omega,1)=\vcentcolon \Delta^p(\omega)$. Continuing (\ref{EQ58}), we can write  
\begin{eqnarray}
2\sum_{m=0}^{k-1}\sum_{p=0}^{m}\binom{k}{m}\binom{m}{p}n^{m}\frac{p}{\Delta(\omega)}&=&2\sum_{m=0}^{k-1}\binom{k}{m}mn^{m}(\Delta^{-1}(\omega)+1)^{m-1}\nonumber\\
&=&\sum_{m=0}^{k-1}\binom{k}{m}m((1+\Delta^{-1}(\omega))n)^{m}\nonumber\\
&=&\frac{2k}{1+\Delta^{-1}(\omega)}\Bigl(\bigl((1+\Delta^{-1}(\omega))n+1\bigr)^k\nonumber\\
&&\quad-\bigl((1+\Delta^{-1}(\omega)\bigr)n+1)^{k-1}-\bigl((1+\Delta^{-1}(\omega))n\bigr)^k\Bigr).
\end{eqnarray}
Since
\begin{eqnarray}
&&\lim_{n\rightarrow\infty}\frac{1}{n^k}\frac{(|1-\omega|+n+1)^k-(|1-\omega|+n)^k}{|1-\omega|-1}=0,\nonumber\\
&&\lim_{n\rightarrow\infty}\frac{1}{n^k}\frac{(n+2)^k-(n+1)^k}{|1-\omega|-1}=0,\nonumber\\
&&\lim_{n\rightarrow\infty}\frac{k}{n^k}\frac{k}{1+\Delta^{-1}(\omega)}\Bigl(\bigl((1+\Delta^{-1}(\omega))n+1\bigr)^k-\bigl((1+\Delta^{-1}(\omega)\bigr)n+1)^{k-1}\nonumber\\
&&\qquad\qquad\qquad\qquad\qquad\qquad-\bigl((1+\Delta^{-1}(\omega))n\big)^k\Bigr)=0,
\end{eqnarray}
the proof of the statement of the Lemma is completed.
\end{proof}

\begin{Lem} \label{LEM2.8} Let $\omega\in\mathbb{C}$ of modulus one and $z\in\mathbb{C}$, $z\neq 0$. Then,  
\begin{equation}
\lim_{n\rightarrow \infty}\sum_{l=0}^{n-1}\omega^l
\frac{
(l+1)\exp\left(z\frac{l+1}{n}\right)-l\exp\left(z\frac{l}{n}\right)}{n}
\label{EQ59}
\end{equation}
exists and it is equal to $\exp(z)$ if $\omega=1$ and vanishes in any other case.
\end{Lem}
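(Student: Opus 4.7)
The plan is to expand the exponentials as power series, interchange the order of summation, and then reduce termwise to the previous Lemma.

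First, write $\exp(zm/n) = \sum_{k=0}^{\infty} z^k m^k/(k!\,n^k)$, so the summand in \eqref{EQ59} becomes
\[
\omega^l\sum_{k=0}^{\infty}\frac{z^k}{k!}\,\frac{(l+1)^{k+1}-l^{k+1}}{n^{k+1}}.
\]
Formally exchanging sums would yield $S_n=\sum_{k=0}^{\infty}\frac{z^k}{k!}I_{k,n}$, where
\[
S_n\vcentcolon=\sum_{l=0}^{n-1}\omega^l\,\frac{(l+1)\exp(z(l+1)/n)-l\exp(zl/n)}{n},\qquad I_{k,n}\vcentcolon=\sum_{l=0}^{n-1}\omega^l\,\frac{(l+1)^{k+1}-l^{k+1}}{n^{k+1}}.
\]

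To justify the swap (and the subsequent passage to the limit $n\to\infty$), I would invoke the elementary telescoping identity $\sum_{l=0}^{n-1}\bigl((l+1)^{k+1}-l^{k+1}\bigr)/n^{k+1}=1$ together with $(l+1)^{k+1}-l^{k+1}\geq 0$. Since $|\omega^l|=1$, this gives $|I_{k,n}|\leq 1$ uniformly in $n$, so the majorant $|z|^k/k!$ is summable with total $\exp(|z|)$, and Tannery's theorem (dominated convergence for series) validates both the interchange and the limit operation.

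The previous Lemma, applied with exponent $k+1\geq 2$, shows that for every $k\geq 1$ the sequence $I_{k,n}$ converges to $1$ if $\omega=1$ and to $0$ otherwise. The case $k=0$ is not covered there but is trivial: $I_{0,n}=\frac{1}{n}\sum_{l=0}^{n-1}\omega^l$ equals $1$ if $\omega=1$ and is bounded in modulus by $2/(n|1-\omega|)$, hence vanishes in the limit when $\omega\neq 1$. Combining these gives $\lim_{n\to\infty}S_n=\sum_{k=0}^{\infty}z^k/k!=\exp(z)$ when $\omega=1$ and $0$ when $\omega\neq 1$.

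The only point requiring any care is the interchange of $\lim_n$ with the infinite sum in $k$; the symmetric telescoping bound $|I_{k,n}|\leq 1$ reduces this to a one-line application of dominated convergence, so the argument is essentially as short as the reduction to the previous Lemma.
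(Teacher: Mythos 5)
Your proof is correct, and it takes a genuinely different route from the paper's. The paper handles $\omega\neq 1$ by a direct closed-form manipulation of the geometric-type sums: rewriting \eqref{EQ59} as in (\ref{EQ60})--(\ref{EQ61}) via explicit evaluation of $\sum_l \omega^l e^{lz/n}$ and $\sum_l l\omega^l e^{lz/n}$, then letting $n\to\infty$ term by term. You instead expand $e^{zm/n}$ as a power series, exchange the (finite) sum over $l$ with the absolutely convergent sum over $k$, apply the preceding Lemma to each $I_{k,n}$ with exponent $k+1\geq 2$, treat $k=0$ by hand, and pass to the limit via Tannery's theorem with the clean dominating bound $|I_{k,n}|\leq 1$ coming from the telescoping identity $\sum_l\bigl((l+1)^{k+1}-l^{k+1}\bigr)=n^{k+1}$ and positivity of the increments. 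What your approach buys is conceptual economy: the result is reduced transparently to the previous Lemma plus a one-line dominated-convergence step, making explicit how this is the ``analytic generating function'' of the polynomial case. What the paper's approach buys is a quantitative estimate: (\ref{EQ60})--(\ref{EQ61}) yield an explicit $O(1/n)$ rate of convergence, uniform on bounded subsets of $\mathbb{C}$, which the paper then uses in the proof of statement \textit{2)} of Proposition~\ref{PROP2.9} to establish uniform convergence of $\mathfrak{P}^{(n)}_{\alpha,U}[K]$. Your Tannery argument also gives uniformity on bounded $z$-sets (the majorant $|z|^k/k!$ is uniform there), but it does not directly produce the $1/n$ rate that the later application requires, so the paper's more explicit computation is doing work beyond the bare statement of this Lemma.
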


\begin{proof}
Let $\omega\in\mathbb{C}$ and $z\in\mathbb{N}$ as the Lemma stated. If $\omega=1$, we again have a telescopic sequence after the sum, thus 
\begin{equation}
\sum_{l=0}^{n-1}
\frac{
(l+1)\exp\left(z\frac{l+1}{n}\right)-l\exp\left(z\frac{l}{n}\right)}{n}=\exp(z).
\end{equation}
Assume that $\omega\neq 1$ and $\omega\exp(z/n)\neq 1$. We can rewrite (\ref{EQ59}) as 
\begin{eqnarray}
\qquad\sum_{l=0}^{n-1}\omega^l
\frac{
(l+1)\exp\left(z\frac{l+1}{n}\right)-l\exp\left(z\frac{l}{n}\right)}{n}&=&\frac{\mathrm{e}^{z/n}-1}{n}\sum_{l=0}^{n-1}l\omega^l\mathrm{e}^{lz/n}+\frac{1}{n}\frac{1-\omega^n\mathrm{e}^z}{1-\omega\mathrm{e}^{z/n}}.
\label{EQ60}
\end{eqnarray}
The second term tends zero uniformly on every bounded subset of $\mathbb{C}$. The first term is
\begin{eqnarray}
\frac{\mathrm{e}^{z/n}-1}{n}\sum_{l=0}^{n-1}l\omega^l\mathrm{e}^{lz/n}=(1-\mathrm{e}^{z/n})\frac{\omega^n\mathrm{e}^{z(n-1)/n}}{1-\omega\mathrm{e}^{z/n}}+\frac{\omega}{n}\frac{\omega\mathrm{e}^{z}-1}{(1-\omega\mathrm{e}^{z/n})^2},
\label{EQ61}
\end{eqnarray}
which tends to zero as $n\rightarrow\infty$ uniformly on every bounded subset of $\mathbb{C}$. 
\end{proof}

\begin{Prop} \label{PROP2.9} Let $\mathcal{H}$ be a Hilbert space, $\Fspace\subseteq\bhilbert$ be a two sided ideal of $\bhilbert$ wich is complete in the unitarily invariant norm $\|\cdot\|_\Fspace$. Define $\alpha$ through $\alpha_{n,l}=l/n$ and consider the isometry with the left multiplication with some unitary $U\in\bhilbert$ whose spectrum $\sigma_U$ contains only finitely many points. Then,
\begin{enumerate}[label=\textit{\arabic*)}]
\item For every $K\in P^r([0,t],\bFspace)$, 
\begin{equation}
\mathfrak{P}_{\alpha,U}[K]=\sum_{\lambda\in\sigma_{U}}P_\lambda K P_\lambda. 
\end{equation}
\item For every periodic $K\in C^{0}([0,2\pi],\bFspace)$, which is of the form $K(s)=\sum_{k\in\mathbb{Z}}K_k\exp(2\pi\mathrm{i} ks)$, $s\in[0,2\pi]$, where $K_k\in\bFspace$ are equal to zero but finitely many, we have   
\begin{equation}
\mathfrak{P}_{\alpha,U}[K]=\sum_{\lambda\in\sigma_{U}}P_\lambda KP_\lambda. 
\end{equation}
\end{enumerate}
\end{Prop}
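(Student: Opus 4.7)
The plan is to reduce the claim to the scalar limits provided by the two preceding lemmas via the spectral decomposition of $U$. Since $\|\cdot\|_{\Fspace}$ is unitarily invariant, left multiplication $T\colon X\mapsto UX$ is a unitary on $\Fspace$, and the spectral decomposition $U=\sum_{\lambda\in\sigma_U}\lambda P_\lambda$ in $\bhilbert$ lifts to a spectral decomposition of $T$ in $\bFspace$ whose spectral projections are the left multiplication operators $X\mapsto P_\lambda X$. Consequently, for every $M\in\bFspace$,
\begin{equation*}
T^{-l} M T^{l} \;=\; \sum_{\mu,\nu\in\sigma_U} \left(\frac{\nu}{\mu}\right)^{\!l} P_\mu\, M\, P_\nu,
\end{equation*}
where $P_\mu M P_\nu$ is shorthand for the composition of left multiplication by $P_\mu$ and $P_\nu$ flanking $M$, exactly as in the statement of the proposition. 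Taking $\alpha_{n,l}=1/n$, so that $\gamma^{\alpha}_{n,l}=l/n$ and $\rho^{\alpha}_{n,l}(s)=sl/n$, every scalar weight appearing in $\mathfrak{P}^{(n)}_{\alpha,T}[K]$ is a pure function of $l$ and $n$, which is what allows this spectral expansion to be pushed through termwise.

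For part \textit{1)}, I would expand $K(s)=\sum_{k=0}^{r} K_k s^{k}$ and substitute the identity above to obtain
\begin{equation*}
\mathfrak{P}^{(n)}_{\alpha,T}[K](s) \;=\; \sum_{k=0}^{r} s^{k} \sum_{\mu,\nu} P_\mu K_k P_\nu \cdot \sum_{l=0}^{n-1} \left(\frac{\nu}{\mu}\right)^{\!l} \frac{(l+1)^{k+1}-l^{k+1}}{n^{k+1}}.
\end{equation*}
For each $k\geq 1$, the innermost sum is precisely the object controlled by the Lemma immediately preceding Lemma \ref{LEM2.8} (applied with $\omega=\nu/\mu$, of modulus one, and exponent $k+1\geq 2$), giving the limit $1$ if $\mu=\nu$ and $0$ otherwise. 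The $k=0$ case reduces to the elementary Ces\`{a}ro average $n^{-1}\sum_{l}(\nu/\mu)^{l}$, exhibiting the same dichotomy. The surviving diagonal contributions reassemble into $\sum_{\lambda\in\sigma_U}P_\lambda K(s)P_\lambda$, as required.

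For part \textit{2)}, the analogous expansion of the trigonometric polynomial $K(s)=\sum_{k}K_k\mathrm{e}^{2\pi\mathrm{i}ks}$ yields
\begin{equation*}
\mathfrak{P}^{(n)}_{\alpha,T}[K](s) \;=\; \sum_{k}\sum_{\mu,\nu} P_\mu K_k P_\nu \cdot \sum_{l=0}^{n-1} \left(\frac{\nu}{\mu}\right)^{\!l} \frac{(l+1)\,\mathrm{e}^{z(l+1)/n}-l\,\mathrm{e}^{zl/n}}{n}, \qquad z=2\pi\mathrm{i}ks,
\end{equation*}
which is exactly the sum treated by Lemma \ref{LEM2.8}: the inner expression tends to $\mathrm{e}^{z}$ when $\mu=\nu$ and to $0$ otherwise. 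The diagonal contributions then collapse once more to $\sum_{\lambda\in\sigma_U}P_\lambda K(s)P_\lambda$, with the factor $\mathrm{e}^{z}=\mathrm{e}^{2\pi\mathrm{i}ks}$ restoring each Fourier mode.

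The one remaining item is upgrading pointwise limits to convergence in the $C^{0}$ norm. Finiteness of $\sigma_U$ together with the finite number of nonzero coefficients $K_k$ reduces both parts to a finite collection of scalar limits. In part \textit{1)} the innermost sums do not depend on $s$ at all and the factors $s^{k}$ are bounded on $[0,t]$, so uniformity in $s$ is automatic. In part \textit{2)}, Lemma \ref{LEM2.8} supplies convergence uniform on bounded subsets of $\mathbb{C}$ in the variable $z$, and for each fixed $k$ the set $\{2\pi\mathrm{i}ks : s\in[0,2\pi]\}$ is bounded, giving uniformity in $s$. I do not anticipate any genuine obstacle beyond this bookkeeping, since the essential analytic work has already been carried out in the two preceding scalar lemmas.
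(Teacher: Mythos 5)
Your proposal is correct and follows essentially the same route as the paper: expand $T^{-l}\,\cdot\,T^{l}$ through the finite spectral resolution of $U$, reduce everything to the scalar sums treated in the two lemmas preceding Proposition \ref{PROP2.9}, and conclude by finiteness of $\sigma_U$ and of the set of nonzero coefficients. You correctly read $\alpha_{n,l}=1/n$ (so $\gamma^\alpha_{n,l}=l/n$), matching the weights that appear in the paper's computation. Two small points of comparison: in part \textit{1)} you explicitly separate the degree-$0$ term (handled by the elementary Ces\`aro average), which is actually \emph{more} careful than the paper, since the cited polynomial lemma is stated only for exponents $k>1$ and thus does not literally cover that term. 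In part \textit{2)} you lean on the ``uniform on bounded subsets of $\CC$'' property extracted from the \emph{proof} of Lemma \ref{LEM2.8}, whereas the paper re-derives an explicit $O(1/n)$ bound (the quantity $\Delta$ and the finite exceptional set $\Lambda$ of indices where $\omega\mathrm{e}^{z/n}=1$) to make the uniformity in $s$ explicit; your version is sound, but if one wants a self-contained proof one should either state the uniform version of Lemma \ref{LEM2.8} or reproduce the paper's explicit estimate, since Lemma \ref{LEM2.8}'s statement as written gives only pointwise convergence in $z$.
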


\begin{proof}
$\ $\par
\noindent \textit{1)} Let $m\in\mathbb{N}_0$ be the smallest integer such that $\partial^{m+1}_{s}K=0$, $m\leq r$, that is $K(s)=\sum_{k=0}^{m}K_{k}s^k$, where $K_k\in\bFspace$. Then, 
\begin{equation}
P_\lambda \mathfrak{P}^{(n)}_{\alpha,U}[K](s)P_\mu=\sum_{k=0}^m s^k P_\lambda K_k P_\mu\sum_{l=0}^{n-1}(\overline{\lambda}\mu)^l\frac{(l+1)^{k+1}-l^{k+1}}{n^{k+1}}.
\end{equation}
for all $s\in [0,t]$ and $\lambda,\mu\in\sigma_{U}$. If $\lambda=\mu$, then $P_\lambda \mathfrak{P}^{(n)}_{\alpha,U}[K]P_\lambda=P_\lambda KP_\lambda$. Otherwise, since $K$ is a finite degree polynomial and $\sigma_U$ is finite, the limits of (\ref{EQ59}) are uniform in the degree $k$ and $\omega_{\lambda,\mu}=\overline{\lambda}\mu$. Therefore,   
\begin{equation}
\lim_{n\rightarrow\infty}\mathfrak{P}^{(n)}_{\alpha,U}[K]=\sum_{\lambda\in\sigma_{U}}P_\lambda KP_\lambda. 
\end{equation}
\par
\vspace{2mm}
\noindent\textit{2)} Assume the $K_k$'s are supported on the set of integers $\{-m,\dots,0,\dots,m\}$, $m\geq 0$. Let $n\in\mathbb{N}$ greater than the maximum of the set of integers
\begin{eqnarray}
\Lambda&=&\{l\in\mathbb{N}:\overline{\lambda}\mu=\exp(2\pi\mathrm{i} tk/l),\ \lambda\neq \mu,\ \lambda,\mu\in\sigma_U,\nonumber\\
&&\qquad\qquad t\in [0,2\pi],\ k\in\mathbb{Z},\ |k|\leq m\},
\end{eqnarray}
which is finite, since $\sigma_U$ is finite. Then, an upper bound on (\ref{EQ60}) with the substitution $\omega=\overline{\lambda}\mu$, $\lambda\neq \mu$ and $z=2\pi\mathrm{i}kt$ can be given using 
\begin{equation}
\left|1-\mathrm{e}^{2\pi\mathrm{i} tk/n}\right|=\left|2\pi\mathrm{i}k\int_0^{t/n}\mathrm{e}^{2\pi\mathrm{i}v}\ \mathrm{d}v\right|\leq \frac{2\pi |k|t}{n},
\end{equation}
and  the exact equation (\ref{EQ61}):
\begin{eqnarray}
\left|\sum_{l=0}^{n-1}(\overline{\lambda}\mu)^l
\frac{
(l+1)\exp\left(2\pi\mathrm{i} kt\frac{l+1}{n}\right)-l\exp\left(2\pi\mathrm{i} kt\frac{l}{n}\right)}{n}\right|&\leq&\frac{1}{n}\left(\frac{2\pi |k|t+2}{\Delta}+\frac{2}{\Delta^2}\right)\nonumber\\
&\leq& \frac{1}{n}\left(\frac{(2\pi)^2 m+2}{\Delta}+\frac{2}{\Delta^2}\right),
\end{eqnarray}
where 
\begin{equation}
\Delta^2=\min_{\substack{\lambda,\mu\in\sigma\\ \lambda\neq\mu}}\min_{\substack{k\in\mathbb{Z}\\|k|\leq m}} \inf_{0<t\leq 2\pi}\inf_{\substack{n\in\mathbb{N}\\ n>\max\Lambda}}2\left(1-\Re\left(\overline{\lambda}\mu\mathrm{e}^{2\pi \mathrm{i}kt/n}\right)\right)>0.
\end{equation}
This allows us to write
\begin{eqnarray}
\|P_\lambda \mathfrak{P}^{(n)}_{\alpha,U}[K](t)P_\mu\|_{\bFspace}&\leq& \|P_\lambda\|_{\bFspace}\,\|P_\mu\|_{\bFspace}\Biggl(\|K_0\|_{\bFspace}\frac{2}{n}\frac{1}{\Delta}+ \nonumber\\
&&\qquad\frac{1}{n}\left(\frac{(2\pi)^2 m+2}{\Delta}+\frac{2}{\Delta^2}\right)\sum_{\substack{k=-m\\k\neq 0}}^{m}\|K_k\|_\bFspace\Biggr),
\label{EQ63}
\end{eqnarray}
whenever $\lambda\neq\mu$. The projections $P_\lambda\in\bFspace$ are understood as left multiplications with the corresponding projection $P_\lambda\in\bhilbert$. Therefore, $P_\lambda \mathfrak{P}^{(n)}_{\alpha,U}[K]P_\mu\rightarrow 0$ in $C^0([0,2\pi],\bFspace)$ uniformly over the set $\sigma_U$, whenever $\lambda\ne\mu$. If $\lambda=\mu$, Lemma \ref{LEM2.8} shows that $P_\lambda \mathfrak{P}^{(n)}_{\alpha,U}[K]P_\lambda=P_\lambda K P_\lambda$. Therefore, the statement of the Proposition follows.
\end{proof}

\begin{Rk}
The proof of statement \textit{2)} of Proposition \ref{PROP2.9} illustrates why in general the domain and the kernel of $\mathfrak{P}_{\alpha,T}$ cannot be closed in the $C^0$ topology: the bound (\ref{EQ63}) is uniform in the $t$ variable only if $\Lambda$ is a finite set, which holds if $K_k=0$ for all $|k|>m$ with some $m\geq 0$. In this case, $\Delta$ is finite as well as the r.h.s~of (\ref{EQ63}). However, if $\{K_k\}_{k\in\mathbb{Z}}$ contains infinitely many non-vanishing members of $\bFspace$, the set $\Lambda$ is infinite. However, if one would ease the uniform convergence to point-wise convergence, it could be proved that $\lim_{n\rightarrow\infty}\mathfrak{P}^{(n)}_{\alpha,U}[K](t)$ exists in the uniform topology of $\bFspace$ if $K$ satisfies the Sobolev property
\begin{equation}
\sum_{k\in\mathbb{Z}}(1+|k|)\|K_k\|_{\bFspace}<\infty.
\end{equation}
These and other issues will be addressed in a later publication.
 \end{Rk}

\section{Weak growth bounds}
\label{SEC3}
If $\mathcal{X}$ is a Banach space, then every uniformly continuous semi-group $(F(t))_{t\geq 0}\subset\bbanach$ is of the form $F(t)=\exp(Kt)$ with some $K\in\bbanach$ \cite{engel1999one}. Note that if $F_1,F_2:\mathbb{R}_{0}^+\rightarrow\bbanach$ are uniformly continuous semi-groups generated by $K_1,K_2\in \bbanach$, then the Trotter product formula (see \cite{engel1999one}) applies without any restriction on $K_1$ and $K_2$, that is
\begin{equation}
\lim_{n\rightarrow\infty}\left\|\exp\bigl((K_1+K_2)t\bigr)-\bigl[F_1(t/n)F_2(t/n)\bigr]^n\right\|_{\bbanach}=0\qquad\qquad t\in\mathbb{R}_0^+.
\label{EQ37}
\end{equation}
We again warn the reader, that, for the sake of brevity, in what follows, the subscript of the uniform norm $\|\cdot\|_{\bbanach}$ is omitted in the proofs of the statements.

Let $\banach$ be a Banach space. Every uniformly continuous semi-group $(F(t))_{t\geq 0}\subset\bbanach$ is quasi-contractive, i.e. {$\|F(t)\|_{\bbanach}\leq \exp(w t)$ } is satisfied for all $t\in\mathbb{R}^+_0$ with some $w\in\mathbb{R}$ for which $w \leq \|K\|_\bbanach$ holds, if $K\in\bbanach$ generates $(F(t))_{t\geq 0}$. Let $\omega$ be defined by 
\begin{equation}
\omega\vcentcolon=\inf\{w\in\mathbb{R}:\exists\ 1 \leq M_{w}<\infty\ \text{st.}\ \|F(t)\|_\bbanach\leq M_{w}\mathrm{e}^{wt}\ \text{for}\ \text{all}\ t\in\mathbb{R}_0^+\},
\end{equation}
then $\omega$ is the growth bound of $(F(t))_{t\geq 0}$. It turns out that the following definition is useful to obtain error bounds with the Trotter product formula that performs better than the bound $\|F(t)\|_\bbanach\leq \exp(\|K\|_\bbanach t)$.\par
\vspace{3mm}
\begin{Def} Let $\banach$ be a Banach space and let $(F(t))_{t\geq 0}\subset\bbanach$ be a uniformly continuous semi-group generated by $K\in\bbanach$. Then, $\omega_K\in\mathbb{R}$ is the \textit{weak growth bound} of $(F(t))_{t\geq 0}$, if 
\begin{equation}
\omega_K=\inf\{w\in\mathbb{R}: \|F(t)\|_\bbanach\leq  \mathrm{e}^{wt}\ \text{for}\ \text{all}\ t\in\mathbb{R}_0^+\}.
\end{equation}
\end{Def}

\begin{Prop} \label{lem:weak_gb1}
The weak growth bound admits the following properties.
\begin{enumerate}[label=\textit{\arabic*)}]
\item For any $K\in\bbanach$,
\begin{equation}
\omega_K=\sup_{v\in\mathbb{R}^+}\frac{\ln\|\exp(Kv)\|_\bbanach}{v}.
\end{equation}
\item If $K\in C([t_1,t_2],\bbanach)$, $\interval$, then $\omega: [t_1,t_2]\rightarrow\mathbb{R}$, $s\mapsto \omega_{K(s)}$ is continuous.
\end{enumerate}
\end{Prop}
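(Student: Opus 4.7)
My plan follows the two-part structure of the proposition.

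For \textit{1)}, the idea is to rewrite the defining inequality by taking logarithms. For $v\in\Rp$, the bound $\|\exp(Kv)\|_\bbanach\le e^{wv}$ is equivalent to $v^{-1}\ln\|\exp(Kv)\|_\bbanach\le w$, while at $v=0$ the bound is automatic. Consequently the set of admissible $w$ in the definition of $\omega_K$ equals $\bigl[\sup_{v\in\Rp}v^{-1}\ln\|\exp(Kv)\|_\bbanach,\infty\bigr)$, whose infimum is exactly the asserted supremum. Submultiplicativity yields $\|\exp(Kv)\|_\bbanach\le e^{\|K\|_\bbanach v}$, so the supremum is dominated by $\|K\|_\bbanach$ and $\omega_K$ is a finite real number.

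For \textit{2)}, I plan to identify $\omega_K$ with the logarithmic norm $\mu(K):=\lim_{h\to 0^+}h^{-1}(\|\unity_\banach+hK\|_\bbanach-1)$, which is easily seen to be Lipschitz in $K$. The identification rests on two observations. First, $f_K(v):=\ln\|\exp(Kv)\|_\bbanach$ is subadditive on $\Rop$ by the semigroup submultiplicativity $\|\exp(K(u+v))\|_\bbanach\le\|\exp(Ku)\|_\bbanach\,\|\exp(Kv)\|_\bbanach$; setting $\phi_v(K):=f_K(v)/v$, subadditivity implies $\phi_v(K)\le\phi_{v/n}(K)$ for every integer $n\ge 1$, so the supremum appearing in part \textit{1)} is bounded above by $\limsup_{v\to 0^+}\phi_v(K)$. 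Second, the expansion $\|\exp(Kv)\|_\bbanach=\|\unity_\banach+Kv\|_\bbanach+O(v^2)=1+\mu(K)v+o(v)$ combined with $\ln(1+x)=x+O(x^2)$ yields $\lim_{v\to 0^+}\phi_v(K)=\mu(K)$. These two facts, together with the trivial inequality $\omega_K\ge\lim_{v\to 0^+}\phi_v(K)$, give $\omega_K=\mu(K)$.

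It remains to check that $K\mapsto\mu(K)$ is (in fact Lipschitz-)continuous. From the reverse triangle inequality, $\bigl|\|\unity_\banach+hK_1\|_\bbanach-\|\unity_\banach+hK_2\|_\bbanach\bigr|\le h\|K_1-K_2\|_\bbanach$; dividing by $h$ and letting $h\to 0^+$ yields $|\mu(K_1)-\mu(K_2)|\le\|K_1-K_2\|_\bbanach$. Continuity of $s\mapsto K(s)$ then transfers to continuity of the composite $s\mapsto\omega_{K(s)}=\mu(K(s))$ with the same modulus of continuity.

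I expect the main obstacle to be the identification $\omega_K=\mu(K)$, in particular making rigorous the subadditivity-plus-expansion step that forces the supremum in part \textit{1)} to be attained in the right-limit at $v=0$. A naive alternative of bounding $|\phi_v(K_1)-\phi_v(K_2)|$ uniformly in $v$ via $\exp(Av)-\exp(Bv)=\int_0^v \exp(Au)(A-B)\exp(B(v-u))\,du$ produces factors of the form $\exp((\omega_A+\omega_B)v)$ that grow with $v$ and therefore cannot furnish uniform continuity without the reduction to the small-$v$ regime described above.
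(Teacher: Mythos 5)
Your proof of \textit{1)} is essentially the paper's argument in a slightly more direct form (the paper runs it as a contradiction), so there is nothing to flag there.

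Your proof of \textit{2)} is correct and takes a genuinely different route from the paper's. The paper bounds $\bigl|v^{-1}\ln\|F_s(v)\|-v^{-1}\ln\|F_{s'}(v)\|\bigr|$ directly: it uses the Trotter product formula to obtain $\|F_s(v)-F_{s'}(v)\|\leq\mathrm{e}^{\omega_{K(s)}v}\bigl|1-\mathrm{e}^{\|K(s)-K(s')\|v}\bigr|$ (and the $s\leftrightarrow s'$ mirror), feeds this through the concavity of $\ln$, restricts to an arbitrary compact $C\subset\mathbb{R}^+_0$ to tame the resulting factor $\exp(\omega_{K(s)}v)/(v\|F_s(v)\|)$, and then lets $s'\to s$ before exhausting $\mathbb{R}^+_0$ by compacts. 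You instead identify $\omega_K$ with the logarithmic norm $\mu(K)=\lim_{h\to 0^+}h^{-1}(\|\unity_\banach+hK\|_\bbanach-1)$: subadditivity of $v\mapsto\ln\|\exp(Kv)\|_\bbanach$ pushes the supremum from part \textit{1)} into the right-limit at $v=0$, the Taylor expansion $\|\exp(Kv)\|_\bbanach=\|\unity_\banach+Kv\|_\bbanach+O(v^2)$ identifies that limit as $\mu(K)$, and the reverse triangle inequality gives $|\mu(K_1)-\mu(K_2)|\leq\|K_1-K_2\|_\bbanach$. Your route is shorter, invokes a classical object (the logarithmic norm/Lozinskii measure), and yields the stronger conclusion that $s\mapsto\omega_{K(s)}$ is Lipschitz with the same modulus as $K$, not merely continuous; the paper's route is self-contained and does not presuppose the logarithmic-norm machinery. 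One small point worth making explicit if you write this up: the existence of the one-sided limit defining $\mu(K)$ follows from convexity of $h\mapsto\|\unity_\banach+hK\|_\bbanach$, which makes the difference quotient monotone in $h$; this is what guarantees that $\lim_{v\to 0^+}\phi_v(K)$ is a genuine limit rather than a $\limsup$, and it is also what underlies your claim that the supremum in part \textit{1)} is attained in the $v\to 0^+$ limit.
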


\begin{proof}
$\ $\par
\noindent\textit{1)} It is straightforward that $\sup_{v\in\mathbb{R}^+}\,v^{-1}\ln\|\exp(Kv)\|\leq \omega_K$. Assume that 
\begin{equation}
\sup_{v\in\mathbb{R}^+}\, v^{-1}\ln\|\exp(Kv)\|< \omega_K,
\end{equation}
then, there exists $0<\varepsilon$, such that 
\begin{equation}
\sup_{v\in\mathbb{R}^+} v^{-1}\ln\|\exp(Kv)\|<\omega_K-\varepsilon,
\end{equation}
so for all $v\in\mathbb{R}^+$, the inequality
\begin{equation}
\|\exp(Kv)\|<\mathrm{e}^{(\omega_K-\varepsilon )v}<\mathrm{e}^{\omega_Kv}
\end{equation}
holds and extends to $\mathbb{R}_0^+$ by continuity. But this contradicts to the fact that $\omega_K$ is the weak growth bound of $\bigl(\exp(Kv)\bigr)_{v\geq 0}$. 
\par
\vspace{2mm}
\noindent\textit{2)} Let $s,s'\in T\vcentcolon =[t_1,t_2]$, $F_s(v)\vcentcolon=\exp\bigl(K(s)v\bigr)$, $v\in\mathbb{R}_0^+$ and $\Delta\vcentcolon=K(s')-K(s)\in\bbanach$. Then, the application of the Trotter product formula (\ref{EQ37}) yields
\begin{eqnarray}
\|F_s(v)\! \!  \!  &{-}&\! \!  \!  F_{s'}(v)\| = \! \! \lim_{m\rightarrow \infty}\left\|\left[F_s\left(\frac{v}{m}\right)\right]^m-\left[F_s\left(\frac{v}{m}\right)\exp\left(\frac{\Delta v}{m}\right)\right]^m\right\| \nonumber\\
&=& \! \!  \! \lim_{m\rightarrow\infty}\left\|\sum_{l=0}^{m-1}F_s\left(\frac{v(m{-}l)}{m}\right)\left(\mathbbm{1}_{\banach}{-}\exp\left(\frac{\Delta v}{m}\right)\right)\left[F_s\left(\frac{v}{m}\right)\exp\left(\frac{\Delta v}{m}\right)\right]^l\right\|\nonumber\\
&\leq& \lim_{m\rightarrow\infty}\sum_{l=0}^{m-1}\mathrm{e}^{\omega_{K(s)}v}\exp\left(\|\Delta\|\frac{l}{m}\right)\left\|\mathbbm{1}_\banach-\exp\left(\Delta\frac{v}{m}\right)\right\|\nonumber\\
&\leq& \lim_{m\rightarrow\infty}\sum_{l=0}^{m-1}\mathrm{e}^{\omega_{K(s)}v}\exp\left(\|\Delta\|\frac{l}{m}\right)\exp\left(\|\Delta\|\frac{v}{m}\right)
\frac{\|\Delta\| v}{m}\nonumber\\
&=&\lim_{m\rightarrow\infty}\mathrm{e}^{\omega_{K(s)}v}\exp\left(\|\Delta\|\frac{v}{m}\right)
\frac{\|\Delta\| v}{m}\frac{1-\exp(\|\Delta\| v)}{1-\exp(\|\Delta\| v/m)}\nonumber\\[8pt]
&=&\mathrm{e}^{\omega_{K(s)}v}\bigl|1-\mathrm{e}^{\|\Delta\| v}\bigr|.
\label{EQ2}
\end{eqnarray}
Changing the roles of $F_s(\cdot)$ and $F_{s'}(\cdot)$ in the calculation above, we find
\begin{equation}
\left\|F_s(v)-F_{s'}(v)\right\|=\mathrm{e}^{\omega_{K(s')}v}\bigl|1-\mathrm{e}^{\|\Delta\| v}\bigr|.
\label{EQ3}
\end{equation}
Define the functions $f_{w,\delta}:\mathbb{R}_0^+\rightarrow \mathbb{R}_0^+$, $w\in T$, $\delta\geq 0$ as
\begin{equation}
f_{w,\delta}(v)=
\begin{dcases}
\delta & \text{if}\ v=0,\\
\frac{\exp(\omega_{K(w)}v)}{v\|F_w(v)\|}\bigl|1-\mathrm{e}^{\delta v}\bigr| & \text{if}\ 0<v,
\end{dcases}
\end{equation}
which are continuous on $\mathbb{R}_0^+$. Using the concavity of the logarithm, we get
\begin{eqnarray}
\frac{\ln\|F_{s}(v)\|}{v}&\leq&\frac{1}{v}\ln\Bigl(\|F_{s'}(v)\|+\|F_{s'}(v)-F_{s}(v)\|\Bigr)\nonumber\\
&\leq&\frac{\ln\|F_{s'}(v)\|}{v}+\frac{\|F_{s'}(v)-F_{s}(v)\|}{v\|F_{s'}(v)\|}\nonumber\\
&=&\frac{\ln\|F_{s'}(v)\|}{v}+f_{s',\|\Delta\|}(v),
\label{EQ38}
\end{eqnarray}
where (\ref{EQ2}) has been used. Similarly, with the help of (\ref{EQ3}) we can write
\begin{equation}
\frac{\ln\|F_{s'}(v)\|}{v}\leq \frac{\ln\|F_{s}(v)\|}{v} +f_{s,\|\Delta\|}(v).
\label{EQ39}
\end{equation}
Combining the inequalities (\ref{EQ38}) and (\ref{EQ39}), we obtain 
\begin{equation}
\frac{\ln\|F_{s'}(v)\|}{v}-f_{s,\|\Delta\|}(v)\leq \frac{\ln\|F_{s}(v)\|}{v}\leq\frac{\ln\|F_{s'}(v)\|}{v}+f_{s',\|\Delta\|}(v) .
\label{EQ6}
\end{equation}
Let $C$ be a compact subset of $\mathbb{R}^+_0$. Then, the function $g:C\times T \rightarrow \mathbb{R}^+_0$, $g(v,s)\vcentcolon=\|F_s(v)\|$ is continuous and its image is compact, therefore closed and bounded. Additionally, 
\begin{equation}
0<\inf_{(v,s)\in C\times T}\,g(v,s)
\end{equation}
holds, because if $0=\inf_{(v,s)\in C\times T}\,g(v,s)$ would be true, then, since the domain and the range of $g$ is closed, a pair $(v_0,s_0)\in C\times T$ could be found such that $0=g(v_0,s_0)=\|F_{s_0}(v_0)\|$, which contradicts to basic properties of uniformly continuous semi-groups. Therefore, $f_w$ $(w=s,s')$ has the upper bound
\begin{eqnarray}
\frac{\exp(\omega_{K(w)}v)}{v\|F_w(v)\|}\bigl|1-\mathrm{e}^{\|\Delta\| v}\bigr|&\leq& \frac{\exp\Bigl(\sup_{s\in T}\,\|K(s)\|\cdot \sup_{v\in C}\,v\Bigr)}{\inf_{(v,s)\in C\times T}\,g(v,s)}\frac{\bigl|1{-}\mathrm{e}^{\|\Delta\| v}\bigr|}{v}\nonumber\\
&=\vcentcolon& f_{C,s,s'}(v),   \  v\in C.
\end{eqnarray}
Using (\ref{EQ6}), we obtain
\begin{equation}
\frac{\ln\|F_{s'}(v)\|}{v}-f_{C,s,s'}(v)\leq \frac{\ln\|F_{s}(v)\|}{v}\leq\frac{\ln\|F_{s'}(v)\|}{v}+f_{C,s,s'}(v)
\end{equation}
for all $v\in C$. Taking the supremum in the variable $v$ over $C$, we can write
\begin{equation}
\sup_{v\in C}\frac{\ln\|F_{s'}(v)\|}{v}-\sup_{v\in C}\,f_{C,s,s'}(v)\leq \sup_{v\in C}\frac{\ln\|F_{s}(v)\|}{v}\leq\,\sup_{v\in C}\frac{\ln\|F_{s'}(v)\|}{v}+\sup_{v\in C}\,f_{C,s,s'}(v).
\end{equation}
Since $\sup_{v\in C}f_{C,s,s'}(v)\rightarrow 0$ if $s\rightarrow s'$, then 
\begin{equation}
\lim_{s'\rightarrow s} \sup_{v\in C}\frac{\ln\|F_{s'}(v)\|}{v}=\sup_{v\in C}\frac{\ln\|F_{s}(v)\|}{v}.
\end{equation}
Since $C$ is an arbitrary compact subset of $\mathbb{R}_0^+$, we can write
\begin{equation}
\lim_{s'\rightarrow s} \sup_{v\in \mathbb{R}_0^+}\frac{\ln\|F_{s'}(v)\|}{v}=\sup_{v\in \mathbb{R}_0^+}\frac{\ln\|F_{s}(v)\|}{v},
\end{equation}
which proves the continuity of $s\mapsto \omega_{K(s)}$.
\end{proof}

The next lemma constrains from above the growth bounds of uniformly continuous semi-groups whose generators are weighted sums of generators of semi-groups.

\begin{Lem} \label{lem:weak_gb2}
Let $\banach$ be a Banach space and let $\Gamma\vcentcolon=\{K_{n,l}:n\in\mathbb{N},\,0\leq l<n\}$ be a subset of $\bbanach$. Let $\alpha=(\alpha_1,\alpha_2,\dots)$ be a sequence of sequences $(\alpha_{n,l})_{0\leq l<n}$ $n\in\mathbb{N}$ of non-negative reals. Let $\omega_{n,l} \in\mathbb{R}$ be the weak growth bounds of the semi-groups $(F_{n,l}(t))_{t\geq 0}$, each of them generated by the corresponding $K_{n,l}$. Define
\begin{equation}
\Sigma_{n,l}(\Gamma,\alpha)\vcentcolon=\sum_{p=0}^{l}\alpha_{n,p}K_{n,p},\quad \Omega_{n,l}(\Gamma,\alpha)\vcentcolon=\sum_{p=0}^{l}\alpha_{n,p}\omega_{n,p},\quad 0\leq l< n,
\end{equation}
and $\Sigma_{n}(\Gamma,\alpha)\vcentcolon=\Sigma_{n,{n-1}}(\Gamma,\alpha)$, $\Omega_{n}(\Gamma,\alpha)\vcentcolon=\Omega_{n,{n-1}}(\Gamma,\alpha)$, particularly. Then, for all $n\in\mathbb{N}$ and $0\leq l<n$, the semi-groups generated by each $\Sigma_{n,l}(\Gamma,\alpha)$ satisfy 
\begin{equation}
\left\|\exp\bigl(\Sigma_{n,l}(\Gamma,\alpha)t\bigr)\right\|_\bbanach\leq \left\|\exp\left(\Omega_{n,l}(\Gamma,\alpha) t\right)\right\|_\bbanach
\label{st1L1}
\end{equation}
for all $t\in\mathbb{R}^+_0$. Furthermore, if the uniform limit 
\begin{equation}
\mathfrak{P}[\Gamma,\alpha]\vcentcolon=\lim_{n\rightarrow\infty}\Sigma_{n}(\Gamma,\alpha)
\label{a2L1}
\end{equation}
exists in $\bbanach$ and the limit
\begin{equation}
\Omega[\Gamma,\alpha]\vcentcolon=\lim_{n\rightarrow\infty}\Omega_{n}(\Gamma,\alpha)
\label{a3L1}
\end{equation}
also exists, then the semi-group generated by $\mathfrak{P}[\Gamma,\alpha]$ satisfies $\|\exp(\mathfrak{P}[\Gamma,\alpha]t)\|_\bbanach\leq \exp(\Omega[\Gamma,\alpha] t)$ for all $t\in\mathbb{R}^+_0$. 
\end{Lem}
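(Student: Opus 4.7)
The plan is to prove \eqref{st1L1} by induction on $l$, applying the two-operator Trotter product formula \eqref{EQ37} at each step and using the definition of the weak growth bound to control each factor. The identities \eqref{a2L1} and \eqref{a3L1} are then handled by a simple continuity argument.

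For the base case $l = 0$, note that $\Sigma_{n,0}(\Gamma,\alpha) = \alpha_{n,0}K_{n,0}$ generates the rescaled semi-group $t \mapsto F_{n,0}(\alpha_{n,0}t)$. Since $\alpha_{n,0}\geq 0$ and $t \geq 0$, the argument $\alpha_{n,0}t$ lies in $\mathbb{R}_0^+$, so the definition of the weak growth bound gives $\|\exp(\Sigma_{n,0}(\Gamma,\alpha)t)\|_{\bbanach}\leq \exp(\omega_{n,0}\alpha_{n,0}t) = \exp(\Omega_{n,0}(\Gamma,\alpha)t)$. For the inductive step, write $\Sigma_{n,l}(\Gamma,\alpha) = \Sigma_{n,l-1}(\Gamma,\alpha) + \alpha_{n,l}K_{n,l}$. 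Since both summands are bounded generators of uniformly continuous semi-groups, the Trotter product formula \eqref{EQ37} applies and yields
\[
\exp(\Sigma_{n,l}(\Gamma,\alpha)t) = \lim_{m\to\infty}\bigl[\exp(\Sigma_{n,l-1}(\Gamma,\alpha)t/m)\exp(\alpha_{n,l}K_{n,l}t/m)\bigr]^m.
\]
Submultiplicativity of the operator norm, the inductive hypothesis applied with $s = t/m \in \mathbb{R}_0^+$, and the definition of $\omega_{n,l}$ applied with $s = \alpha_{n,l}t/m \in \mathbb{R}_0^+$ combine to give
\[
\|\exp(\Sigma_{n,l-1}(\Gamma,\alpha)t/m)\exp(\alpha_{n,l}K_{n,l}t/m)\|_{\bbanach}\leq \exp(\Omega_{n,l}(\Gamma,\alpha)t/m).
\]
Raising to the $m$-th power, taking $m\to\infty$, and using continuity of the norm delivers \eqref{st1L1}.

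For the second part, assume \eqref{a2L1} and \eqref{a3L1} hold. The exponential $X \mapsto \exp(X)$ is norm-continuous on $\bbanach$, so $\exp(\Sigma_n(\Gamma,\alpha)t) \to \exp(\mathfrak{P}[\Gamma,\alpha]t)$ in $\bbanach$; likewise $\exp(\Omega_n(\Gamma,\alpha)t)\to \exp(\Omega[\Gamma,\alpha]t)$ in $\mathbb{R}$. Passing to the limit in \eqref{st1L1} with $l = n-1$ yields $\|\exp(\mathfrak{P}[\Gamma,\alpha]t)\|_{\bbanach}\leq \exp(\Omega[\Gamma,\alpha]t)$, as required.

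I do not anticipate a serious obstacle here; the argument is essentially an induction on the number of summands combined with Trotter. The only delicate points are ensuring that every time the weak growth bound $\omega_{n,l}$ is invoked it is evaluated at a non-negative time argument (which follows from $\alpha_{n,l}\geq 0$ and $t/m \geq 0$), and recognizing that the $\omega_{n,l}$ need not be non-negative, so that $\Omega_{n,l}(\Gamma,\alpha)$ can be any real number, but this causes no problem since the estimate $\|F(t)\|_{\bbanach}\leq e^{\omega t}$ is of the same form for any sign of $\omega$.
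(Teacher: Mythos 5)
Your proposal is correct and follows essentially the same route as the paper: induction on the number of summands, with the base case $l=0$ handled directly from the definition of the weak growth bound together with $\alpha_{n,0}\geq 0$, the inductive step via the Trotter product formula combined with submultiplicativity of the norm, and the limit statement by continuity of the exponential and the norm. (If anything, your decomposition $\Sigma_{n,l}=\Sigma_{n,l-1}+\alpha_{n,l}K_{n,l}$ fixes a small indexing slip in the paper's displayed Trotter step, which adds $\alpha_{n,l}K_{n,l}$ where the increment $\Sigma_{n,l+1}-\Sigma_{n,l}=\alpha_{n,l+1}K_{n,l+1}$ was meant; this has no effect on the argument.)
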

 
\begin{proof} The statement is a straightforward consequence of the Trotter product formula. For a fixed $n\in\mathbb{N}$, we have $\Sigma_{n,0}(\Gamma,\alpha)=\alpha_{n,0}K_0$, thus $\|\exp\bigl(\Sigma_{n,0}(\Gamma,\alpha)t\bigr)\|\leq \exp(\alpha_{n,0}\omega_{n,0} t)$ provided that $\alpha$ contains only non-negative numbers. Assume that (\ref{st1L1}) holds for some $0<l<n-1$. Then, the product formula gives
\begin{eqnarray}
\left\|\exp\bigl(\Sigma_{n,l+1}(\Gamma,\alpha)t\bigr)\right\|
&=&\lim_{m\rightarrow \infty}\left\|\left[\exp\Biggl(K_{n,l}\frac{\alpha_{n,l}t}{m}\Biggr)\exp\Biggl(\Sigma_{n,l}[\Gamma,\alpha]\frac{t}{m}\Biggr)\right]^m\right\|\nonumber\\
&\leq&\lim_{m\rightarrow \infty}\left\|\exp\Biggl(K_{n,l}\frac{\alpha_{n,l}t}{m}\Biggr)\right\|^m\, \left\|\exp\Biggl(\Sigma_{n,l}[\Gamma,\alpha]\frac{t}{m}\Biggr)\right\|^m\nonumber\\
&\leq&\exp\Bigl(\alpha_{n,l}\omega_{n,l} t\Bigr)\exp\Biggl(\Omega_{n,l}[\Gamma,\alpha] t\Biggr)\nonumber\\
&=&\exp\Biggl(\Omega_{n,l+1}[\Gamma,\alpha] t\Biggr)
\end{eqnarray}
for all $t\in\mathbb{R}_0^+$. If the limits (\ref{a2L1}) and (\ref{a3L1}) exist, then the continuity of the exponential function and the norm gives
\begin{eqnarray}
\left\|\exp\bigl(\mathfrak{P}[\Gamma,\alpha]t\bigr)\right\|&=&\lim_{n\rightarrow \infty}\left\|\exp\bigl(\Sigma_{n}(\Gamma,\alpha)t\bigr)\right\| \nonumber \\
&\le& \lim_{n\rightarrow\infty}\exp(\Omega_{n}[\Gamma,\alpha] t)\nonumber\\
&=&\exp(\Omega[\Gamma,\alpha] t)
\end{eqnarray}
for all $t\in\mathbb{R}_0^+$. 
\end{proof}

\begin{Cor} Let $\hilbert$ be a Hilbert space, $U\in\bhilbert$ be unitary and $\alpha=(\alpha_1,\alpha_2,\dots)$ be a set of sequences of non-negative reals as described above. Let $\banach\subseteq \bhilbert$ be a two sided ideal endowed with the unitarily invariant norm $\|\cdot\|_\banach$ such that $(\banach,\|\cdot\|_\banach)$ forms a Banach space. Let $K\in\bbanach$ be a generator of a  uniformly continuous semi-group of weak growth bound $\omega_K$. Define ${U^*}^{p}KU^{p}$ as either to be equal to $L^p_{U^*}KL^p_{U}$ or $R^p_{U^*}KR^p_{U}$  for any $p\in\mathbb{N}_0$, where $L_{\bullet}$, $R_{\bullet}$ denote the multiplication with the argument from the left and from the right, respectively.  Then,
\begin{equation}
\left\|\exp\left(\sum_{p=0}^{l}\alpha_{n,p}U^{*p}KU^{p}\, t\right)\right\|_{\bbanach}\leq \exp\left(\sum_{p=0}^{l}\alpha_{n,p}\,\omega_K\, t\right)\qquad\qquad t\in\mathbb{R}_0^+.
\label{co1st1}
\end{equation}
Furthermore, if $\sum_{l=0}^{n-1}\alpha_{n,l}=1$ holds for all $n\in\mathbb{N}$ and the ergodic mean
\begin{equation}
\mathfrak{P}_{\alpha,U}[K]\vcentcolon=\lim_{n\rightarrow \infty}\sum_{l=0}^{n-1}\alpha_{n,l}U^{*l}KU^{l}
\label{co1a1}
\end{equation}
exists in $\bbanach$, then $\mathfrak{P}_{\alpha,U}[K]$ generates a semi-group which satisfy 
\begin{equation}
\left\|\exp\bigl(\mathfrak{P}_{\alpha,U}[K]t\bigr)\right\|_\bbanach\leq\exp(\omega_Kt)
\label{co1st2}
\end{equation} 
for all $t\in\mathbb{R}_0^+$. 
\end{Cor}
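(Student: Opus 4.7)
The corollary is a specialization of Lemma \ref{lem:weak_gb2} to the sequence $\Gamma=\{K_{n,p}\vcentcolon={U^*}^{p}KU^{p}:n\in\mathbb{N},\,0\leq p<n\}$, and the only substantive point is to verify that each $K_{n,p}$ generates a semi-group whose weak growth bound equals $\omega_K$, regardless of which of the two admissible interpretations of ${U^*}^{p}KU^{p}$ is chosen. First I would observe that $L_U\in\bbanach$ is invertible with $L_U^{-1}=L_{U^*}$, since $L_{U^*}L_U=L_{U^*U}=\mathbbm{1}_{\bbanach}$, and the analogous identity holds for $R_U$. Consequently ${U^*}^{p}KU^{p}=L_U^{-p}KL_U^{p}$ (resp.~$R_U^{-p}KR_U^{p}$), and a standard power-series manipulation gives
\begin{equation*}
\exp\bigl({U^*}^{p}KU^{p}\,t\bigr)=L_U^{-p}\exp(Kt)L_U^{p}\qquad\bigl(\text{or }R_U^{-p}\exp(Kt)R_U^{p}\bigr).
\end{equation*}

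Next I would invoke the unitary invariance of $\|\cdot\|_\banach$: the operators $L_U,R_U$ act as isometric bijections of $\banach$ onto itself, so conjugation by $L_U^{p}$ or $R_U^{p}$ preserves the operator norm on $\bbanach$. Hence $\|\exp({U^*}^{p}KU^{p}v)\|_{\bbanach}=\|\exp(Kv)\|_{\bbanach}$ for every $v\geq 0$, and part \textit{1)} of Proposition \ref{lem:weak_gb1} then forces $\omega_{K_{n,p}}=\omega_K$ for all admissible $n,p$.

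With this identification at hand, the bound (\ref{co1st1}) is immediate from the conclusion (\ref{st1L1}) of Lemma \ref{lem:weak_gb2}, because $\Omega_{n,l}(\Gamma,\alpha)=\omega_K\sum_{p=0}^{l}\alpha_{n,p}$. For the second statement, the normalisation $\sum_{l=0}^{n-1}\alpha_{n,l}=1$ forces $\Omega_n(\Gamma,\alpha)=\omega_K$ for every $n$, so the limit (\ref{a3L1}) exists trivially and equals $\omega_K$; combined with the hypothesised existence of $\mathfrak{P}_{\alpha,U}[K]=\lim_n\Sigma_n(\Gamma,\alpha)$ in $\bbanach$, Lemma \ref{lem:weak_gb2} immediately delivers (\ref{co1st2}). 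I do not anticipate any genuine obstacle: the entire argument reduces to checking the isometry property of $L_U,R_U$ on $\banach$, which is built into the hypothesis of unitary invariance, and to the routine bookkeeping of feeding $K_{n,p}$ and $\omega_{n,p}=\omega_K$ into the already proved Lemma \ref{lem:weak_gb2}.
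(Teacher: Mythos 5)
Your argument is correct and follows essentially the same path as the paper: identify $K_{n,p}=U^{*p}KU^{p}$ as the members of $\Gamma$, use unitary invariance of $\|\cdot\|_\banach$ to see that conjugation is an isometry of $\bbanach$, deduce $\|\exp(K_{n,p}t)\|_\bbanach=\|\exp(Kt)\|_\bbanach$, and feed $\omega_{n,p}=\omega_K$ into Lemma~\ref{lem:weak_gb2}. The only (cosmetic) difference is that you explicitly invoke Proposition~\ref{lem:weak_gb1}~\textit{1)} to conclude the exact equality $\omega_{K_{n,p}}=\omega_K$, whereas the paper just sets $\omega_{n,l}:=\omega_K$ after observing the norm identity — your version is a touch more careful about why those are legitimately the weak growth bounds of the $K_{n,p}$, but the substance is the same.
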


\begin{proof}
Let $\Gamma$ be defined through its members $K_{n,l}=U^{*l}KU^{l}$. Since the norm $\|\cdot\|_\mathcal{X}$ is unitarily invariant, $K\mapsto U^{*l}KU^l$ is an isometry of $\bbanach$ for every $l\in\mathbb{N}_0$. Thus,
\begin{equation}
\|\exp\bigl(K_{n,l}t\bigr)\|=\|U^{*l}\exp\bigl(Kt\bigr)U^l\|=\|\exp\bigl(Kt\bigr)\|\leq \exp(\omega_K t)
\end{equation}
holds for all $t\in\mathbb{R}_0^+$. Concerning the second statement, we can apply Lemma~\ref{lem:weak_gb2} with $\omega_{n,l}\vcentcolon=\omega_K$ to conclude (\ref{co1st1}). Furthermore, provided that $\sum_{l=0}^{n-1}\alpha_{n,l}=1$ and (\ref{co1a1}) exists, we also have (\ref{co1st2}).
\end{proof}
 
Tthe following statement will be used in section \ref{SEC6}.\par
\vspace{3mm}
\begin{Cor}\label{COR3.5}
Let $\banach$ be a Banach space and let $K\in C([t_1,t_2],\bbanach)$, $-\infty<t_1<t_2$ $<\infty$. If the weak growth bound of $(\exp(K(t)v))_{v\geq 0}$ is denoted by $\omega_{K(t)}$, then
\begin{equation}
\left\|\exp\left(v\int_{t_1}^{t_2}K(t)\,\mathrm{d}t\right)\right\|\leq \exp\left(v\int_{t_1}^{t_2}\omega_{K(t)}\,\mathrm{d}t\right)
\label{co2st1}
\end{equation}
holds for all $v\in\mathbb{R}_0^+$.

\end{Cor}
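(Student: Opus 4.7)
The plan is to deduce the corollary from Lemma \ref{lem:weak_gb2} by approximating the integrals as Riemann sums along a sequence of partitions whose mesh tends to zero.

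Concretely, for each $n\in\mathbb{N}$ choose a partition $t_1=s_{n,0}<s_{n,1}<\dots<s_{n,n}=t_2$ whose norm $\max_{0\le l<n}(s_{n,l+1}-s_{n,l})$ tends to zero as $n\to\infty$. Set
\begin{equation*}
K_{n,l}\vcentcolon= K(s_{n,l}),\qquad \alpha_{n,l}\vcentcolon= s_{n,l+1}-s_{n,l},\qquad 0\le l<n,
\end{equation*}
and let $\Gamma=\{K_{n,l}\}$ and $\alpha=(\alpha_1,\alpha_2,\dots)$. The $\alpha_{n,l}$'s are non-negative reals, as required by Lemma~\ref{lem:weak_gb2}. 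Since the semi-group generated by $K_{n,l}=K(s_{n,l})$ is precisely $(\exp(K(s_{n,l})v))_{v\ge 0}$, its weak growth bound is $\omega_{n,l}=\omega_{K(s_{n,l})}$, and therefore
\begin{equation*}
\Sigma_n(\Gamma,\alpha)=\sum_{l=0}^{n-1}(s_{n,l+1}-s_{n,l})K(s_{n,l}),\qquad \Omega_n(\Gamma,\alpha)=\sum_{l=0}^{n-1}(s_{n,l+1}-s_{n,l})\omega_{K(s_{n,l})}.
\end{equation*}

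Since $K\in C([t_1,t_2],\bbanach)$ is uniformly continuous on a compact interval, the Riemann sums $\Sigma_n(\Gamma,\alpha)$ converge in the uniform topology of $\bbanach$ to $\int_{t_1}^{t_2}K(t)\,\mathrm{d}t$. By Proposition~\ref{lem:weak_gb1}~\textit{2)}, the map $t\mapsto \omega_{K(t)}$ is continuous on $[t_1,t_2]$, hence the Riemann sums $\Omega_n(\Gamma,\alpha)$ converge in $\mathbb{R}$ to $\int_{t_1}^{t_2}\omega_{K(t)}\,\mathrm{d}t$. Both limits in (\ref{a2L1}) and (\ref{a3L1}) therefore exist, with
\begin{equation*}
\mathfrak{P}[\Gamma,\alpha]=\int_{t_1}^{t_2}K(t)\,\mathrm{d}t,\qquad \Omega[\Gamma,\alpha]=\int_{t_1}^{t_2}\omega_{K(t)}\,\mathrm{d}t.
\end{equation*}

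Applying Lemma~\ref{lem:weak_gb2} to this choice of $\Gamma$ and $\alpha$ yields, for every $v\in\mathbb{R}_0^+$,
\begin{equation*}
\left\|\exp\left(v\int_{t_1}^{t_2}K(t)\,\mathrm{d}t\right)\right\|_{\bbanach}=\bigl\|\exp(\mathfrak{P}[\Gamma,\alpha]v)\bigr\|_{\bbanach}\le\exp(\Omega[\Gamma,\alpha]v)=\exp\left(v\int_{t_1}^{t_2}\omega_{K(t)}\,\mathrm{d}t\right),
\end{equation*}
which is (\ref{co2st1}). The only non-routine input is the continuity of $t\mapsto\omega_{K(t)}$ used to pass from the defining pointwise inequalities to the integrated one; this is exactly the content of Proposition~\ref{lem:weak_gb1}~\textit{2)} and is the step doing the real work.
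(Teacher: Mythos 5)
Your proof is correct and follows essentially the same route as the paper: both apply Lemma~\ref{lem:weak_gb2} to Riemann-sum approximations of $\int_{t_1}^{t_2}K(t)\,\mathrm{d}t$ and $\int_{t_1}^{t_2}\omega_{K(t)}\,\mathrm{d}t$, using the continuity of $t\mapsto\omega_{K(t)}$ from Proposition~\ref{lem:weak_gb1}~\textit{2)} to make the scalar sums converge. The only difference is cosmetic: the paper normalizes $\alpha$ so that $\sum_l\alpha_{n,l}=1$ and then rescales via the substitution $u=(t_2-t_1)v$, whereas you take $\alpha_{n,l}$ to be the interval widths directly (which Lemma~\ref{lem:weak_gb2} permits, since it only requires non-negativity), so the Riemann sums converge to the integrals with no final rescaling needed.
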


\begin{proof} 
 Let $\alpha$ be given as in Lemma~\ref{lem:weak_gb2} with the additional properties of the members
\begin{equation}
\sum_{l=0}^{n-1}\alpha_{n,l}=1,\qquad\qquad\lim_{n\rightarrow \infty}\max_{0\leq k<n}\alpha_{n,k}=0.
\label{EQ5}
\end{equation}
Define $\Gamma$ as the set containing the elements $K_{n,l}\vcentcolon=K(t_{n,l})$, where 
\begin{equation}
t_{n,l}\vcentcolon=t_1+(t_2-t_1)\sum_{p=0}^{l-1}\alpha_{n,p},\qquad \qquad 0\leq l< n.
\end{equation}
Provided that (\ref{EQ5}) holds and $K$ is continuous, the uniform limit 
\begin{equation}
\lim_{n\rightarrow \infty}\sum_{l=0}^{n-1}\alpha_{n,l}K_{n,l}=\lim_{n\rightarrow \infty}\sum_{l=0}^{n-1}\alpha_{n,l}K(t_{n,l})=\frac{1}{t_2-t_1}\int_{t_1}^{t_2}K(t)\,\mathrm{d}t
\end{equation}
exists in $\bbanach$. Using Proposition \ref{lem:weak_gb1}, $[t_1,t_2]\ni t\mapsto \omega_{K(t)}$ is continuous, therefore integrable on $[t_1,t_2]$. In the notation of Lemma~\ref{lem:weak_gb2}, we write
\begin{equation}
\lim_{n\rightarrow\infty}\Omega_n(\Gamma,\alpha)=\lim_{n\rightarrow\infty}\sum_{l=0}^{n-1}\alpha_{n,l}\omega_{n,l}=\frac{1}{t_2-t_1}\int_{t_1}^{t_2}\omega_{K(t)}\,\mathrm{d}t.
\end{equation}
Using Lemma~\ref{lem:weak_gb2}, we obtain
\begin{equation}
\left\|\exp\left(\frac{u}{t_2-t_1}\int_{t_1}^{t_2}K(t)\mathrm{d}t\right)\right\|\leq \exp\left(\frac{u}{t_2-t_1}\int_{t_1}^{t_2}\omega_{K(t)}\,\mathrm{d}t\right)
\end{equation}
for all $u\in\mathbb{R}_0^+$. Substitution of $u=(t_2-t_1)v$, where $v\in\mathbb{R}_0^+$ completes the proof. 
\end{proof}

Proposition~\ref{prop:growth_prop} concerns the growth properties of solutions $F(\cdot,\cdot)$ of the initial value problem
\begin{eqnarray}
\partial_1F(u,v)&=&K(u)F(u,v),\qquad\qquad t_1\leq v\leq u\leq t_2, \nonumber\\
F(v,v)&=&\mathbbm{1}_{\banach},
\label{EQ27}
\end{eqnarray}
where $K\in C([t_1,t_2],\bbanach)$, $\interval$.

\begin{Prop}  \label{prop:growth_prop}
\textit{Let $\banach$ be a Banach space and let $K\in C([t_1,t_2],\bbanach)$. Then, the solution $F(\cdot,\cdot)$ of (\ref{EQ27}) satisfies 
\begin{equation}
\|F(u,v)\|_\bbanach\leq \exp\left(\int_{v}^{u}\omega_{K(s)}\,\mathrm{d}s\right),\qquad \qquad t_1\leq v\leq u\leq t_2,
\end{equation}
where $\omega_{K(s)}$ is the weak growth bound of the uniformly continuous semi-group generated by $K(s)$ if $s\in [t_1,t_2]$.}
\end{Prop}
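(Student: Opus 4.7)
The plan is to approximate the propagator $F(u,v)$ by step-generator products and then apply the weak growth bound factor-by-factor. Fix $t_1 \leq v \leq u \leq t_2$ and for each $n \in \mathbb{N}$ choose a partition $v = s_{n,0} < s_{n,1} < \cdots < s_{n,n} = u$ whose mesh tends to zero. Define the frozen-generator product
\begin{equation}
F_n(u,v) \vcentcolon= \prod_{l=0}^{n-1} \exp\bigl(K(s_{n,l})\,(s_{n,l+1}-s_{n,l})\bigr),
\end{equation}
with the product ordered as specified in the Notation subsection (indices descend left to right). The first step is to show that $F_n(u,v) \to F(u,v)$ in $\bbanach$ as $n \to \infty$. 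This is the standard Chernoff/Trotter-type approximation for propagators of continuous, bounded, time-dependent linear ODEs on a Banach space; since $K \in C^0([t_1,t_2],\bbanach)$ is uniformly continuous on the compact interval, the local error per step is $o(s_{n,l+1}-s_{n,l})$ uniformly in $l$, and the usual telescoping argument combined with the a priori bound $\|F(\cdot,\cdot)\| \leq \exp((t_2-t_1)\|K\|_{C^0})$ yields convergence. I would expect this to be one of the facts collected in the appendix mentioned by the authors.

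The second step is the pointwise norm estimate on each factor. By Proposition \ref{lem:weak_gb1}\textit{1)}, the definition of the weak growth bound gives directly
\begin{equation}
\bigl\|\exp\bigl(K(s_{n,l})(s_{n,l+1}-s_{n,l})\bigr)\bigr\|_\bbanach \leq \exp\bigl(\omega_{K(s_{n,l})}(s_{n,l+1}-s_{n,l})\bigr),
\end{equation}
so submultiplicativity of the operator norm yields
\begin{equation}
\|F_n(u,v)\|_\bbanach \leq \exp\!\left(\sum_{l=0}^{n-1}\omega_{K(s_{n,l})}(s_{n,l+1}-s_{n,l})\right).
\end{equation}

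The third step is to pass to the limit in the right-hand side. By Proposition \ref{lem:weak_gb1}\textit{2)}, the map $s \mapsto \omega_{K(s)}$ is continuous on $[t_1,t_2]$, hence Riemann integrable on $[v,u]$, and the sum in the exponent is a Riemann sum converging to $\int_v^u \omega_{K(s)}\,\mathrm{d}s$. Combining with the continuity of $\exp$ and the convergence $F_n(u,v) \to F(u,v)$ from step one, we conclude
\begin{equation}
\|F(u,v)\|_\bbanach = \lim_{n\to\infty}\|F_n(u,v)\|_\bbanach \leq \exp\!\left(\int_v^u \omega_{K(s)}\,\mathrm{d}s\right).
\end{equation}

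The main obstacle is the convergence $F_n \to F$ in step one: one must rule out that the frozen-generator products drift from the true propagator despite the lack of a uniform growth bound intermediate between $\omega_{K(s)}$ and $\|K(s)\|_\bbanach$. The cleanest way is to compare $F_n$ with $F$ using a telescoping identity $F(u,v) - F_n(u,v) = \sum_{l=0}^{n-1} F(u,s_{n,l+1})\bigl[F(s_{n,l+1},s_{n,l}) - \exp(K(s_{n,l})(s_{n,l+1}-s_{n,l}))\bigr]\prod_{k<l}\exp(\cdots)$, bound each bracket by $o(s_{n,l+1}-s_{n,l})$ via uniform continuity of $K$, and control the tails using the rough bound $\|F(\cdot,\cdot)\|,\|F_n(\cdot,\cdot)\|\leq \exp((t_2-t_1)\|K\|_{C^0})$. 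Once convergence is in hand, the bound on $\|F_n\|$ transfers to $\|F\|$ automatically.
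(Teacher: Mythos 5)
Your proposal takes essentially the same route as the paper: approximate $F(u,v)$ by a product of frozen-generator exponentials over a fine partition, estimate each factor's norm via the weak growth bound, recognize the exponent as a Riemann sum for $\int_v^u \omega_{K(s)}\,\mathrm{d}s$ (using Proposition~\ref{lem:weak_gb1}\textit{2)}), and establish convergence of the product to $F(u,v)$ by a telescoping identity with Gr\"onwall controlling each local error. The paper's proof is a fleshed-out version of exactly this plan, with the local error estimate made explicit through the Peano/Gr\"onwall calculation and the telescoping carried out via the appendix lemma~(\ref{F1}).
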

\begin{proof}
Since $[t_1,t_2]$ is closed, $K$ is uniformly continuous. Let $\varepsilon> 0$ be given, then there exists $\delta_{\varepsilon} >0$ such that for all $s,s'\in [t_1,t_2]$, if $|s-s'|<\delta_{\varepsilon}$ holds, $\|K(s')-K(s)\|<\varepsilon$ also holds. Let $s_0,\dots,s_{m_{\varepsilon}}$ defined through
\begin{equation}
s_k\vcentcolon=v+\frac{k\delta_{\varepsilon}}{m_{\varepsilon}}\quad \mathrm{if}\ 0\leq k < m_\varepsilon,\qquad s_{m_\varepsilon}\vcentcolon=u,\qquad\qquad m_{\varepsilon}\vcentcolon=\left\lceil{\frac{u-v}{\delta_{\varepsilon}}}\right\rceil.
\label{EQ29}
\end{equation}
Then, using the characteristic functions of the intervals $[s_l,s_{l+1}]$, $0\leq l< m$, we have
\begin{equation}
\sup_{s\in [s_0,s_{m_{\varepsilon}}]}\left\|K(s)-\sum_{l=0}^{m-1}K(s_l)\chi_{[s_l,s_{l+1}]}(s)\right\|<\varepsilon.
\end{equation}
We apply Peano's perturbation theorem for linear equations of which proof we write out in detail for the sake of completeness. For any $s\in [s_l,s_{l+1}]$ we have 
\begin{eqnarray}
\|F(s,s_l){-}\exp\bigl(K(s_l)s\bigr)\| \! \! \!&=& \! \! \! \left\|\int_{s_l}^{s}\bigr(K(t)F(t,s_l)-K(s_l)\exp\bigl(K(s_l)t\bigr)\,\mathrm{d}t\right\|\nonumber\\
\! \!&\leq & \! \! \! \int_{s_l}^{s}\left\|K(t)F(t,s_l)-K(t)\exp\bigl(K(s_l)t\bigr)\right\|\mathrm{d}t \nonumber \\
 &&  \! \! \! \qquad+\int_{s_l}^{s}\left\|\bigl(K(t)-K(s_l)\bigr)\exp\bigl(K(s_l)t\bigr)\right\|\mathrm{d}t\nonumber\\
\! \! \! &<& \! \! \! \|K\|_{C^0}\int_{s_l}^{s}\left\|F(t{,}s_l){-}\exp\bigl(K(s_l)t\bigr)\right\| \! \mathrm{d}t{+}\varepsilon\,\mathrm{e}^{t_2\|K\|_{C^0}}(s{-}s_l).
\end{eqnarray}
Applying Gr\"onwall's inequality, we can write
\begin{eqnarray}
\|F(s,s_l)-\exp\bigl(K(s_l)s\bigr)\|&<& \varepsilon \, \mathrm{e}^{t_2\|K\|_{C^0}}\frac{\mathrm{e}^{(s-s_l)\|K\|_{C^0}}-1}{\|K\|_{C^0}}
\label{EQ7}
\end{eqnarray}
Since $F(\cdot,\cdot)$ satisfies the integral equation
\begin{equation}
F(s',s)=\mathbbm{1}_{\banach}+\int_{s}^{s'}K(w)F(w,s)\,\mathrm{d}w, \qquad \qquad v\leq s\leq s'\leq u,
\end{equation}
its norm satisfies the integral inequality
\begin{equation}
\|F(s',s)\|\leq 1+\int_{s}^{s'}\|K(w)\|\,\|F(w,s)\|\,\mathrm{d}w\leq 1+\|K\|_{C^0}\int_{s}^{s'}\|F(w,s)\|\,\mathrm{d}w.
\label{EQ73}
\end{equation}
A further application of Gr\"onwall's inequality yields 
\begin{equation}
\|F(s_{m_{\varepsilon}},s_{l+1})\| {\le} \exp\left((s_{m_{\varepsilon}}-s_{l+1})\|K\|_{C^0}\right).
\end{equation}
Using (\ref{F1}), (\ref{EQ7}) and (\ref{EQ73}), we obtain
\begin{eqnarray}
&& \left\|F(u,v){-}\prod_{l=0}^{m_{\varepsilon}-1}\exp\bigl(K(s_l)(s_{l+1}-s_{l})\bigr) \right\|\nonumber\\
&&\qquad\qquad\qquad\leq \varepsilon\,\mathrm{e}^{\|K\|_{C^0}(u-v)}\mathrm{e}^{t_2\|K\|_{C^0}}\left\lceil{\frac{u-v}{\delta_\varepsilon}}\right\rceil \frac{\mathrm{e}^{\delta_\varepsilon\|K\|_{C^0}}-1}{\|K\|_{C^0}}\nonumber\\
&&\qquad\qquad\qquad\leq \varepsilon\,(u-v)\mathrm{e}^{\|K\|_{C^0}(u-v)}\mathrm{e}^{t_2\|K\|_{C^0}}\frac{\mathrm{e}^{\delta_\varepsilon\|K\|_{C^0}}-1}{\delta_{\varepsilon}\|K\|_{C^0}}
\label{EQ30}
\end{eqnarray}
Therefore,
\begin{eqnarray}
\|F(u,v)\|&\leq&\left\|F(u,v)-\prod_{l=0}^{m_{\varepsilon}-1} \exp\bigl(K(s_l)(s_{l+1}-s_{l})\right\|+\left\|\prod_{l=0}^{m_{\varepsilon}-1}\ \exp\bigl(K(s_l)(s_{l+1}-s_{l}) \right\|\nonumber\\
&\leq& \varepsilon\,(u-v)\mathrm{e}^{\|K\|_{C^0}(u-v)}\mathrm{e}^{t_2\|K\|_{C^0}}\frac{\mathrm{e}^{\delta_\varepsilon\|K\|_{C^0}}-1}{\delta_{\varepsilon}\|K\|_{C^0}}+\mathrm{e}^{\sum_{l=0}^{m_{\varepsilon}-1}\omega_{K(s_l)}(s_{l+1}{-}s_{l})}.
\end{eqnarray}
As $\varepsilon\rightarrow 0$ we can assume $\delta_{\varepsilon}\rightarrow 0$, so 
\begin{equation}
\lim_{\varepsilon \rightarrow 0}  \varepsilon\,(u-v)\mathrm{e}^{\|K\|_{C^0}(u-v)}\mathrm{e}^{t_2\|K\|_{C^0}}\frac{\mathrm{e}^{\delta_\varepsilon\|K\|_{C^0}}-1}{\delta_{\varepsilon}\|K\|_{C^0}}=0.
\label{EQ32}
\end{equation}
Furthermore, if $\delta_\varepsilon\rightarrow 0$, then $m_{\varepsilon}\rightarrow \infty$. Thus, Proposition \ref{lem:weak_gb1} allows us to write
\begin{equation}
\lim_{\varepsilon \rightarrow 0}\sum_{l=0}^{m_{\varepsilon}-1}\omega_{K(s_l)}(s_{l+1}-s_{l})=\int_{v}^{u}\omega_{K(s)}\,\mathrm{d}s.
\label{EQ33}
\end{equation}
Hence,
\begin{equation}
\|F(u,v)\|\leq\exp\left(\int_{v}^{u}\omega_{K(s)}\,\mathrm{d}s\right).
\end{equation}
\end{proof}

Proposition \ref{PROP3.7} is a perturbation result concerning the solutions $G(\cdot,\cdot)$, $H(\cdot,\cdot)$ of the initial value problems 
\begin{eqnarray}
&& \partial_1G(u,v)=K(u)G(u,v),\quad \partial_1H(u,v)=L(u)H(u,v),\; \; \; t_1\leq v\leq u\leq t_2,\nonumber\\
&& \quad G(v,v)=\mathbbm{1}_{\banach},\quad \qquad \; \; \; \; \quad\ \ \, H(v,v)=\mathbbm{1}_{\banach},
\label{EQ28}
\end{eqnarray}
where $K,L\in C([t_1,t_2],\bbanach)$, $\interval$.

\begin{Prop}\label{PROP3.7}
Let $\banach$ be a Banach space and let $K,L\in C([t_1,t_2],\bbanach)$, $-\infty<$ $t_1<t_2<\infty$ and $G(\cdot,\cdot)$, $H(\cdot,\cdot)$ be solutions of the initial value problems (\ref{EQ28}). Let $\omega:[t_1,t_2]\rightarrow \mathbb{R}$ be integrable on $[t_{1},t_{2}]$, which dominates the weak growth bounds $\omega_{K(\cdot)}$ and $\omega_{L(\cdot)}$ on $[t_1,t_2]$. Then, 
\begin{equation}
\|G(u,v)-H(u,v)\|_\bbanach\leq\exp\left(\int_{v}^{u}\omega(s)\,\mathrm{d}s\right)\, \int_{v}^{u}\|K(s)-L(s)\|_\bbanach\,\mathrm{d}s
\end{equation}
holds for all $t_1\leq v\leq u\leq t_2$.
\end{Prop}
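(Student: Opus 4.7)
The plan is to use a Duhamel-type variation-of-constants formula together with the growth bound from Proposition~\ref{prop:growth_prop}. Fix $t_1\le v\le u\le t_2$ and consider the map $s\mapsto G(u,s)H(s,v)$ on $[v,u]$. Since $K$ and $L$ are continuous $\bbanach$-valued functions, the propagators $G(\cdot,\cdot)$ and $H(\cdot,\cdot)$ are well defined and satisfy the standard backward identity $\partial_2 G(u,s)=-G(u,s)K(s)$ (this is verified by differentiating the cocycle relation $G(u,s)G(s,v)=G(u,v)$ in $s$, which is legitimate because $\partial_1 G$ exists and is continuous).

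With this in hand, the product rule gives
\begin{equation}
\partial_s\bigl[G(u,s)H(s,v)\bigr]=-G(u,s)K(s)H(s,v)+G(u,s)L(s)H(s,v)=G(u,s)\bigl(L(s)-K(s)\bigr)H(s,v).
\end{equation}
Integrating from $v$ to $u$ and using $G(u,u)=H(v,v)=\mathbbm{1}_{\banach}$ yields the Duhamel identity
\begin{equation}
H(u,v)-G(u,v)=\int_v^u G(u,s)\bigl(L(s)-K(s)\bigr)H(s,v)\,\mathrm{d}s.
\end{equation}

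Taking norms, applying Proposition~\ref{prop:growth_prop} twice, and using that $\omega$ dominates both $\omega_{K(\cdot)}$ and $\omega_{L(\cdot)}$, we get
\begin{equation}
\|G(u,s)\|_{\bbanach}\|H(s,v)\|_{\bbanach}\le \exp\!\left(\int_s^u\omega(r)\,\mathrm{d}r\right)\exp\!\left(\int_v^s\omega(r)\,\mathrm{d}r\right)=\exp\!\left(\int_v^u\omega(r)\,\mathrm{d}r\right),
\end{equation}
and since this bound is independent of $s$, factoring it out of the integral gives
\begin{equation}
\|G(u,v)-H(u,v)\|_{\bbanach}\le \exp\!\left(\int_v^u\omega(r)\,\mathrm{d}r\right)\int_v^u\|K(s)-L(s)\|_{\bbanach}\,\mathrm{d}s,
\end{equation}
which is the claim.

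The only mildly delicate point is the justification of the backward derivative formula $\partial_2 G(u,s)=-G(u,s)K(s)$; with $K\in C([t_1,t_2],\bbanach)$ this is standard and could alternatively be avoided by a pure Gr\"onwall argument: write the integral equations for $G$ and $H$, subtract, add and subtract $K(\cdot)H$, bound $\|G-H\|$ by $\int_v^u\|K-L\|\cdot\|H\|+\|K\|\cdot\|G-H\|$, and close the estimate via Gr\"onwall, eventually sharpening $\|K\|$ to $\omega$ using the semigroup estimate from Proposition~\ref{prop:growth_prop}. The Duhamel route above is however cleaner and yields the exact sharp constant with no spare factors.
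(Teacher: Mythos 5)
Your proof is correct, and it takes a genuinely different route from the paper. The paper discretizes the interval $[v,u]$, approximates both $G$ and $H$ by finite products of exponentials of piecewise-constant generators (with an error $f(\varepsilon)$ controlled via a Peano-type perturbation estimate), compares the two products term by term via the telescoping identity (F1) and the Trotter product formula, and only at the end identifies the Riemann sums with $\int\omega$ and $\int\|K-L\|$ as $\varepsilon\to 0$. You instead go straight through Duhamel's variation-of-constants identity $H(u,v)-G(u,v)=\int_v^u G(u,s)\bigl(L(s)-K(s)\bigr)H(s,v)\,\mathrm{d}s$, justify the needed backward derivative $\partial_2 G(u,s)=-G(u,s)K(s)$ by differentiating the cocycle relation (this is legitimate: for continuous bounded generators the propagator $G(s,v)$ is always invertible, so $G(u,s)=G(u,v)G(s,v)^{-1}$ is differentiable in $s$), and then apply Proposition~\ref{prop:growth_prop} twice, once to $G(u,s)$ and once to $H(s,v)$, to pull out the constant factor $\exp(\int_v^u\omega)$. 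This is shorter, more standard, and yields exactly the same sharp constant without error terms to track; the paper's route has the (mild) virtue of relying only on forward Picard series and the Trotter-type machinery already in use in Section~\ref{SEC3}. One caveat on your closing remark: the ``pure Gr\"onwall'' alternative you mention does \emph{not}, as sketched, give the sharp constant, because subtracting the integral equations and applying Gr\"onwall naively produces $\exp(\int_v^u\|K(s)\|\,\mathrm{d}s)$, and the step ``sharpening $\|K\|$ to $\omega$'' is not automatic --- the usual way to achieve that replacement is precisely a Duhamel-type manipulation (or the paper's product-splitting), so the two routes are not as interchangeable as that aside suggests.
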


\begin{proof} Let $K,L\in C([t_1,t_2],\bbanach)$. Since $[t_1,t_2]$ is closed, $K$ and $L$ are uniformly continuous on $[t_1,t_2]$, therefore for every $\varepsilon>0$, there exists $\delta_{\varepsilon}$ such that if $s,s'\in[t_1,t_2]$ and $|s-s'|\leq \delta_{\varepsilon}$ hold, then both $\|K(s)-K(s')\|\leq \varepsilon$ and $\|L(s)-L(s')\|\leq \varepsilon$ hold. Define $s_0,\dots,s_{m_\varepsilon}$ as in (\ref{EQ29}), then using the estimate (\ref{EQ30}) and the triangle inequality, we can write
\begin{eqnarray}
\left\| G(u,v)-H(u,v)\right\|&\leq&\left\|\prod_{l=0}^{m_{\varepsilon}-1}\exp\bigl(K(s_l)(s_{l+1}-s_{l})\bigr)-\prod_{l=0}^{m_{\varepsilon}-1}\exp\bigl(L(s_l)(s_{l+1}-s_{l})\bigr)\right\|\nonumber\\
&&\qquad + f(\varepsilon),
\end{eqnarray} 
where 
\begin{equation}
f(\varepsilon)=\varepsilon\,(u-v)\sum_{X\in \{K,L\}}\mathrm{e}^{\|X\|_{C^0}(u-v)}\mathrm{e}^{t_2\|X\|_{C^0}}\frac{\mathrm{e}^{\delta_\varepsilon\|X\|_{C^0}}-1}{\delta_{\varepsilon}\|X\|_{C^0}}
\end{equation}
Using (\ref{F1}), we obtain 
\begin{eqnarray}
&&\Biggl\|\prod_{l=0}^{m_{\varepsilon}-1}\exp\bigl(K(s_l)(s_{l+1}-s_{l})\bigr)-\prod_{l=0}^{m_{\varepsilon}-1}\exp\bigl(L(s_l)(s_{l+1}-s_{l})\bigr)\Biggr\|\nonumber\\
&&\qquad\qquad\qquad\qquad\leq
\exp\left(\sum_{p=0}^{m_\varepsilon{-}1}\omega(s)(s_{p+1}{-}s_p)\right)
\sum_{l=0}^{m_\varepsilon{-}1}\mathrm{e}^{-\omega(s_l)(s_{l+1}{-}s_l)}\nonumber\\
&&\qquad\qquad\qquad\qquad\qquad\times\Bigl\|\exp\bigl(K(s_l)(s_{l+1}{-}s_{l}){-}\exp\bigl(L(s_l)(s_{l+1}{-}s_{l})\bigr)\Bigl\|.
\label{EQ31}
\end{eqnarray}
We have to estimate the differences of the last line of the previous equation. For any $0\leq l <m_\varepsilon$, introduce for the sake of brevity $\Delta_{l}\vcentcolon =L(s_l)-K(s_l)\in\bbanach$ and $\sigma_{l}\vcentcolon=s_{l+1}-s_l$. Then, we can apply the Trotter product formula:
\begin{eqnarray}
\bigl{\|} \exp\left(K(s_l)\sigma_l\right) \! \! \! \! &-& \!  \!  \! \!  \exp\left(L(s_l)\sigma_l \right) \bigr{\|} \nonumber\\
&=&\! \! \lim_{n\rightarrow \infty}\Bigl\|\left[\exp\left(K(s_l)\frac{\sigma_l}{n}\right)\right]^n-\left[\exp\left(K(s_l)\frac{\sigma_l}{n}\right)\exp\left(\Delta_l\frac{\sigma_l}{n}\right)\right]^n\Bigr\|\nonumber\\
&\leq& \! \! \lim_{n\rightarrow\infty}\sum_{p=0}^{n-1}\Bigl\|\exp\left(K(s_l)\frac{\sigma_l}{n}\right)\Bigr\|^{n{-}p}\,\Bigl\|\mathbbm{1}_\banach{-}\exp\left(\Delta_l\frac{\sigma_l}{n}\right)\Bigr\|\nonumber\\
&&\qquad\qquad\times\Bigl\|\exp\left(K(s_l)\frac{\sigma_l}{n}\right)\exp\left(\Delta_l\frac{\sigma_l}{n}\right)\Bigr\|^{p}\nonumber\\
&\leq&\lim_{n\rightarrow\infty}\Bigl\|\exp\left(K(s_l)\frac{\sigma_l}{n}\right)\Bigr\|^{n}\Bigl\|\mathbbm{1}_\banach-\exp\left(\Delta_l\frac{\sigma_l}{n}\right)\Bigr\|\sum_{p=0}^{n-1}\Bigl\|\exp\left(\Delta_l\frac{\sigma_l}{n}\right)\Bigr\|^p\nonumber\\
&\leq &\lim_{n\rightarrow\infty}\mathrm{e}^{\omega(s_l)\sigma_l}\mathrm{e}^{\|\Delta_l\|\sigma_l/n}\frac{\|\Delta_l\|\sigma_l}{n}\frac{1-\exp(\|\Delta_l\|\sigma_l)}{1-\exp(\|\Delta_l\|\sigma_l/n)}\nonumber\\[6pt]
&=&\mathrm{e}^{\omega(s_l)\sigma_l}|1-\exp(\|\Delta_l\|\sigma_l)|\nonumber\\[6pt]
&\leq&\sigma_l\|\Delta_l\|\mathrm{e}^{\omega(s_l)\sigma_l}\mathrm{e}^{\|\Delta_l\|\sigma_l}.
\end{eqnarray}
This result can be substituted into (\ref{EQ31}) to obtain
\begin{eqnarray*}
&&\Biggl\|\prod_{l=0}^{m_{\varepsilon}{-}1}\exp\bigl(K(s_l)(s_{l+1}{-}s_{l})\bigr)
{-}\prod_{l=0}^{m_{\varepsilon}{-}1}\exp\bigl(L(s_l)(s_{l+1}{-}s_{l})\bigr)\Biggr\|\qquad\qquad\qquad\qquad\qquad\qquad\nonumber\\
&& \quad \leq\exp\left(\sum_{p=0}^{m_\varepsilon-1}\omega(s)(s_{p+1}-s_p)\right)\sum_{l=0}^{m_\varepsilon-1}(s_{l+1}-s_l)\|\Delta_l\|\mathrm{e}^{\|\Delta_l\|(s_{l+1}-s_{l})}
\end{eqnarray*}
\begin{eqnarray}
&& \quad \leq \exp(\delta_{\varepsilon}\max_{0{\leq} l<n}\|\Delta_l\|)\exp\left(\sum_{p=0}^{m_\varepsilon{-}1}\omega(s)(s_{p+1}{-}s_p)\right)
\sum_{l=0}^{m_\varepsilon{-}1}(s_{l+1}{-}s_l)\|K(s_{l}){-}L(s_l)\|.
\end{eqnarray}
In taking the $\varepsilon\rightarrow 0$ limit, we can assume  $\delta_{\varepsilon}\rightarrow 0$, therefore we can use (\ref{EQ32}), (\ref{EQ33}) and the fact that $s\mapsto\|K(s)-L(s)\|$ is continuous, so 
\begin{equation}
\lim_{\varepsilon\rightarrow 0}\sum_{l=0}^{m_\varepsilon-1}(s_{l+1}-s_l)\|K(s_{l})-L(s_l)\|=\int_{v}^{u}\|K(s)-L(s)\|\,\mathrm{d}s,
\end{equation}
hence
\begin{equation}
\|G(u,v)-H(u,v)\|\leq\exp\left(\int_{v}^{u}\omega(s)\,\mathrm{d}s\right)\, \int_{v}^{u}\|K(s)-L(s)\|\,\mathrm{d}s,
\end{equation}
which is the statement of the Proposition. 
\end{proof}

\section{Some technical results}
\label{SEC4}
Let $P\in\bbanach$ be a projection to a closed subspace of the Banach space $\banach$ satisfying $\|P\|_\bbanach=1$, let $K_0,\dots,K_{n-1}\in C([t_{1},t_{2}],\bbanach)$, $\interval$. Let $\mathfrak{T}_n$ be a partition of $[t_{\mathrm{1}},t_{\mathrm{2}}]$, defined by $t_{\mathrm{1}}=t_{n,0}<t_{n,1}<\dots <t_{n,n}=t_2$. Define $F_{l}(\cdot,\cdot)$, $0\leq l< n$ as the solutions of the initial value problems
\begin{eqnarray}
\partial_1F_l(u,v)&=&K_l(u)F_l(u,v)\qquad\qquad t_{n,l}\leq v\leq u\leq t_{n,l+1},\nonumber\\
F_l(v,v)&=&\mathbbm{1}_{\banach}.
\label{EQ40}
\end{eqnarray}
Then, the product $\prod_{l=0}^{n-1}PF_l(t_{n,l+1},t_{n,l})$ can be interpreted from the physical point of view as the time evolution of a system disturbed by the projective measurement $P$ at time instances $t_{n,1},\dots ,t_{n,n-1}$ such that the time evolution between the consecutive measurements is given by the operators $F_l(\cdot,t_{l})$, $0\leq l<n$. Lemma \ref{LEM4.1} shows, that if the measurements are frequent within the time interval $[t_{\mathrm{1}},t_{\mathrm{2}}]$ and $\|P\|_{\bbanach}=1$, then only the functions $PK_lP$ count in the time evolution, that is if $G_l(\cdot,\cdot)$, $0\leq l<n$ are the solutions of the initial value problems
\begin{eqnarray}
\partial_1G_l(u,v)&=&(PK_lP)(u)G_l(u,v)\qquad\qquad t_{n,l}\leq v\leq u\leq t_{n,l+1},\nonumber\\
G_l(v,v)&=&P,
\label{EQ41}
\end{eqnarray}
then the time evolution can be well approximated by $\prod_{l=0}^{n-1}G_l(t_{n,l+1},t_{n,l})$.\par 
\begin{Lem}\label{LEM4.1}
Let $\mathfrak{T}_n$ be a partition of the interval $[t_{\mathrm{1}},t_{\mathrm{2}}]$, $\interval$ given by $t_{\mathrm{1}}=t_{n,0}<\cdots <t_{n,n}=t_{2}$ with norm $|\mathfrak{T}_n|\vcentcolon=\max_{0\leq l<n}(t_{n,l+1}-t_{n,l})$. Let $P\in\bbanach$ be a projection satisfying $\|P\|_\bbanach=1$. Let $K_0,\dots,K_{n-1}$ be continuous functions mapping the intervals $[t_{n,l},t_{n,l+1}]$ to $\bbanach$. Define $F_l(\cdot,\cdot)$ and $G_l(\cdot,\cdot)$, $0\leq l<n$ as the solutions of (\ref{EQ40}) and (\ref{EQ41}), respectively. Let $\omega_l:[t_{n,l},t_{n,l+1}]\rightarrow \mathbb{R}$, $0\leq l<n$ be integrable on $[t_{n,l+1},t_{n,l}]$ such that they dominate both $\omega_{(PK_lP)(\cdot)}$ and $\omega_{K_l(\cdot)}$ on the intervals $[t_{n,l},t_{n,l+1}]$. Then, for all $x\in\banach$,
\begin{eqnarray}
&&\left\|\prod_{l=0}^{n-1}PF_l(t_{n,l+1},t_{n,l})\,x-\prod_{l=0}^{n-1}G_l(t_{n,l+1},t_{n,l})\,x\right\|_\banach\nonumber\\
&&\qquad\qquad\leq n\,\Omega\,\,|\mathfrak{T}_n|^2\,f[\mathfrak{T}_n]\|Px\|_\banach+\,\Omega\,|\mathfrak{T}_n|\,g[\mathfrak{T}_n]\|(\mathbbm{1}_{\banach}-P)x\|_\banach,
\end{eqnarray}
where 
\begin{eqnarray}
&&\Omega \vcentcolon=\int_{t_1}^{t_2}\omega(t)\,\mathrm{d}t\qquad\omega(\cdot)\vcentcolon =\sum_{l=0}^{n-1}\omega_{l}(\cdot)\chi_{[t_{n,l},t_{n,l+1}]}(\cdot)\nonumber\\
&&f[\mathfrak{T}_n]\vcentcolon=\Lambda[\mathfrak{T}_n]\Bigl(k_K^2\exp\bigl(k_K|\mathfrak{T}_n|\bigr)+k_{P}^2\exp\bigl(k_{P}|\mathfrak{T}_n|\bigr)\Bigr)=k_K^2+k^2_P+\mathcal{O}\bigl(|\mathfrak{T}_n|\bigr) \nonumber\\[4pt]
&&\Lambda[\mathfrak{T}_n]\vcentcolon=\max_{0\leq l<n}\exp\left(\|\omega_{l}\|_{L^1}(t_{n,l+1}-t_{n,l})\right)=1+\mathcal{O}\bigl(|\mathfrak{T}_n|\bigr)\nonumber\\[4pt]
&&g[\mathfrak{T}_n]\vcentcolon=\|\mathbbm{1}_\banach-P\|k_{0}\exp\left(\int_{t_{n,0}}^{t_{n,1}}\Bigl(k_0-\omega_0(t)\Bigr)\,\mathrm{d}t\right)=k_0+\mathcal{O}\bigl(|\mathfrak{T}_n|\bigr),
\end{eqnarray}
with the constants $k_K$, $k_{P}$ and $k_0$:
\begin{eqnarray}
k_K\vcentcolon=\max_{0\leq l<n}\|K_l\|_{C^0},\  k_{P}\vcentcolon=\max_{0\leq l<n}\|PK_lP\|_{C^0},\  k_0\vcentcolon=\|PK_0(\mathbbm{1}_{\banach}-P)\|_{C^0}.
\end{eqnarray}
\end{Lem}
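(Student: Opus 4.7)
The plan is to combine a telescoping identity with a Duhamel representation of the one-step discrepancy $PF_l - G_l$, exploiting the two algebraic vanishings $G_l(\mathbbm{1}_\banach - P) = 0$ and $(\mathbbm{1}_\banach - P) F_k(v,v) P = 0$ to gain the extra power of $|\mathfrak{T}_n|$ in the $\|Px\|_\banach$ component. Setting $A_k \vcentcolon= \prod_{l=k+1}^{n-1} PF_l(t_{n,l+1}, t_{n,l})$, $B_k \vcentcolon= \prod_{l=0}^{k-1} G_l(t_{n,l+1}, t_{n,l})$ and $D_k(u,v) \vcentcolon= PF_k(u,v) - G_k(u,v)$, the standard telescoping identity reads
\begin{equation*}
\prod_{l=0}^{n-1} PF_l(t_{n,l+1}, t_{n,l})\,x - \prod_{l=0}^{n-1} G_l(t_{n,l+1}, t_{n,l})\,x = \sum_{k=0}^{n-1} A_k\,D_k(t_{n,k+1}, t_{n,k})\,B_k x.
\end{equation*}
Because $G_l$ is defined as the solution of an ODE whose generator is $PK_lP$ and whose initial value is $P$, uniqueness of solutions forces both $PG_l = G_l$ and $G_l(\mathbbm{1}_\banach - P) = 0$; consequently $y_k \vcentcolon= B_k x$ lies in $P\banach$ for every $k \geq 1$, while $y_0 = x$.

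A direct differentiation shows that on $[t_{n,k}, t_{n,k+1}]$ the discrepancy satisfies
\begin{equation*}
\partial_1 D_k(u,v) = (PK_kP)(u)\,D_k(u,v) + PK_k(u)(\mathbbm{1}_\banach - P)\,F_k(u,v),\qquad D_k(v,v) = 0,
\end{equation*}
and variation of constants yields $D_k(u,v) = \int_v^u H_k(u,s)\,PK_k(s)(\mathbbm{1}_\banach - P)\,F_k(s, v)\,\mathrm{d}s$, where $H_k$ is the propagator generated by $PK_kP$. For $k \geq 1$, using $y_k = P y_k$ together with the identity $(\mathbbm{1}_\banach - P) F_k(s, v) P = \int_v^s (\mathbbm{1}_\banach - P) K_k(r) F_k(r, v) P\,\mathrm{d}r$ (immediate from $(\mathbbm{1}_\banach - P) F_k(v,v) P = 0$), one obtains $\|(\mathbbm{1}_\banach - P) F_k(s, t_{n,k}) y_k\|_\banach = O(s - t_{n,k})\,\|y_k\|_\banach$. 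Inserting this bound into the Duhamel integral and controlling the norms of $H_k$, $F_k$, and of the tail products $A_k$, $B_k$ by Proposition~\ref{prop:growth_prop} produces $\|A_k\,D_k(t_{n,k+1}, t_{n,k})\,y_k\|_\banach = O(|\mathfrak{T}_n|^2)\,\|Px\|_\banach$ for each $k \geq 1$, and summing over the $n-1$ such indices yields the first term of the stated bound, the various exponential weights organising themselves into $f[\mathfrak{T}_n]$, $\Lambda[\mathfrak{T}_n]$ and the factor $\Omega$.

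The case $k = 0$ is handled separately because $y_0 = x$ is not a priori in $P\banach$. Splitting $x = Px + (\mathbbm{1}_\banach - P) x$ inside the Duhamel integrand, the $Px$ part is treated exactly as above and absorbed into the $\|Px\|_\banach$ term. The $(\mathbbm{1}_\banach - P) x$ part enjoys no vanishing at $s = t_{n,0}$, so one estimates it directly via $\|(\mathbbm{1}_\banach - P) F_0(s, t_{n,0})(\mathbbm{1}_\banach - P) x\|_\banach \leq \|\mathbbm{1}_\banach - P\|_\bbanach \|F_0(s, t_{n,0})\|_\bbanach \|(\mathbbm{1}_\banach - P) x\|_\banach$ and integrates against $\|PK_0(s)(\mathbbm{1}_\banach - P)\|_\bbanach \leq k_0$ and the $H_0$-weight provided by Proposition~\ref{prop:growth_prop}; this produces precisely the second summand of the stated estimate.

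The main obstacle, in my view, is the extraction of $|\mathfrak{T}_n|^2$ rather than $|\mathfrak{T}_n|$ in the $\|Px\|_\banach$ term: a direct comparison of $PF_k$ with $G_k$ via Proposition~\ref{PROP3.7} only yields $O(|\mathfrak{T}_n|)$ per step, and summation over the $n$ subintervals would therefore fail to decay as $|\mathfrak{T}_n| \to 0$. The saving hinges essentially on the interplay of the two structural vanishings recalled at the outset, namely $G_l(\mathbbm{1}_\banach - P) = 0$ (which confines $y_k$ to $P\banach$ for $k \geq 1$) and $(\mathbbm{1}_\banach - P) F_k(v,v) P = 0$ (which forces the relevant Duhamel integrand to vanish linearly). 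Once these cancellations are isolated, what remains is careful book-keeping of exponential weights via Proposition~\ref{prop:growth_prop}.
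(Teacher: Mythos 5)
Your proof is correct and rests on the same two structural facts that carry the paper's argument: on $P\banach$, the one-step operators $PF_lP$ and $G_l$ agree to first order, which is what buys the $|\mathfrak{T}_n|^2$; and the $G$-product annihilates $(\mathbbm{1}_\banach-P)\banach$ while $PF_0(t_{n,0},t_{n,0})(\mathbbm{1}_\banach-P)=0$, which confines the $(\mathbbm{1}_\banach-P)x$ contribution to a single first step of size $O(|\mathfrak{T}_n|)$. Where you genuinely diverge is in how the first-order agreement is exploited. The paper compares the Picard expansions of $PF_lP$ and $G_l$ directly and invokes the tail bounds (\ref{F2})--(\ref{F3}) of the Appendix, which uses only $\|P\|_\bbanach=1$ and hence yields the exact constants $k_K^2, k_P^2$ appearing in $f[\mathfrak{T}_n]$. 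You instead derive the Duhamel representation
$D_k(u,v)=\int_v^u H_k(u,s)\,PK_k(s)(\mathbbm{1}_\banach-P)F_k(s,v)\,\mathrm{d}s$
for $D_k=PF_k-G_k$ and extract the second power of $|\mathfrak{T}_n|$ from the linear vanishing of $(\mathbbm{1}_\banach-P)F_k(s,t_{n,k})P$ at $s=t_{n,k}$. This is the cleaner conceptual explanation of where the extra power comes from, and it makes the role of the initial vanishing $(\mathbbm{1}_\banach-P)P=0$ explicit rather than hidden in the cancellation of Picard coefficients; the price is that your per-step estimate picks up additional factors of $\|\mathbbm{1}_\banach-P\|_\bbanach$ (from the source $PK_k(\mathbbm{1}_\banach-P)$ and from the vanishing integrand), so the constant you obtain in place of $f[\mathfrak{T}_n]$ is not literally the one stated, though it has the same $k_K^2+O(|\mathfrak{T}_n|)$ structure and the discrepancy is immaterial for every application in the paper. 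Finally, you telescope the full products and split $x=Px+(\mathbbm{1}_\banach-P)x$ inside the $k=0$ summand, whereas the paper splits $x$ at the outset and telescopes the two pieces separately; since $B_k(\mathbbm{1}_\banach-P)=0$ for $k\geq1$ this reorganization is cosmetic.
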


\begin{proof} Let $F_l$ and $G_l$, $0\leq l <n$ be given as in the statement of the Lemma. The inequalities 
\begin{eqnarray}
\|PF_l(t_{n,l+1},t_{n,l})P\|\leq \|F_l(t_{n,l+1},t_{n,l})\|&\leq& \exp\left(\int_{t_{n,l}}^{t_{n,l+1}}\omega_{K_l(t)}\,\mathrm{d}t\right)\nonumber\\ 
&\leq& \exp\left(\int_{t_{n,l}}^{t_{n,l+1}}\omega(t)\,\mathrm{d}t\right)
\end{eqnarray}
and 
\begin{equation}
\|G_l(t_{n,l+1},t_{n,l})\|\leq \exp\left(\int_{t_{n,l}}^{t_{n,l+1}}\omega_{(PK_lP)(t)}\,\mathrm{d}t\right)\leq\exp\left(\int_{t_{n,l}}^{t_{n,l+1}}\omega(t)\,\mathrm{d}t\right)
\end{equation}
are simple consequences of the properties of the norm, $\|P\|=1$ and Proposition \ref{prop:growth_prop}. Using these upper bounds and (\ref{F1}), we get
\begin{eqnarray}
&&\left\|\left(\prod_{l=0}^{n-1}PF_l(t_{n,l+1},t_{n,l})\right)P-\prod_{l=0}^{n-1}G_l(t_{n,l+1},t_{n,l})\right\|\nonumber\\
&&\qquad\qquad\leq \Omega\,\Lambda[\mathfrak{T}_n]\sum_{l=0}^{n-1}\|PF_l(t_{n,l+1},t_{n,l})P-G_l(t_{n,l+1},t_{n,l})\|.
\label{EQ22}
\end{eqnarray}
Since the first two members of the Picard series of $PF_l(t_{n,l+1},t_{n,l})P$ and $G_l(t_{n,l+1},t_{n,l})$ agree for all $0\leq l<n$, we can use (\ref{F2}) and (\ref{F3}) to obtain
\begin{eqnarray}
\|PF_l(t_{n,l+1},t_{n,l})P-G_l(t_{n,l+1},t_{n,l})\|
&\leq&|\mathfrak{T}_n|^2k_K^2\mathrm{e}^{k_K|\mathfrak{T}_n|}+|\mathfrak{T}_n|^2k^2_{P}\mathrm{e}^{k_{P}|\mathfrak{T}_n|}\nonumber\\[8pt]
&=&|\mathfrak{T}_n|^{2}\frac{f[\mathfrak{T}_n]}{\Lambda[\mathfrak{T}_n]}.
\end{eqnarray}
The substitution of this result into (\ref{EQ22}) leads to 
\begin{equation}
\left\|\left(\prod_{l=0}^{n-1}PF_l(t_{n,l+1},t_{n,l})\right)P-\prod_{l=0}^{n-1}G_l(t_{n,l+1},t_{n,l})\right\|\leq n\,\Omega\,|\mathfrak{T}_n|^{2}\,f[\mathfrak{T}_n].
\end{equation}
Let us denote $\overline{P}\vcentcolon=\mathbbm{1}_{\banach}-P$. Since the initial projection of $\prod_{l=0}^{n-1}G_l(t_{n,l+1},t_{n,l})$ is $P$ and $\|P\|=1$, we can write
\begin{eqnarray*}
&&\left\|\left(\prod_{l=0}^{n-1}PF_l(t_{n,l+1},t_{n,l})-\prod_{l=0}^{n-1}G_l(t_{n,l+1},t_{n,l})\right)\overline{P}\right\|\nonumber\\
&&\qquad\qquad\qquad=\left\|\left(\prod_{l=0}^{n-1}PF_l(t_{n,l+1},t_{n,l})\right)\overline{P}\right\|\nonumber\\[8pt]
&&\qquad\qquad\qquad=\left\|\left(\prod_{l=1}^{n-1}PF_l(t_{n,l+1},t_{n,l})P\right)PF_l(t_{n,1},t_{n,0})\overline{P}\right\|\nonumber\\
&&\qquad\qquad\qquad\leq \exp\left(\int_{t_1}^{t_{2}}\omega(t)\, \mathrm{d}t\right)\exp\left(-\int_{t_{n,0}}^{t_{n,1}}\omega(t)\,\mathrm{d}t\right)\|\overline{P}\|\sum_{p=1}^{\infty}\frac{\|\bigl\{K_0\bigr\}_{p}(t_{n,1},t_{n,0})\|}{p!}
\end{eqnarray*}
\begin{eqnarray}
&\leq& \Omega\,\Lambda[\mathfrak{T}_n]\|\overline{P}\|\sum_{p=1}^{\infty}\frac{(t_{n,1}-t_{n,0})^pk^p_{0}}{p!}\nonumber\qquad\qquad\qquad\qquad\\
&\leq& \Omega\,\Lambda[\mathfrak{T}_n]\|\overline{P}\|(t_{n,1}-t_{n,0})k_{0}\mathrm{e}^{k_0(t_{n,1}-t_{n,0})}\nonumber\\[6pt]
&\leq&\Omega\,|\mathfrak{T}_n|g[\mathfrak{T}_n].
\end{eqnarray}
Since for any $x\in\banach$, the equation $x=Px+\overline{P}x$ holds, the statement of the lemma follows.
\end{proof}

Let $\banach$ be a Banach space with a decomposition $\banach=\banach_{\mathrm{I}}\oplus\banach_{\mathrm{C}}$ such that both $\banach_{\mathrm{I}}$ and $\banach_{\mathrm{C}}$ are closed subspaces of $\banach$ and the continuous projection $P_{\mathrm{I}}$ corresponding to $\banach_{\mathrm{I}}$ and its complementary projection $P_{\mathrm{C}}$ are norm one. Assume that $T\in\bbanach$ is of the form $T=T_{\mathrm{I}}\oplus T_{\mathrm{C}}$, where $T_\mathrm{I}\in\mathcal{B}(\banach_{\mathrm{I}})$ is an isometric isomorphism, while $T_\mathrm{C}\in\mathcal{B}(\banach_{\mathrm{C}})$ is a strict contraction, that is $\|T_{\mathrm{C}}\|_\bbanach=\|T_{\mathrm{C}}\|_{\mathcal{B}(\banach_{\mathrm{C}})}<1$. Lemma \ref{LEM4.2} shows, that the results of Lemma \ref{LEM4.1} apply also to products where the projection is replaced by $T$.\par 

\begin{Lem}\label{LEM4.2} Let $\banach$ be a Banach space with a decomposition $\banach=\banach_{\mathrm{I}}\oplus\banach_{\mathrm{C}}$ and $T=T_{\mathrm{I}}\oplus T_{\mathrm{C}}\in\bbanach$ obeying the conditions described above. Let $\mathfrak{T}_n$ be a partition of the interval $[t_{1},t_{2}]$, $\interval$ given by $t_{1}=t_{n,0}<\cdots <t_{n,n}=t_{2}$ with norm $|\mathfrak{T}_n|\vcentcolon=\max_{0\leq l<n}(t_{n,l+1}-t_{n,l})$. Let $K_0,\dots,K_{n-1}$ be continuous functions mapping the intervals $[t_{n,l},t_{n,l+1}]$, $0\leq l<n$ to $\bbanach$. Let $\omega_l:[t_{n,l},t_{n,l+1}]\rightarrow \mathbb{R}$, $0\leq l<n$ be integrable on $[t_{n,l},t_{n,l+1}]$ such that they dominate both $\omega_{(PK_lP)(\cdot)}$ and $\omega_{K_l(\cdot)}$ on the intervals $[t_{n,l},t_{n,l+1}]$. Define $F_l(\cdot,\cdot)$ and $G_l(\cdot,\cdot)$, $0\leq l<n$ as the solutions of (\ref{EQ40}) and (\ref{EQ41}), respectively. Then,
\begin{eqnarray}
&&\left\|\prod_{l=0}^{n-1}TF_l(t_{n,l+1},t_{n,l})-T_\mathrm{I}^n\,\prod_{l=0}^{n-1}T^{-l}_\mathrm{I}G_l(t_{n,l+1},t_{n,l})T^{l}_\mathrm{I}\right\|\nonumber\\
&&\qquad\qquad\qquad\leq n\Omega\,|\mathfrak{T}_n|^2h[\mathfrak{T}_n]+\Omega\,|\mathfrak{T}_n|h_1[\mathfrak{T}_n]+\Omega\,\|T_\mathrm{C}\|^n
\end{eqnarray}
for all $t\in\mathbb{R}^+_0$, where $\Omega$ and the error functions $h$ and $h_1$ are defined through
\begin{eqnarray*}
&&\Omega \vcentcolon=\int_{t_\mathrm{1}}^{t_\mathrm{2}}\omega(t)\,\mathrm{d}t\qquad\qquad \omega(\cdot)\vcentcolon =\sum_{l=0}^{n-1}\omega_{l}(\cdot)\chi_{[t_{n,l},t_{n,l+1}]}(\cdot)\nonumber\\
&&h[\mathfrak{T}_n]\vcentcolon=\Lambda[\mathfrak{T}_n]\bigl(k_K^2\mathrm{e}^{k_K|\mathfrak{T}_n|}(1+\tau_\mathrm{C}\mathrm{e}^{k_K|\mathfrak{T}_n|})+k_P^2\mathrm{e}^{k_P|\mathfrak{T}_n|}\bigr)\nonumber\\
&&\qquad\quad =(1+\tau_\mathrm{C})k_K^2+k^2_P+\mathcal{O}(|\mathfrak{T}_n|)\nonumber\\
&&h_1[\mathfrak{T}_n]\vcentcolon=\tau_\mathrm{C}\Lambda[\mathfrak{T}_n]\mathrm{e}^{k_P|\mathfrak{T}_n|}=\tau_\mathrm{C}(1+\mathcal{O}\bigl(|\mathfrak{T}_n|\bigr)),
\end{eqnarray*}
\begin{eqnarray}
&&\Lambda[\mathfrak{T}_n]\vcentcolon=\max_{0\leq l<n}\exp\left(\|\omega_{l}\|_{L^1}(t_{n,l+1}-t_{n,l})\right)=1+\mathcal{O}\bigl(|\mathfrak{T}_n|\bigr)
\end{eqnarray}
with constants
\begin{eqnarray}
k_K\vcentcolon=\max_{0\leq l<n}\|K_l\|_{C^0},\quad k_{P}\vcentcolon=\max_{0\leq l<n}\|PK_lP\|_{C^0},\quad\tau_{\mathrm{C}}\vcentcolon=\frac{1+\|T_{\mathrm{C}}\|_\bbanach}{1-\|T_{\mathrm{C}}\|_\bbanach}.
\end{eqnarray}
\end{Lem}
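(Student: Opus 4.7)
The plan is to compare $X \vcentcolon= \prod_{l=0}^{n-1} TF_l(t_{n,l+1},t_{n,l})$ with the block-isometric counterpart, which I first rewrite as
\[T_I^n \prod_{l=0}^{n-1} T_I^{-l}\, G_l(t_{n,l+1},t_{n,l})\, T_I^l \;=\; \prod_{l=0}^{n-1} T_I\, G_l(t_{n,l+1},t_{n,l}) \;=:\; Y,\]
by letting the interleaved powers of $T_I^{\pm l}$ cancel against $T_I^n$. Because $T = T_I \oplus T_C$ commutes with both $P_I$ and $P_C$, and because $G_l = P_I G_l P_I$ gives $YP_C = 0$, the norm-one property of the projections yields
\[\|X - Y\|_\bbanach \;\le\; \|(X - Y) P_I\|_\bbanach + \|X P_C\|_\bbanach,\]
and it suffices to control the two terms on the right separately.

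For the contracting tail $\|X P_C\|_\bbanach$, I telescope against $T^n$ by the standard product identity
\[X - T^n \;=\; \sum_{k=0}^{n-1}\Bigl(\prod_{l>k} TF_l\Bigr)\, T(F_k - \mathbbm{1})\, T^k,\]
multiply by $P_C$ so that $T^k P_C = T_C^k P_C$, use first-order Picard for $\|F_k - \mathbbm{1}\|_\bbanach = O(|\mathfrak{T}_n|)$, and bound the outer product by $\Lambda[\mathfrak{T}_n]\,e^\Omega$ via Proposition \ref{prop:growth_prop}. The geometric series in $\|T_C\|^k$ produces a factor $\tau_C$, which together with $\|T^n P_C\|_\bbanach \le \|T_C\|^n$ supplies the $\Omega\|T_C\|^n$ term and a leading $\tau_C$-contribution to $\Omega|\mathfrak{T}_n| h_1$. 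Applied to the tail indices only, the same argument gives a refined bound $\|(\prod_{l>k} TF_l) P_C\|_\bbanach \le \|T_C\|^{n-1-k} + O(|\mathfrak{T}_n|\tau_C)$ which is needed in the next step.

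For the isometric block $\|(X-Y)P_I\|_\bbanach$, I telescope directly between $X$ and $Y$,
\[(X - Y) P_I \;=\; \sum_{k=0}^{n-1}\Bigl(\prod_{l>k} TF_l\Bigr)\bigl(TF_k - T_I G_k\bigr)\Bigl(\prod_{l<k} T_I G_l\Bigr),\]
noting that the inner product has range in $\banach_I$. On $\banach_I$ the factor splits as
\[TF_k - T_I G_k \;=\; T_I\bigl(P_I F_k P_I - G_k\bigr) + T_C P_C F_k P_I,\]
and I handle the two pieces separately. The first is of order $|\mathfrak{T}_n|^2$ with coefficients $k_K^2$ and $k_P^2$ obtained by matching the first two Picard orders of $P_I F_k P_I$ and $G_k$; summed over $k$ this yields the $n\Omega|\mathfrak{T}_n|^2 h$-contribution exactly as in Lemma \ref{LEM4.1}. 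For the second, $\|P_C F_k P_I\|_\bbanach = O(|\mathfrak{T}_n|)$ by first-order Picard, and writing $T_C = P_C T_C P_C$ lets the outer product act as $(\prod_{l>k} TF_l) P_C$; substituting the refined tail bound and summing the geometric series in $\|T_C\|^{n-1-k}$ gives a leading $|\mathfrak{T}_n|\tau_C$-contribution (absorbed into $h_1$) together with a subleading $n|\mathfrak{T}_n|^2 \tau_C k_K^2$-piece that accounts for the $\tau_C k_K^2$ term inside $h$.

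The routine ingredients are Picard iteration, Gr\"onwall-type estimates, and the product telescoping identity from the appendix. The \textbf{main obstacle} is the coupled bookkeeping: the refined partial-product estimate from the tail analysis has to be fed into the isometric step without losing any power of $|\mathfrak{T}_n|$, so that the two $\tau_C$-factors land in the correct error term ($h$ or $h_1$) and reproduce the precise $(1+\tau_C)k_K^2 + k_P^2$ shape of $h$ together with the clean $\tau_C$ shape of $h_1$.
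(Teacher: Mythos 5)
Your proposal is correct in strategy but follows a genuinely different route from the paper. The paper first rewrites the product as $\hat T_{\mathrm I}^n\prod \hat T_{\mathrm C}F_l$ (with $\hat T_{\mathrm I}=T_{\mathrm I}\oplus P_{\mathrm C}$, $\hat T_{\mathrm C}=P_{\mathrm I}\oplus T_{\mathrm C}$), then introduces the four auxiliary sums $A_k,B_k,C_k,D_k$ which classify the terms in the product by the number of $P_{\mathrm I}\!\leftrightarrow\! P_{\mathrm C}$ transitions (zero, one from $\mathrm C$, one to $\mathrm C$, and all-$\mathrm C$); it establishes the residual $\Delta_k=\Sigma_k-\prod\hat T_{\mathrm C}F_l$ by an induction on $k$, bounds $B_{n-1},C_{n-1},D_{n-1}$ separately, and finally invokes Lemma \ref{LEM4.1} to pass from $A_{n-1}=\prod P_{\mathrm I}F_lP_{\mathrm I}$ to the isometry-conjugated $G$-product. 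You instead split $\|X-Y\|\le\|(X-Y)P_{\mathrm I}\|+\|XP_{\mathrm C}\|$ (valid because $\|P_{\mathrm I}\|=\|P_{\mathrm C}\|=1$ and $YP_{\mathrm C}=0$), telescope $XP_{\mathrm C}$ against $T^n$ to get $\|T_{\mathrm C}\|^n$ plus a geometric series in $\|T_{\mathrm C}\|^k$, telescope $(X-Y)P_{\mathrm I}$ directly against $Y$, split each factor difference on $\banach_{\mathrm I}$ as $T_{\mathrm I}(P_{\mathrm I}F_kP_{\mathrm I}-G_k)+T_{\mathrm C}P_{\mathrm C}F_kP_{\mathrm I}$, and feed the refined tail estimate $\|(\prod_{l>k}TF_l)P_{\mathrm C}\|\le\|T_{\mathrm C}\|^{n-1-k}+O(|\mathfrak T_n|\tau_{\mathrm C})$ (itself obtained by the same telescoping) into the second piece. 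The two are different bookkeepings of the same combinatorics: your $P_{\mathrm I}F_kP_{\mathrm I}-G_k$ piece plays the role of $A_{n-1}$ vs.\ Lemma \ref{LEM4.1}, the $T_{\mathrm C}P_{\mathrm C}F_kP_{\mathrm I}$ piece and the tail estimate correspond to the $B,C$ blocks and $\Delta_k$, and $\|T^nP_{\mathrm C}\|$ corresponds to $D_{n-1}$. Your version avoids the inductive $A,B,C,D$ machinery at the cost of the auxiliary tail bound, and it does reproduce all the leading coefficients $(1+\tau_{\mathrm C})k_K^2+k_P^2$ in $h$ together with the $\tau_{\mathrm C}$ in $h_1$ and the $\|T_{\mathrm C}\|^n$ tail. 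To match the stated constants exactly, one would need to track the $\exp(\int\omega)$- and $\Lambda[\mathfrak T_n]$-type factors that you hide behind the $O$'s (the outer and inner products need to be bounded on the complementary sub-intervals so that their product is $\le\Lambda[\mathfrak T_n]\mathrm e^\Omega$ rather than $\mathrm e^{2\Omega}$), but this is routine.
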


\begin{proof}Define $\hat{T}_\mathrm{I}\vcentcolon=T_\mathrm{I}\oplus P_\mathrm{C}\in \bbanach$, $\hat{T}_\mathrm{C}\vcentcolon=P_\mathrm{I}\oplus T_\mathrm{C}\in\bbanach$ and $\hat{T}^{-1}_\mathrm{I}\vcentcolon=T^{-1}_\mathrm{I}\oplus P_\mathrm{I}\in \bbanach$. Note that the equations
\begin{equation}
T=\hat{T}_\mathrm{I}\hat{T}_\mathrm{C}=\hat{T}_\mathrm{C}\hat{T}_\mathrm{I},\qquad \hat{T}^{-1}_\mathrm{I}\hat{T}^{\ }_\mathrm{I}=\hat{T}^{\ }_\mathrm{I}\hat{T}^{-1}_\mathrm{I}=\mathbbm{1}_{\banach}
\end{equation} 
hold and $\|\hat{T}_\mathrm{C}\|=\|\hat{T}_\mathrm{I}\|=1$ by the assumptions posed on the norms of $P_{\mathrm{C}}$ and $P_{\mathrm{I}}$. We apply the notation $F_l\vcentcolon=\hat{T}_\mathrm{I}^{-l}F_l(t_{n,l+1},t_{n,l})\hat{T}_\mathrm{I}^{l}$, so that the equation
\begin{equation}
\prod_{l=0}^{n-1}TF_l(t_{n,l+1},t_{n,l})=\hat{T}_\mathrm{I}^n\prod_{l=0}^{n-1}\hat{T}_\mathrm{C}F_l
\end{equation}
holds. Since $\hat{T}_{\mathrm{I}}$ is an isometric isomorphism in $\bbanach$, we can write
\begin{equation}
\left\|\prod_{l=0}^{n-1}TF(t_{n,l+1},t_{n,l})-\hat{T}_\mathrm{I}^n\prod_{l=0}^{n-1}P_\mathrm{I}F_lP_\mathrm{I}\right\|=\left\|\prod_{l=0}^{n-1}\hat{T}_\mathrm{C}F_l-\prod_{l=0}^{n-1}P_\mathrm{I}F_lP_\mathrm{I}\right\|.
\label{EQ9}
\end{equation}
We introduce the following members of $\bbanach$
\begin{eqnarray}
A_k&\vcentcolon=&\left(\prod_{l=0}^{k}P_\mathrm{I}F_l\right)P_\mathrm{I},\qquad B_k\vcentcolon=\sum_{l=0}^k\left(\prod_{p=l}^k P_\mathrm{I}F_p\right)\left(\prod_{q=0}^{l-1} T_\mathrm{C}F_q\right)P_\mathrm{C},\nonumber\\
C_k&\vcentcolon=&\sum_{l=0}^k\left(\prod_{p=l}^k T_\mathrm{C}F_p\right)\left(\prod_{q=0}^{l-1} P_\mathrm{I}F_q\right)P_\mathrm{I},\qquad D_k\vcentcolon=\left(\prod_{l=0}^k T_\mathrm{C}F_l\right)P_\mathrm{C},
\end{eqnarray}
where $0\leq k<n$. They satisfy the relations
\begin{eqnarray}
A_{k+1}&=&P_\mathrm{I}F_{k+1}\,A_k \qquad\qquad\quad\quad\ \  B_{k+1}=P_\mathrm{I}F_{k+1}\,\bigl(B_k+D_k\bigr)\nonumber\\
C_{k+1}&=&P_\mathrm{C}F_{k+1}\,\bigl(A_k+C_k\bigr)\qquad\quad D_{k+1}=T_\mathrm{C}F_{k+1}\,D_k.\label{EQ10}
\end{eqnarray}
for all $0\leq k< n$. Moreover, let us introduce their sums as 
\begin{equation}
\Sigma_k\vcentcolon=A_k+B_k+C_k+D_k,\qquad 0\leq k<n,
\end{equation}
and the differences
\begin{equation}
\Delta_k\vcentcolon=\Sigma_k-\prod_{l=0}^{k}\hat{T}_\mathrm{C}F_k,\qquad 0\leq k<n.
\end{equation}
We prove the following assertion: if $m$ is a non-negative integer, then 
\begin{eqnarray}
\left\|\Delta_m\right\|&\leq&\Lambda[\mathfrak{T}_n]|\mathfrak{T}_n|^2 k_K^2\mathrm{e}^{2k_K|\mathfrak{T}_n|}\exp\left(\int_{t_{n,0}}^{t_{n,m+1}}\omega(s)\,\mathrm{d}s\right)\sum_{l=0}^{m}\delta_{T,l},\nonumber\\
\delta_{T,l}&=&\frac{(1+\|T_\mathrm{C}\|)\,(1-\|T_\mathrm{C}\|^{l})}{1-\|T_\mathrm{C}\|},\qquad 0\leq l<n .
\label{EQ11}
\end{eqnarray}
We prove by induction. For $m=0$, the equation $\hat{T}_\mathrm{C}F_0=\Sigma_0$ holds, thus $\|\hat{T}_\mathrm{C}F_0-\Sigma_0\|=0$, which is the result of the evaluation of the inequality (\ref{EQ11}) at $m=0$. Assume that the statement holds for some integral $m>0$. Then, provided that $\|\hat{T}_\mathrm{C}\|= 1$,
\begin{eqnarray}
\left\|\Delta_{m+1}\right\|&=&\left\|\hat{T}_\mathrm{C}F_{m+1}\Delta_m+\hat{T}_\mathrm{C}F_{m+1}\Sigma_m-\Sigma_{m+1}\right\|\nonumber\\
&\leq&\left\|\hat{T}_\mathrm{C}F_{m+1}\Delta_m\right\|+\left\|\hat{T}_\mathrm{C}F_{m+1}\Sigma_m-\Sigma_{m+1}\right\|\nonumber\\
&\leq& \exp\left(\int_{t_{n,m+1}}^{t_{n,m+2}}\omega(s)\,\mathrm{d}s\right)\|\Delta_m\|+\left\|\hat{T}_\mathrm{C}F_{m+1}\Sigma_m-\Sigma_{m+1}\right\|
\end{eqnarray}
where the inequality 
\begin{eqnarray}
\|\hat{T}_\mathrm{C}F_{m+1}\|&\leq& \|\hat{T}_\mathrm{C}\|\,\left\|\hat{T}_{\mathrm{I}}^{-(m+1)}F_{m+1}(t_{n,m+2},t_{n,m+1})\hat{T}_{\mathrm{I}}^{m+1}\right\|\nonumber\\
&\leq& \exp\left(\int_{t_{n,m+1}}^{t_{n,m+2}}\omega(s)\,\mathrm{d}s\right),
\end{eqnarray}
provided by Proposition \ref{prop:growth_prop} has been used. We have to show that 
\begin{eqnarray}
&&\left\|\hat{T}_\mathrm{C}F_{m+1}\Sigma_m-\Sigma_{m+1}\right\|\leq\Lambda[\mathfrak{T}_n]|\mathfrak{T}_n|^2 k_K^2\mathrm{e}^{2k_K|\mathfrak{T}_n|}\exp\left(\int_{t_{n,0}}^{t_{n,m+1}}\omega(s)\,\mathrm{d}s\right)\nonumber\\
&&\qquad\qquad\qquad\qquad\qquad\qquad\qquad\times \frac{(1-\|T_\mathrm{C}\|)(1-\|T_\mathrm{C}\|^{m+1})}{1-\|T_\mathrm{C}\|}.
\label{EQ12}
\end{eqnarray}
Since both the initial and the final projections of $T_\mathrm{C}\in\bbanach$ are equal to $P_\mathrm{C}$, a short calculation with the help of equations of (\ref{EQ10}) gives
\begin{eqnarray}
P_\mathrm{I}\left(\hat{T}_\mathrm{C}F_{m+1}\Sigma_m-\Sigma_{m+1}\right)P_\mathrm{I}
&=&P_\mathrm{I}F_{m+1}\, C_m,\nonumber\\
P_\mathrm{I}\left(\hat{T}_\mathrm{C}F_{m+1}\Sigma_m-\Sigma_{m+1}\right)P_\mathrm{C}&=&0,\nonumber\\
P_\mathrm{C}\left(\hat{T}_\mathrm{C}F_{m+1}\Sigma_m-\Sigma_{m+1}\right)P_\mathrm{I},
&=&0,\nonumber\\
P_\mathrm{C}\left(\hat{T}_\mathrm{C}F_{m+1}\Sigma_m-\Sigma_{m+1}\right)P_\mathrm{C}
&=&P_\mathrm{C}F_{m+1}\,B_{m},
\end{eqnarray}
which results in
\begin{equation}
\left\|\hat{T}_\mathrm{C}F_{m+1}\Sigma_m-\Sigma_{m+1}\right\|\leq\left\|P_\mathrm{I}F_{m+1}\, B_m\right\|+\left\|P_\mathrm{C}F_{m+1}\,C_{m}\right\|.
\label{Offdiag1}
\end{equation}
We can obtain an upper bound of both terms of the sum as follows. Provided that $\|P_\mathrm{C}\|=\|P_\mathrm{I}\|=1$ and $\hat{T}_\mathrm{I}$ is an isometric isomorphism, we can use (\ref{F2}) and (\ref{F3}) to obtain
\begin{eqnarray}
P_{\mathrm{x}}F_{m}P_{\mathrm{y}}&\leq&k_K|\mathfrak{T}_n|\mathrm{e}^{k_K|\mathfrak{T}_n|}
\end{eqnarray}
holds for all $0\leq m<n$, whenever $(\mathrm{x},\mathrm{y})$ is equal to either $(\mathrm{C},\mathrm{I})$ or $(\mathrm{I},\mathrm{C})$. Thus, we can write 
\begin{equation}
\left\|P_\mathrm{I}F_{m+1}\, B_m\right\|\leq \left\|B_m\right\|  |\mathfrak{T}_n| k_K  \mathrm{e}^{k_K|\mathfrak{T}_n|},\quad\left\|P_\mathrm{I}F_{m+1}\, C_m\right\|\leq \left\|C_m\right\|  |\mathfrak{T}_n| k_K  \mathrm{e}^{k_K|\mathfrak{T}_n|}.
\label{EQ14}
\end{equation}
Furthermore,
\begin{eqnarray}
&&\left\|C_m\right\|=\left\|\sum_{k=0}^{m}\left(\prod_{p=k+1}^{m}T_\mathrm{C}F_p\right)\cdot T_\mathrm{C}F_kP_\mathrm{I}\cdot\left(\prod_{q=0}^{k-1}P_\mathrm{I}F_q\right)\cdot P_\mathrm{I}\right\|\nonumber\\
&&\quad\leq\sum_{k=0}^{m}\|T_\mathrm{C}\|^{m-k+1}\,\|P_\mathrm{C}F_{k}P_\mathrm{I}\|\cdot\prod_{\substack{p=0\\p\neq k}}^{m}\left\|F_p\right\|\nonumber\\
&&\quad\leq|\mathfrak{T}_n| k_K   \mathrm{e}^{k_K|\mathfrak{T}_n| }\exp\left(\int_{t_{n,0}}^{t_{n,m+1}}\omega(s)\,\mathrm{d}s\right)\sum_{k=0}^{m}\|T_\mathrm{C}\|^{m-k+1}\,\exp\left(-\int_{t_{n,k}}^{t_{n,k+1}}\omega(s)\,\mathrm{d}s\right)\nonumber\\
&&\quad\leq \Lambda[\mathfrak{T}_n]|\mathfrak{T}_n| k_K\mathrm{e}^{k_K|\mathfrak{T}_n|}\exp\left(\int_{t_{n,0}}^{t_{n,m+1}}\omega(s)\,\mathrm{d}s\right) \frac{\|T_\mathrm{C}\|\,(1-\|T_\mathrm{C}\|^{m+1})}{1-\|T_\mathrm{C}\|}.
\label{EQ15}
\end{eqnarray}
Substitution of this result into the inequality (\ref{EQ14}) gives 
\begin{eqnarray}
&&\left\|P_\mathrm{I}F_{m+1} C_m\right\|\nonumber\\
&&\qquad\leq \Lambda[\mathfrak{T}_n]|\mathfrak{T}_n|^2 k_K^2 \mathrm{e}^{2k_K|\mathfrak{T}_n|}\exp\left(\int_{t_{n,0}}^{t_{n,m+1}}\omega(s)\,\mathrm{d}s\right)\frac{\,\|T_\mathrm{C}\|\,(1-\|T_\mathrm{C}\|^{m+1})}{1-\|T_\mathrm{C}\|}.
\end{eqnarray}
Regarding $\|B_m\|$, a similar calculation gives
\begin{eqnarray}
\|B_m\|&\leq&\Lambda[\mathfrak{T}_n]|\mathfrak{T}_n| k_K\mathrm{e}^{k_K|\mathfrak{T}_n|}\exp\left(\int_{t_{n,0}}^{t_{n,m+1}}\omega(s)\,\mathrm{d}s\right) \frac{1-\|T_\mathrm{C}\|^{m+1}}{1-\|T_\mathrm{C}\|},
\label{EQ16}
\end{eqnarray}
and
\begin{equation}
\left\|P_\mathrm{I}F_{m+1}\, B_m\right\|\leq\Lambda[\mathfrak{T}_n]|\mathfrak{T}_n|^2 k_K^2\mathrm{e}^{2k_K|\mathfrak{T}_n|}\exp\left(\int_{t_{n,0}}^{t_{n,m+1}}\omega(s)\,\mathrm{d}s\right) \frac{1-\|T_\mathrm{C}\|^{m+1}}{1-\|T_\mathrm{C}\|},
\end{equation}
which imply (\ref{EQ12}).

Turning our attention back to equation (\ref{EQ9}), we observe that
\begin{eqnarray}
\left\|\prod_{k=0}^{n-1}\hat{T}_\mathrm{C}F_k-\prod_{k=0}^{n-1}P_\mathrm{I}F_kP_\mathrm{I}\right\|&\leq &\left\|\prod_{k=0}^{n-1}\hat{T}_\mathrm{C}F_k-\Sigma_{n-1}\right\|+\Biggl\|\Sigma_{n-1}-A_{n-1}\Biggr\|\nonumber\\[8pt]
&\leq&\|\Delta_{n-1}\|+\|B_{n-1}\|+\|C_{n-1}\|+\|D_{n-1}\|.
\label{EQ17}
\end{eqnarray}
We use $\|T_\mathrm{C}\|<1$ to obtain $\sum_{l=0}^{n-1}\delta_{T,l}\leq n\tau_{\mathrm{C}}$ and so
\begin{eqnarray}
\|\Delta_{n-1}\|&\leq&n\Omega\,\Lambda[\mathfrak{T}_n]|\mathfrak{T}_n|^2 k_K^2\tau_{\mathrm{C}}\mathrm{e}^{2k_K|\mathfrak{T}_n|}.
\end{eqnarray}
Then, using the upper bounds (\ref{EQ15}), (\ref{EQ16}) and $\|T_\mathrm{C}\|<1$ again allows us to write
\begin{equation}
\|B_{n-1}\|+\|C_{n-1}\|\leq\Omega\,\Lambda[\mathfrak{T}_n]|\mathfrak{T}_n|\tau_{\mathrm{C}} k_K\mathrm{e}^{k_K|\mathfrak{T}_n|},
\end{equation}
and 
\begin{equation}
\left\|D_{n-1}\right\|= \left\|\left(\prod_{l=0}^{n-1}T_\mathrm{C}F_l\right)P_\mathrm{C}\right\|\leq \|T_\mathrm{C}\|^{n} \prod_{l=0}^{n-1}\left\|F_l\right\|\leq \Omega\,\left\|T_\mathrm{C}\right\|^{n}.
\end{equation}
Thus, the l.h.s.~of (\ref{EQ17}) has the upper bound 
\begin{equation}
\left\|\prod_{l=0}^{n-1}\hat{T}_\mathrm{C}F_l-\prod_{l=0}^{n-1}P_\mathrm{I}F_lP_\mathrm{I}\right\|\leq n\Omega\,|\mathfrak{T}_n|^2 \tau_{\mathrm{C}}h_2[\mathfrak{T}_n]+\Omega\,|\mathfrak{T}_n|\tau_\mathrm{C}h_1[\mathfrak{T}_n]+\Omega\,\|T_{\mathrm{C}}\|^{n},
\label{EQ18}
\end{equation}
where $h_1$ and $h_2$ are defined through
\begin{equation}
h_{1}[\mathfrak{T}_n]=\Lambda[\mathfrak{T}_n]k_K\mathrm{e}^{k|\mathfrak{T}_n|},\qquad h_{2}[\mathfrak{T}_n]=\Lambda[\mathfrak{T}_n]k_K^2\mathrm{e}^{2k_K|\mathfrak{T}_n|}.
\end{equation}
We apply Lemma \ref{LEM4.1} to the projection $P_\mathrm{I}$ and the sequences $\hat{T}_{\mathrm{I}}^{-l}F_l(\cdot,\cdot)\hat{T}_{\mathrm{I}}^{l}$ and $\hat{T}_{\mathrm{I}}^{-l}G_l(\cdot,\cdot)\hat{T}_{\mathrm{I}}^{l}$, $0\leq l<n$: Provided that the initial projection of $\prod_{k=0}^{n-1}P_\mathrm{I}F_kP_\mathrm{I}$ is $P_\mathrm{I}$, we can write
\begin{equation}
\left\|\prod_{l=0}^{n-1}P_\mathrm{I}F_lP_\mathrm{I}-\prod_{l=0}^{n-1}T_{\mathrm{I}}^{-l}G_l(t_{l+1},t_l)T_{\mathrm{I}}^{l}\right\|\leq n\Omega\,|\mathfrak{T}_n|^2\bigr(k_Kh_1[\mathfrak{T}_n]+h_{P}[\mathfrak{T}_n]\bigr),
\label{EQ19}
\end{equation}
where 
\begin{equation}
h_P[\mathfrak{T}_n]=\Lambda[\mathfrak{T}_n]k_P^2\mathrm{e}^{k_P|\mathfrak{T}_n|}.
\end{equation}
Combining (\ref{EQ18}) and (\ref{EQ19}) with the triangle inequality leads to
\begin{eqnarray}
&&\left\|\prod_{l=0}^{n-1}TF_l(t_{n,l+1},t_{n,l})-T^n_{\mathrm{I}}\prod_{l=0}^{n-1}T_{\mathrm{I}}^{-l}G_l(t_{n,l+1},t_{n,l})T_{\mathrm{I}}^{l}\right\|\nonumber\\
&&\qquad=\left\|\prod_{l=0}^{n-1}TF_l(t_{n,l+1},t_{n,l})-\hat{T}^n_{\mathrm{I}}\prod_{l=0}^{n-1}T_{\mathrm{I}}^{-l}G_l(t_{n,l+1},t_{n,l})T_{\mathrm{I}}^{l}\right\|\nonumber\\
&&\qquad=\left\|\prod_{l=0}^{n-1}\hat{T}_\mathrm{C}F_l-\prod_{l=0}^{n-1}T_{\mathrm{I}}^{-l}G_l(t_{l+1},t_l)T_{\mathrm{I}}^{-l}\right\|\nonumber\\
&&\qquad\leq\left\|\prod_{l=0}^{n-1}\hat{T}_\mathrm{C}F_l-\prod_{l=0}^{n-1}P_\mathrm{I}F_lP_\mathrm{I}\right\|+\left\|\prod_{l=0}^{n-1}P_\mathrm{I}F_lP_\mathrm{I}-\prod_{l=0}^{n-1}T_{\mathrm{I}}^{-l}G_l(t_{n,l+1},t_{n,l})T_{\mathrm{I}}^{-l}\right\|\nonumber\\[6pt]
&&\qquad\leq n\Omega\,|\mathfrak{T}_n|^2\bigl(k_Kh_1[\mathfrak{T}_n]+\tau_{\mathrm{C}}h_2[\mathfrak{T}_n]+h_P[\mathfrak{T}_n]\bigr)+\tau_{\mathrm{C}}|\mathfrak{T}_n|h_1[\mathfrak{T}_n]+\Omega\,\|T_\mathrm{C}\|^n
\end{eqnarray}
which is the statement of the Lemma.
\end{proof}

\begin{Cor} \label{COR4.3} Let $\banach$ be a Banach space with a decomposition $\banach=\banach_{\mathrm{I}}\oplus\banach_{\mathrm{C}}$ such that both $\banach_{\mathrm{I}}$ and $\banach_{\mathrm{C}}$ are closed subspaces of $\banach$ and the continuous projection $P_{\mathrm{I}}$ corresponding to $\banach_{\mathrm{I}}$ and its complementary projection $P_{\mathrm{C}}$ are norm one. Assume that $T\in\bbanach$ is of the form $T=P_{\mathrm{I}}\oplus T_{\mathrm{C}}$, where $T_\mathrm{C}\in\mathcal{B}(\banach_{\mathrm{C}})$ is a strict contraction. Let $K\in C([t_1,t_2],\bbanach)$ and define the propagator $F(\cdot,\cdot)$ as the solution of the initial value problem
\begin{eqnarray}
\partial_1F(u,v)&=&K(u)F(u,v)\qquad\qquad t_{1}\leq v\leq u\leq t_{2}\nonumber\\
F(v,v)&=&\mathbbm{1}_{\banach}.
\end{eqnarray}
Let $G(\cdot,\cdot)$ be the solution of the initial value problem 
\begin{eqnarray}
\partial_1G(u,v)&=&\bigl(P_\mathrm{I}KP_\mathrm{I}\bigr)(u)G(u,v)\qquad\qquad t_{1}\leq v\leq u\leq t_{2},\nonumber\\
G(v,v)&=&P_{\mathrm{I}}.
\end{eqnarray}
Let $\alpha=(\alpha_1,\alpha_2,\dots)$ be a sequence of sequences $(\alpha_{n,l})_{0\leq l<n}$ of positive numbers satisfying
\begin{equation}
\sum_{l=0}^{n-1}\alpha_{n,l}=1\qquad \lim_{n\rightarrow\infty}\max_{0\leq l<n}\alpha_{n,l}=0 \qquad \lim_{n\rightarrow\infty}\sqrt{n}\max_{0\leq l<n}\alpha_{n,l}=0.
\label{EQ86}
\end{equation}
Then,
\begin{equation}
\lim_{n\rightarrow\infty}\prod_{l=0}^{n-1}T\bigl(F\circ(\rho^{\alpha}_{n,l+1}\times \rho^{\alpha}_{n,l})\bigr)=G(\cdot,t_1)
\end{equation}
in the $C^0$ norm.
\end{Cor}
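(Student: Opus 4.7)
The plan is to apply Lemma \ref{LEM4.2} pointwise in $s\in[t_1,t_2]$ using the partition $t_{n,l}=\rho^{\alpha}_{n,l}(s)$ of $[t_1,s]$, then show that the resulting error bound vanishes uniformly in $s$. The present hypothesis $T=P_{\mathrm{I}}\oplus T_{\mathrm{C}}$ is exactly the specialization of Lemma \ref{LEM4.2} in which $T_{\mathrm{I}}=P_{\mathrm{I}}$, so conjugation by $T_{\mathrm{I}}^l$ acts trivially on $\banach_{\mathrm{I}}$ and $T_{\mathrm{I}}^{n}=P_{\mathrm{I}}$. Taking $K_l$ to be the restriction of $K$ to $[\rho^{\alpha}_{n,l}(s),\rho^{\alpha}_{n,l+1}(s)]$, the propagators $F_l$ and $G_l$ appearing in Lemma \ref{LEM4.2} coincide with the corresponding restrictions of $F(\cdot,\cdot)$ and $G(\cdot,\cdot)$.

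First I would exploit the Chapman--Kolmogorov relation for $G$, which follows from $P_{\mathrm{I}}^2=P_{\mathrm{I}}$ and uniqueness of the solution to the defining initial value problem, to telescope the product
\[
\prod_{l=0}^{n-1}G_l\bigl(\rho^{\alpha}_{n,l+1}(s),\rho^{\alpha}_{n,l}(s)\bigr)=G(s,t_1).
\]
Choosing a common dominating function $\omega(\cdot):=\max\bigl(\omega_{K(\cdot)},\omega_{(P_{\mathrm{I}}KP_{\mathrm{I}})(\cdot)}\bigr)$, which is continuous on $[t_1,t_2]$ by Proposition \ref{lem:weak_gb1} and hence bounded and integrable, Lemma \ref{LEM4.2} then yields, for every $s\in[t_1,t_2]$,
\begin{equation*}
\left\|\prod_{l=0}^{n-1}TF\bigl(\rho^{\alpha}_{n,l+1}(s),\rho^{\alpha}_{n,l}(s)\bigr)-G(s,t_1)\right\|_{\bbanach}\leq n\,\Omega\,|\mathfrak{T}_n(s)|^2\,h[\mathfrak{T}_n(s)]+\Omega\,|\mathfrak{T}_n(s)|\,h_1[\mathfrak{T}_n(s)]+\Omega\,\|T_{\mathrm{C}}\|_{\bbanach}^{\,n}.
\end{equation*}

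Next I would verify uniform control of the right-hand side. Setting $a_n:=\max_{0\leq l<n}\alpha_{n,l}$, one has $|\mathfrak{T}_n(s)|=(s-t_1)a_n\leq(t_2-t_1)a_n$ independently of $s$. The constants $k_K,k_P$ entering $h$ and $h_1$ are bounded above by $\|K\|_{C^{0}([t_1,t_2],\bbanach)}$, the constant $\tau_{\mathrm{C}}$ depends only on $T_{\mathrm{C}}$, and $\Omega\leq(t_2-t_1)\|\omega\|_{C^{0}}$; consequently $\Lambda[\mathfrak{T}_n(s)]\to 1$ and $h[\mathfrak{T}_n(s)],h_1[\mathfrak{T}_n(s)]$ stay uniformly bounded in $s$ and $n$. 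The three error terms then vanish uniformly in $s$ thanks to the three conditions in (\ref{EQ86}): $n\,|\mathfrak{T}_n(s)|^2\leq(t_2-t_1)^2(\sqrt{n}\,a_n)^2\to 0$, $|\mathfrak{T}_n(s)|\leq(t_2-t_1)a_n\to 0$, and $\|T_{\mathrm{C}}\|_{\bbanach}^{\,n}\to 0$ by strict contractivity.

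The main obstacle is not conceptual but bookkeeping: one must carefully match the partition-dependent constants $k_K,k_P,\Lambda,\Omega$ of Lemma \ref{LEM4.2} against $n$-independent upper bounds that do not depend on $s$. The crucial observation is that the third admissibility hypothesis $\sqrt{n}\,a_n\to 0$ is precisely what is required to kill the $n|\mathfrak{T}_n(s)|^2$ term produced by the Trotter-type quadratic error in Lemma \ref{LEM4.2}; the weaker condition $\max_l\alpha_{n,l}\to 0$ alone would only suffice to control the linear and geometric error terms. Once uniformity of the bound is established, passing to the supremum over $s\in[t_1,t_2]$ gives convergence in the $C^{0}$ norm.
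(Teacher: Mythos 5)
Your proposal is correct and follows essentially the same route as the paper's own argument: apply Lemma \ref{LEM4.2} pointwise to the partition $\mathfrak{T}_n(s)$ generated by $\rho^{\alpha}_{n,l}(s)$, observe that $T_{\mathrm{I}}=P_{\mathrm{I}}$ makes the conjugations trivial so the $G_l$-product telescopes to $G(s,t_1)$, and then bound the resulting error uniformly in $s$ using $|\mathfrak{T}_n(s)|=(s-t_1)\max_l\alpha_{n,l}$ together with the three conditions of (\ref{EQ86}). Your remark that $\sqrt{n}\,\max_l\alpha_{n,l}\to 0$ is precisely what kills the $n|\mathfrak{T}_n|^2$ term, while the weaker condition $\max_l\alpha_{n,l}\to 0$ only handles the linear term, is exactly the structure of the paper's estimate.
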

\begin{proof}
Let $\omega:[t_1,t_2]\rightarrow \mathbb{R}$ be integrable such that $\omega$ dominates both $\omega_{K}$ and $\omega_{PKP}$ on the interval $[t_1,t_2]$. Let $s\in [t_1,t_2]$. Set $\mathfrak{T}_{n}(s)$ to be the partition of the interval $[t_1,s]$ given by $t_1=\rho^{\alpha}_{n,0}(s)<\cdots <\rho^{\alpha}_{n,n}(s)=s$. Define $K_0,\dots K_{n-1}$ as the restrictions of $K$ to the intervals $[\rho^{\alpha}_{n,l}(s),\rho^{\alpha}_{n,l+1}(s)]$, $0\leq l<n$. Then, the application of Lemma \ref{LEM4.2} gives
\begin{eqnarray}
&&\left\|\prod_{l=0}^{n-1}TF(\rho^{\alpha}_{n,l+1}(s),\rho^{\alpha}_{n,l}(s))-P_\mathrm{I}\,\prod_{l=0}^{n-1}P_\mathrm{I}G(\rho^{\alpha}_{n,l+1}(s),\rho^{\alpha}_{n,l}(s))P_\mathrm{I}\right\|\nonumber\\
&&\qquad=\left\|\prod_{l=0}^{n-1}TF(\rho^{\alpha}_{n,l+1}(s),\rho^{\alpha}_{n,l}(s))-\prod_{l=0}^{n-1}G(\rho^{\alpha}_{n,l+1}(s),\rho^{\alpha}_{n,l}(s))\right\|\nonumber\\
&&\qquad=\left\|\prod_{l=0}^{n-1}TF(\rho^{\alpha}_{n,l+1}(s),\rho^{\alpha}_{n,l}(s))-G(s,t_1)\right\|\nonumber\\
&&\qquad\leq n\Omega\,|\mathfrak{T}_n(s)|^2g(s-t_1)+\Omega\,|\mathfrak{T}_n(s)|g_1(s-t_1)+\Omega\,\|T_\mathrm{C}\|^n,
\label{EQ85}
\end{eqnarray}
where $\Omega$ and the error functions $g$ and $g_1$ are defined through
\begin{eqnarray}
\Omega &=&\int_{t_\mathrm{1}}^{s}\omega(t)\,\mathrm{d}t\nonumber\\
g(s-t_1)&=&\Lambda_n(s-t_1)\Bigl(\|K\|^2_{C^0}\mathrm{e}^{\|K\|_{C^0}|\mathfrak{T}_n(s)|}+\tau_\mathrm{C}\|K\|^2_{C^0}\mathrm{e}^{2\|K\|_{C^0}|\mathfrak{T}_n(s)|}\nonumber\\
&&\qquad\qquad+\|P_\mathrm{I}KP_{\mathrm{I}}\|^2_{C^0}\mathrm{e}^{\|P_\mathrm{I}KP_{\mathrm{I}}\|_{C^0}|\mathfrak{T}_n(s)|}\Bigr)\nonumber\\
g_1(s-t_1)&=&\tau_\mathrm{C}\Lambda_n(s-t_1)\mathrm{e}^{\|P_\mathrm{I}KP_{\mathrm{I}}\|_{C^0}|\mathfrak{T}_n(s)|}\nonumber\\
\Lambda_n(s-t_1)&=&\exp\bigl(\|\omega\|_{L^1}(s-t_1)\max_{0\leq l<n}\alpha_{n,l}\bigr)
\end{eqnarray}
with 
\begin{eqnarray}
\tau_{\mathrm{C}}\vcentcolon=\frac{1+\|T_{\mathrm{C}}\|_\bbanach}{1-\|T_{\mathrm{C}}\|_\bbanach}.
\end{eqnarray}
Since $|\mathfrak{T}_n(s)|=(s-t_1)\max_{0\leq l<n}\alpha_{n,l}$, the upper bound appearing in (\ref{EQ85}) is uniformly bounded on $[t_1,t_2]$ for every $n\in\mathbb{N}$. Furthermore, since $\|T_\mathrm{C}\|^n\rightarrow 0$ and (\ref{EQ86}) holds, this upper bound turns to zero as $n$ tends to infinity. 
\end{proof}
\begin{Rk} Corollary \ref{COR4.3} is a generalization and, concerning the speed of con\-ver\-gen\-ce, is an improved form of (\ref{Lidar}) studied in  \cite{dominy2013analysis} and \cite{paz2012zeno}.
\end{Rk}

\section{Generalized quantum Zeno dynamics}
\label{SEC5}
Consider an arbitrary isometric isomorphism $T:\banach\rightarrow\banach$ and the admissible $\alpha$ defined through $\alpha_{n,l}=l/n$. For any $K_0\in\bbanach$, the constant valued function $[t_1,t_2]\ni s\rightarrow K(s)=K_0-T^{-1}K_0T$ is in the kernel of $\mathfrak{P}_{\alpha,T}$:
\begin{equation}
\|\mathfrak{P}^{(n,k)}_{\alpha,T}[K]\|_{C^0}\leq \frac{2\|K_0\|_\bbanach}{n},
\label{EQ78}
\end{equation}
therefore $\mathfrak{P}^{(n)}_{\alpha,T}[K]\rightarrow 0$. It was first pointed out in \cite{bernad2017dynamical}, that the  bound (\ref{EQ78}) enables one to prove the  convergence of  (\ref{Q1a}) with the choice $C_n=T^{-n}$ if $\mathfrak{P}_{\alpha,T}[K]$ exists. In \cite{bernad2017dynamical}, the following strategy was used: If $K\in C([t_1,t_2],\bbanach)$ is such that, $\partial_tK=0$ and $\mathfrak{P}_{\alpha,T}[K]$ exists, then one decomposes it as $K=\mathfrak{P}_{\alpha,T}[K]+\overline{K}$. Since $\mathfrak{P}_{\alpha,T}[\overline{K}]=0$, then for every $\varepsilon>0$, there exists $\overline{K}_\varepsilon=L_\varepsilon-T^{-1}L_\varepsilon T$, $L_\varepsilon$ being constant valued functions, such that $\|\overline{K}_\varepsilon-\overline{K}\|_{C^0}\leq \varepsilon$ \cite{yosida1938mean}. One first proves
\begin{equation}
\lim_{n\rightarrow\infty}T^{-n}\Bigl(T\exp\bigl(\mathfrak{P}_{\alpha,T}[K]t/n+\overline{K}_\varepsilon t/n\bigr)\Bigr)^n=\exp\bigl(\mathfrak{P}_{\alpha,T}[K]t\bigr)
\end{equation}
using (\ref{EQ78}). Then, after proving that 
\begin{equation}
\left\|\Bigl(T\exp\bigl(\mathfrak{P}_{\alpha,T}[K]t/n+\overline{K}_\varepsilon t/n\bigr)\Bigr)^n-\Bigl(T\exp\bigl(\mathfrak{P}_{\alpha,T}[K]t/n+\overline{K} t/n\bigr)\Bigr)^n\right\|_{\bbanach}=\mathcal{O}(\varepsilon)
\end{equation}
holds, one concludes that
\begin{equation}
\lim_{n\rightarrow\infty}T^{-n}\Bigl(T\exp\bigl(Kt/n\bigr)\Bigr)^n=\exp\bigl(\mathfrak{P}_{\alpha,T}[K]t\bigr).
\end{equation}

Here, the same strategy will be used. After proving a result similar to (\ref{EQ78}), we follow that strategy in the proof of the first main Theorem of the paper.

\begin{Lem} \label{LEM5.1} Given and admissible $\alpha$, an isometric isomorphism $T\in\bbanach$, $K\in C^{0,1}([t_1,t_2],\bbanach)$ such that $K=L-T^{-1}LT$, $L\in C^{0,1}([t_1,t_2],\bbanach)$ we have 
\begin{equation}
\bigl\|\mathfrak{P}^{(n,k)}_{\alpha,T}[K](s)\bigr\|_\bbanach\leq\psi(s-t_1)S_\alpha(n)\|L\|_{C^{0,1}},
\label{EQ84}
\end{equation}
where $\psi(x)=2\max\{2,3x\}$ and 
\begin{equation}
S_\alpha(n)=\max_{0\leq l<n}\alpha_{n,l}+\sum_{l=0}^{n-2}|\alpha_{n,l+1}-\alpha_{n,l}|.
\end{equation}
\end{Lem}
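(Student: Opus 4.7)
The plan is to exploit the coboundary structure $K=(I-\phi)L$, where $\phi:\bbanach\to\bbanach$ denotes conjugation by $T$, i.e.\ $\phi(X)=T^{-1}XT$. Since $(I-\phi)$ commutes with every power $\phi^l$, one first factors
\begin{equation*}
\mathfrak{P}^{(n,k)}_{\alpha,T}[K](s)=(I-\phi)\mathfrak{P}^{(n,k)}_{\alpha,T}[L](s)=\sum_{l=0}^{k}(\phi^{l}-\phi^{l+1})M_{l}(s),
\end{equation*}
with $M_{l}(s)=\gamma^\alpha_{n,l+1}\ell_{l+1}(s)-\gamma^\alpha_{n,l}\ell_{l}(s)$ and $\ell_l=L\circ\rho^{\alpha}_{n,l}$. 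I would then verify the algebraic identity $M_l=\alpha_{n,l}\ell_l+\gamma^\alpha_{n,l+1}\nabla\ell_{l+1}$ (with $\nabla\ell_l:=\ell_l-\ell_{l-1}$), obtained from $\gamma^\alpha_{n,l+1}-\alpha_{n,l}=\gamma^\alpha_{n,l}$. Distributing $(\phi^l-\phi^{l+1})=\phi^l(I-\phi)$ against each piece yields the clean splitting
\begin{equation*}
\mathfrak{P}^{(n,k)}_{\alpha,T}[K](s)=\sum_{l=0}^{k}\alpha_{n,l}\,\phi^{l}K_{l}(s)+\sum_{l=0}^{k}\gamma^\alpha_{n,l+1}\,\phi^{l}\bigl(K_{l+1}(s)-K_{l}(s)\bigr),
\end{equation*}
where $K_l=(I-\phi)\ell_l=K\circ\rho^\alpha_{n,l}$ and where the cocycle identity $(I-\phi)\nabla\ell_{l+1}=K_{l+1}-K_l$ has been used.

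For the first sum I would Abel-sum using $\phi^{l}K_l=\phi^{l}\ell_l-\phi^{l+1}\ell_l$. Summation by parts peels off boundary terms of size $\max_l\alpha_{n,l}\cdot\|L\|_{C^0}$ and a middle sum that further splits as $(\alpha_{n,l}-\alpha_{n,l-1})\phi^l\ell_{l-1}+\alpha_{n,l}\phi^l\nabla\ell_l$. Using $\|\phi\|=1$, the first piece is controlled by $\|L\|_{C^0}\sum_{l}|\alpha_{n,l+1}-\alpha_{n,l}|$ and the second by $\max_l\alpha_{n,l}\cdot(s-t_1)[L]_1$, each contributing a factor of the form $S_\alpha(n)\|L\|_{C^{0,1}}$.

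For the second sum I would again write $\phi^{l}(K_{l+1}-K_l)=\phi^{l}\nabla\ell_{l+1}-\phi^{l+1}\nabla\ell_{l+1}$ and Abel-sum. The boundary terms $\gamma^\alpha_{n,1}\nabla\ell_1$ and $\gamma^\alpha_{n,k+1}\phi^{k+1}\nabla\ell_{k+1}$ are controlled by $\max_l\alpha_{n,l}\cdot(s-t_1)[L]_1$ thanks to Lipschitz continuity. The middle sum reads $\sum_l\phi^l\bigl(\gamma^\alpha_{n,l+1}\nabla\ell_{l+1}-\gamma^\alpha_{n,l}\nabla\ell_l\bigr)$, which I would decompose as $\alpha_{n,l}\phi^l\nabla\ell_l+\gamma^\alpha_{n,l}\phi^l\delta^{2}\ell_l$ with $\delta^{2}\ell_l:=\nabla\ell_{l+1}-\nabla\ell_l$. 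The $\alpha_{n,l}\nabla\ell_l$ piece is again of order $\max_l\alpha_{n,l}\cdot(s-t_1)[L]_1$; the $\delta^{2}\ell_l$ piece, which is the crux, is handled by one more round of Abel summation that exploits the telescoping form of $\delta^{2}\ell_l$ together with the coboundary identity $(I-\phi)\nabla\ell_l=K_l-K_{l-1}$, so that Lipschitz bounds on $\nabla\ell_l$ couple to the small quantity $\max_l\alpha_{n,l}+\sum_l|\alpha_{n,l+1}-\alpha_{n,l}|=S_\alpha(n)$.

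The main obstacle is precisely the $\gamma^\alpha_{n,l}\phi^{l-1}\delta^{2}\ell_{l}$ term: only $\|\delta^{2}\ell_l\|\le 2[L]_1(s-t_1)\max_l\alpha_{n,l}$ is available for merely Lipschitz $L$, so the naive triangle inequality returns only an $O(1)$ bound that does not vanish with $n$. The required smallness emerges only after the second Abel summation, which re-exposes a factor of $\sum|\alpha_{n,l+1}-\alpha_{n,l}|+\max_l\alpha_{n,l}$. Assembling the three contributions that scale linearly in $(s-t_1)[L]_1$, together with those scaling in $\|L\|_{C^0}$, yields a bound of the form $\bigl[c\|L\|_{C^0}+c'(s-t_1)[L]_1\bigr]S_\alpha(n)$; majorising by $\|L\|_{C^{0,1}}=\|L\|_{C^0}+[L]_1$ with the sharper of the two constants gives exactly $\psi(s-t_1)=2\max\{2,3(s-t_1)\}$, where the $3$ counts the three $(s-t_1)[L]_1$-type contributions.
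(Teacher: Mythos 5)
Your algebraic setup---the coboundary factorization $K=(I-\phi)L$ with $\phi(X)=T^{-1}XT$, the identity $\mathfrak{P}^{(n,k)}_{\alpha,T}[K]=\sum_{l}(\phi^{l}-\phi^{l+1})M_l$, and the Abel summation that leaves boundary terms plus a middle sum of discrete second differences---is essentially the paper's own computation, and the bookkeeping is sound up to the point where you isolate $\sum_l\gamma^{\alpha}_{n,l}\,\phi^l\,\delta^2\ell_l$. The gap is exactly where you flag it and it is not cosmetic. For $L\in C^{0,1}$ the only pointwise bound is $\|\delta^2\ell_l\|\le(s-t_1)(\alpha_{n,l}+\alpha_{n,l-1})[L]_1$, so $\sum_l\|\delta^2\ell_l\|=O\bigl((s-t_1)[L]_1\bigr)$, with no decay in $n$. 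Your assertion that ``one more round of Abel summation \dots so that Lipschitz bounds on $\nabla\ell_l$ couple to $S_\alpha(n)$'' is therefore the entire content of the Lemma, and you do not supply it. If you actually carry out another summation by parts---write $\phi^l\delta^2\ell_l=\phi^l\nabla\ell_{l+1}-\phi^l\nabla\ell_l$, re-index, and use $(I-\phi)\nabla\ell_m=K_m-K_{m-1}$---you arrive at $\sum_m\gamma^{\alpha}_{n,m}\phi^{m-1}(K_m-K_{m-1})$, whose size is still $\sum_m\gamma^{\alpha}_{n,m}(s-t_1)\alpha_{n,m-1}[K]_1\le(s-t_1)[K]_1$, again $O(1)$ with no factor of $S_\alpha(n)$.

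In fact the inequality is false as stated, so no rearrangement can rescue it. Take $\banach=\CC^2$, $T$ unitary with $\phi(P)=\omega P$ for a fixed unimodular $\omega\ne1$ ($P$ a matrix unit), $\alpha_{n,l}=1/n$, and for each $n$ the Lipschitz function $L_n(v)=c_n\bigl((v-t_1)/(s-t_1)\bigr)P$ on $[t_1,s]$, where $c_n$ piecewise-linearly interpolates
\begin{equation*}
c_l=\frac{\bar\omega}{n(\bar\omega-1)^2}\bigl(-1+(1-\bar\omega)l+\bar\omega^{\,l}\bigr),\qquad 0\le l\le n,
\end{equation*}
so that $\delta^2c_l=\bar\omega^{\,l}/n$ resonates exactly with $\phi^l$. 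One checks $[c_n]_1\le 2/|1-\omega|$ and $\|c_n\|_{C^0}\le 2/|1-\omega|$ uniformly in $n$, yet a direct computation gives $\bigl\|\mathfrak{P}^{(n,n-1)}_{\alpha,T}[K_n](s)\bigr\|_{\bbanach}=\tfrac12\|P\|_{\bbanach}+O(1/n)$; for $\omega=-1$ this is the staircase $c_l=-\lfloor l/2\rfloor/n$ and the sum can be checked by hand (already at $n=20$ one gets $0.55\,\|P\|$ on the left against $0.45\,\|P\|$ on the right). Since the left-hand side stays bounded away from zero while $S_\alpha(n)\,\|L_n\|_{C^{0,1}}\to0$, the asserted bound fails for all large $n$, regardless of the constant. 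The paper's own proof has the corresponding flaw: it estimates $\|(\ell_{l+1}-\ell_l)-(\ell_l-\ell_{l-1})\|$ by $2(s-t_1)\,|\alpha_{n,l}-\alpha_{n,l-1}|\,[L]_1$, but since $\rho^{\alpha}_{n,l+1}(s)-\rho^{\alpha}_{n,l-1}(s)=(s-t_1)(\alpha_{n,l}+\alpha_{n,l-1})$ the correct estimate carries $\alpha_{n,l}+\alpha_{n,l-1}$ in place of the difference, which ruins the final sum. A correct lemma requires strengthening the regularity hypothesis on $L$ (to $C^{1,1}$, say), so that $\delta^2\ell_l$ genuinely gains a factor of order $|\alpha_{n,l}-\alpha_{n,l-1}|+\max_m\alpha_{n,m}^2$.
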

\begin{proof}
Let $\alpha$ be admissible, $K=L-T^{-1}LT$, $L\in C^{0,1}([t_1,t_2],\bbanach)$. Then, 
\begin{eqnarray}
\mathfrak{P}^{(n,k)}_{\alpha,T}[K]&=&\sum_{l=0}^{k}T^{-l}\Bigl(\gamma^{\alpha}_{n,l+1}\bigl(K\circ\rho^{\alpha}_{n,l+1}\bigr)-\gamma^{\alpha}_{n,l}\bigl(K\circ\rho^{\alpha}_{n,l}\bigr)\Bigr)T^{l}\nonumber\\
&=&\gamma^{\alpha}_{n,1}\bigl(L\circ\rho^{\alpha}_{n,1}\bigr)-\gamma^{\alpha}_{n,0}\bigl(L\circ\rho^{\alpha}_{n,0}\bigr)\nonumber\\
&&\quad+\sum_{l=1}^{k}T^{-l}\Bigl(\gamma^{\alpha}_{n,l+1}\bigl(L\circ\rho^{\alpha}_{n,l+1}\bigr)-2\gamma^{\alpha}_{n,l}\bigl(L\circ\rho^{\alpha}_{n,l}\bigr)+\gamma^{\alpha}_{n,l-1}\bigl(L\circ\rho^{\alpha}_{n,l-1}\bigr)\Bigr)T^{l}\nonumber\\
&&\quad-T^{-(k+1)}\Bigl(\gamma^{\alpha}_{n,k+1}\bigl(L\circ\rho^{\alpha}_{n,k+1}\bigr)-\gamma^{\alpha}_{n,k}\bigl(L\circ\rho^{\alpha}_{n,k}\bigr)\Bigr)T^{k+1}.
\label{EQ79}
\end{eqnarray}
In what follows, we give upper bounds on the three terms arising in (\ref{EQ79}). Using the identity 
\begin{eqnarray}
&&\gamma^{\alpha}_{n,k+1}\bigl(L\circ\rho^{\alpha}_{n,k+1}\bigr)-\gamma^{\alpha}_{n,k}\bigl(L\circ\rho^{\alpha}_{n,k}\bigr)\nonumber\\
&&\qquad\qquad\qquad=\gamma^{\alpha}_{n,k+1}\Bigl(L\circ\rho^{\alpha}_{n,k+1}-L\circ\rho^{\alpha}_{n,k}\Bigr)-\alpha_{n,k}\bigl(L\circ\rho^{\alpha}_{n,k}\bigr),
\end{eqnarray}
and the fact $\gamma^{\alpha}_{n,0}=0$, $\gamma^{\alpha}_{n,1}=\alpha_{n,1}$, we can write
\begin{eqnarray}
\|\gamma^{\alpha}_{n,1}L(\rho^{\alpha}_{n,1}(s))-\gamma^{\alpha}_{n,0}L(\rho^{\alpha}_{n,0}(s))\|&\leq& \alpha_{n,1}\|L(\rho^{\alpha}_{n,1}(s))\|\nonumber\\
&\leq&\alpha_{n,1}\|L\|_{C^0}
\label{EQ80}
\end{eqnarray}
for any $s\in [t_1,t_2]$. Similarly,  
\begin{eqnarray}
\|\gamma^{\alpha}_{n,k+1}L(\rho^{\alpha}_{n,k+1}(s))-\gamma^{\alpha}_{n,k}L(\rho^{\alpha}_{n,k}(s))\|&\leq& \gamma^{\alpha}_{n,k+1}\|L(\rho^{\alpha}_{n,k+1}(s))-L(\rho^{\alpha}_{n,k}(s))\|\nonumber\\
&&\qquad\qquad+\alpha_{n,k}\|L(\rho^{\alpha}_{n,k}(s))\|\nonumber\\
&\leq&\gamma^{\alpha}_{n,k+1}\alpha_{n,k}(s-t_1)[L]_1+\alpha_{n,k}\|L\|_{C^0}\nonumber\\
&\leq&\alpha_{n,k}(s-t_1)[L]_1+\alpha_{n,k}\|L\|_{C^0}
\label{EQ81}
\end{eqnarray}
for any $s\in [t_1,t_2]$. To give an upper bound on the term $\sum_{l=1}^{k}\cdots$ of (\ref{EQ79}), we first do some manipulation on its summands using the identity $\gamma^{\alpha}_{n,l+1}-\gamma^{\alpha}_{n,l}=\alpha_{n,l}$:
\begin{eqnarray}
&&\gamma^{\alpha}_{n,l+1}\bigl(L\circ\rho^{\alpha}_{n,l+1}\bigr)-2\gamma^{\alpha}_{n,l}\bigl(L\circ\rho^{\alpha}_{n,l}\bigr)+\gamma^{\alpha}_{n,l-1}\bigl(L\circ\rho^{\alpha}_{n,l-1}\bigr)\nonumber\\
&&\qquad\qquad\qquad=\gamma^{\alpha}_{n,l+1}\bigl(L\circ\rho^{\alpha}_{n,l+1}-L\circ\rho^{\alpha}_{n,l}\bigr)-\gamma^{\alpha}_{n,l}\bigl(L\circ\rho^{\alpha}_{n,l}-L\circ\rho^{\alpha}_{n,l-1}\bigr)\nonumber\\
&&\qquad\qquad\qquad\qquad\qquad\qquad+\alpha_{n,l}\bigl(L\circ\rho^{\alpha}_{n,l}\bigr)-\alpha_{n,l-1}\bigl(L\circ\rho^{\alpha}_{n,l-1}\bigr)\nonumber\\
&&\qquad\qquad\qquad =\gamma^{\alpha}_{n,l+1}\Bigl(\bigl(L\circ\rho^{\alpha}_{n,l+1}-L\circ\rho^{\alpha}_{n,l}\bigr)-\bigl(L\circ\rho^{\alpha}_{n,l}-L\circ\rho^{\alpha}_{n,l-1}\bigr)\Bigr)\nonumber\\
&&\qquad\qquad\qquad\qquad\qquad\qquad+2\alpha_{n,l}\bigl(L\circ\rho^{\alpha}_{n,l}-L\circ\rho^{\alpha}_{n,l-1}\bigr)\nonumber\\
&&\qquad\qquad\qquad\qquad\qquad\qquad+(\alpha_{n,l}-\alpha_{n,l-1})\bigl(L\circ\rho^{\alpha}_{n,l-1}\bigr).
\end{eqnarray}
For any $s\in [t_1,t_2]$, the following upper bounds are nearly straightforward. First,
\begin{eqnarray}
&&\|\bigl(L(\rho^{\alpha}_{n,l+1}(s))-L(\rho^{\alpha}_{n,l}(s))\bigr)-\bigl(L(\rho^{\alpha}_{n,l}(s))-L(\rho^{\alpha}_{n,l-1}(s))\bigr\|\nonumber\\
&&\qquad\qquad\qquad\leq\|L(\rho^{\alpha}_{n,l+1}(s))-L(\rho^{\alpha}_{n,l}(s))\|+\|L(\rho^{\alpha}_{n,l}(s))-L(\rho^{\alpha}_{n,l-1}(s))\|\nonumber \\[4pt]
&&\qquad\qquad\qquad\leq \sup_{s\in [\rho^{\alpha}_{n,l-1}(s),\rho^{\alpha}_{n,l+1}(s)]}2\|L(\rho^{\alpha}_{n,l+1}(s))-L(\rho^{\alpha}_{n,l-1}(s))\|\nonumber\\[4pt]
&&\qquad\qquad\qquad\leq 2(s-t_1)|\alpha_{n,l}-\alpha_{n,l-1}|[L]_1.
\end{eqnarray}
Second,
\begin{eqnarray}
\|L(\rho^{\alpha}_{n,l}(s))-L(\rho^{\alpha}_{n,l-1}(s))\|\leq \alpha_{n,l-1}(s-t_1)[L]_1.
\end{eqnarray}
Finally,
\begin{equation}
\|L(\rho^{\alpha}_{n,l-1}(s))\|\leq \|L\|_{C^0}.
\end{equation}
Therefore,
\begin{eqnarray}
&&\left\|\sum_{l=1}^{k}T^{-l}\bigl(\gamma^{\alpha}_{n,l+1}L\circ\rho^{\alpha}_{n,l+1}-2\gamma^{\alpha}_{n,l}L\circ\rho^{\alpha}_{n,l}+\gamma^{\alpha}_{n,l-1}L\circ\rho^{\alpha}_{n,l-1}\bigr)T^l\right\|\nonumber\\
&&\qquad\leq (s-t_1)\sum_{l=1}^{k}\bigl(2\gamma^{\alpha}_{n,l+1}|\alpha_{n,l}-\alpha_{n,l-1}|+2\alpha_{n,l+1}\alpha_{n,l-1}\bigr)[L]_1\nonumber\\
&&\qquad\qquad\qquad+\sum_{l=1}^{k}|\alpha_{n,l}-\alpha_{n,l-1}|\|L\|_{C^0}
\label{EQ82}.
\end{eqnarray}
Combining (\ref{EQ80}), (\ref{EQ81}), (\ref{EQ82}) with the help of triangle inequality, we arrive to
\begin{eqnarray}
&&\|\mathfrak{P}^{(n,k)}[K](s)\|\leq \alpha_{n,1}\|L\|_{C^0} +\alpha_{n,k}(s-t_1)[L]_1+\alpha_{n,k}\|L\|_{C^0}\nonumber\\
&&\qquad\qquad\qquad\qquad+(s-t_1)\sum_{l=1}^{k}\bigl(2\gamma^{\alpha}_{n,l+1}|\alpha_{n,l}-\alpha_{n,l-1}|+2\alpha_{n,l+1}\alpha_{n,l-1}\bigr)[L]_1\nonumber\\
&&\qquad\qquad\qquad\qquad+\sum_{l=1}^{k}|\alpha_{n,l}-\alpha_{n,l-1}|\|L\|_{C^0}.
\label{EQ83}
\end{eqnarray}
Using $\alpha_{n,l},\gamma^{\alpha}_{n,l}\leq 1$, we get 
\begin{eqnarray*}
\|\mathfrak{P}^{(n,k)}[K](s)\|&\leq& 2\max_{0\leq l<n}\alpha_{n,l}\|L\|_{C^0} +(s-t_1)\max_{0\leq l<n}\alpha_{n,l}[L]_1\qquad\qquad\qquad\qquad\quad\\
&&\qquad+2(s-t_1)[L]_1\sum_{l=0}^{k-1}|\alpha_{n,l+1}-\alpha_{n,l}|\\
&&\qquad+2(s-t_1)[L]_1\max_{0\leq l<n}\alpha_{n,l}\sum_{l=0}^{k-1}\alpha_{n,l}\\
&&\qquad+\|L\|_{C^0}\sum_{l=0}^{k-1}|\alpha_{n,l+1}-\alpha_{n,l}|
\end{eqnarray*}
\begin{eqnarray}
&&\qquad\leq  \Bigl(2\|L\|_{C^0} +3(s-t_1)[L]_1\Bigr)\max_{0\leq l<n}\alpha_{n,l}\nonumber\\
&&\qquad\qquad\qquad+\Bigl(2(s-t_1)[L]_1+\|L\|_{C^0}\Bigr)\sum_{l=0}^{n-2}|\alpha_{n,l+1}-\alpha_{n,l}|\nonumber\\
&&\qquad\leq \Bigl(2\|L\|_{C^0} +3(s-t_1)[L]_1\Bigr)\Bigl(\max_{0\leq l<n}\alpha_{n,l}+\sum_{l=0}^{n-1}|\alpha_{n,l+1}-\alpha_{n,l}|\Bigr)\nonumber\\
&&\qquad\leq \psi(s-t_1)\Bigl(\max_{0\leq l<n}\alpha_{n,l}+\sum_{l=0}^{n-1}|\alpha_{n,l+1}-\alpha_{n,l}|\Bigr)\|L\|_{C^{0,1}}.
\end{eqnarray}
\end{proof}

\begin{Rk}
When $L\in C^{0,1}([t_1,t_2],\bbanach)$ is constant valued, than $[L]_1=0$ and with the choice $\alpha_{n,l}=l/n$, (\ref{EQ83}) reduces to (\ref{EQ78}). Nearly the same holds, if one considers (\ref{EQ84}): If $[L]_1=0$, then $\|L\|_{C^{0,1}}=\|L\|_{\bbanach}$ and (\ref{EQ84}) gives
\begin{equation}
\|\mathfrak{P}_{\alpha,T}[K](s)\|_\bbanach\leq \psi(s-t_1)\frac{\|L\|_\bbanach}{n}.
\end{equation}
Noting that the l.h.s of this equation does not depend on $s$, and $\min_{s\in [t_1,t_2]}\psi(s-t_1)=4$, we get back (\ref{EQ78}) up to a factor of two.
\end{Rk}

\begin{Cor}\label{COR5.3}
Given and admissible $\alpha$, an isometric isomorphism $T\in\bbanach$, $K\in C^{0,1}([t_1,t_2],\bbanach)$ such that $K=L-T^{-1}LT$, $L\in C^{0,1}([t_1,t_2],\bbanach)$, $p>0$ we have 
\begin{equation}
\bigr\|\bigl\{\mathfrak{P}^{(n,k)}_{\alpha,T}[K]\bigr\}_p(s,t_1)\bigl\|_{\bbanach}\leq \frac{\bigl(2+3(s-t_1)\bigr)^{2p}}{3^pp!}S^p_\alpha(n)\|L\|_{C^{0,1}}^p \, .
\end{equation}
\end{Cor}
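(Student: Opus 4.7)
The plan is to combine the pointwise estimate of Lemma~\ref{LEM5.1} with the standard simplex identity for Picard iterates, reducing the whole statement to a short one-variable inequality.

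Recalling that $\{M\}_p(s,t_1)$ denotes the $p$-th Picard iterate
\[
\{M\}_p(s,t_1)=\int_{t_1}^{s}\int_{t_1}^{s_1}\cdots\int_{t_1}^{s_{p-1}}M(s_1)M(s_2)\cdots M(s_p)\,\mathrm{d}s_p\cdots\mathrm{d}s_1,
\]
the operator norm of $\{\mathfrak{P}^{(n,k)}_{\alpha,T}[K]\}_p(s,t_1)$ is bounded by the corresponding iterated integral applied to $\|\mathfrak{P}^{(n,k)}_{\alpha,T}[K](\cdot)\|_{\bbanach}$. Lemma~\ref{LEM5.1} furnishes
\[
\|\mathfrak{P}^{(n,k)}_{\alpha,T}[K](r)\|_{\bbanach}\leq\psi(r-t_1)\,S_\alpha(n)\,\|L\|_{C^{0,1}}
\]
for every $r\in[t_1,t_2]$, which I would substitute directly into the iterated integral.

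At that point the integrand $\prod_{i=1}^{p}\psi(s_i-t_1)$ is symmetric in the $s_i$, while the ordered simplex $\{t_1\leq s_p\leq\cdots\leq s_1\leq s\}$ is the $1/p!$-fraction of the cube $[t_1,s]^{p}$. The classical symmetrization identity therefore gives
\[
\int_{t_1}^{s}\int_{t_1}^{s_1}\cdots\int_{t_1}^{s_{p-1}}\prod_{i=1}^{p}\psi(s_i-t_1)\,\mathrm{d}s_p\cdots\mathrm{d}s_1=\frac{1}{p!}\left(\int_{t_1}^{s}\psi(v-t_1)\,\mathrm{d}v\right)^{p},
\]
so the problem reduces to bounding $\int_{0}^{u}\psi(v)\,\mathrm{d}v$ with $u=s-t_1$.

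The remaining step is the elementary inequality
\[
\int_{0}^{u}\psi(v)\,\mathrm{d}v\leq\frac{(2+3u)^{2}}{3}\qquad u\geq 0,
\]
which is where the explicit form $\psi(x)=2\max\{2,3x\}$ enters. A case split at $u=2/3$ yields $\int_{0}^{u}\psi(v)\,\mathrm{d}v=4u$ for $u\leq 2/3$ and $\int_{0}^{u}\psi(v)\,\mathrm{d}v=\frac{4}{3}+3u^{2}$ for $u>2/3$, each dominated by $(2+3u)^{2}/3=\frac{4}{3}+4u+3u^{2}$. Raising to the $p$-th power and collecting constants then produces the claimed bound. The only mildly delicate part is this final case analysis on $\psi$; the reduction to it is essentially bookkeeping on iterated integrals.
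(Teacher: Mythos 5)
Your proposal is correct and follows essentially the same route as the paper: both use Lemma~\ref{LEM5.1} to bound the integrand pointwise by $\psi(\cdot-t_1)\,S_\alpha(n)\,\|L\|_{C^{0,1}}$ and then invoke the symmetrization identity for the ordered simplex to collapse the iterated integral to $\frac{1}{p!}\bigl(\int_{0}^{s-t_1}\psi\bigr)^p$. The paper instead first bounds $\psi(x)\leq 2(2+3x)$ pointwise, changes variables to $u_k=2+3(v_k-t_1)$, and extends the lower limit to zero before applying the same simplex identity; your reorganization, which symmetrizes first and then verifies $\int_0^u\psi(v)\,\mathrm{d}v\leq(2+3u)^2/3$ by a case split, is a correct and equivalent bookkeeping of the identical ideas and lands on the identical constant.
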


\begin{proof}
\begin{eqnarray*}
\bigr\|\bigl\{\mathfrak{P}^{(n,k)}_{\alpha,T}[K]\bigr\}_p(s,t_1)\bigl\|&\leq& \int_{t_1}^{s}\mathrm{d}v_p\cdots \int_{t_1}^{v_2}\mathrm{d}v_1\,S^p_\alpha(n)\|L\|^p_{C^{0,1}}\prod_{k=1}^p\psi(v_k-t_1)\nonumber\\
&\leq& 2^pS^p_\alpha(n)\|L\|^p_{C^{0,1}}\int_{t_1}^{s}\mathrm{d}v_p\cdots \int_{t_1}^{v_2}\mathrm{d}v_1\,\prod_{k=1}^p\bigl(2+3(v_k-t_1)\bigr)\nonumber\\
&\leq& S^p_\alpha(n)\|L\|^p_{C^{0,1}}\left(\frac{2}{3}\right)^p\int_{2}^{2+3(s-t_1)}\mathrm{d}u_p\cdots \int_{2}^{u_2}\mathrm{d}u_1\,\prod_{k=1}^p u_k
\end{eqnarray*}
\begin{eqnarray}
&&\qquad\qquad\leq S^p_\alpha(n)\|L\|^p_{C^{0,1}}\left(\frac{2}{3}\right)^p\int_{0}^{2+3(s-t_1)}\mathrm{d}u_p\cdots \int_{0}^{u_2}\mathrm{d}u_1\,\prod_{k=1}^p u_k\nonumber\\
&&\qquad\qquad\leq\left(\frac{2}{3}\right)^p\frac{\bigl(2+3(s-t_1)\bigr)^{2p}}{2^pp!}S^p_\alpha(n)\|L\|_{C^{0,1}}^p.
\end{eqnarray}
\end{proof}

\begin{Thm}[Generalized Zeno dynamics of contractions]
Let $\banach$ be a Banach space with a decomposition $\banach=\banach_{\mathrm{I}}\oplus\banach_{\mathrm{C}}$ such that both $\banach_{\mathrm{I}}$ and $\banach_{\mathrm{C}}$ are closed subspaces of $\banach$ and the continuous projection $P_{\mathrm{I}}$ corresponding to $\banach_{\mathrm{I}}$ and its complementary projection $P_{\mathrm{C}}$ are norm one. Assume that $T\in\bbanach$ is of the form $T=T_{\mathrm{I}}\oplus T_{\mathrm{C}}$, where $T_\mathrm{I}\in\mathcal{B}(\banach_{\mathrm{I}})$ is an isometric isomorphism, while $T_\mathrm{C}\in\mathcal{B}(\banach_{\mathrm{C}})$ is a strict contraction.  Let $K\in C^0([t_1,t_2],\bbanach)$ and define the propagator $F(\cdot,\cdot)$ as the solution of the initial value problem
\begin{eqnarray}
\partial_1F(u,v)&=&K(u)F(u,v)\qquad\qquad t_{1}\leq v\leq u\leq t_{2}\nonumber\\
F(v,v)&=&\mathbbm{1}_{\banach}.
\end{eqnarray}
Let $G(\cdot,\cdot)$ be the solution of the initial value problem 
\begin{eqnarray}
\partial_1G(u,v)&=&\mathfrak{P}_{\alpha,T}[P_\mathrm{I}KP_\mathrm{I}](u)G(u,v)\qquad\qquad t_{1}\leq v\leq u\leq t_{2}\nonumber\\
G(v,v)&=&P_{\mathrm{I}},
\end{eqnarray}
Let $\alpha$ be an admissible set of sequences of positive numbers satisfying the additional condition
\begin{equation}
\lim_{n\rightarrow\infty}\sqrt{n}\sum_{l=0}^{n-1}|\alpha_{n,l+1}-\alpha_{n,l}|=0
\label{EQ71}
\end{equation}
Assume  that
\begin{enumerate}
\item the uniform ergodic mean $\mathfrak{P}_{\alpha,T_\mathrm{I}}[P_\mathrm{I}KP_\mathrm{I}]$ exists, and
\item as a member of the kernel of $\mathfrak{P}_{\alpha,T_\mathrm{I}}$, the difference $P_\mathrm{I}KP_\mathrm{I}-\mathfrak{P}_{\alpha,T_\mathrm{I}}[P_\mathrm{I}KP_\mathrm{I}]$ can be approximated by members of $\Ker^{0,1}(\mathfrak{P}_{\alpha,T_\mathrm{I}})$ of the form $L-T_\mathrm{I}^{-1}LT_{\mathrm{I}}$, $L\in C^0([t_1,t_2],\bbanach)$ with arbitrary precision in the $C^{0}$ norm.
\end{enumerate}
Then, 
\begin{equation}
\lim_{n\rightarrow\infty}T^{-n}_\mathrm{I} \prod_{l=0}^{n-1}TF\circ(\rho^{\alpha}_{n,l+1}\times\rho^{\alpha}_{n,l})=G(\cdot,t_1)
\end{equation}
in the $C^0$ norm.
\end{Thm}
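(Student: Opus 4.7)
The plan unfolds in three stages, combining Lemma~\ref{LEM4.2} to shed the contractive block $T_\mathrm{C}$, the decomposition $K' := P_\mathrm{I}KP_\mathrm{I} = \mathfrak{P}_{\alpha,T_\mathrm{I}}[K'] + \overline{K}$ together with the telescoping approximation granted by hypothesis~(2), and finally the smallness of $\mathfrak{P}^{(n,k)}_{\alpha,T_\mathrm{I}}$ on telescoping functions quantified by Lemma~\ref{LEM5.1} and Corollary~\ref{COR5.3}. First, apply Lemma~\ref{LEM4.2} to the partition $(\rho^\alpha_{n,l}(s))_{0\leq l\leq n}$ of $[t_1,s]$: this controls the difference between $T_\mathrm{I}^{-n}\prod_{l=0}^{n-1}TF(\rho^\alpha_{n,l+1}(s),\rho^\alpha_{n,l}(s))$ and
\begin{equation*}
\tilde H_n(s,t_1) := \prod_{l=0}^{n-1} T_\mathrm{I}^{-l}\,G_l(\rho^\alpha_{n,l+1}(s),\rho^\alpha_{n,l}(s))\,T_\mathrm{I}^l,
\end{equation*}
where each $G_l$ is the propagator of $K'$ on the sub-interval with initial value $P_\mathrm{I}$. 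The bound scales like $n|\mathfrak{T}_n(s)|^2 + |\mathfrak{T}_n(s)| + \|T_\mathrm{C}\|^n$; since $\sum_l \alpha_{n,l}=1$ forces $\min_l\alpha_{n,l}\leq 1/n$ and telescoping yields $\max_l\alpha_{n,l}\leq 1/n + \sum_l |\alpha_{n,l+1}-\alpha_{n,l}|$, condition~(\ref{EQ71}) gives $\sqrt n\max_l\alpha_{n,l}\to 0$ and hence $n|\mathfrak{T}_n|^2\to 0$; the remaining two contributions vanish trivially. Hence it suffices to prove $\tilde H_n(\cdot,t_1)\to G(\cdot,t_1)$ in $C^0([t_1,t_2],\bbanach)$. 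Observe that $\tilde H_n(\cdot,t_1)$ is nothing but the propagator on $\banach_\mathrm{I}$ of the piecewise-continuous generator $\tilde K_n(v) = \sum_l \chi_{[\rho^\alpha_{n,l}(s),\rho^\alpha_{n,l+1}(s)]}(v)\,T_\mathrm{I}^{-l}K'(v)T_\mathrm{I}^l$.

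Second, set $A := \mathfrak{P}_{\alpha,T_\mathrm{I}}[K']$, which exists by hypothesis~(1) and, by Theorem~\ref{THM2.1}\,\textit{(7)}, commutes with $T_\mathrm{I}$. Consequently the piecewise generator built from $A$ equals $A$ itself, and its propagator is precisely $G$. Decompose $K' = A + \overline K$ with $\overline K\in\Ker^0(\mathfrak{P}_{\alpha,T_\mathrm{I}})$ and invoke hypothesis~(2) to select, for every $\varepsilon>0$, an element $\overline K_\varepsilon = L_\varepsilon - T_\mathrm{I}^{-1}L_\varepsilon T_\mathrm{I}$ with $L_\varepsilon\in C^{0,1}([t_1,t_2],\mathcal{B}(\banach_\mathrm{I}))$ and $\|\overline K-\overline K_\varepsilon\|_{C^0}\leq \varepsilon$. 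Applying Proposition~\ref{PROP3.7} to each $G_l$ against its analogue built from $A + \overline K_\varepsilon$ and assembling the interval estimates via the elementary product-difference identity of the appendix yields $\|\tilde H_n[K']-\tilde H_n[A+\overline K_\varepsilon]\|_{C^0}\leq C\varepsilon$ uniformly in $n$, so the task reduces to showing $\tilde H_n[A+\overline K_\varepsilon]\to G$ in $C^0$ for each fixed $\varepsilon>0$.

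Third, writing the piecewise generator of $\tilde H_n[A+\overline K_\varepsilon]$ as $A+\Psi_n(v)$ with
\begin{equation*}
\Psi_n(v) = \sum_{l=0}^{n-1}\chi_{[\rho^\alpha_{n,l}(s),\rho^\alpha_{n,l+1}(s)]}(v)\bigl(T_\mathrm{I}^{-l}L_\varepsilon(v)T_\mathrm{I}^l - T_\mathrm{I}^{-(l+1)}L_\varepsilon(v)T_\mathrm{I}^{l+1}\bigr),
\end{equation*}
Duhamel's formula gives
\begin{equation*}
\bigl(\tilde H_n[A+\overline K_\varepsilon]-G\bigr)(s,t_1) = \int_{t_1}^{s} G(s,u)\,\Psi_n(u)\,\tilde H_n[A+\overline K_\varepsilon](u,t_1)\,du .
\end{equation*}
The telescoping form of $\Psi_n$ invites Abel summation over the index $l$: freezing the continuous factors $G(s,\cdot)$ and $\tilde H_n(\cdot,t_1)$ at the left endpoints $\rho^\alpha_{n,l}$ produces exactly a term of the shape $\int_{t_1}^\bullet \mathfrak{P}^{(n,k)}_{\alpha,T_\mathrm{I}}[L_\varepsilon - T_\mathrm{I}^{-1}L_\varepsilon T_\mathrm{I}]$, while their variation across each sub-interval is $O(|\mathfrak{T}_n|)$ thanks to the boundedness of $A+\Psi_n$. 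Lemma~\ref{LEM5.1} then bounds each resulting first-order contribution by $\psi(s-t_1)\,S_\alpha(n)\|L_\varepsilon\|_{C^{0,1}}$ with $S_\alpha(n)\to 0$, and Corollary~\ref{COR5.3} upgrades this to a uniformly convergent Picard series whose $p$th term is absolutely bounded by $(2+3(s-t_1))^{2p}(3^pp!)^{-1}S_\alpha(n)^p\|L_\varepsilon\|_{C^{0,1}}^p$. Plugging back into Duhamel, iterating, and closing with Gr\"onwall yields $\|\tilde H_n[A+\overline K_\varepsilon]-G\|_{C^0}\leq C_\varepsilon S_\alpha(n)\to 0$; sending $\varepsilon\to 0$ concludes. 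The main obstacle is carrying out the Abel summation rigorously: the continuous operator factors $G(s,u)$ and $\tilde H_n(u,t_1)$ must be frozen at sub-interval endpoints and their variations absorbed into controlled error terms without spoiling the decisive $S_\alpha(n)$-decay extracted from Lemma~\ref{LEM5.1}, particularly for the higher-order Picard iterates where the factor $\tilde H_n$ itself is the object we are estimating.
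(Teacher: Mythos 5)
Your first two stages match the paper's Steps 1 and 2 exactly: Lemma \ref{LEM4.2} strips off the contractive block, and Proposition \ref{PROP3.7} controls the replacement $\overline K\mapsto\overline K_\varepsilon$. Your third stage corresponds to the paper's Steps 3 and 4 but repackages them: rather than a single Duhamel integral, the paper introduces the auxiliary propagator family $H_{n,k,\varepsilon}$ with generator $\mathfrak{P}^{(n,k)}_{\alpha,T_\mathrm{I}}[\overline K_\varepsilon]$, writes the discrete telescoping identity (\ref{EQ64}), estimates each exchange term $\Delta^\varepsilon_{n,k}$ through the appendix bound (\ref{F4}) together with Corollary \ref{COR5.3}, and then separately shows $H_{n,n,\varepsilon}\to\mathbbm{1}_\banach$. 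Your Duhamel-plus-Abel route is sound, and the obstacle you flag at the end is not actually one: no Picard iteration or self-referential estimate is needed, because $\|\tilde H_n(u,t_1)\|_{\bbanach}$ is bounded uniformly in $n$ by Proposition \ref{prop:growth_prop} — the generator $A+\Psi_n$ has $C^0$-norm at most $\|A\|_{C^0}+2\|L_\varepsilon\|_{C^0}$, independent of $n$ — so a single application of Duhamel suffices. After freezing $G(s,\cdot)$ and $\tilde H_n(\cdot,t_1)$ at sub-interval endpoints (cumulative error $O(n|\mathfrak{T}_n|^2)\to 0$, exactly as in your Stage 1 count), the non-commutative Abel identity
\begin{equation*}
\sum_{l=0}^{n-1} X_l\Phi_l Y_l = X_{n-1}\Phi^{(n-1)}Y_{n-1}+\sum_{l=0}^{n-2}\Bigl[(X_l-X_{l+1})\Phi^{(l)}Y_l+X_{l+1}\Phi^{(l)}(Y_l-Y_{l+1})\Bigr],
\end{equation*}
with $\Phi^{(l)}=\int_{t_1}^{\,\cdot}\mathfrak{P}^{(n,l)}_{\alpha,T_\mathrm{I}}[\overline K_\varepsilon]$ controlled by Corollary \ref{COR2.5} and the increments $X_l-X_{l+1}$, $Y_l-Y_{l+1}$ of size $O(\alpha_{n,l}(t_2-t_1))$ summing to $O(1)$, gives an overall $O(D_\alpha(n))\to 0$ bound. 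The paper's $H_{n,k,\varepsilon}$-device and your Abel-summation-after-freezing are two different bookkeeping mechanisms extracting the same smallness; yours, if carried through, is somewhat leaner, as it only invokes Corollary \ref{COR2.5} (not the Picard-series estimates of Corollary \ref{COR5.3}) for the main term.
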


\begin{Rk}\label{RK5.2}
If (\ref{EQ71}) holds, then 
\begin{equation}
\lim_{n\rightarrow\infty}\sqrt{n}\max_{1\leq l<n}\alpha_{n,l}=0
\label{EQ72}
\end{equation}
also holds. To see this, note that $\min_{0\leq l<n}\alpha_{n,l}<n^{-1}$ holds obviuosly, therefore $\lim_{n\rightarrow\infty}\min_{0\leq l<n}\alpha_{n,l}=0$. Assume that $\alpha_{n,p}=\max_{0\leq l<n}\alpha_{n,l}$ and  $\alpha_{n,q}=\min_{0\leq l<n}\alpha_{n,l}$. Then,
\begin{equation}
|\sqrt{n}\alpha_{n,p}-\sqrt{n}\alpha_{n,q}|=\sqrt{n}\left|\sum_{p\leq l<q}\alpha_{n,l+1}-\alpha_{n,l}\,\right|\leq\sqrt{n}\sum_{l=0}^{n-1}|\alpha_{n,l+1}-\alpha_{n,l}|,
\end{equation}
which implies (\ref{EQ72}).
\end{Rk}

\begin{proof} For the sake of brevity, let $K_\mathrm{I}\vcentcolon= P_\mathrm{I}KP_\mathrm{I}$ and $\overline{K}\vcentcolon=K_\mathrm{I}-\mathfrak{P}_{\alpha,T_{\mathrm{I}}}[K_\mathrm{I}]$. Note that $\mathfrak{P}_{\alpha,T_\mathrm{I}}[\overline{K}]=0$. Let $\overline{K}_{\varepsilon}=L_\varepsilon-T_\mathrm{I}^{-1}L_\varepsilon T_\mathrm{I}\in\Ker^{0,1}(\mathfrak{P}_{\alpha,T_\mathrm{I}})$, $L_\varepsilon\in C^0([t_1,t_2],\bbanach)$ such that $\|\overline{K}-\overline{K}_\varepsilon\|_{C^0}\leq\varepsilon$ for a given $\varepsilon>0$. We extend this notation to the case $\varepsilon=0$, such that $\overline{K}_0\equiv \overline{K}$. We define the propagators $F_{n,l,\varepsilon}(\cdot,\cdot)$, $n\in\mathbb{N}$, $0\leq l<n$; $G(\cdot,\cdot)$ and $H_{n,l,\varepsilon}(\cdot,\cdot)$, $n\in\mathbb{N}$, $0\leq l<n$ as the solutions of the following initial value problems:

\begin{eqnarray}
\partial_1F_{n,l,\varepsilon}(u,v)&=&\bigl(\mathfrak{P}_{\alpha,T_\mathrm{I}}[K_\mathrm{I}]+T_\mathrm{I}^{-l}\overline{K}_\varepsilon T_\mathrm{I}^{l}\bigr)(u)F_{n,l,\varepsilon}(u,v)\qquad t_{1}\leq v\leq u\leq t_{2},\nonumber\\
F_{n,l,\varepsilon}(v,v)&=&P_{\mathrm{I}},\nonumber\\
&&\nonumber\\
\partial_1G(u,v)&=&\bigl(\mathfrak{P}_{\alpha,T_\mathrm{I}}[K_\mathrm{I}]\bigr)(u)G(u,v)\qquad\qquad\qquad\quad\ t_{1}\leq v\leq u\leq t_{2},\nonumber\\
G(v,v)&=&P_{\mathrm{I}},\nonumber\\
&&\nonumber\\
\partial_1H_{n,l,\varepsilon}(u,v)&=&\mathfrak{P}_{\alpha,T_\mathrm{I}}^{(n,l)}[\overline{K}_\varepsilon](u)H_{n,l,\varepsilon}(u,v)\qquad\qquad\qquad\quad\ t_{1}\leq v\leq u\leq t_{2},\nonumber\\
H_{n,l,\varepsilon}(v,v)&=&P_{\mathrm{I}}.
\end{eqnarray}
Let $s\in [t_1,t_2]$ and let $\mathfrak{T}_n(s)=(\rho^{\alpha}_{n,0}(s),\dots,\rho^{\alpha}_{n,n}(s))$ denote the partition of the interval $[t_1,s]$.\par
\vspace{3mm}
\noindent\textbf{Step 1.} We prove
\begin{equation}
\lim_{n\rightarrow\infty}\left\|\prod_{l=0}^{n-1}TF\circ(\rho^{\alpha}_{n,l+1}\times\rho^{\alpha}_{n,l})-T^n_\mathrm{I}\prod_{l=0}^{n-1}F_{n,l,0}\circ(\rho^{\alpha}_{n,l+1}\times \rho^{\alpha}_{n,l})\right\|_{C^0}=0.
\label{EQT1ST1}
\end{equation}
\vspace{2mm}
\noindent For every $s\in [t_1,t_2]$, the application of Lemma \ref{LEM4.2} and statement \textit{7)} of Theorem \ref{THM2.1} gives 
\begin{eqnarray}
&&\left\|\prod_{l=0}^{n-1}TF\bigl(\rho^{\alpha}_{n,l+1}(s),\rho^{\alpha}_{n,l}(s)\bigr)-T^n_\mathrm{I}\prod_{l=0}^{n-1}F_{n,l,0}\bigl(\rho^{\alpha}_{n,l+1}(s),\rho^{\alpha}_{n,l}(s)\bigr)\right\|\nonumber\\ 
&&\qquad\qquad\leq n\,\Omega\,|\mathfrak{T}_n(s)|^2\,h[\mathfrak{T}_n(s)]+\Omega |\mathfrak{T}_n(s)|\,h_1[\mathfrak{T}_n(s)]+\Omega\,\|T_\mathrm{C}\|^n,\nonumber\\[6pt]
&&\qquad\qquad\leq n\,\Omega\,|\mathfrak{T}_n(s)|^2\bigl((1+\tau_{\mathrm{C}})\|K\|^2_{C^0}+\|K_\mathrm{I}\|^2_{C^0}+\mathcal{O}(|\mathfrak{T}_n(s)|)\bigr)\nonumber\\[6pt]
&&\qquad\qquad\qquad+\Omega |\mathfrak{T}_n(s)|\tau_{\mathrm{C}}\bigl(1+\mathcal{O}(|\mathfrak{T}_n(s)|)\bigr)+\Omega \|T_\mathrm{C}\|^n.
\label{EQ45}
\end{eqnarray}
Here, $\tau_{\mathrm{C}}$ is equal to $(1+\|T_\mathrm{C}\|)(1-\|T_\mathrm{C}\|)^{-1}$ and 
\begin{equation}
\Omega=\exp\left(\int_{t_1}^{s}\omega(v)\, \mathrm{d}v\right),
\end{equation}
where $\omega:[t_1,t_2]\rightarrow\mathbb{R}$ dominates the weak growth bounds of both $K,K_\mathrm{I}\in C([t_1,t_2],$
$\bbanach)$. 
Since $|\mathfrak{T}_n(s)|<(t_1-t_2)\max_{0\leq l<n}\alpha_{n,l}$, if (\ref{EQ71}) and so (\ref{EQ72}) holds, (\ref{EQT1ST1}) follows.\par
\vspace{3mm}
\noindent\textbf{Step 2.} Given $\varepsilon>0$, the following estimate holds:
\begin{eqnarray}
&&\left\|T^n_\mathrm{I}\prod_{l=0}^{n-1}F_{n,l,0}\circ(\rho^{\alpha}_{n,l+1}\times\rho^{\alpha}_{n,l})-T^n_\mathrm{I}\prod_{l=0}^{n-1}F_{n,l,\varepsilon}\circ(\rho^{\alpha}_{n,l+1}\times\rho^{\alpha}_{n,l})\right\|_{C^0}\nonumber\\
&&\qquad\qquad\qquad\leq \varepsilon(t_2-t_1)\mathrm{e}^{3\|K\|_{C^0}(t_2-t_1)}\mathrm{e}^{2\varepsilon(t_2-t_1)}.
\label{EQ74}
\end{eqnarray}
Let $\varepsilon>0$ and $s\in[t_1,t_2]$ be given. We can apply Gr\"onwall's Lemma for the integral inequality (\ref{EQ73}) in case of both $F_{n,l,\varepsilon}(\cdot,\cdot)$ and $F_{n,l,0}(\cdot,\cdot)$ and use Proposition \ref{PROP3.7} to obtain
\begin{eqnarray*}
&&\left\| T^n_\mathrm{I}\prod_{l=0}^{n-1}F_{n,l,0}\bigl(\rho^{\alpha}_{n,l+1}(s),\rho^{\alpha}_{n,l}(s)\bigr)-T^n_\mathrm{I}\prod_{l=0}^{n-1}F_{n,l,\varepsilon}\bigl(\rho^{\alpha}_{n,l+1}(s),\rho^{\alpha}_{n,l}(s)\bigr)\right\|\qquad\qquad\qquad\qquad\nonumber\\
&&\qquad =\left\|\prod_{l=0}^{n-1}F_{n,l,0}\bigl(\rho^{\alpha}_{n,l+1}(s),\rho^{\alpha}_{n,l}(s)\bigr)-\prod_{l=0}^{n-1}F_{n,l,\varepsilon}\bigl(\rho^{\alpha}_{n,l+1}(s),\rho^{\alpha}_{n,l}(s)\bigr)\right\|\nonumber\\
&&\qquad \leq\sum_{l=0}^{n-1}f_{n,l}(s)g_{n,l}(s)\Bigl\|F_{n,l,0}\bigl(\rho^{\alpha}_{n,l}(s),\rho^{\alpha}_{n,l-1}(s)\bigr)-F_{n,l,\varepsilon}\bigl(\rho^{\alpha}_{n,l}(s),\rho^{\alpha}_{n,l-1}(s)\bigr)\Bigr\|\nonumber\\
&&\qquad\leq\sum_{l=0}^{n-1}f_{n,l}(s)g_{n,l}(s)w_{n,l}(s)\int_{\rho^{\alpha}_{n,l-1}(s)}^{\rho^{\alpha}_{n,l}(s)}\|\overline{K}_0(v)-\overline{K}_\varepsilon(v)\|\,\mathrm{d}v
\end{eqnarray*}
\begin{eqnarray}
\leq\varepsilon(s-t_1)\sum_{l=0}^{n-1}\alpha_{n,l}f_{n,l}(s)g_{n,l}(s)w_{n,l}(s),\qquad\qquad\qquad\qquad\qquad\quad
\end{eqnarray}
where
\begin{eqnarray}
f_{n,l}(s)&=&\exp\Bigl(\|\mathfrak{P}_{\alpha,T_\mathrm{I}}[K]+\overline{K}_0\|_{C^0}\bigl(\rho^{\alpha}_{n,n}(s)-\rho^{\alpha}_{n,l+1}(s)\bigr)\Bigr)\nonumber\\
&\leq& \exp\Bigl(\|K\|_{C^0}(s-t_1)\Bigr),
\end{eqnarray}
\begin{eqnarray}
g_{n,l}(s)&=&\exp\Bigl(\|\mathfrak{P}_{\alpha,T_{\mathrm{I}}}[K]+\overline{K}_{\varepsilon}\|_{C^0}\bigl(\rho^{\alpha}_{n,l}(s)-\rho^{\alpha}_{n,0}(s) \bigr)\Bigr)\nonumber\\
&\leq& \exp\Bigl(\|K\|_{C^0}(s-t_1)\Bigr)\exp\Bigl(\varepsilon(s-t_1)\Bigr),
\end{eqnarray}
\begin{eqnarray}
w_{n,l}(s)&=&\exp\Bigl((\|K\|_{C^0}+\varepsilon)\bigl(\rho^{\alpha}_{n,l+1}(s)-\rho^{\alpha}_{n,l}(s)\bigr)\Bigr)\nonumber\\
&\leq& \exp\Bigl(\|K\|(s-t_1)\Bigr)\exp\Bigl(\varepsilon(s-t_1)\Bigr).
\end{eqnarray}
Therefore,
\begin{eqnarray}
&&\left\|T^n_\mathrm{I}\prod_{l=0}^{n-1}F_{n,l,0}\bigl(\rho^{\alpha}_{n,l+1}(s),\rho^{\alpha}_{n,l}(s)\bigr)-T^n_\mathrm{I}\prod_{l=0}^{n-1}F_{n,l,\varepsilon}\bigl(\rho^{\alpha}_{n,l+1}(s),\rho^{\alpha}_{n,l}(s)\bigr)\right\|\nonumber\\
&&\qquad\qquad\qquad\leq \varepsilon(s-t_1)\mathrm{e}^{3\|K\|_{C^0}(s-t_1)}\mathrm{e}^{2\varepsilon(s-t_1)},
\end{eqnarray}
from which (\ref{EQ74}) follows.\par
\vspace{3mm}
\noindent\textbf{Step 3.} We prove
\begin{eqnarray}
\lim_{n\rightarrow\infty}\left\|T^n_\mathrm{I}\prod_{l=0}^{n-1}F_{n,l,\varepsilon}\circ(\rho^{\alpha}_{n,l+1}\times\rho^{\alpha}_{n,l})-T^n_\mathrm{I}H_{n,n,\varepsilon}(\cdot,t_1)G(\cdot,t_1)\right\|_{C^0}=0.
\label{T1S3}
\end{eqnarray}
Note that for any $s\in[t_1,t_2]$, $\rho^{\alpha}_{n,0}(s)=t_1$ and $\rho^{\alpha}_{n,n}(s)=s$. We write 
\begin{eqnarray}
&&T^{n}_{\mathrm{I}}\prod_{l=0}^{n-1}F_{n,l,\varepsilon}\bigl(\rho^{\alpha}_{n,l+1}(s),\rho^{\alpha}_{n,l}(s)\bigr)-T_\mathrm{I}^nH_{n,n,\varepsilon}\bigl(\rho^{\alpha}_{n,n}(s),\rho^{\alpha}_{n,0}(s)\bigr)G\bigl(\rho^{\alpha}_{n,n,\varepsilon}(s),\rho^{\alpha}_{n,0}(s)\bigr)\nonumber\\
&&\qquad\qquad=T^n_{\mathrm{I}}\sum_{k=0}^{n-1}\left(\prod_{l=k+1}^{n-1}F_{n,l,\varepsilon}\bigl(\rho^{\alpha}_{n,l+1}(s),\rho^{\alpha}_{n,l}(s)\bigl)\right)\nonumber\\
&&\qquad\qquad\qquad\times\Delta^{\varepsilon}_{n,k}\bigl(\rho^{\alpha}_{n,k+1}(s),\rho^{\alpha}_{n,k}(s),\rho^{\alpha}_{n,k-1}(s)\bigr)G\bigl(\rho^{\alpha}_{n,k-1}(s),\rho^{\alpha}_{n,0}(s)\bigr),
\label{EQ64}
\end{eqnarray}
where
\begin{eqnarray}
&&\Delta^{\varepsilon}_{n,k}\bigl(\rho^{\alpha}_{n,k+1}(s),\rho^{\alpha}_{n,k}(s),\rho^{\alpha}_{n,k-1}(s)\bigr)\nonumber\\
&&\qquad\qquad=F_{n,k,\varepsilon}\bigl(\rho^{\alpha}_{n,k+1}(s),\rho^{\alpha}_{n,k}(s)\bigr)H_{n,k-1,\varepsilon}(\rho^{\alpha}_{n,k}(s),\rho^{\alpha}_{n,0}(s)\bigr)\nonumber\\
&&\qquad\qquad\qquad\qquad-H_{n,k,\varepsilon}\bigl(\rho^{\alpha}_{n,k+1}(s),\rho^{\alpha}_{n,0}(s)\bigr)G\bigr(\rho^{\alpha}_{n,k+1}(s),\rho^{\alpha}_{n,k}(s)\bigl).
\label{EQ65}
\end{eqnarray}
The various terms in (\ref{EQ64}) and (\ref{EQ65}) can be estimated as follows. First, using Proposition \ref{prop:growth_prop}, we get
\begin{eqnarray}
\left\|\prod_{l=k+1}^{n-1}F_{n,l,\varepsilon}\bigl(\rho^{\alpha}_{n,l+1}(s),\rho^{\alpha}_{n,l}(s)\bigr)\right\|&\leq &\int_{\rho^{\alpha}_{n,k+1}(s)}^{\rho^{\alpha}_{n,n}(s)}\omega(v)\,\mathrm{d}v,\nonumber\\
\left\|G\bigl(\rho^{\alpha}_{n,k-1}(s),\rho^{\alpha}_{n,0}(s)\bigr)\right\|&\leq &\int_{\rho^{\alpha}_{n,0}(s)}^{\rho^{\alpha}_{n,k-1}(s)}\omega(v)\,\mathrm{d}v.
\label{EQ75}
\end{eqnarray}
Beyond that of Corollary \ref{COR5.3}, we have the estimates 
\begin{eqnarray}
\left\|\bigl\{T_\mathrm{I}^{-k}K_\mathrm{I}T_\mathrm{I}^{k}\bigr\}_{p}\bigl(\rho^{\alpha}_{n,k+1}(s),\rho^{\alpha}_{n,k}(s)\bigr)\right\|&\leq& \frac{\alpha_{n,k}^p(s-t_1)^p}{p!}\|K_{\mathrm{I}}\|^p_{C^0},\nonumber\\
\bigl\{\mathfrak{P}_{\alpha,T_\mathrm{I}}[K_\mathrm{I}]\bigr\}_{p}\bigl(\rho^{\alpha}_{n,k+1}(s),\rho^{\alpha}_{n,k}(s)\bigr)&\leq&\frac{\alpha_{n,k}^p(s-t_1)^p}{p!}\|\mathfrak{P}_{\alpha,T_\mathrm{I}}[K_\mathrm{I}]\|_{C^{0}}
\end{eqnarray}
for all $p\in \mathbb{N}_0$. Therefore, the application of (\ref{F4}) to (\ref{EQ65}) results in 
\begin{eqnarray}
&&\|\Delta^{\varepsilon}_{n,k}\bigl(\rho^{\alpha}_{n,k+1}(s),\rho^{\alpha}_{n,k}(s),\rho^{\alpha}_{n,k-1}(s)\bigr)\|\leq \sum_{p=1}^2\bigl(f^{(p)}_{n}(s-t_1)\bigr)^2\exp(f^{(p)}_{n}(s-t_1)),
\label{EQ76}
\end{eqnarray}
where the error functions $f^{(1)}_n$ and $f^{(2)}_n$ are given by
\begin{eqnarray}
f^{(1)}_n(x)&=&x\max_{0\leq l<n}\alpha_{n,l}\|K_{\mathrm{I}}\|_{C^0}+\frac{(2+3x)^2}{3}D_\alpha(n)\|L_\varepsilon\|_{C^{0,1}},\nonumber\\
f^{(2)}_n(x)&=&x\max_{0\leq l<n}\alpha_{n,l}\|\mathfrak{P}[K_{\mathrm{I}}]\|_{C^0}+\frac{(2+3x)^2}{3}D_\alpha(n)\|L_\varepsilon\|_{C^{0,1}}.
\end{eqnarray}
Therefore, 
\begin{eqnarray}
&&\left\|T^n_\mathrm{I}\prod_{l=0}^{n-1}F_{n,l,\varepsilon}\circ(\rho^{\alpha}_{n,l+1}\times\rho^{\alpha}_{n,l})(s)-T^n_\mathrm{I}H_{n,n,\varepsilon}(s,t_1)G(s,t_1)\right\|_{C^0}\nonumber\\
&&\qquad\leq \Omega\sum_{l=0}^{n-1}\exp\left(-\int_{\rho^{\alpha}_{n,l}(s)}^{\rho^{\alpha}_{n,l+1}(s)}\omega(v)\,\mathrm{d}v\right)\sum_{p=1}^2\bigl(f^{(p)}_{n}(s-t_1)\bigr)^2\exp(f^{(p)}_{n}(s-t_1))\nonumber\\
&&\qquad\leq n\Omega\exp\left((t_2-t_2)\|\omega\|_{L^1}\max_{0\leq l<n}\alpha_{n,l}\right)\sum_{p=1}^2\bigl(f^{(p)}_{n}(s-t_1)\bigr)^2\exp(f^{(p)}_{n}(s-t_1)).
\end{eqnarray}
Since (\ref{EQ71}) holds by assumption, when combined with Remark \ref{RK5.2}, it guarantees the convergence
\begin{equation}
\lim_{n\rightarrow\infty}f^{(p)}_n=0
\end{equation}
in the $C^0$ norm. Therefore, (\ref{T1S3}) follows.\par
\vspace{3mm}
\noindent\textbf{Step 4.}
\begin{equation}
\lim_{n\rightarrow\infty}\left\|T^n_\mathrm{I}H_{n,n,\varepsilon}(\cdot,t_1)G(\cdot,t_1)-T^n_\mathrm{I}G(\cdot,t_1)\right\|_{C^0}=0.
\label{T1S4}
\end{equation}
We can apply Proposition \ref{prop:growth_prop} and \ref{PROP3.7} to obtain 
\begin{eqnarray}
\left\|T^n_\mathrm{I}\bigl(H_{n,n,\varepsilon}(s,t_1)-\mathbbm{1}_{\banach}\bigr)G(s,t_1)\right\|&\leq& \exp\left(\int_{t_1}^{s}\omega(v)\,\mathrm{d}v\right)\|H_{n,n,\varepsilon}(s,t_1)-\mathbbm{1}_{\banach}\|\nonumber\\
&\leq& \exp\left(\int_{t_1}^{s}\omega(v)\,\mathrm{d}v\right)\|\mathfrak{P}^{(n)}_{\alpha,T_{\mathrm{I}}}[\overline{K}_{\varepsilon}](s)\|
\end{eqnarray}
for any $s\in [t_1,t_2]$. Since $\mathfrak{P}^{(n)}_{\alpha,T_{\mathrm{I}}}[\overline{K}_{\varepsilon}]\rightarrow 0$ in the $C^0$ norm and $\omega$ is integrable, we arrive to (\ref{T1S4}).
\par
\vspace{3mm}
\noindent
Finally, for any $\varepsilon>0$ we have
\begin{eqnarray}
&&\lim_{n\rightarrow\infty}\left\|T^{-n}_\mathrm{I}\prod_{l=0}^{n-1}TF\bigl(\rho^{\alpha}_{n,l+1}\times\rho^{\alpha}_{n,l}\bigr)-G(\cdot,t_1)\right\|\nonumber\\
&&\qquad\qquad\qquad\leq \varepsilon(t_2-t_1)\mathrm{e}^{3\|K\|_{C^0}(t_2-t_1)}\mathrm{e}^{2\varepsilon(t_2-t_1)},
\end{eqnarray}
therefore the Theorem is proved.
\end{proof}
\begin{Rk} \label{RK5.6}
Let $T\in\bbanach$ an isometric isomorphism, the admissible $\alpha$ defined through $\alpha_{n,l}=l/n$, $n\in\mathbb{N}$, $0\leq l<n$. Consider the interval $[0,t]$, $0<t<\infty$. Let $K\in\bbanach$ be equal to $K=M+L-T^{-1}LT$, where $M,L\in\bbanach$ and $M$ commutes with $T$. Then, the calculation in the third step of the proof can be repeated for the constant valued $K$ if $\mathfrak{P}_{\alpha,T_\mathrm{I}}[K]$ is replaced by $M$. Then, the following relation, which will be used later in section \ref{SEC6} can be derived:
\begin{eqnarray}
&&\left\|T^n\exp\left(\sum_{l=0}^{n-1}T^{-l}(L-T^{-1}LT)T^{l}\frac{t}{n}\right)\exp(Mt)-T^n\prod_{l=0}^{n-1}\exp\left(T^{-l}KT^l\frac{t}{n}\right)\right\|_\bbanach\nonumber\\
&&\qquad\qquad=\left\|T^n\exp\left((L-T^{-n}LT^{n})\frac{t}{n}\right)\exp(Mt)-T^n\prod_{l=0}^{n-1}\exp\left(T^{-l}KT^{l}\frac{t}{n}\right)\right\|_\bbanach\nonumber\\
&&\qquad\qquad\leq\exp\left(\omega t\right)\Bigl(\|M\|_{\bbanach}+2\|L\|_{\bbanach}\Bigr)^2\,\frac{2t^2}{n}\nonumber\\
&&\qquad\qquad\qquad\times\exp\left(\Bigl(|\omega|+\|M\|_{\bbanach}+2\|L\|_{\bbanach}\Bigr)\frac{t}{n}\right),
\end{eqnarray}
if $\omega$ dominates both weak growth bounds $\omega_K$ and $\omega_M$.
\end{Rk}

\section{Generalized adiabatic theorem}
\label{SEC6}
We now turn to the proof of the generalized adiabatic theorem. In order to be comprehensive, we begin with a specialized definition of ergodic means with respect to uniformly continuous isometric isomorphism groups and a standard Theorem concerning them.
\begin{Def} Let $\banach$ be a Banach space and $(T(t))_{t\in\mathbb{R}}\subset\bbanach$ be a  uniformly continuous group of isometric isomorphisms.  We say that the uniform ergodic mean of $K \in\bbanach $ with respect to $T$ exists, if 
\begin{equation}\label{EQ87}
\mathfrak{P}_{T}[K]\vcentcolon=\lim_{S\rightarrow\infty}\frac{1}{S}\int_{0}^ST^{-1}(s)KT(s)\,\mathrm{d}s 
\end{equation} 
exists in the uniform topology of $\bbanach$.
\end{Def}

The set of all $K\in\bbanach$ for which (\ref{EQ87}) exists is denoted by $\Dom(\mathfrak{P}_{T})$.

\begin{Thm}\label{THM6.2} Let $\banach$ be a Banach space, $(T(t))_{t\in\mathbb{R}}\subset\bbanach$ be a uniformly continuous group of isometric isomorphisms generated by $A\in\bbanach$.
\begin{enumerate}
\item $\Dom(\mathfrak{P}_{T})$ is a closed subspace of $\bbanach$ on which $\mathfrak{P}_{T}$ acts continuously.
\item For every $K\in\Dom(\mathfrak{P}_{T})$, $\mathfrak{P}_{T}[K]$ commutes with all members of $(T(t))_{t\in\mathbb{R}}$.
\item $\mathfrak{P}_{T}$ is an involution on $\Dom(\mathfrak{P}_{T})$.
\item The kernel of $\mathfrak{P}_{T}$ is the uniform closure of the linear space 
\begin{equation}
\mathcal{N}\vcentcolon=\mathrm{Lin}\{L-T^{-1}(v)LT(v):L\in\bbanach,\ v\in \mathbb{R}\}.
\end{equation}
\item The kernel of $\mathfrak{P}_{T}$ is the uniform closure of the linear space 
\begin{equation}
\mathcal{C}_A\vcentcolon=\{[L,A]:L\in\bbanach\}.
\end{equation}
\item Let $K\in \Dom(\mathfrak{P}_{T})$ with weak growth bound $\omega_K$. Then, the weak growth bound of $\mathfrak{P}_{T}[K]$ satisfies $\omega_{\mathfrak{P}_{T}[K]}\leq \omega_K$.
\end{enumerate}
\end{Thm}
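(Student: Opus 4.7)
My plan is to base everything on the partial averages $\mathfrak{P}_T^{(S)}[K] \vcentcolon= \frac{1}{S}\int_0^S T^{-1}(s)KT(s)\,\mathrm{d}s$, which are well-defined for every $K\in\bbanach$ and every $S>0$. Since each $T(s)$ is an isometric isomorphism, $\|\mathfrak{P}_T^{(S)}[K]\|_{\bbanach}\leq \|K\|_{\bbanach}$. I would first establish (1) and (2) in tandem. Linearity of $\mathfrak{P}_T^{(S)}$ together with this uniform bound yields (1) through a standard three-$\varepsilon$ argument: a uniformly convergent sequence in the domain produces a uniformly Cauchy family of partial averages at its limit, and the continuity estimate $\|\mathfrak{P}_T[K]\|_{\bbanach}\leq \|K\|_{\bbanach}$ is preserved in the limit. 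For (2), the substitution $u=s+v$ gives
\begin{equation}
T^{-1}(v)\mathfrak{P}_T^{(S)}[K]T(v)-\mathfrak{P}_T^{(S)}[K]=\frac{1}{S}\int_S^{S+v}T^{-1}(u)KT(u)\,\mathrm{d}u-\frac{1}{S}\int_0^v T^{-1}(u)KT(u)\,\mathrm{d}u,
\end{equation}
where both terms have norm at most $v\|K\|_{\bbanach}/S$, which tends to zero as $S\to\infty$ for every fixed $v$.

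Statement (3) then drops out: by (2) the integrand in $\mathfrak{P}_T^{(S)}[\mathfrak{P}_T[K]]$ is constantly equal to $\mathfrak{P}_T[K]$, hence every partial average equals $\mathfrak{P}_T[K]$. For (4), the same change-of-variables identity applied to $L-T^{-1}(v)LT(v)$ shows that $\mathcal{N}\subseteq\Ker(\mathfrak{P}_T)$; closedness of $\Ker(\mathfrak{P}_T)$, a consequence of (1), upgrades this to $\overline{\mathcal{N}}\subseteq\Ker(\mathfrak{P}_T)$. The reverse inclusion I would prove by a Hahn--Banach separation argument mirroring the proof of statement \textit{6)} of Theorem \ref{THM2.1}: if $K\in\Ker(\mathfrak{P}_T)\setminus\overline{\mathcal{N}}$, pick $\varphi\in\bbanach^*$ with $\varphi(K)=1$ and $\varphi|_{\overline{\mathcal{N}}}=0$. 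Because $K-T^{-1}(s)KT(s)\in\mathcal{N}$, we get $\varphi(K)=\varphi(T^{-1}(s)KT(s))$ for every $s$; integrating this identity against $\mathrm{d}s/S$ and passing to the limit yields $1=\varphi(K)=\varphi(\mathfrak{P}_T[K])=0$, a contradiction.

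For (5), the key identity is obtained by differentiating $s\mapsto T^{-1}(s)LT(s)$ and using that $A$ commutes with $T(\cdot)$, namely
\begin{equation}
T^{-1}(v)LT(v)-L=\int_0^v T^{-1}(s)[L,A]T(s)\,\mathrm{d}s=\int_0^v [T^{-1}(s)LT(s),A]\,\mathrm{d}s.
\end{equation}
Since the integrand is norm-continuous in $s$ and lies in $\mathcal{C}_A$, Riemann-sum approximation places $L-T^{-1}(v)LT(v)$ in $\overline{\mathcal{C}_A}$, whence $\overline{\mathcal{N}}\subseteq\overline{\mathcal{C}_A}$. For the reverse inclusion, $[L,A]=\lim_{v\to 0}v^{-1}(T^{-1}(v)LT(v)-L)$ with each finite-$v$ difference quotient lying in $\mathcal{N}$, so $\mathcal{C}_A\subseteq\overline{\mathcal{N}}$. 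Combined with (4) this delivers (5).

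For (6), I would approximate $\mathfrak{P}_T^{(S)}[K]$ by the Riemann sums $\frac{1}{n}\sum_{l=0}^{n-1}T^{-1}(lS/n)KT(lS/n)$ and apply Lemma \ref{lem:weak_gb2} with $\alpha_{n,l}=1/n$ and $K_{n,l}=T^{-1}(lS/n)KT(lS/n)$. Since isometric conjugation preserves the operator norm of $\exp(K_{n,l}t)$, each $K_{n,l}$ has weak growth bound $\omega_K$, so the lemma produces $\|\exp(\mathfrak{P}_T^{(S)}[K]t)\|_{\bbanach}\leq\exp(\omega_K t)$ for every $t\geq 0$. Uniform convergence of the exponential transports this bound to the $S\to\infty$ limit, giving $\omega_{\mathfrak{P}_T[K]}\leq\omega_K$. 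The step I expect to be most delicate is the Hahn--Banach argument in (4): one must verify that it is legitimate to pull the continuous functional $\varphi$ through a Bochner-type integral of a norm-continuous integrand, but this is a routine fact once the continuity of $s\mapsto T^{-1}(s)KT(s)$ is observed, inherited from the uniform continuity of $T(\cdot)$.
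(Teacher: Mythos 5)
Your proposal is correct and follows essentially the same route as the paper: the uniform bound $\|\mathfrak{P}_T^{(S)}[K]\|\leq\|K\|$ with a $3\varepsilon$-argument for closedness, the change-of-variables estimate for commutation, the observation that the partial averages of $\mathfrak{P}_T[K]$ are constant for the involution property, the Hahn--Banach separation argument for the kernel characterization, the differentiation/Riemann-sum bridge between $\mathcal{N}$ and $\mathcal{C}_A$, and a Trotter-based weak-growth-bound estimate (your Riemann-sum + Lemma~\ref{lem:weak_gb2} argument is precisely the content of Corollary~\ref{COR3.5}, which the paper invokes directly). The only cosmetic difference is in part 5, where you show $\overline{\mathcal{N}}=\overline{\mathcal{C}_A}$ directly and then cite part 4, while the paper first establishes $\mathcal{C}_A\subseteq\Ker(\mathfrak{P}_T)$ via continuity of $\mathfrak{P}_T$ and then proves density; both reduce to the same two lemmas.
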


\begin{proof}
$\ $\par
\noindent\textit{1)} Follows from the bound
\begin{equation}
\left\|\frac{1}{S}\int_{0}^{S}T^{-1}(s)KT(s)\, \mathrm{d}s\right\|\leq \|K\|
\end{equation}
and a standard $3\varepsilon$ argument.\par
\vspace{3mm}
\noindent\textit{2)} For every $S> 0$ and $t\in\mathbb{R}$, we have
\begin{eqnarray}
\frac{T(t)}{S}\int_{0}^{S}T^{-1}(s)KT(s)\,\mathrm{d}s&=&\frac{1}{S}\int_{0}^{S}T^{-1}(s-t)KT(s)\,\mathrm{d}s\nonumber\\
&=&\frac{1}{S}\int_{t}^{S+t}T^{-1}(s)KT(s+t)\,\mathrm{d}s\nonumber\\
&=&\frac{1}{S}\int_{t}^{S+t}T^{-1}(s)KT(s)\,\mathrm{d}sT(t).
\end{eqnarray}
Therefore, 
\begin{eqnarray}
&&\lim_{S\rightarrow \infty}\left\|\left[T(t),\frac{1}{S}\int_{0}^{S}T^{-1}(s)KT(s)\,\mathrm{d}s\right]\right\|\nonumber\\
&&\qquad\qquad=\lim_{S\rightarrow \infty}\left\|\frac{1}{S}\int_{t}^{S+t}T^{-1}(s)KT(s)\,\mathrm{d}s-\frac{1}{S}\int_{0}^{S}T^{-1}(s)KT(s)\,\mathrm{d}s\right\|\nonumber\\
&&\qquad\qquad=\lim_{S\rightarrow \infty}\left\|\frac{1}{S}\int_{S}^{S+t}T^{-1}(s)KT(s)\,\mathrm{d}s-\frac{1}{S}\int_{0}^{t}T^{-1}(s)KT(s)\,\mathrm{d}s\right\|\nonumber\\
&&\qquad\qquad\leq \lim_{S\rightarrow \infty}\frac{2\|K\|t}{S}=0.
\end{eqnarray}
\vspace{3mm}
\noindent\textit{3)} Straightforward consequence of \textit{2)}.\par
\vspace{3mm}
\noindent\textit{4)} Let $t\in\mathbb{R}$, $L\in\bbanach$. Then, 
\begin{eqnarray}
&&\frac{1}{S}\int_{0}^ST^{-1}(s)\bigl(L-T^{-1}(t)LT(t)\bigr)T(s)\,\mathrm{d}s\nonumber\\
&&\qquad\qquad=\frac{1}{S}\int_{0}^ST^{-1}(s)LT(s)\,\mathrm{d}s-\frac{1}{S}\int_{t}^{S+t}T^{-1}(s)LT(s)\,\mathrm{d}s\nonumber\\
&&\qquad\qquad=\frac{1}{S}\left(\int_{0}^{t}T^{-1}(s)LT(s)\,\mathrm{d}s-\int_{S}^{S+t}T^{-1}(s)LT(s)\,\mathrm{d}s\right).
\end{eqnarray}
Therefore, 
\begin{equation}
\left\|\frac{1}{S}\int_{0}^ST^{-1}(s)\bigl(L-T^{-1}(t)LT(t)\bigr)T(s)\,\mathrm{d}s\right\|\leq \frac{2\|L\|t}{S},
\end{equation}
so $\mathfrak{P}_{T}[\mathcal{N}]=0$.

Assume that $K\in \Ker(\mathfrak{P}_{T})\setminus\,\overline{\mathcal{N}}$, where $\overline{\mathcal{N}}$ is the $\|\cdot\|_{\bbanach}$-closure of $\mathcal{N}$. Since $K\neq 0$, provided by the Hahn-Banach separation theorem, there exists $\varphi\in\bbanach^{*}$ such that $\varphi(K)=1$ and $\varphi(\overline{\mathcal{N}})=0$. Let $t\in\mathbb{R}$ be arbitrary, then 
\begin{eqnarray}
\varphi(K)&=&\varphi(K)-\varphi(T^{-1}(t)KT(t))+\varphi(T^{-1}(t)KT(t))\nonumber\\
&=&\varphi(K-T^{-1}(t)KT(t))+\varphi(T^{-1}(t)KT(t))\nonumber\\
&=&\varphi(T^{-1}(t)KT(t)).
\end{eqnarray}
Therefore, 
\begin{eqnarray}
\varphi\left(\frac{1}{S}\int_{0}^{S}T^{-1}(s)KT(s)\,\mathrm{d}s\right)&=&\frac{1}{S}\int_{0}^{S}\varphi\left(T^{-1}(s)KT(s)\right)\,\mathrm{d}s\nonumber\\
&=&\frac{1}{S}\int_{0}^{S}\varphi\left(K\right)\,\mathrm{d}s\nonumber\\
&=&\varphi\left(K\right).
\end{eqnarray}
However, then we obtain
\begin{equation}
0=\varphi(0)=\varphi(\mathfrak{P}_{T}[K])=\lim_{S\rightarrow\infty}\varphi\left(\frac{1}{S}\int_{0}^{S}T^{-1}(s)KT(s)\,\mathrm{d}s\right)=\varphi(K)=1,
\end{equation}
which is impossible.\par
\vspace{3mm}
\noindent\textit{5)} First, we prove that $\mathcal{C}_A\subseteq \Ker(\mathfrak{P}_{T})$, then we show that $\mathcal{C}_A$ is dense in $\mathcal{N}$. To prove the first statement, note that for any $L\in\bbanach$ and any $\varepsilon>0$, 
\begin{equation}
\frac{L-T^{-1}(\varepsilon)LT(\varepsilon)}{\varepsilon}\in\mathcal{N}.
\end{equation}
Since $\Ker(\mathfrak{P}_{T})$ is closed and $\mathfrak{P}_{T}$ is continuous on its domain,  we have
\begin{equation}
\mathfrak{P}_{T}\bigl[[L,A]\bigr]=\lim_{\varepsilon\rightarrow 0}\mathfrak{P}_{T}\left[\frac{L-T^{-1}(\varepsilon)LT(\varepsilon)}{\varepsilon}\right]=0.
\end{equation}
Therefore $\mathcal{C}_A\subseteq \Ker(\mathfrak{P}_{T})$. To see that $\mathcal{C}_A$ is dense in $\mathcal{N}$, first observe 
\begin{equation}
L-T^{-1}(v)LT(v)=-\int_{0}^{v}\bigl(\partial_t(T^{-1}LT)\bigr)(s)\,\mathrm{d}s=-\int_{0}^{v}[T^{-1}(s)LT(s),A]\,\mathrm{d}s.
\end{equation} 
Since $T(\cdot)$ is uniformly continuous, for any $v\in\mathbb{R}$, $v> 0$, $L\in\bbanach$ and any $\varepsilon>0$, there exists $N(\varepsilon,L,v)\in\mathbb{N}$, such that  if $n>N(\varepsilon,L,v)$, then
\begin{equation}
\max_{0\leq l\leq n}\sup_{s\in [lv/n,(l+1)v/n]}\|[T^{-1}(lv/n)LT(lv/n),A]-[T^{-1}(s)LT(s),A]\|\leq \varepsilon,
\end{equation}
that is 
\begin{eqnarray}
\left\|\frac{v}{n}\sum_{l=0}^{n}[T^{-1}(lv/n)LT(lv/n),A]-\int_{0}^{v}[T^{-1}(s)LT(s),A]\right\|\leq v\varepsilon,
\end{eqnarray}
whenever $n>N(\varepsilon,L,v)$. Since 
\begin{eqnarray}
\frac{v}{n}\sum_{l=0}^{n}[T^{-1}(lv/n)LT(lv/n),A]\in\mathcal{C}_A,
\end{eqnarray}
any member of $\mathcal{N}$ of the form $L-T^{-1}(v)LT(v)$, $v>0$, $L\in\bbanach$ can be approximated by members of $\mathcal{C}_A$. For $v=0$, the approximation is trivial, while for $v<0$, a similar argument works. Since members of the form $L-T^{-1}(v)LT(v)$, $v\in\mathbb{R}$, $L\in\bbanach$ generates $\mathcal{N}$, it follows that $\mathcal{C}_A$ is dense in $\mathcal{N}$, therefore its closure is the kernel of $\mathfrak{P}_T$ provided by \textit{4)}.\par
\vspace{3mm}
\noindent\textit{6)} Let $S>0$ and $v\in\mathbb{R}_0^{+}$. For any $0\leq s\leq S$, provided that $T(s)$ is an isometric isomorphism, we have $\omega_{T^{-1}(s)KT(s)}=\omega_{K}$. Using Corollary \ref{COR3.5}, we obtain
\begin{equation}
\left\|\exp\left(\frac{v}{S}\int_{0}^{S}T^{-1}(s)KT(s)\,\mathrm{d}s\right)\right\|\leq\exp\left(\frac{v}{S}\int_{0}^{S}\omega_{T^{-1}(s)KT(s)}\,\mathrm{d}s\right)=\exp(\omega_Kv). 
\end{equation}
Therefore,
\begin{eqnarray}
\left\|\exp\left(\mathfrak{P}_{T}[K]v\right)\right\|&=&\lim_{S\rightarrow\infty}\left\|\exp\left(\frac{v}{S}\int_{0}^{S}T^{-1}(s)KT(s)\,\mathrm{d}s\right)\right\|\nonumber\\
&\leq&\exp\left(\omega_Kv\right),
\end{eqnarray}
which proves the statement.
\end{proof}

\begin{Thm}\label{THM6.3} Let $A\in\bbanach$ be the  generator of $(T(t))_{t\in\mathbb{R}}$, a uniformly continuous group of isometric isomorphisms.  Let $K\in\bbanach$ and assume that $\mathfrak{P}_{T}[K]$ exists. Then, for any $t\in \mathbb{R}_0^+$,
\begin{equation}
\lim_{0<\gamma\rightarrow\infty}\left\|\exp\bigl((\gamma A+K)t\bigr)-T(\gamma t)\exp(\mathfrak{P}_{T}[K] t)\right\|_\bbanach=0.
\label{EQ91}
\end{equation}
\end{Thm}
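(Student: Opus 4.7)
The strategy is to pass to the ``interaction picture'' and exploit ergodic averaging in time. Set $U_\gamma(t)\vcentcolon=T(-\gamma t)F_\gamma(t)$. Since $A$ commutes with every $T(s)$, differentiation gives
\begin{equation}
U_\gamma'(t)=T(-\gamma t)KT(\gamma t)U_\gamma(t),\qquad U_\gamma(0)=\mathbbm{1}_{\banach},
\end{equation}
while the Trotter product formula and $\|T(\cdot)\|_{\bbanach}=1$ yield $\|U_\gamma(t)\|_{\bbanach}\leq e^{\|K\|_{\bbanach} t}$ uniformly in $\gamma$. Writing $P\vcentcolon=\mathfrak{P}_T[K]$, Theorem~\ref{THM6.2}\,\textit{(2)} says that $P$ commutes with every $T(s)$, so
\begin{equation}
F_\gamma(t)-T(\gamma t)\exp(Pt)=T(\gamma t)\bigl(U_\gamma(t)-\exp(Pt)\bigr).
\end{equation}
Isometry of $T(\gamma t)$ therefore reduces the theorem to proving $\|U_\gamma(t)-\exp(Pt)\|_{\bbanach}\to 0$ as $\gamma\to\infty$.

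Decompose $K=P+Q$ with $Q\vcentcolon=K-P$. Since $P$ commutes with $T$, one has $\mathfrak{P}_T[P]=P$ directly from the definition, and idempotence of $\mathfrak{P}_T$ (Theorem~\ref{THM6.2}\,\textit{(3)}) then forces $\mathfrak{P}_T[Q]=0$. Because $T$-conjugation fixes $P$, the generator of $U_\gamma$ decomposes as $T(-\gamma t)KT(\gamma t)=P+H_\gamma(t)$, with $H_\gamma(t)\vcentcolon=T(-\gamma t)QT(\gamma t)$. Setting $\Phi_\gamma(t)\vcentcolon=U_\gamma(t)-\exp(Pt)$ and using that $P$ commutes with $\exp(Pt)$, a direct computation gives
\begin{equation}
\Phi_\gamma'(t)=(P+H_\gamma(t))\Phi_\gamma(t)+H_\gamma(t)\exp(Pt),\qquad\Phi_\gamma(0)=0.
\end{equation}
Writing this in integral form and applying Gronwall's inequality (the homogeneous generator has norm at most $\|P\|_{\bbanach}+\|Q\|_{\bbanach}$ uniformly in $t,\gamma$) reduces the problem to showing
\begin{equation}
\sup_{s\in[0,t]}\left\|\int_0^sH_\gamma(u)\exp(Pu)\,\mathrm{d}u\right\|_{\bbanach}\longrightarrow 0\qquad\text{as }\gamma\to\infty.
\end{equation}

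To extract the decay I integrate by parts against the primitive $J_\gamma(u)\vcentcolon=\int_0^u H_\gamma(v)\,\mathrm{d}v$:
\begin{equation}
\int_0^s H_\gamma(u)\exp(Pu)\,\mathrm{d}u=J_\gamma(s)\exp(Ps)-\int_0^s J_\gamma(u)\,P\exp(Pu)\,\mathrm{d}u,
\end{equation}
so matters come down to $\sup_{u\in[0,t]}\|J_\gamma(u)\|_{\bbanach}\to 0$. The change of variable $v=\gamma u$ yields, for $u>0$,
\begin{equation}
J_\gamma(u)=u\left(\frac{1}{\gamma u}\int_0^{\gamma u}T(-v)KT(v)\,\mathrm{d}v-P\right),
\end{equation}
and the parenthesised factor vanishes as $\gamma u\to\infty$ by the very definition of $\mathfrak{P}_T[K]$. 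Uniformity of this convergence on $[0,t]$ is handled by a standard splitting: on $[0,u_0]$ use the crude bound $\|J_\gamma(u)\|_{\bbanach}\leq 2\|K\|_{\bbanach}u_0$, and on $[u_0,t]$ use the defining convergence of the ergodic mean (valid uniformly for $u\in[u_0,t]$ once $\gamma u_0$ exceeds the required threshold); then send first $\gamma\to\infty$ and afterwards $u_0\to 0$. This uniformisation is the only slightly delicate step; everything else is linear manipulation, hinging solely on the $\gamma$-uniform bound $\|F_\gamma(t)\|_{\bbanach}\leq e^{\|K\|_{\bbanach}t}$ and on the fact that $Q\in\Ker(\mathfrak{P}_T)$ gets averaged away by $T(\pm\gamma\cdot)$.
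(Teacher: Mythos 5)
Your argument is correct, and it is genuinely different from the paper's. The paper proceeds via a discretized Trotter scheme and the algebraic characterization of $\Ker(\mathfrak{P}_T)$ from Theorem~\ref{THM6.2}\,\textit{5)}: it approximates $K-\mathfrak{P}_T[K]$ by a commutator $[L_\varepsilon,A]$, runs the explicit product-formula estimate of Remark~\ref{RK5.6} on the truncated generator, and then removes the $\varepsilon$-error. That route yields a quantitative rate of the form
\begin{equation*}
\left\|\exp\bigl((\gamma A+K)t\bigr)-T(\gamma t)\exp\bigl(\mathfrak{P}_T[K]t\bigr)\right\|_{\bbanach}
\leq \frac{2\|L_\varepsilon\|_{\bbanach}}{\gamma}\exp\!\left(\frac{2\|L_\varepsilon\|_{\bbanach}}{\gamma}\right)+\varepsilon t\,\mathrm{e}^{(\|K\|_{\bbanach}+\varepsilon)t},
\end{equation*}
which plugs directly into the subsequent Zeno/adiabatic estimates. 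You instead pass to the interaction picture $U_\gamma(t)=T(-\gamma t)F_\gamma(t)$, set up the Duhamel (variation-of-constants) equation for $\Phi_\gamma=U_\gamma-\exp(\mathfrak{P}_T[K]\,\cdot\,)$, integrate by parts against the primitive $J_\gamma$, and then hit $J_\gamma$ directly with the definition of the Ces\`aro ergodic mean; the two-regime splitting over $[0,u_0]$ and $[u_0,t]$ to uniformize the averaging is exactly the right trick. This is shorter and more transparent, avoids both the Trotter discretization and the kernel characterization via $\mathcal{C}_A=\{[L,A]:L\in\bbanach\}$, and uses nothing beyond linearity of $\mathfrak{P}_T$, the commutation $\mathfrak{P}_T[K]T(s)=T(s)\mathfrak{P}_T[K]$, and Gr\"onwall. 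What the paper's approach buys in exchange is an explicit $\gamma^{-1}$-type bound parametrized by the approximant $L_\varepsilon$, which it then reuses verbatim in the time-dependent Theorem~6.4 (Step~3); your cleaner proof would have to be supplemented by a separate rate estimate to serve the same role there, since the speed of the ergodic averaging is not a priori quantified.

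One cosmetic remark: invoking idempotence to get $\mathfrak{P}_T[Q]=0$ is an unnecessary detour — linearity of $\mathfrak{P}_T$ together with $\mathfrak{P}_T[P]=P$ (immediate since $P$ is $T$-invariant) already gives it. This does not affect correctness.
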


\begin{proof}
We define $T_\gamma(t)\vcentcolon =T(\gamma t)$ for all $\gamma>0$ and $t\in\mathbb{R}_0^{+}$. A short calculation gives
\begin{eqnarray}
\exp\bigl((\gamma A+K)t\bigr)&=&\lim_{n\rightarrow \infty}\prod_{l=0}^{n-1}\exp(\gamma At/n)\exp(K t/n)\nonumber\\
&=&T_\gamma(t)\lim_{n\rightarrow \infty}\prod_{l=0}^{n-1}\exp\left(T_\gamma^{-l}(t/n)K T^{l}_\gamma(t/n)\frac{t}{n}\right).
\label{EQ87a}
\end{eqnarray}
\noindent\textbf{Step 1.} Let $K_\varepsilon\in\bbanach$ satisfying $\|K-K_\varepsilon\|\leq \varepsilon$. Then, 
\begin{equation}
\left\|\exp\bigl((\gamma A+K)t\bigr)-\exp\bigl((\gamma A+K_\varepsilon)t\bigr)\right\|\leq \varepsilon t\exp\left(\bigl(\|K\|+\varepsilon\bigr)t\right).
\label{EQ87b}
\end{equation}
To see this, note that the application of Proposition \ref{PROP3.7} with the trivial upper bound $\|K\|+\varepsilon$ on the weak growth bounds $\omega_K$ and $\omega_{K_\varepsilon}$ gives 
\begin{eqnarray}
&&\left\|\exp\left(T_\gamma^{-l}(t/n)K T^{l}_\gamma(t/n)\frac{t}{n}\right)-\exp\left(T_\gamma^{-l}(t/n)K_\varepsilon T^{l}_\gamma(t/n)\frac{t}{n}\right)\right\|\nonumber\\
&&\qquad\leq\exp\left(\bigl(\|K\|+\varepsilon\bigr)\frac{t}{n}\right)\int^{t/n}_{0}\|T_\gamma^{-l}(t/n)(K-K_\varepsilon)T^l_\gamma(t/n)\|\,\mathrm{d}s\nonumber\\
&&\qquad\leq\varepsilon \exp\left(\bigl(\|K\|+\varepsilon\bigr)\frac{t}{n}\right)\frac{t}{n}.
\end{eqnarray}
Therefore, using (\ref{F1}), we obtain
\begin{eqnarray}
&&\Biggl\|\prod_{l=0}^{n-1}\exp\left(T_\gamma^{-l}(t/n)K T^{l}_\gamma(t/n)\frac{t}{n}\right)-\prod_{l=0}^{n-1}\exp\left(T_\gamma^{-l}(t/n)K_\varepsilon T^{l}_\gamma(t/n)\frac{t}{n}\right)\Biggr\|\nonumber\\
&&\qquad\qquad\qquad \leq \varepsilon t\exp\left(\bigl(\|K\|+\varepsilon\bigr)t\right),
\end{eqnarray}
from which, when combined with (\ref{EQ87a}), (\ref{EQ87b}) follows.\par
\vspace{3mm}\par
\noindent\textbf{Step 2.} Let $F\in\Dom(\mathfrak{P}_{T})$ be of the form $F=\mathfrak{P}_{T}[F]+[L,A]$ for some $L\in \bbanach$. Then, 
\begin{equation}
\left\|\exp\bigl((\gamma A+F)t\bigr)-T_\gamma(t)\exp\bigl(\mathfrak{P}_{T}[F]t\bigr)\right\|\leq \frac{2\|L\|}{\gamma}\exp\left(\frac{2\|L\|}{\gamma}\right).
\label{EQ90}
\end{equation}
To see this, we first use statement \textit{2)} of Theorem \ref{THM6.2} to get
\begin{eqnarray}
&&T_\gamma(t)\lim_{n\rightarrow \infty}\prod_{l=0}^{n-1}\exp\left(T_\gamma^{-l}(t/n)F T^{l}_\gamma(t/n)\frac{t}{n}\right)\nonumber\\
&&\qquad\qquad=T_\gamma(t)\lim_{n\rightarrow \infty}\prod_{l=0}^{n-1}\exp\left(\mathfrak{P}_{T}[F]\frac{t}{n}+T_\gamma^{-l}(t/n)[L,A]T^{l}_\gamma(t/n)\frac{t}{n}\right).
\end{eqnarray}
Let $\varepsilon>0$ be given. Since $T(\cdot)$ is uniformly continuous on the interval $[0,\gamma t]$, there exists $N(\varepsilon,\gamma t)\in\mathbb{N}$, such that if $n>N(\varepsilon,\gamma t)$, then 
\begin{equation}
\max_{0\leq l<n-1}\sup_{s\in \left[\frac{lt}{n},\frac{(l+1)t}{n}\right]}\left\|T(s)-T\left(\frac{lt}{n}\right)\right\|\leq \varepsilon.
\end{equation}
But in that case, a short calculation shows 
\begin{eqnarray}
&&\left\|T_\gamma^{-l}(t/n)[L,A]T^{l}_\gamma(t/n)\frac{t}{n}-\int_{lt/n}^{(l+1)t/n}T^{-1}_\gamma(s)[L,A]T_\gamma(s)\,\mathrm{d}s\right\|\nonumber\\
&&\qquad=\left\|T_\gamma^{-1}(lt/n)[L,A]T_\gamma(lt/n)\frac{t}{n}-\int_{lt/n}^{(l+1)t/n}T_\gamma^{-1}(s)[L,A]T_\gamma(s)\,\mathrm{d}s\right\|\nonumber\\
&&\qquad\leq 2\varepsilon \|[L,A]\|\frac{t}{n}.
\label{EQ88}
\end{eqnarray}
Therefore, combining the upper bounds 
\begin{eqnarray}
&&\left\|\mathfrak{P}_{T}[F]\frac{t}{n}+\int_{lt/n}^{(l+1)t/n}T^{-1}_\gamma(s)[L,A]T_\gamma(s)\,\mathrm{d}s\right\|\nonumber\\
&&\qquad\qquad\qquad=\left\|\int_{lt/n}^{(l+1)t/n}T^{-1}_\gamma(s)\bigl(\mathfrak{P}_{T}[F]+[L,A]\bigl)T_\gamma(s)\,\mathrm{d}s\right\|\nonumber\\
&&\qquad\qquad\qquad\leq \|F\|\frac{t}{n},\nonumber\\
&&\left\|T^{-1}_\gamma(lt/n)FT_\gamma(lt/n)\frac{t}{n}\right\|\leq \|F\|\frac{t}{n}
\end{eqnarray}
with (\ref{EQ88}), Proposition \ref{PROP3.7} and (\ref{F1}), we can arrive to
\begin{eqnarray}
&&\Biggl\|T_\gamma(t)\prod_{l=0}^{n-1}\exp\left(T_\gamma^{-l}(t/n)F T^{l}_\gamma(t/n)\frac{t}{n}\right)\nonumber\\
&&\qquad-T_\gamma(t)\prod_{l=0}^{n-1}\exp\left(\mathfrak{P}_{T}[F]\frac{t}{n}+\int_{lt/n}^{(l+1)t/n}T^{-1}_\gamma(s)[L,A]T_\gamma(s)\,\mathrm{d}s\right)\Biggr\|\nonumber\\
&&\leq 2\varepsilon \|[L,A]\|\exp(\|F\|t).
\end{eqnarray}
Since $\varepsilon>0$ can be chosen arbitrarily, we have
\begin{eqnarray}
&&\lim_{n\rightarrow\infty}T_\gamma(t)\prod_{l=0}^{n-1}\exp\left(T_\gamma^{-l}(t/n)F T^{l}_\gamma(t/n)\frac{t}{n}\right)\nonumber\\
&&\qquad=\lim_{n\rightarrow\infty}T_\gamma(t)\prod_{l=0}^{n-1}\exp\left(\mathfrak{P}_{T}[F]\frac{t}{n}+\int_{lt/n}^{(l+1)t/n}T^{-1}_\gamma(s)[L,A]T_\gamma(s)\,\mathrm{d}s\right).
\label{EQ89}
\end{eqnarray}
However,
\begin{eqnarray}
\int_{lt/n}^{(l+1)t/n}T^{-1}_\gamma(s)[L,A]T_\gamma(s)\,\mathrm{d}s&=&\frac{1}{\gamma}\int_{lt/n}^{(l+1)t/n}\bigl(\partial_t(T^{-1}_\gamma LT_\gamma)\bigr)(s)\,\mathrm{d}s\nonumber\\
&=&\frac{1}{\gamma}T^{-l}_\gamma(t/n)\bigr(T^{-1}_\gamma(t/n)LT_\gamma(t/n)-L\bigr)T^{l}_\gamma(t/n),
\end{eqnarray}
that is 
\begin{eqnarray}
&&T_\gamma(t)\prod_{l=0}^{n-1}\exp\left(\mathfrak{P}_{T}[F]\frac{t}{n}+\int_{lt/n}^{(l+1)t/n}T^{-1}_\gamma(s)[L,A]T_\gamma(s)\,\mathrm{d}s\right)\nonumber\\
&&\qquad=T_\gamma(t)\prod_{l=0}^{n-1}\exp\left(\mathfrak{P}_{T}[F]\frac{t}{n}+\frac{1}{\gamma}T^{-l}_\gamma(t/n)\bigr(T^{-1}_\gamma(t/n)LT_\gamma(t/n)-L\bigr)T^{l}_\gamma(t/n)\right).
\end{eqnarray}
Applying the statement of Remark \ref{RK5.6} with $\omega\vcentcolon =\|F\|$, we get
\begin{eqnarray}
&&\Biggl\|T_\gamma(t)\exp\left(\frac{1}{\gamma}\bigl(T^{-1}_\gamma(t)LT_\gamma(t)-L\bigr)\right)\exp\bigl(\mathfrak{P}_{T}[F]t\bigr)\nonumber\\
&&\qquad-T_\gamma(t)\prod_{l=0}^{n-1}\exp\left(\mathfrak{P}_{T}[F]\frac{t}{n}+\int_{lt/n}^{(l+1)t/n}T^{-1}_\gamma(s)[L,A]T_\gamma(s)\,\mathrm{d}s\right)\Biggr\|\nonumber\\
&&\leq\exp\left(\|F\| t\right)\Bigl(\|\mathfrak{P}_{T}[F]\|+2\gamma^{-1}\|L\|\Bigr)^2\,\frac{2t^2}{n}\nonumber\\
&&\qquad\times\exp\left(\Bigl(\|F\|+\|\mathfrak{P}_{T}[F]\|+2\gamma^{-1}\|L\|\Bigr)\frac{t}{n}\right).
\end{eqnarray}
Since $\gamma$, as well as $t$ is fixed, in the light of (\ref{EQ89}) this gives us 
\begin{eqnarray}
&&\lim_{n\rightarrow\infty}T_\gamma(t)\prod_{l=0}^{n-1}\exp\left(T_\gamma^{-l}(t/n)F T^{l}_\gamma(t/n)\frac{t}{n}\right)\nonumber\\
&&\qquad=T_\gamma(t)\exp\left(\frac{1}{\gamma}\bigl(T^{-1}_\gamma(t)LT_\gamma(t)-L\bigr)\right)\exp\bigl(\mathfrak{P}_{T}[F]t\bigr)
\end{eqnarray}
Using (\ref{F3}), we have
\begin{eqnarray}
\left\|\exp\left(\frac{1}{\gamma}\bigl(T^{-1}_\gamma(t)LT_\gamma(t)-L\bigr)\right)-\mathbbm{1}_{\banach}\right\|\leq \frac{2\|L\|}{\gamma}\exp\left(\frac{2\|L\|}{\gamma}\right),
\end{eqnarray}
so we arrive to 
\begin{equation}
\left\|\exp\bigl((\gamma A+F)t\bigr)-T_\gamma(t)\exp\bigl(\mathfrak{P}_{T}[F]t\bigr)\right\|\leq \frac{2\|L\|}{\gamma}\exp\left(\frac{2\|L\|}{\gamma}\right),
\end{equation}
which is just (\ref{EQ90}).\par
\vspace{3mm}
\noindent\textbf{Step 3.} Using the assumption of the Theorem, $\mathfrak{P}_{T}[K]$ exists. Using statement \textit{3)} of Theorem \ref{THM6.2}, $\overline{K}\vcentcolon =K-\mathfrak{P}_{T}[K]$ is in the kernel of $\mathfrak{P}_{T}$. Provided by statement \textit{5)} of the same Theorem, there exists $L_\varepsilon\in\bbanach$ such that $\|\overline{K}-[L_\varepsilon,A]\|\leq \varepsilon$. Using (\ref{EQ87b}) and (\ref{EQ90}) we have  
\begin{eqnarray}
&&\left\|\exp\bigl((\gamma A+K)t\bigr)-T_\gamma(t)\exp\bigl(\mathfrak{P}_{T}[K]t\bigr)\right\|\nonumber\\
&&\qquad\qquad\leq \frac{2\|L_\varepsilon\|}{\gamma}\exp\left(\frac{2\|L_\varepsilon\|}{\gamma}\right)+\varepsilon t\exp\left(\bigl(\|K\|+\varepsilon\bigr)t\right),
\label{EQ94}
\end{eqnarray}
therefore 
\begin{equation}
\lim_{0<\gamma\rightarrow\infty}\left\|\exp\bigl((\gamma A+K)t\bigr)-T_\gamma(t)\exp\bigl(\mathfrak{P}_{T}[K]t\bigr)\right\|\leq \varepsilon t\exp\left(\bigl(\|K\|+\varepsilon\bigr)t\right),
\end{equation}
Since $\varepsilon>0$ is arbitrary, we proved (\ref{EQ91}).
\end{proof}

\begin{Thm}[Generalized adiabatic theorem of contractions] Let $\banach$ be a Banach space with a decomposition $\banach=\banach_{\mathrm{I}}\oplus\banach_{\mathrm{C}}$ such that both $\banach_{\mathrm{I}}$ and $\banach_{\mathrm{C}}$ are closed subspaces of $\banach$ and the continuous projection $P_{\mathrm{I}}$ corresponding to $\banach_{\mathrm{I}}$ and its complementary projection $P_{\mathrm{C}}$ are norm one. Assume that the uniformly continuous semi-group $(T(t))_{t\geq 0}\subset\bbanach$ generated by $A\in\bbanach$ is of the form $T=T_{\mathrm{I}}\oplus T_{\mathrm{C}}$, where $(T_\mathrm{I}(t))_{t\geq 0}\subset\mathcal{B}(\banach_{\mathrm{I}})$ is a restriction of a group of isometric isomorphisms of $\banach$ to $\mathbb{R}_0^+$,  while $(T_\mathrm{C}(t))_{t\geq 0}\subset\mathcal{B}(\banach_{\mathrm{C}})$ is a semi-group of strict contractions with growth bound $w_\mathrm{C}<0$.  Let $K\in C^0([t_1,t_2],\bbanach)$, $0\leq t_1<t_2<\infty$  and let  $F_\gamma(\cdot,\cdot)$ be the solution of the initial value problem (\ref{Q2b}). Assume that for all $t\in[t_1,t_2]$, $\mathfrak{P}_{T_\mathrm{I}}[K(t)]$ exists and let $G(\cdot,t_1)$ be the solution of the initial value problem 
\begin{eqnarray}
\partial_1G(u,t_1)&=&\mathfrak{P}_{T_\mathrm{I}}[P_\mathrm{I}K(u)P_\mathrm{I}]G(u,t_1)\qquad\qquad t_{1}\leq u\leq t_{2}\nonumber\\
G(t_1,t_1)&=&P_{\mathrm{I}}.
\end{eqnarray}
Moreover, assume that for all $\varepsilon>0$, there exists $L_\varepsilon\in C^{0,1}([t_1,t_2],\mathcal{B}(\mathcal{X}_\mathrm{I}))$ such that 
\begin{equation}
\left\|\bigl(P_\mathrm{I}KP_{\mathrm{I}}-\mathfrak{P}_{T_\mathrm{I}}[P_\mathrm{I}KP_\mathrm{I}]\bigr)-[L_\varepsilon,A_\mathrm{I}]\right\|_{C^0}\leq \varepsilon,
\label{THM6.3AS}
\end{equation}
where $A_\mathrm{I}\in\mathcal{B}(\mathcal{X}_\mathrm{I})$ generates $(T_\mathrm{I}(t))_{t\geq 0}$.
Then, for any $t\in (t_1,t_2]$
\begin{equation}
\lim_{0<\gamma\rightarrow\infty}\left\|F_\gamma(t,t_1)-T_{\mathrm{I},\gamma}(t)G(t,t_1)\right\|_{\bbanach}=0,
\label{EQ96}
\end{equation}
where $T_{\mathrm{I},\gamma}(t)=T_{\mathrm{I}}(\gamma t)$ and the convergence is uniform on every compact subset of $(t_1,t_2]$, whenever $K$ is constant valued. 
\end{Thm}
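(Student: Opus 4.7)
The plan is to reduce first to the case of a time-constant $K$ and then, within that case, to the isometric subspace $\banach_{\mathrm{I}}$, so that Theorem \ref{THM6.3} can be invoked directly. For constant $K$ I split $K=K_D+K_O$ with block-diagonal part $K_D\vcentcolon=P_{\mathrm{I}}KP_{\mathrm{I}}+P_{\mathrm{C}}KP_{\mathrm{C}}$ and off-diagonal part $K_O\vcentcolon=P_{\mathrm{I}}KP_{\mathrm{C}}+P_{\mathrm{C}}KP_{\mathrm{I}}$, and use the Dyson representation
\begin{equation*}
\mathrm{e}^{(\gamma A+K)(t-t_1)}=\mathrm{e}^{(\gamma A+K_D)(t-t_1)}+\sum_{n\geq 1}\int_{t_1<s_1<\dots<s_n<t}\mathrm{e}^{(\gamma A+K_D)(t-s_n)}K_O\cdots K_O\,\mathrm{e}^{(\gamma A+K_D)(s_1-t_1)}\,\mathrm{d}s,
\end{equation*}
so that the block-diagonal zeroth-order term $\mathrm{e}^{(\gamma A_{\mathrm{I}}+K_{\mathrm{II}})(t-t_1)}\oplus\mathrm{e}^{(\gamma A_{\mathrm{C}}+K_{\mathrm{CC}})(t-t_1)}$ is handled blockwise while every higher-order term in the series carries at least one contractive $\banach_{\mathrm{C}}$-factor.

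On the block-diagonal part, Proposition \ref{prop:growth_prop} together with the hypothesis $w_{\mathrm{C}}<0$ yields $\|\mathrm{e}^{(\gamma A_{\mathrm{C}}+K_{\mathrm{CC}})(t-t_1)}\|_{\bbanach}\leq\mathrm{e}^{(\gamma w_{\mathrm{C}}+\|K_{\mathrm{CC}}\|_{\bbanach})(t-t_1)}$, which tends to zero as $\gamma\to\infty$ uniformly on compact subsets of $(t_1,t_2]$. On $\banach_{\mathrm{I}}$, $A_{\mathrm{I}}$ generates the uniformly continuous group of isometric isomorphisms $(T_{\mathrm{I}}(t))_{t\in\mathbb{R}}$, and I would apply Theorem \ref{THM6.3} with $A\vcentcolon=A_{\mathrm{I}}$ and $K\vcentcolon=K_{\mathrm{II}}$: the approximability assumption (\ref{THM6.3AS}) is precisely the input required to run Step 3 of the proof of Theorem \ref{THM6.3}, and the estimate (\ref{EQ94}) applied in $\mathcal{B}(\banach_{\mathrm{I}})$ gives
\begin{equation*}
\|\mathrm{e}^{(\gamma A_{\mathrm{I}}+K_{\mathrm{II}})(t-t_1)}-T_{\mathrm{I},\gamma}(t-t_1)\,\mathrm{e}^{\mathfrak{P}_{T_{\mathrm{I}}}[K_{\mathrm{II}}](t-t_1)}\|_{\mathcal{B}(\banach_{\mathrm{I}})}\longrightarrow 0
\end{equation*}
as $\gamma\to\infty$, uniformly on compact subsets of $\mathbb{R}_0^+$.

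To control the off-diagonal tail, observe that $K_O$ maps $\banach_{\mathrm{I}}$ into $\banach_{\mathrm{C}}$ and vice versa, so in each $n$-fold integrand of the Dyson series the intermediate block-diagonal propagators alternate blocks, and at least $\lceil n/2\rceil$ of them are of the form $\mathrm{e}^{(\gamma A_{\mathrm{C}}+K_{\mathrm{CC}})\tau}$. Using the decay bound above and straightforward estimates on the time simplex (each $\banach_{\mathrm{C}}$-dwell integrates to at most $(-\gamma w_{\mathrm{C}}-\|K_{\mathrm{CC}}\|_{\bbanach})^{-1}$), each $n$-fold integral is bounded by $C_n\gamma^{-\lceil n/2\rceil}$ with constants $C_n$ summable in $n$ for $\gamma$ large, yielding an off-diagonal contribution of order $O(1/\gamma)$. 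Combined with the previous paragraph this gives the constant-$K$ conclusion, uniformly on compact subsets of $(t_1,t_2]$. For a general $K\in C^0([t_1,t_2],\bbanach)$ I would approximate it by a piecewise constant $\tilde K$ on a partition of $[t_1,t_2]$; Proposition \ref{PROP3.7} with the $\gamma$-independent dominator $\omega\equiv\|K\|_{C^0}$ (valid because $\omega_{\gamma A}\leq 0$ and, by statement \textit{6)} of Theorem \ref{THM6.2}, $\omega_{\mathfrak{P}_{T_{\mathrm{I}}}[P_{\mathrm{I}}K(\cdot)P_{\mathrm{I}}]}\leq\|K\|_{C^0}$) makes $\|F_\gamma-\tilde F_\gamma\|_{\bbanach}$ and $\|G-\tilde G\|_{\bbanach}$ of order $\|K-\tilde K\|_{C^0}$ uniformly in $\gamma$; finally, statement \textit{2)} of Theorem \ref{THM6.2} implies that $T_{\mathrm{I},\gamma}(\tau)$ commutes with each $\mathrm{e}^{\mathfrak{P}_{T_{\mathrm{I}}}[P_{\mathrm{I}}\tilde K(s)P_{\mathrm{I}}]\tau}$, so that the piecewise product of constant-case approximations telescopes to $T_{\mathrm{I},\gamma}(\cdot)\tilde G(\cdot,t_1)$.

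The main obstacle I anticipate is the Dyson estimate above: one needs the contraction factor $\mathrm{e}^{(\gamma w_{\mathrm{C}}+\|K_{\mathrm{CC}}\|)\tau}$ to produce, via time integration against $K_O$-factors, a net $\gamma^{-1}$ per pair of off-diagonal insertions, with constants uniformly controlled in $n$, so that the whole Dyson series can be summed and the limit $\gamma\to\infty$ passed through term by term. A secondary, mostly bookkeeping obstacle is that under polygonization the per-piece error from the constant-$K$ estimate degrades as the partition is refined, so the limits $\gamma\to\infty$ and step size $\to 0$ must be taken in the right order; an $\varepsilon/3$-argument at each fixed $t$ resolves this pointwise but, consistently with the theorem's statement, precludes uniformity in $t$ unless $K$ is itself constant.
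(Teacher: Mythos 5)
Your proposal is correct in substance but takes a genuinely different route from the paper's, and the difference is worth noting. Where the paper isolates the $\banach_{\mathrm{I}}$-block from the $\banach_{\mathrm{C}}$-block by applying Lemma \ref{LEM4.2} (a Trotter-product decoupling lemma proved with some effort in Section \ref{SEC4}) to each factor of a Trotter product $\prod_k T_\gamma(\tau/m)\exp(K_{n,l}(t)\tau/m)$ after discretizing $[t_1,t]$, you instead expand $\mathrm{e}^{(\gamma A+K)(t-t_1)}$ as a Dyson series in the off-diagonal coupling $K_O$ with $\gamma A+K_D$ as the unperturbed generator. Both approaches achieve the same thing: they reduce the $\banach_{\mathrm{I}}$-block to a situation where the estimate (\ref{EQ94}) from the proof of Theorem \ref{THM6.3} applies, while showing the $\banach_{\mathrm{C}}$-contribution vanishes in the limit. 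Your Dyson-series bookkeeping is right — in an $n$-fold term the blocks of the intermediate propagators alternate, so each nonvanishing $n$-fold term carries at least $\lceil n/2\rceil$ $\banach_{\mathrm{C}}$-dwells, each worth $\bigl(-\gamma w_{\mathrm{C}}-\|K_{\mathrm{CC}}\|\bigr)^{-1}$ after integration, and the remaining simplex volume in the $\banach_{\mathrm{I}}$-dwells supplies the factorial decay that makes the series summable uniformly in large $\gamma$, so the $\gamma\to\infty$ limit can be passed through term by term. What the Dyson route buys is self-containedness (no appeal to Lemma \ref{LEM4.2}); what the paper's route buys is that the quantitative error bounds come prepackaged from Lemma \ref{LEM4.2}, making the final $\varepsilon$-combination more mechanical. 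Your piecewise-constant reduction for time-dependent $K$, including the use of Proposition \ref{PROP3.7} with the $\gamma$-independent dominator $\|K\|_{C^0}$ (justified since $A$ generates a contraction semi-group), the use of Theorem \ref{THM2.1} statement \textit{7)}/Theorem \ref{THM6.2} statement \textit{2)} to telescope the $T_{\mathrm{I},\gamma}$ factors, and your correct identification of the order of limits (fix $\varepsilon$ and the partition, take $\gamma\to\infty$, then refine) matches Steps 1, 3 and 4 of the paper's proof; and your observation that uniformity in $t$ survives only for constant $K$ is exactly the content of the paper's Step 5.
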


\begin{proof}
Let $\omega:[t_1,t_2]\rightarrow\mathbb{R}$ be integrable on $[t_1,t_2]$ and assume that $\omega$ dominates the weak growth bounds $\omega_{K(\cdot)}$ and $\omega_{P_\mathrm{I}K(\cdot)P_{\mathrm{I}}}$. Note that $A$ generates a contraction semi-group, therefore for every $s\in[t_1,t_2]$ and for any $v\in\mathbb{R}^{+}_0$, we have 
\begin{eqnarray}
\left\|\exp\bigl((\gamma A+K(t))v\bigr)\right\|&\leq& \lim_{n\rightarrow\infty}\left\|\exp(\gamma Av/n)\right\|^n\|\left\|\exp(K(t)v/n)\right\|^n\nonumber\\
&\leq &\lim_{n\rightarrow\infty}\left\|\exp(K(t)v/n)\right\|^n\nonumber\\
&\leq &\exp(\omega_{K(t)}v).
\end{eqnarray}
That is, provided by statement \textit{1)} of Proposition \ref{lem:weak_gb1}, the weak growth bounds satisfy the inequality $\omega_{\gamma A+K(t)}\leq \omega_{K(t)}\leq \omega(t)$ for all $t\in [t_1,t_2]$. The same argument can be repeated to obtain
\begin{equation}
\omega_{\gamma A_\mathrm{I}+P_\mathrm{I}K(t)P_\mathrm{I}}\leq \omega_{P_\mathrm{I}K(t)P_\mathrm{I}}\leq \omega(t)
\label{EQ97}
\end{equation}
within $\mathcal{B}(\mathcal{X}_\mathrm{I})$. Therefore, using Proposition \ref{prop:growth_prop}, we have
\begin{eqnarray}
\|F_{\gamma}(u,v)\|&\leq& \exp\left(\int_{v}^{u}\omega_{K(s)}\,\mathrm{d}s\right)\leq \exp\left(\int_{v}^{u}\omega(s)\,\mathrm{d}s\right),\nonumber\\
\|G(u,v)\|&\leq& \exp\left(\int_{v}^{u}\omega_{P_\mathrm{I}K(s)P_{\mathrm{I}}}\,\mathrm{d}s\right)\leq \exp\left(\int_{v}^{u}\omega(s)\,\mathrm{d}s\right).
\label{EQ101}
\end{eqnarray}
 Provided by statement \textit{2)} of Proposition \ref{lem:weak_gb1}, the uniform continuity of $K$ on $[t_1,t_2]$ implies the same property of $\omega_{K(\cdot)}$. Define $\rho_{n,l}(t)\vcentcolon= t_1+l(t-t_1)/n$  for all $n\in\mathbb{N}$, $0\leq l <n$. Then, since $P_\mathrm{I}$ is a norm one projection, for any $\varepsilon>0$ we can choose $N_{\varepsilon}$ such that for any $t\in [t_1,t_2]$, the inequalities
\begin{eqnarray}
\max_{0\leq l<n}\sup_{s\in[\rho_{n,l}(t),\rho_{n,l+1}(t)]}\|K(s)-K(\rho_{n,l}(t))\|\leq \varepsilon,
\label{EQ98}
\end{eqnarray}
\begin{eqnarray}
\max_{0\leq l<n}\sup_{s\in[\rho_{n,l}(t),\rho_{n,l+1}(t)]}\|P_{\mathrm{I}}K(s)P_{\mathrm{I}}-P_{\mathrm{I}}KP_{\mathrm{I}}(\rho_{n,l}(t))\|\leq \varepsilon
\label{EQ99},
\end{eqnarray}
\begin{eqnarray}
\max_{0\leq l<n}\sup_{s\in[\rho_{n,l}(t),\rho_{n,l+1}(t)]}\|\omega_{K(t)}-\omega_{K(\rho_{n,l}(t))}\|\leq \varepsilon
\label{EQ100a}
\end{eqnarray}
and 
\begin{eqnarray}
\max_{0\leq l<n}\sup_{s\in[\rho_{n,l}(t),\rho_{n,l+1}(t)]}\|\omega_{P_\mathrm{I}K(t)P_{\mathrm{I}}}-\omega_{P_{\mathrm{I}}K(\rho_{n,l}(t))P_{\mathrm{I}}}\|\leq \varepsilon
\label{EQ100b}
\end{eqnarray}
are all satisfied, whenever $n\geq N_\varepsilon$. Note that $N_{\varepsilon}$ depends neither on $\gamma$, nor on $A$.\par
\vspace{2mm}
\noindent\textbf{Step 1}. The consideration above quarantees that for a given $\varepsilon>0$, if $n>N_{\varepsilon}$, then (\ref{EQ98})-(\ref{EQ100a}) are all satisfied, therefore their application with the help of Proposition \ref{PROP3.7}, (\ref{F1}) and (\ref{EQ101}) gives us
\begin{eqnarray}
&&\left\|F_\gamma(t,t_1)-\prod_{l=0}^{n-1}\exp\left(\Bigl(\gamma A+K(\rho_{n,l}(t))\Bigr)\frac{t-t_1}{n}\right)\right\|\nonumber\\&&\qquad=\left\|\prod_{l=0}^{n-1}F_\gamma(\rho_{n,l+1}(t),\rho_{n,l}(t))- \prod_{l=0}^{n-1}\exp\left(\Bigl(\gamma A+K(\rho_{n,l}(t))\Bigr)\frac{t-t_1}{n}\right)\right\|\nonumber\\
&&\qquad=\varepsilon(t-t_1)\exp\left(\int_{t_1}^{t}\omega(s)\,\mathrm{d}s\right)\exp\left(\varepsilon\frac{t-t_1}{n}\right),
\label{FINAL1}
\end{eqnarray}
if $n>N_{\varepsilon}$.
\par
\vspace{3mm}
\noindent\textbf{Step 2}. In order to decouple the strictly contractive part of $T_{\gamma}(t)$ from its isometric part, we first define $K_{n,l}\vcentcolon =K\circ\rho_{n,l}$ for any $n\in\mathbb{N}$, $0\leq l<n$. Then, for any $t\in [t_1,t_2]$ we can write
\begin{eqnarray}
&&\exp\left(\Bigl(\gamma A+K_{n,l}(t)\Bigr)\frac{t-t_1}{n}\right)\nonumber\\
&&\qquad\qquad=\lim_{m\rightarrow \infty}\prod_{k=0}^{m-1}T_\gamma\left(\frac{t-t_1}{nm}\right)\exp\left(K_{n,l}(t)\frac{t-t_1}{nm}\right).
\end{eqnarray}
The application of Lemma \ref{LEM4.2} shows
\begin{eqnarray}
&&\Biggl\|\prod_{k=0}^{m-1}T_\gamma\left(\frac{t-t_1}{nm}\right)\exp\left(K_{n,l}(t)\frac{t-t_1}{nm}\right)\nonumber\\
&&\qquad-T_{\mathrm{I},\gamma/n}(t-t_1)\prod_{k=0}^{m-1}T^{-k}_{\mathrm{I},\gamma/nm}(t-t_1)\exp\left(P_\mathrm{I}K_{n,l}(t)P_{\mathrm{I}}\frac{t-t_1}{nm}\right)T^{k}_{\mathrm{I},\gamma/nm}(t-t_1)\Biggr\|\nonumber\\
&&\leq \Omega_{n,l,t}\Lambda_{n,l,m}\frac{(t-t_1)^2}{n^2m}\Bigl(\|K_{n,l}(t)\|^2\mathrm{e}^{\|K_{n,l}(t)\|(t-t_1)/nm}\nonumber\\
&&\qquad+\tau_{\mathrm{C},n,l,m}\|K_{n,l}(t)\|^2\mathrm{e}^{2\|K_{n,l}(t)\|(t-t_1)/nm}+\|P_\mathrm{I}K_{n,l}(t)P_{\mathrm{I}}\|^2\mathrm{e}^{2\|P_\mathrm{I}K_{n,l}(t)P_\mathrm{I}\|(t-t_1)/nm}\Bigr)\nonumber\\
&&\qquad +\Omega_{n,l,t}\tau_{\mathrm{C},n,l,m}\Lambda_{n,l,m}\|K_{n,l}(t)\|\mathrm{e}^{\|K_{n,l}(t)\|(t-t_1)/nm}\frac{t-t_1}{nm}\nonumber\\
&&\qquad+\Omega_{n,l,t}\left\|T_{\mathrm{C},\gamma/nm}(t-t_1)\right\|^m,
\label{EQ92}
\end{eqnarray}
where
\begin{eqnarray}
\Omega_{n,l,t}&=&\exp\left(\omega_{K(\rho_{n,l}(t))}\frac{t-t_1}{n}\right)\leq \exp\left(\int_{\rho_{n,l}(t)}^{\rho_{n,l+1}(t)}\omega(s)\,\mathrm{d}s+\varepsilon\frac{t-t_1}{n}\right),\nonumber\\
&&\ \qquad\qquad\Lambda_{n,l,m}=\exp\left(|\omega_{K(\rho_{n,l}(t))}|\frac{t-t_1}{nm}\right),\nonumber\\
&&\qquad\qquad\tau_{\mathrm{C},n,l,m}=\frac{1+\left\|T_{\mathrm{C},\gamma/nm}(s-t_1)\right\|}{1-\left\|T_{\mathrm{C},\gamma/nm}(s-t_1)\bigr)\right\|},
\label{EQ102}
\end{eqnarray}
whenever $n\geq N_{\varepsilon}$ (the inequality (\ref{EQ100a}) has been used). Since 
\begin{eqnarray}
\left\|T_{\mathrm{C},\gamma/nm}(t-t_1)\right\|&\leq& \exp\left(\gamma w_{\mathrm{C}}\frac{t-t_1}{nm}\right)
\end{eqnarray}
and $x\mapsto (1+x)/(1-x)$ is monotonically increasing on $[0,1)$, we can obtain
\begin{eqnarray}
\tau_{\mathrm{C},n,l,m}&\leq& \coth\left(\gamma w_{\mathrm{C}}\frac{t-t_1}{2nm}\right),
\end{eqnarray}
thereso 
\begin{equation}
\lim_{m\rightarrow \infty}\frac{t-t_1}{nm}\tau_{\mathrm{C},n,l,m}=\frac{2}{\gamma |w_{\mathrm{C}}|}.
\end{equation}
Therefore, taking the $m\rightarrow \infty$ limit of the r.h.s of (\ref{EQ92}) gives
\begin{equation}
2\Omega_{n,l,t}\frac{1}{\gamma |w_{\mathrm{C}}|}\left(\frac{t-t_1}{n}\|K_{n,l}(t)\|^2+\|K_{n,l}(t)\|\right) +\Omega_{n,l,t}\mathrm{e}^{\gamma w_{\mathrm{C}}(t-t_1)/n}.
\end{equation}
Simultaneously performing the limit process on the l.h.s also, we obtain
\begin{eqnarray}
&&\left\|\exp\left(\Bigl(\gamma A+K_{n,l}(t)\Bigr)\frac{t-t_1}{n}\right)-P_\mathrm{I}\exp\left(\Bigl(\gamma A_\mathrm{I}+P_{\mathrm{I}}K_{n,l}(t)P_\mathrm{I}\Bigr)\frac{t-t_1}{n}\right)\right\|\nonumber\\
&&\qquad\leq 2\Omega_{n,l,t}\frac{1}{\gamma |w_{\mathrm{C}}|}\left(\frac{t-t_1}{n}\|K_{n,l}(t)\|^2+\|K_{n,l}(t)\|\right)+\Omega_{n,l,t}\mathrm{e}^{\gamma w_{\mathrm{C}}(t-t_1)/n}
\label{EQ95}
\end{eqnarray}
Since (\ref{EQ97}) holds, (\ref{EQ95}) combined with (\ref{F1}), the upper bound on $\Omega_{n,l,t}$ contained within (\ref{EQ102}) and the approximation inequality (\ref{EQ100a}) results in 
\begin{eqnarray}
&&\left\|\prod_{l=0}^{n-1}\exp\left(\Bigl(\gamma A+K_{n,l}(t)\Bigr)\frac{t-t_1}{n}\right)-P_\mathrm{I}\prod_{l=0}^{n-1}\exp\left(\Bigl(\gamma A_\mathrm{I}+P_{\mathrm{I}}K_{n,l}(t)P_\mathrm{I}\Bigr)\frac{t-t_1}{n}\right)\right\|\nonumber\\
&&\quad\leq \exp\left(\int_{t_1}^{t}\bigl(\omega(s)+\varepsilon\bigr)\,\mathrm{d}s\right)\nonumber\\
&&\quad\quad\quad\times\left[\frac{2(t-t_1)}{\gamma |w_{\mathrm{C}}|}\left(\|K_{n,l}(t)\|^2+n\|K_{n,l}(t)\|\right)+n\mathrm{e}^{\gamma w_{\mathrm{C}}(t-t_1)/n}\right]
\label{FINAL2}
\end{eqnarray}
if $n\geq N_\varepsilon$.
\par
\vspace{3mm}
\noindent\textbf{Step 3.} Given $\varepsilon>0$, $t_1<t\leq t_2$ and $N_\varepsilon$ as described before the first step of the proof, the assumptions of the Theorem quarantees the existence of $\mathfrak{P}_{T_{\mathrm{I}}}[P_\mathrm{I}K(\rho_{n,l}(t))P_\mathrm{I}]$ for all $0\leq l<n$, $n\in\mathbb{N}$ and the existence of $L_\varepsilon\in C^{0,1}([t_1,t_2],\bbanach)$ such that 
\begin{eqnarray}
\bigl\|\bigl(P_\mathrm{I}K(\rho_{n,l}(t))P_{\mathrm{I}}-\mathfrak{P}_{T_{\mathrm{I}}}[P_\mathrm{I}K(\rho_{n,l}(t))P_\mathrm{I}]\bigr)-L(\rho_{n,l}(t))\bigr\|\leq \varepsilon
\end{eqnarray}
for all $0\leq l<n$, $n\in\mathbb{N}$. Let $n\geq N_\varepsilon$. We can apply (\ref{EQ94}), (\ref{EQ97}) and (\ref{F1}) to obtain
\begin{eqnarray}
&&\Biggl\|P_\mathrm{I}\prod_{l=0}^{n-1}\exp\left(\Bigl(\gamma A_\mathrm{I}+P_{\mathrm{I}}K_{n,l}(t)P_\mathrm{I}\Bigr)\frac{t-t_1}{n}\right)\nonumber\\
&&\qquad\qquad-\prod_{l=0}^{n-1}F_{\mathrm{I},\gamma/n}(t)\exp\left(\mathfrak{P}_{T_{\mathrm{I}}}[P_\mathrm{I}K(\rho_{n,l}(t))P_\mathrm{I}]\frac{t-t_1}{n}\right)\Biggr\|\nonumber\\
&&\qquad\leq\exp\left(\int_{t_1}^{t}\bigl(\omega(s)+\varepsilon\bigr)\,\mathrm{d}s\right)\exp\left(\|\omega_{P_\mathrm{I}K(\cdot)P_{\mathrm{I}}}\|_{C^0}\frac{t-t_1}{n}\right)\nonumber\\
&&\qquad\qquad\times\Biggl[\frac{2\|L_\varepsilon\|_{C^0}}{\gamma}\exp\left(\frac{2\|L_\varepsilon\|_{C^0}}{\gamma}\right)+\varepsilon \frac{t-t_1}{n}\exp\left(\|K\|_{C^0}\frac{t-t_1}{n}\right)\Biggr],
\label{FINAL3}
\end{eqnarray}
whenever $n\geq N_\varepsilon$. Using statement \textit{3)} of Theorem \ref{THM6.2} and (\ref{EQ99}), to apply (\ref{F1}) for the same $n$, we obtain
\begin{eqnarray}
&&\left\|\prod_{l=0}^{n-1}F_{\mathrm{I},\gamma/n}(t)\exp\left(\mathfrak{P}_{T_{\mathrm{I}}}[P_\mathrm{I}K(\rho_{n,l}(t))P_\mathrm{I}]\frac{t-t_1}{n}\right)-F_{\mathrm{I},\gamma}G(t,t_1)\right\|\nonumber\\&&\qquad=\left\|\prod_{l=0}^{n-1}F_\gamma(\rho_{n,l+1}(t),\rho_{n,l}(t))- \prod_{l=0}^{n-1}\exp\left(\Bigl(\gamma A+K(\rho_{n,l}(t))\Bigr)\frac{t-t_1}{n}\right)\right\|\nonumber\\
&&\qquad=\varepsilon(t-t_1)\exp\left(\int_{t_1}^{t}\omega(s)\,\mathrm{d}s\right)\exp\left(\varepsilon\frac{t-t_1}{n}\right).
\label{FINAL4}
\end{eqnarray}
\par
\vspace{3mm}
\noindent\textbf{Step 4.} Keeping the chosen $\varepsilon>0$ $n\geq N_{\varepsilon}$, we can combine the inequalities (\ref{FINAL1}), (\ref{FINAL2}), (\ref{FINAL3}) and (\ref{FINAL4}) to obtain
\begin{equation}
\left\|F_{\gamma,t_1}-T_{\mathrm{I},\gamma}G(t,t_1)\right\|\leq f(n,\varepsilon)+g(n,\varepsilon,\gamma)\qquad n\geq \varepsilon,
\end{equation}
where  $f(n,\varepsilon)$ and $g(n,\varepsilon,\gamma)$ can be easily read out from (\ref{FINAL1}), (\ref{FINAL2}), (\ref{FINAL3}) and (\ref{FINAL4}). Set $n$ to be equal to $N_\varepsilon$. Then, provided that for any fixed $n$,
\begin{equation}
\lim_{\gamma\rightarrow\infty}g(n,\varepsilon,\gamma)=0,
\end{equation}
we can obtain that for the given $\varepsilon>0$,
\begin{eqnarray}
\lim_{\gamma\rightarrow\infty}\left\|F_{\gamma,t_1}(t,t_1)-T_{\mathrm{I},\gamma}G(t,t_1)\right\|&\leq& f(N_\varepsilon,\varepsilon)\nonumber\\
&=&2\varepsilon(t-t_1)\exp\left(\int_{t_1}^{t}\omega(s)\,\mathrm{d}s\right)\exp\left(\varepsilon\frac{t-t_1}{N_\varepsilon}\right)
\end{eqnarray}
Since $N_\varepsilon\rightarrow\infty$ can always be chosen whenever $\varepsilon\rightarrow 0$, this implies
\begin{equation}
\lim_{\gamma\rightarrow\infty}\left\|F_{\gamma,t_1}(t,t_1)-T_{\mathrm{I},\gamma}G(t,t_1)\right\|=0.
\end{equation}
\par
\vspace{3mm}
\noindent\textbf{Step 5.} If $K$ is contant, then (\ref{THM6.3AS}) is redundant due to statement \textit{5)} of Theorem \ref{THM6.2}. Moreover, the discretization steps based on (\ref{EQ98})-(\ref{EQ100b}) are unnecessary and one has to consider only the analogs of (\ref{FINAL2}) and (\ref{FINAL3}). Therefore, let $\omega\in\mathbb{R}$ such that $\omega_{P_{\mathrm{I}}KP_{\mathrm{I}}},\omega_K\leq \omega$ and choose $\varepsilon>0$. Provided by statement \textit{5)} of Theorem \ref{THM6.2}, there exists $L_\varepsilon\in\mathcal{B}(\mathcal{X}_\mathrm{I})$, such that 
\begin{equation}
\left\|\bigl(P_\mathrm{I}KP_{\mathrm{I}}-\mathfrak{P}_{T_\mathrm{I}}[P_\mathrm{I}KP_\mathrm{I}]\bigr)-[L_\varepsilon,A_\mathrm{I}]\right\|_{\mathcal{B}(\mathcal{X}_\mathrm{I})}\leq \varepsilon.
\end{equation}
A similar calculation which resulted in (\ref{EQ95}) gives
\begin{eqnarray}
&&\left\|\exp\left(\Bigl(\gamma A+K\Bigr)(t-t_1)\right)-P_\mathrm{I}\exp\left(\Bigl(\gamma A_\mathrm{I}+P_{\mathrm{I}}KP_\mathrm{I}\Bigr)(t-t_1)\right)\right\|\nonumber\\
&&\qquad\leq 2\mathrm{e}^{\omega (t-t_1)}\frac{1}{\gamma |w_{\mathrm{C}}|}\left((t-t_1)\|K\|^2+\|K\|\right)+\mathrm{e}^{\omega (t-t_1)}\mathrm{e}^{\gamma w_{\mathrm{C}}(t-t_1)},
\end{eqnarray}
while the calculation which gives (\ref{FINAL3}) can also be adapted to obtain 
\begin{eqnarray}
&&\Biggl\|P_\mathrm{I}\exp\left(\bigl(\gamma A_\mathrm{I}+P_{\mathrm{I}}P_\mathrm{I}\bigr)(t-t_1)\right)-F_{\mathrm{I},\gamma}(t)\exp\bigr(\mathfrak{P}_{T_{\mathrm{I}}}[P_\mathrm{I}KP_\mathrm{I}](t-t_1)\bigl)\Biggr\|\nonumber\\
&&\qquad\leq\mathrm{e}^{\omega(t-t_1)}\mathrm{e}^{|\omega_{P_\mathrm{I}KP_{\mathrm{I}}}|(t-t_1)}\frac{2\|L_\varepsilon\|}{\gamma}\exp\left(\frac{2\|L_\varepsilon\|}{\gamma}\right).
\end{eqnarray}
Let $t\in C$, where $C\subset (t_1,t_2]$ is compact with $c_M=\max C$ and $c_m=\min C$. Combining the inequalities above, we arrive to
\begin{eqnarray}
&&\left\|\exp\left(\Bigl(\gamma A+K\Bigr)(t-t_1)\right)-F_{\mathrm{I},\gamma}(t)\exp\bigr(\mathfrak{P}_{T_{\mathrm{I}}}[P_\mathrm{I}KP_\mathrm{I}](t-t_1)\bigl)\right\|\nonumber\\
&&\quad\leq  2\mathrm{e}^{|\omega| (c_M-t_1)}\frac{1}{\gamma |w_{\mathrm{C}}|}\left((c_M-t_1)\|K\|^2+\|K\|\right)+\mathrm{e}^{|\omega |(c_M-t_1)}\mathrm{e}^{\gamma w_{\mathrm{C}}(c_m-t_1)}\nonumber\\
&&\quad\ +\mathrm{e}^{|\omega|(c_M-t_1)}\mathrm{e}^{|\omega_{P_\mathrm{I}KP_{\mathrm{I}}}|(c_M-t_1)}\frac{2\|L_\varepsilon\|}{\gamma}\exp\left(\frac{2\|L_\varepsilon\|}{\gamma}\right).
\label{EQ103}
\end{eqnarray}
For large enough $\gamma$, the r.h.s of (\ref{EQ103}) is monotonically decreasing and tends to zero as $\gamma$ tends to infinity. Therefore, one can always choose an $L_\varepsilon$ dependent $\gamma_{\varepsilon}$ such that whenever $\gamma\geq \gamma_\varepsilon$, the r.h.s of (\ref{EQ103}) is less than $\varepsilon$. Since the bound in (\ref{EQ103}) is independent of $t\in C$ and depends only on the compact set $C$, we have that if $\gamma\geq \gamma_{\varepsilon}$, then 
\begin{equation}
\|F_\gamma(\cdot,t_1)-F_{\mathrm{\gamma,I}}(\cdot)G(\cdot,t_1)\|_{C^0}\leq \varepsilon
\end{equation}
on $C$. Thus, the last statement of the Theorem is proved.
\end{proof}

\section*{Appendix}
\setcounter{equation}{0}
\setcounter{Thm}{0}



\def\thesection{A} Some simple statements frequently used in the paper are collected here.

\begin{Lem} Let $\banach$ be a Banach space, $A_0,\dots, A_{n-1},B_0,\dots,B_{n-1}\in\bbanach$ such that $\|A_k\|_{\bbanach},$
$\|B_{k}\|_\bbanach\leq \exp(M_k)$ for all $0\leq k<n$ and with some $M_0,\dots,M_{n-1}>0$. Assume that for all $\max{0\leq k<n}$, $\exp(-M_k)\leq N$ holds. Then,
\begin{equation}
\left\|\prod_{l=0}^{n-1}A_l-\prod_{l=0}^{n-1}B_l\right\|_\bbanach\leq N\mathrm{e}^{\sum_{k=0}^{n-1}M_k}\sum_{l=0}^{n-1}\|A_l-B_l\|_{\bbanach}
\label{F1}
\end{equation}  
\end{Lem}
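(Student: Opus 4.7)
The plan is to use the standard telescoping identity for products, which rewrites the difference $\prod_{l=0}^{n-1}A_l - \prod_{l=0}^{n-1}B_l$ as a sum of $n$ terms each containing exactly one factor of the form $A_k - B_k$. Concretely, I would write
\begin{equation*}
\prod_{l=0}^{n-1}A_l - \prod_{l=0}^{n-1}B_l \;=\; \sum_{k=0}^{n-1}\left(\prod_{l=k+1}^{n-1}A_l\right)(A_k-B_k)\left(\prod_{l=0}^{k-1}B_l\right),
\end{equation*}
which is the usual collapsing sum (each partial cancellation leaves adjacent terms whose difference is isolated on the $k$-th slot). This identity is purely algebraic and follows by induction on $n$, so it does not require any spectral or topological input.

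Next I would take norms, apply the triangle inequality, and use submultiplicativity of $\|\cdot\|_{\bbanach}$ to bound each summand by
\begin{equation*}
\left\|\prod_{l=k+1}^{n-1}A_l\right\|_{\bbanach}\,\|A_k-B_k\|_{\bbanach}\,\left\|\prod_{l=0}^{k-1}B_l\right\|_{\bbanach}.
\end{equation*}
The hypothesis $\|A_l\|_{\bbanach},\|B_l\|_{\bbanach}\leq \mathrm{e}^{M_l}$ turns the two outer products into $\exp\!\bigl(\sum_{l=k+1}^{n-1}M_l\bigr)$ and $\exp\!\bigl(\sum_{l=0}^{k-1}M_l\bigr)$ respectively. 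Multiplying these together yields $\exp\!\bigl(\sum_{l\neq k}M_l\bigr) = \mathrm{e}^{-M_k}\mathrm{e}^{\sum_{l=0}^{n-1}M_l}$.

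Finally, invoking the assumption $\mathrm{e}^{-M_k}\le N$ (which is exactly what the constant $N$ is introduced for) replaces $\mathrm{e}^{-M_k}$ with $N$ uniformly in $k$, factors $N\mathrm{e}^{\sum_{k=0}^{n-1}M_k}$ out of the sum, and leaves $\sum_{l=0}^{n-1}\|A_l-B_l\|_{\bbanach}$, producing the stated inequality \eqref{F1}. There is no genuine obstacle here: the proof is elementary, and the only subtlety is book-keeping the empty products $\prod_{l=n}^{n-1}=\prod_{l=0}^{-1}=\mathbbm{1}_{\banach}$ at the endpoints $k=n-1$ and $k=0$, which are consistent with the convention fixed in the notation subsection of the paper.
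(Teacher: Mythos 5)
Your proof uses exactly the same telescoping identity as the paper, which is the only content of the paper's one-line proof; the rest of your write-up merely supplies the routine estimate (submultiplicativity, the hypothesis $\|A_l\|,\|B_l\|\le \mathrm{e}^{M_l}$, and $\mathrm{e}^{-M_k}\le N$) that the paper leaves implicit. Your reasoning is correct and matches the intended argument.
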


\begin{proof}
Follows directly from
\begin{equation}
\prod_{l=0}^{n-1}A_l-\prod_{l=0}^{n-1}B_l=\sum_{k=0}^{n-1}\left(\prod_{l=k+1}^{n-1}A_l\right)(A_k-B_k)\left(\prod_{l=0}^{k-1}B_l\right).
\end{equation}
\end{proof}

For every $K\in C([t_1,t_2],\bbanach)$, the solution of the initial value problem (\ref{EQ1}) comes in the form of the Picard series
\begin{equation}
F(u,v)=\sum_{l=0}^{\infty}\bigl\{K\bigr\}_l(u-v),
\end{equation} 
where $\bigl\{K\bigr\}_0(u,v)=\mathbbm{1}_{\banach}$ and 
\begin{equation}
\bigl\{K\bigr\}_l(u,v)=\int_{v}^{u}\mathrm{d}s_l\cdots\int_{v}^{\mathrm{s}_2}\mathrm{d}s_1 K(s_l)\cdots K(s_1)
\end{equation}
for $l>0$. The following statements can be easily verified using the definition of the $C^0$ norm and Taylor's theorem:

\begin{Lem} Let $\banach$ be a Banach space. 
\begin{enumerate}[label=\textit{\arabic*)}]
\item If $K\in C([t_1,t_2],\bbanach)$, then 
\begin{equation}
\left\|\bigl\{K\bigr\}_l(u,v)\right\|_{\bbanach}\leq\frac{(u-v)^p}{p!}\|K\|^p_{C^0}.
\label{F2}
\end{equation}
\item If $K\in C([t_1,t_2],\bbanach)$, then 
\begin{equation}
\left\|F(u,v)-\sum_{l=0}^{k-1}\bigl\{K\bigr\}_l(u,v)\right\|_{\bbanach}\leq(u-v)^k\|K\|^k_{C^0}\exp((u-v)\|K\|_{C^0}).
\label{F3}
\end{equation}
\end{enumerate}
\end{Lem}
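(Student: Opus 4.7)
The two statements are standard Dyson‐series estimates, so the plan is short and unobstructed. For part \textit{1)}, I will simply take the norm inside the iterated integral defining $\{K\}_l(u,v)$. Writing
\[
\{K\}_l(u,v)=\int_{v}^{u}\!\mathrm{d}s_l\int_{v}^{s_l}\!\mathrm{d}s_{l-1}\cdots\int_{v}^{s_2}\!\mathrm{d}s_1\,K(s_l)K(s_{l-1})\cdots K(s_1),
\]
submultiplicativity of $\|\cdot\|_{\bbanach}$ and the bound $\|K(s)\|_{\bbanach}\le \|K\|_{C^{0}}$ give
\[
\|\{K\}_l(u,v)\|_{\bbanach}\le \|K\|_{C^{0}}^{l}\int_{v}^{u}\!\mathrm{d}s_l\int_{v}^{s_l}\!\mathrm{d}s_{l-1}\cdots\int_{v}^{s_2}\!\mathrm{d}s_1 .
\]
The remaining multiple integral is the volume of the standard simplex $\{v\le s_1\le\cdots\le s_l\le u\}$, which is $(u-v)^l/l!$. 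This yields \eqref{F2} (with $p=l$).

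For part \textit{2)}, the Picard iteration for \eqref{EQ1} shows that $F(u,v)$ coincides with the series $\sum_{l\ge 0}\{K\}_{l}(u,v)$. The bound from part \textit{1)} provides a Weierstrass majorant $\sum_{l\ge 0}(u-v)^{l}\|K\|_{C^0}^l/l!=\exp((u-v)\|K\|_{C^0})<\infty$, so the series converges absolutely in $\bbanach$ and the remainder is the tail
\[
F(u,v)-\sum_{l=0}^{k-1}\{K\}_{l}(u,v)=\sum_{l=k}^{\infty}\{K\}_{l}(u,v).
\]
Applying part \textit{1)} termwise and setting $x\vcentcolon=(u-v)\|K\|_{C^0}$, I get
\[
\left\|F(u,v)-\sum_{l=0}^{k-1}\{K\}_{l}(u,v)\right\|_{\bbanach}\le \sum_{l=k}^{\infty}\frac{x^{l}}{l!}=x^{k}\sum_{m=0}^{\infty}\frac{x^{m}}{(k+m)!}.
\]
Since $(k+m)!\ge m!$ for every $m\ge 0$, the last sum is bounded by $\sum_{m\ge 0}x^{m}/m!=\mathrm{e}^{x}$. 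Substituting $x=(u-v)\|K\|_{C^0}$ gives precisely \eqref{F3}.

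There is really no difficult step here; the only point requiring any care is justifying that the Picard series actually represents $F(u,v)$, which is exactly the content of the Weierstrass majorant derived from part \textit{1)}, so part \textit{1)} is logically prior to and independent of part \textit{2)}. Everything else is the simplex‐volume computation plus the elementary inequality $(k+m)!\ge m!$.
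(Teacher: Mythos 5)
Your proof is correct and fills in exactly the routine verification the paper leaves to the reader (the paper merely remarks that the lemma "can be easily verified using the definition of the $C^0$ norm and Taylor's theorem"). Your part \textit{1)} is the standard simplex-volume estimate, and your part \textit{2)} reproduces the stated tail bound via the elementary inequality $(k+m)!\ge m!$, which is just a direct way of getting the (deliberately non-optimal) factor $\mathrm{e}^{(u-v)\|K\|_{C^0}}$ without the sharper $1/k!$ that Taylor's remainder would also supply.
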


For an arbitrary $X\in C([t_1,t_2],\bbanach)$, let us denote the propagator generated by $X$ by $F_X(\cdot,\cdot)$, that is $F_X(\cdot,\cdot)$ is the solution of the initial value problem (\ref{EQ1}), where $K$ is replaced by $X$. Assume that there exists a function $f_{X}:[t_1,t_2]\rightarrow\mathbb{R}$ such that for any $t_1\leq v<u\leq t_2$ and $l\in\mathbb{N}_0$ 
\begin{equation}
\bigl\{X\bigr\}_l(u-v)\leq \frac{f^l_X(u-v)}{l!}.
\end{equation}
The last statement is the following:
\begin{Lem} If $K,L,M\in C([t_1,t_2],\bbanach)$, $t_1\leq w<v< u\leq t_2$, then 
\begin{eqnarray}
&&\left\|F_{K+M}(u,v)F_{M}(v,w)-F_{M}(u,w)F_{K}(u,v)\right\|_\bbanach\nonumber\\
&&\qquad\qquad\leq g^2_{K+M,M}(u,v,w)\exp\bigl(g_{K+M,M}(u,v,w)\bigl)\nonumber\\
&&\qquad\qquad\qquad\qquad	+h^2_{K,M}(u,v,w)\exp\bigl(h_{K,M}(u,v,w)\bigl),
\label{F4}
\end{eqnarray}
where 
\begin{eqnarray}
g_{X,Y}(u,v,w)&=&f_X(u-v)+f_Y(v-w),\nonumber\\
h_{X,Y}(u,v,w)&=&f_X(u-v)+f_Y(u-w).
\end{eqnarray}
\end{Lem}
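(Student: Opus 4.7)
My plan is to expand both products via their Picard series, observe that the zeroth and first order contributions coincide, and then bound the tail of each series separately using the hypothesized majorants $f_X$.

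Concretely, I would first write
\begin{equation*}
F_{K+M}(u,v)F_M(v,w)=\sum_{j,k\geq 0}\bigl\{K+M\bigr\}_{j}(u,v)\bigl\{M\bigr\}_{k}(v,w)
\end{equation*}
and
\begin{equation*}
F_M(u,w)F_K(u,v)=\sum_{j,k\geq 0}\bigl\{M\bigr\}_{j}(u,w)\bigl\{K\bigr\}_{k}(u,v).
\end{equation*}
The next step is the key algebraic observation: the $j+k\leq 1$ parts of the two expansions agree. On the left, the first order piece is
\begin{equation*}
\int_{v}^{u}(K+M)(s)\,\mathrm{d}s+\int_{w}^{v}M(s)\,\mathrm{d}s=\int_{v}^{u}K(s)\,\mathrm{d}s+\int_{w}^{u}M(s)\,\mathrm{d}s,
\end{equation*}
which is exactly the first order piece on the right. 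So upon subtraction only the parts with $j+k\geq 2$ survive, and they can be estimated on each side independently.

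For the tail bound I would use the majorant $\|\{X\}_l(u-v)\|_{\bbanach}\leq f_X^l(u-v)/l!$ from the hypothesis. Writing $a=f_{K+M}(u-v)$, $b=f_M(v-w)$, the left hand tail is bounded by
\begin{equation*}
\sum_{j+k\geq 2}\frac{a^j}{j!}\frac{b^k}{k!}=e^{a+b}-1-(a+b),
\end{equation*}
and similarly with $a'=f_M(u-w)$, $b'=f_K(u-v)$ for the right hand tail. Since $a+b=g_{K+M,M}(u,v,w)$ and $a'+b'=h_{K,M}(u,v,w)$, the elementary inequality $e^x-1-x\leq x^2 e^x$ (valid for $x\geq 0$, since $e^x-1-x=\sum_{k\geq 2}x^k/k!=x^2\sum_{k\geq 0}x^k/(k+2)!\leq x^2 e^x$) turns these into $g^2 e^g$ and $h^2 e^h$ respectively. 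A triangle inequality then yields the stated bound.

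There is no real obstacle here beyond clean bookkeeping: the whole proof is a reindexing of Picard sums plus an application of the elementary exponential inequality. The only thing one has to be careful about is the domains of integration in the first order terms so that the telescoping identity $\int_v^u M+\int_w^v M=\int_w^u M$ can be recognized; this is what ensures that the zeroth and first order parts of the two products coincide and the leading $O(1)$ and $O(f)$ behaviour cancels, leaving a genuinely $O(f^2)$ remainder on each side.
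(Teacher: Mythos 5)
Your proof is correct and follows essentially the same route as the paper's: expand both products in Picard series, observe that the zeroth and first order pieces cancel by the telescoping identity $\int_v^u M+\int_w^v M=\int_w^u M$, bound each remaining tail by $\sum_{j+k\geq 2}\tfrac{a^j b^k}{j!k!}=e^{a+b}-1-(a+b)$, and finish with $e^x-1-x\leq x^2e^x$. Your bookkeeping of the indices is in fact tidier than the printed version in the paper, which has some apparent typographical slips in its intermediate double sum.
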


\begin{proof} Note that
\begin{equation}
\bigl\{K+M\bigr\}_1(u,v)+\bigl\{M\bigr\}_1(v,w)=\bigl\{M\bigr\}_1(u,w)+\bigl\{K\bigr\}_1(u,v).
\end{equation}
Therefore, 
\begin{eqnarray}
&&\left\|F_{K+M}(u,v)F_{M}(v,w)-F_{M}(u,w)F_{K}(u,v)\right\|\nonumber\\
&&\qquad\leq\sum_{k=2}^{\infty}\sum_{l=0}^{k}\left\|\bigl\{K+M\bigr\}_l(u,v)\right\|\,\left\|\bigl\{M\bigr\}_k(v,w)\right\|+\left\|\bigl\{M\bigr\}_{k}(u,w)\right\|\,\left\|\bigl\{K\bigr\}_{l-k}(u,v)\right\|\nonumber\\
&&\qquad\leq\sum_{k=2}^{\infty}\sum_{l=0}^{k}\frac{f^l_{K+M}(u-v)}{k!}\frac{f^{k-l}_{M}(v-w)}{(k-l)!}+\frac{f^l_{M}(u-w)}{k!}\frac{f^{k-l}_{K}(u-v)}{(k-l)!}\nonumber\\
&&\qquad\leq\sum_{k=2}^{\infty}\frac{\bigl(f_{K+M}(u-v)+f_{M}(v-w)\bigr)^k}{k!}+\frac{\bigl(f_{K}(u-v)+f_{M}(u-w)\bigr)^k}{k!}\nonumber\\
&&\qquad\leq\bigl(f_{K+M}(u-v)+f_{M}(v-w)\bigr)^2\exp\bigl(f_{K+M}(u-v)+f_{M}(v-w)\bigr)\nonumber\\
&&\qquad\quad\qquad+\bigl(f_{K}(u-v)+f_{M}(u-w)\bigr)^2\exp\bigl(f_{K}(u-v)+f_{M}(u-w)\bigr).
\end{eqnarray}
\end{proof}

\bibliographystyle{acm}
\bibliography{zeno_arxiv}

\begin{thebibliography}{10}

\bibitem{accardi1998notions}
{\sc Accardi, L., Hashimoto, Y., and Obata, N.}
\newblock Notions of independence related to the free group.
\newblock {\em Infinite Dimensional Analysis, Quantum Probability and Related
  Topics 1}, 02 (1998), 201--220.

\bibitem{balzer2002relaxationless}
{\sc Balzer, C., Hannemann, T., Rei{\ss}, D., Wunderlich, C., Neuhauser, W.,
  and Toschek, P.}
\newblock A relaxationless demonstration of the {Q}uantum {Z}eno paradox on an
  individual atom.
\newblock {\em Optics Communications 211}, 1-6 (2002), 235--241.

\bibitem{balzer2000quantum}
{\sc Balzer, C., Huesmann, R., Neuhauser, W., and Toschek, P.}
\newblock The quantum {Z}eno effect--evolution of an atom impeded by
  measurement.
\newblock {\em Optics Communications 180}, 1-3 (2000), 115--120.

\bibitem{bernad2017dynamical}
{\sc Bern{\'a}d, J.}
\newblock Dynamical control of quantum systems in the context of mean ergodic
  theorems.
\newblock {\em Journal of Physics A: Mathematical and Theoretical 50}, 6
  (2017), 065303.

\bibitem{bernu2008freezing}
{\sc Bernu, J., Del{\'e}glise, S., Sayrin, C., Kuhr, S., Dotsenko, I., Brune,
  M., Raimond, J.-M., and Haroche, S.}
\newblock Freezing coherent field growth in a cavity by the quantum {Z}eno
  effect.
\newblock {\em Physical Review Letters 101}, 18 (2008), 180402.

\bibitem{burgarth2018generalized}
{\sc Burgarth, D., Facchi, P., Nakazato, H., Pascazio, S., and Yuasa, K.}
\newblock {G}eneralized {A}diabatic {T}heorem and {S}trong {C}oupling {L}imits.
\newblock {\em arXiv preprint arXiv:1807.02036\/} (2018).

\bibitem{burgarth2018quantum}
{\sc Burgarth, D., Facchi, P., Nakazato, H., Pascazio, S., and Yuasa, K.}
\newblock {Q}uantum {Z}eno {D}ynamics from {G}eneral {Q}uantum {O}perations.
\newblock {\em arXiv preprint arXiv:1809.09570\/} (2018).

\bibitem{burgarth2014exponential}
{\sc Burgarth, D.~K., Facchi, P., Giovannetti, V., Nakazato, H., Pascazio, S.,
  and Yuasa, K.}
\newblock Exponential rise of dynamical complexity in quantum computing through
  projections.
\newblock {\em Nature communications 5\/} (2014), 5173.

\bibitem{cohen1940mean}
{\sc Cohen, L.}
\newblock On the mean ergodic theorem.
\newblock {\em Annals of Mathematics\/} (1940), 505--509.

\bibitem{dominy2013analysis}
{\sc Dominy, J.~M., Paz-Silva, G.~A., Rezakhani, A., and Lidar, D.}
\newblock Analysis of the quantum {Z}eno effect for quantum control and
  computation.
\newblock {\em Journal of Physics A: Mathematical and Theoretical 46}, 7
  (2013), 075306.

\bibitem{eisner2015operator}
{\sc Eisner, T., Farkas, B., Haase, M., and Nagel, R.}
\newblock {\em Operator theoretic aspects of ergodic theory}, vol.~272.
\newblock Springer, 2015.

\bibitem{eisner2010entangled}
{\sc Eisner, T., and Kunszenti-Kov{\'a}cs, D.}
\newblock On the entangled ergodic theorem.
\newblock {\em arXiv preprint arXiv:1008.2907\/} (2010).

\bibitem{eisner2017pointwise}
{\sc Eisner, T., and Kunszenti-Kov{\'a}cs, D.}
\newblock On the pointwise entangled ergodic theorem.
\newblock {\em Journal of Mathematical Analysis and Applications 449}, 2
  (2017), 1754--1769.

\bibitem{engel1999one}
{\sc Engel, K.-J., and Nagel, R.}
\newblock {\em One-parameter semigroups for linear evolution equations},
  vol.~194.
\newblock Springer Science \& Business Media, 1999.

\bibitem{Exner2005}
{\sc Exner, P., and Ichinose, T.}
\newblock A product formula related to quantum zeno dynamics.
\newblock {\em Annales Henri Poincar{\'e} 6}, 2 (Apr 2005), 195--215.

\bibitem{exner2007zeno}
{\sc Exner, P., Ichinose, T., Neidhardt, H., and Zagrebnov, V.~A.}
\newblock Zeno product formula revisited.
\newblock {\em Integral Equations and Operator Theory 57}, 1 (2007), 67--81.

\bibitem{facchi2000quantum}
{\sc Facchi, P., Gorini, V., Marmo, G., Pascazio, S., and Sudarshan, E.}
\newblock Quantum {Z}eno {D}ynamics.
\newblock {\em Physics Letters A 275}, 1-2 (2000), 12--19.

\bibitem{facchi2004unification}
{\sc Facchi, P., Lidar, D., and Pascazio, S.}
\newblock Unification of dynamical decoupling and the quantum {Z}eno effect.
\newblock {\em Physical Review A 69}, 3 (2004), 032314.

\bibitem{facchi2002quantum}
{\sc Facchi, P., and Pascazio, S.}
\newblock Quantum {Z}eno {S}ubspaces.
\newblock {\em Physical Review Letters 89}, 8 (2002), 080401.

\bibitem{facchi2008quantum}
{\sc Facchi, P., and Pascazio, S.}
\newblock {Q}uantum {Z}eno dynamics: mathematical and physical aspects.
\newblock {\em Journal of Physics A: Mathematical and Theoretical 41}, 49
  (2008), 493001.

\bibitem{fidaleo2007entangled}
{\sc Fidaleo, F.}
\newblock On the entangled ergodic theorem.
\newblock {\em Infinite Dimensional Analysis, Quantum Probability and Related
  Topics 10}, 01 (2007), 67--77.

\bibitem{fidaleo2009entangled}
{\sc Fidaleo, F.}
\newblock The entangled ergodic theorem in the almost periodic case.
\newblock {\em arXiv preprint arXiv:0908.4395\/} (2009).

\bibitem{fischer2001observation}
{\sc Fischer, M., Guti{\'e}rrez-Medina, B., and Raizen, M.}
\newblock Observation of the quantum {Z}eno and anti-{Z}eno effects in an
  unstable system.
\newblock {\em Physical Review Letters 87}, 4 (2001), 040402.

\bibitem{friedman1972semigroup}
{\sc Friedman, C.~N.}
\newblock Semigroup product formulas, compressions, and continual observations
  in quantum mechanics.
\newblock {\em Indiana University Mathematics Journal 21}, 11 (1972),
  1001--1011.

\bibitem{gleyzes2016quantum}
{\sc Gleyzes, S., and Raimond, J.-M.}
\newblock Quantum {Z}eno dynamics in atoms and cavities.
\newblock {\em Comptes Rendus Physique 17}, 7 (2016), 685--692.

\bibitem{home1997conceptual}
{\sc Home, D., and Whitaker, M.}
\newblock A conceptual analysis of quantum {Z}eno; paradox, measurement, and
  experiment.
\newblock {\em Annals of Physics 258}, 2 (1997), 237--285.

\bibitem{hopf1937ergodentheorie}
{\sc Hopf, E.}
\newblock {\em Ergodentheorie}.
\newblock No.~5 in Ergebnisse der Mathematik und ihrer Grenzgebiete. J.
  Springer, 1937.

\bibitem{itano1990quantum}
{\sc Itano, W.~M., Heinzen, D.~J., Bollinger, J., and Wineland, D.}
\newblock Quantum {Z}eno effect.
\newblock {\em Physical Review A 41}, 5 (1990), 2295.

\bibitem{koshino2005quantum}
{\sc Koshino, K., and Shimizu, A.}
\newblock Quantum {Z}eno effect by general measurements.
\newblock {\em Physics Reports 412}, 4 (2005), 191--275.

\bibitem{kwiat1999high}
{\sc Kwiat, P.~G., White, A., Mitchell, J., Nairz, O., Weihs, G., Weinfurter,
  H., and Zeilinger, A.}
\newblock High-efficiency quantum interrogation measurements via the quantum
  {Z}eno effect.
\newblock {\em Physical Review Letters 83}, 23 (1999), 4725.

\bibitem{liebscher1999note}
{\sc Liebscher, V.}
\newblock Note on entangled ergodic theorems.
\newblock {\em Infinite Dimensional Analysis, Quantum Probability and Related
  Topics 2}, 02 (1999), 301--304.

\bibitem{misra1977zeno}
{\sc Misra, B., and Sudarshan, E.~G.}
\newblock The {Z}eno's paradox in quantum theory.
\newblock {\em Journal of Mathematical Physics 18}, 4 (1977), 756--763.

\bibitem{moler2003nineteen}
{\sc Moler, C., and Van~Loan, C.}
\newblock Nineteen dubious ways to compute the exponential of a matrix,
  twenty-five years later.
\newblock {\em SIAM review 45}, 1 (2003), 3--49.

\bibitem{nagels1997quantum}
{\sc Nagels, B., Hermans, L., and Chapovsky, P.}
\newblock Quantum {Z}eno effect induced by collisions.
\newblock {\em Physical Review Letters 79}, 17 (1997), 3097.

\bibitem{neumann1932proof}
{\sc Neumann, J.~v.}
\newblock Proof of the quasi-ergodic hypothesis.
\newblock {\em Proceedings of the National Academy of Sciences 18}, 1 (1932),
  70--82.

\bibitem{paz2012zeno}
{\sc Paz-Silva, G.~A., Rezakhani, A., Dominy, J.~M., and Lidar, D.}
\newblock Zeno effect for quantum computation and control.
\newblock {\em Physical Review Letters 108}, 8 (2012), 080501.

\bibitem{peres1980zeno}
{\sc Peres, A.}
\newblock Zeno paradox in quantum theory.
\newblock {\em American Journal of Physics 48}, 11 (1980), 931--932.

\bibitem{raimond2012quantum}
{\sc Raimond, J.-M., Facchi, P., Peaudecerf, B., Pascazio, S., Sayrin, C.,
  Dotsenko, I., Gleyzes, S., Brune, M., and Haroche, S.}
\newblock Quantum {Z}eno dynamics of a field in a cavity.
\newblock {\em Physical Review A 86}, 3 (2012), 032120.

\bibitem{raimond2010phase}
{\sc Raimond, J.-M., Sayrin, C., Gleyzes, S., Dotsenko, I., Brune, M., Haroche,
  S., Facchi, P., and Pascazio, S.}
\newblock Phase space tweezers for tailoring cavity fields by quantum {Z}eno
  dynamics.
\newblock {\em Physical Review Letters 105}, 21 (2010), 213601.

\bibitem{riesz1938some}
{\sc Riesz, F.}
\newblock Some mean ergodic theorems.
\newblock {\em Journal of the London Mathematical Society 1}, 4 (1938),
  274--278.

\bibitem{schafer2014experimental}
{\sc Sch{\"a}fer, F., Herrera, I., Cherukattil, S., Lovecchio, C., Cataliotti,
  F.~S., Caruso, F., and Smerzi, A.}
\newblock Experimental realization of quantum {Z}eno dynamics.
\newblock {\em Nature Communications 5\/} (2014), 3194.

\bibitem{signoles2014confined}
{\sc Signoles, A., Facon, A., Grosso, D., Dotsenko, I., Haroche, S., Raimond,
  J.-M., Brune, M., and Gleyzes, S.}
\newblock Confined quantum {Z}eno dynamics of a watched atomic arrow.
\newblock {\em Nature Physics 10}, 10 (2014), 715.

\bibitem{yosida1938mean}
{\sc Yosida, K.}
\newblock Mean ergodic theorem in {B}anach spaces.
\newblock {\em Proceedings of the Imperial Academy 14}, 8 (1938), 292--294.

\end{thebibliography}

\signnb \signzz

\end{document}